\DeclareMathOperator*{\argmin}{argmin}
\DeclareMathOperator*{\argmax}{argmax}
\DeclareMathOperator*{\cov}{cov}
\DeclareMathOperator*{\var}{var}
\DeclareMathOperator*{\supp}{supp}
\newcommand{\avg}{\frac{1}{n}\sum_{i=1}^n}
\newcommand{\ind}{\indicator}
\newcommand{\R}{\ensuremath{\mathbb{R}}}
\newcommand{\Exp}{\ensuremath{\mathbb{E}}} %
\newcommand{\Prob}{\ensuremath{\mathbb{P}}} %
\newcommand{\A}{\ensuremath{\mathbf{A}}} %
\newcommand{\Z}{\ensuremath{\mathbf{Z}}}
\newcommand{\indicator}{\ensuremath{\mathbbm{1}}}
\DeclareMathAlphabet{\mathpzc}{OT1}{pzc}{m}{it}
\newcommand{\normal}{\ensuremath{\mathcal{N}}}
\newcommand*{\indep}{%
  \mathbin{%
    \mathpalette{\@indep}{}%
  }%
}
\newcommand*{\nindep}{%
  \mathbin{%
    \mathpalette{\@indep}{\not}%
  }%
}
\newcommand*{\@indep}[2]{%
  \sbox0{$#1\perp\m@th$}%
  \sbox2{$#1=$}%
  \sbox4{$#1\vcenter{}$}%
  \rlap{\copy0}%
  \dimen@=\dimexpr\ht2-\ht4-.2pt\relax
  \kern\dimen@
  {#2}%
  \kern\dimen@
  \copy0 %
} 
\colorlet{mylinkcolor}{RoyalBlue}%
\newcolumntype{L}[1]{>{\raggedright\let\newline\\\arraybackslash\hspace{0pt}}p{#1}}
\newcolumntype{C}[1]{>{\centering\let\newline\\\arraybackslash\hspace{0pt}}p{#1}}
\newcolumntype{R}[1]{>{\raggedleft\let\newline\\\arraybackslash\hspace{0pt}}p{#1}}
\theoremstyle{definition}
\newtheorem{proposition}{Proposition}
\newtheorem{lemma}{Lemma}
\newtheorem{definition}{Definition}
\newtheorem{theorem}{Theorem}
\newcounter{partialIndepSection}
\newcommand{\aAssump}{A\arabic{partialIndepSection}}
\newtheoremstyle{theoremSuppressedNumber}{}{}{}{}{\bfseries}{.}{ }{\thmname{#1}\thmnote{ (\mdseries #3)}}
\theoremstyle{theoremSuppressedNumber}
\newtheorem{partialIndepAssump}{Assumption \aAssump \addtocounter{partialIndepSection}{1}}
\declaretheoremstyle[notefont=\bfseries,notebraces={}{},%
    headpunct={},postheadspace=1em]{mystyle}
\title{\textbf{Assessing Sensitivity to Unconfoundedness: \\Estimation and Inference}\footnote{This paper was presented at the 2018 Western Economic Association International Conference, the 2019 Stata Conference Chicago, the 2020 World Congress of the Econometric Society, the DC-MD-VA Econometrics Workshop 2020, University of Southern California, University of Toronto, and the 2020 SEA Conference. We thank participants at those seminars and conferences, as well as Karim Chalak, Toru Kitagawa, and John Pepper. We thank Paul Diegert for excellent research assistance. Masten thanks the National Science Foundation for research support under Grant No.\ 1943138.}}
\author{Matthew A. Masten\footnote{Department of Economics, Duke University,
\texttt{matt.masten@duke.edu}} 
\qquad
Alexandre Poirier\footnote{ Department of Economics, Georgetown University,
\texttt{alexandre.poirier@georgetown.edu}}
\qquad
Linqi Zhang\thanks{Department of Economics, Boston College,
\texttt{linqi.zhang@bc.edu}}
}
\date{December 31, 2020}
\begin{document}
\maketitle

\begin{abstract}
This paper provides a set of methods for quantifying the robustness of treatment effects estimated using the unconfoundedness assumption (also known as selection on observables or conditional independence). Specifically, we estimate and do inference on bounds on various treatment effect parameters, like the average treatment effect (ATE) and the average effect of treatment on the treated (ATT), under nonparametric relaxations of the unconfoundedness assumption indexed by a scalar sensitivity parameter $c$. These relaxations allow for limited selection on unobservables, depending on the value of $c$. For large enough $c$, these bounds equal the no assumptions bounds. Using a non-standard bootstrap method, we show how to construct confidence bands for these bound functions which are uniform over all values of $c$. We illustrate these methods with an empirical application to effects of the National Supported Work Demonstration program. We implement these methods in a companion Stata module for easy use in practice.
\end{abstract}

\bigskip
\small
\noindent \textbf{JEL classification:}
C14; C18; C21; C51

\bigskip
\noindent \textbf{Keywords:}
Treatment Effects, 
Conditional Independence, 
Unconfoundedness, 
Selection on Observables, 
Sensitivity Analysis, 
Nonparametric Identification, 
Partial Identification

\newcommand{\smalleps}{\varepsilon_\text{smaller}}

\onehalfspacing
\normalsize

\newpage
\section{Introduction}\label{sec:intro}

A core goal of causal inference is to identify and estimate effects of a treatment variable on an outcome variable. 
A common assumption used to identify such effects is unconfoundedness, which says that potential outcomes are independent of treatment conditional on covariates. This assumption is also known as conditional independence, selection on observables, ignorability, or exogenous selection; see \cite{Imbens2004} for a survey. 
This assumption is not refutable, meaning that the data alone cannot tell us whether it is true. 
Nonetheless, empirical researchers may wonder: How important is this assumption in their analyses? 
Put differently: How sensitive are their results to failures of the unconfoundedness assumption?

A large literature on sensitivity analysis has developed to answer this question. Moreover, researchers widely acknowledge that answering this question is an important step in empirical research. For example, in their figure 1, \cite{CaliendoKopeinig2008} describe the workflow of a standard analysis using selection on observables. Their fifth and final step in this workflow is to perform sensitivity analysis to the unconfoundedness assumption. \citet[section 6.2]{ImbensWooldridge2009}, \citet[chapter 22]{ImbensRubin2015}, and \cite{AtheyImbens2017} all also recommend that researchers conduct sensitivity analyses to assess the importance of non-refutable identifying assumptions. In particular, \cite{AtheyImbens2017} describe these methods as ``a systematic way of doing the sensitivity analyses that are routinely done in empirical work, but often in an unsystematic way.''

Most of the existing approaches to assessing unconfoundedness rely on strong auxiliary assumptions, however. For example, they often assume treatment effects are homogeneous and that all unobserved confounding arises due to a single unobserved variable whose distribution is parametrically specified, like a binary or normal distribution. They also often assume a parametric functional form for potential outcomes, like a logit model for binary potential outcomes or a linear model for continuous potential outcomes. These assumptions---which are not needed for identification of the baseline model when unconfoundedness holds---raise a new question: Are the findings of these sensitivity analyses \emph{themselves} sensitive to these extra auxiliary assumptions?

In this paper, we provide a set of tools for assessing the sensitivity of the unconfoundedness assumption which do not rely on strong auxiliary assumptions that are not used for the baseline analysis. 
We do this by studying nonparametric relaxations of the unconfoundedness assumption. Specifically, we apply the identification results of \cite{MastenPoirier2018}, who consider a class of assumptions called \emph{conditional $c$-dependence}. 
This class measures relaxations of conditional independence by a single scalar parameter $c \in [0,1]$. 
This parameter $c$ is the largest difference between the propensity score and the probability of treatment conditional on covariates and an unobserved potential outcome.
Hence it has a straightforward interpretation as a deviation from conditional independence, as measured in probability units.
For any positive $c$, conditional independence only partially holds, and so we cannot learn the exact value of our treatment effect parameters, like the average treatment effect (ATE) or the average effect of treatment on the treated (ATT). Instead, we only get bounds. 
\cite{MastenPoirier2018} derive closed-form expressions for these bounds as a function of $c$. 
Setting $c= 0$ yields the baseline model where unconfoundedness holds. Setting $c= 1$ yields the other extreme where no assumptions on selection are made, and hence gives the no assumption bounds as in \cite{Manski1990}. The bounds are monotonic in $c$, so that small values of $c$ give narrow bounds while larger values of $c$ give wider bounds. Just how wide these bounds are---and hence how sensitive one's results are---depends on the data.

While \cite{MastenPoirier2018} studied identification of treatment effects under nonparametric relaxations of unconfoundedness, they did not study estimation or inference. We do that in this paper. First we propose sample analog estimators of the bounds on the conditional quantile treatment effect (CQTE), the conditional average treatment effect (CATE), the ATE, and the ATT. We do this using flexible parametric first step estimators of the propensity score and the conditional quantile function of the observed outcomes given treatment and covariates. Although such parametric restrictions are not required for our identification theory, the analysis of inference is complicated and non-standard even with these parametric first step estimators. Doing inference based on fully nonparametric first step estimators will likely require deriving and applying more general asymptotic theory for non-Hadmard differentiable functionals than currently exists. Hence we leave that to future work. Moreover, note that our approach of using nonparametric identification results paired with flexible parametric estimators is analogous to what is commonly done in the baseline model which imposes unconfoundedness: Identification is shown nonparametrically but many commonly used estimators are based on flexible parametric first step estimators. For example, see chapter 13 in \cite{ImbensRubin2015}. 

We derive the asymptotic distributions of our bound estimators using the delta method for Hadamard directionally differentiable functionals from \cite{FangSantos2014}. We then show consistency of a non-standard bootstrap based on estimating the analytical Hadamard directional derivatives of our bound functionals. This step again involves using the recent results of \cite{FangSantos2014}. We show how to construct confidence bands for the bound functions which are uniform over all values of $c \in [0,1]$. We also provide a sufficient condition on the propensity score and the distribution of the covariates under which we can do inference using the standard nonparametric bootstrap. Finally, we show how to implement our analysis in an empirical illustration to the National Supported Work Demonstration program (MDRC \citeyear{MDRC1983}). Using the techniques developed in this paper, and implemented in an accompanying Stata module, researchers can quantify the robustness of treatment effects estimated using the unconfoundedness assumption.

The rest of this paper is organized as follows. 
In the rest of this section we briefly discuss the related literature.
In section \ref{sec:model} we summarize the identification results from \cite{MastenPoirier2018}.
We also discuss how to use and interpret these results in practice.
Section \ref{sec:estimation} describes the definition of our bound estimators.
Section \ref{sec:asymptotics} provides the corresponding asymptotic estimation and inference theory for these estimators.
Section \ref{sec:bootstrap} describes how to use these inference results to conduct bootstrap based inference.
In section \ref{sec:standardboot} we give sufficient conditions under which standard bootstrap approaches are valid.
Section \ref{sec:empirical} shows how to use our methods in an empirical illustration.
Appendix \ref{sec:prelimest} contains theoretical results and proofs for our first step estimators. Appendices \ref{sec:proofs}, \ref{sec:BootstrapProofs}, and \ref{sec:standardbootProof} have proofs for our main results. 
Appendix \ref{sec:HDDformulas} gives the full expressions for various analytical Hadamard directional derivatives used in our analysis.
Appendix \ref{sec:CQTEandCATEbootstrapResults} provides several additional results.

\subsection*{Related Literature}

We conclude this section with a brief literature review. As mentioned earlier, there is a large existing literature that studies how to relax unconfoundedness. This includes \cite{RosenbaumRubin1983sensitivity}, \cite{Mauro1990}, \cite{Rosenbaum1995, Rosenbaum2002}, \cite{RobinsRotnitzkyScharfstein2000}, \cite{Imbens2003}, \cite{AltonjiElderTaber2005, AltonjiElderTaber2008}, \cite{IchinoMealliNannicini2008}, \cite{HosmanHansenHolland2010}, \cite{Krauth2016}, \cite{KallusMaoZhou2019}, \cite{Oster2019}, and \cite{CinelliHazlett2020}, among others. Here we discuss the most closely related work and several recent papers. For further details about the related literature, see section 1 in \cite{MastenPoirier2018} for identification and Appendix D in \cite{MastenPoirier2020} for estimation and inference.

A key feature of our results is that they are based on the fully nonparametric analysis of \cite{MastenPoirier2018}. There are only a few other alternative nonparametric analyses available in the literature. The first is \cite{IchinoMealliNannicini2008}, who require that all variables are discretely distributed. In contrast, we allow for continuous outcomes, covariates, and unobservables. Their approach requires picking a vector of sensitivity parameters that determines the joint distribution of the discrete observable and unobservable variables. In contrast, our approach uses a scalar sensitivity parameter. Finally, unlike us, they do not provide any formal results for doing estimation or inference. The second is \cite{Rosenbaum1995,Rosenbaum2002}, who proposed a sensitivity analysis for unconfoundedness within the context of doing randomization inference based on the sharp null hypothesis of no unit level treatment effects for all units in the data set. Like our approach, he only uses a scalar sensitivity parameter and also does not rely on a parametric model for outcomes or treatment assignment probabilities. His approach, however, is based on finite sample randomization inference (for more discussion, see chapter 5 of \citealt{ImbensRubin2015}). This approach to inference is conceptually distinct from the approach we use based on repeated sampling from a large population. For this reason, we view these different approaches to inference in sensitivity analyses as complementary. Finally, \cite{KallusMaoZhou2019} study bounds on CATE under the same nonparametric relaxations defined by \cite{Rosenbaum1995,Rosenbaum2002}. Unlike him, however, they take a large population view. They propose sample analog kernel estimators based on an implicit characterization of the identified set using extrema. They show consistency of these estimators, but they do not provide any inference results. As we discuss later, this is a key distinction because inference in this setting is non-standard.

A few recent papers provide methods for assesssing unconfoundedness in parametric linear models. This includes \cite{Oster2019} and \cite{CinelliHazlett2020}. These results rely on the assumption that outcomes are linear functions of treatment and covariates, among other parametric assumptions. In contrast, we build on the selection on observables literature that has emphasized nonparametric identification. That literature emphasizes that identification by functional form is often implausible. Sensitivity analyses that rely on functional form assumptions are subject to the same criticism: Findings that one's results are robust to violations of unconfoundedness can be driven primarily from the parametric functional form restrictions. To address this, our estimation and inference results are based on nonparametric sensitivity analyses that do not require parametric assumptions.

Finally, we discuss the relationship with our own previous work. As noted earlier, our paper provides estimation and inference results for population bounds derived in \cite{MastenPoirier2018}. That paper did not provide any estimation or inference theory. \cite{MastenPoirier2020} builds on those results in several ways: First, they extend the identification analysis to identification of distributional treatment effect parameters, with a focus on assessing the importance of the rank invariance assumption. Second, they provide some asymptotic distributional results for sample analog estimators of the average treatment effect (ATE), the conditional average treatment effect (CATE), and the conditional quantile treatment effect (CQTE), among other results. Those results are limited in a variety of ways, which we discuss next.

Specifically, our paper differs from the results in \cite{MastenPoirier2020} in several important ways: (1) Our paper allows for both discrete and continuous covariates, whereas that paper focused on the case where all covariates are discrete. In particular, to allow for continuous covariates we develop a different estimator of the bound functions. This is important since many empirical applications, like ours in section \ref{sec:empirical}, use continuous covariates. (2) Our results allow for all possible values of $c \in [0,1]$, whereas that paper restricted attention to small values of the sensitivity parameter $c$ (see their assumption A2.1). This is also important for practice and requires a substantial amount of new theoretical work. (3) Our results use the \cite{FangSantos2014} bootstrap based on estimators of analytical Hadamard directional derivatives to do inference. That paper instead used the numerical delta method bootstrap of \cite{HongLi2015}. Our approach allows us to avoid choosing the step size tuning parameter required for the numerical delta method bootstrap, although our estimators of the analytical Hadamard directional derivatives also have tuning parameters. (4) Unlike that paper, we also discuss inference on the average effect of treatment on the treated (ATT). (5) In this paper we provide a new companion Stata module implementing our results.

\section{Population Bounds on Treatment Effects}\label{sec:model}

In this section we describe the model and review standard results on point identification of treatment effects under unconfoundedness. 
We then describe how we relax unconfoundedness. 
Finally, we review the bounds on treatment effects derived by \cite{MastenPoirier2018} when unconfoundedness is relaxed.

\subsection*{Model and Baseline Point Identification Results}

We use the standard potential outcomes model. Let $X \in \{0, 1 \}$ be an observed binary treatment. Let $Y_1$ and $Y_0$ denote the unobserved potential outcomes. The observed outcome is
\begin{equation}\label{eq:potential outcomes}
	Y = X Y_1 + (1-X) Y_0.
\end{equation}
Let $W \in \R^{d_W}$ denote a vector of observed covariates, which may be discrete, continuous, or mixed. Let $\mathcal{W} = \supp(W)$ denote the support of $W$. Let
\[
	p_{x \mid w} = \Prob(X=x \mid W=w)
\]
denote the observed generalized propensity score.

It is well known that the conditional distributions of potential outcomes $Y_1 \mid W$ and $Y_0 \mid W$ are point identified under the following two assumptions:
\begin{itemize}
\item[] Unconfoundedness: $X \indep Y_1 \mid W$ and $X \indep Y_0 \mid W$.

\item[] Overlap: $p_{1 \mid w} \in (0,1)$ for all $w \in \mathcal{W}$.
\end{itemize}
Consequently, any functional of the distributions of $Y_1 \mid W$ and $Y_0 \mid W$ is also point identified. 
We focus on two leading examples: The average treatment effect, $\text{ATE} = \Exp(Y_1 - Y_0)$ and the average treatment effect for the treated, $\text{ATT} = \Exp(Y_1 - Y_0 \mid X=1)$. We also consider the conditional quantile treatment effects $\text{CQTE}(\tau \mid w) = Q_{Y_1 \mid W}(\tau \mid w) - Q_{Y_0 \mid W}(\tau \mid w)$ and the conditional average treatment effect $\text{CATE}(w) = \Exp(Y_1 - Y_0 \mid W=w)$.

\subsection*{Sensitivity Analysis: Relaxing Unconfoundedness}

As discussed in section \ref{sec:intro}, the overlap assumption is refutable and hence can be directly verified from the data. 
The unconfoundedness assumption, however, is not refutable. 
Consequently, like much of the literature reviewed in section \ref{sec:intro}, we perform a sensitivity analysis. 
This entails replacing unconfoundedness with a weaker assumption and investigating how this changes the conclusions we can draw about our parameter of interest. 
Specifically, we define the following class of assumptions, which we call \emph{conditional $c$-dependence} (\citealt{MastenPoirier2018}):

\begin{definition}\label{def:c-dep}
Let $x \in \{ 0, 1 \}$. 
Let $w\in\mathcal{W}$. 
Let $c$ be a scalar between 0 and 1. 
Say $X$ is \emph{conditionally $c$-dependent} with $Y_x$ given $W$ if
\begin{equation}\label{eq:c-indep1}
	\sup_{y_x \in \supp(Y_x \mid W=w)} | \Prob(X=1 \mid Y_x=y_x,W=w) - \Prob(X=1 \mid W=w) | \leq c.
\end{equation}
holds for all $w \in \mathcal{W}$.
\end{definition}

When $c = 0$, conditional $c$-dependence is equivalent to $X \indep Y_x \mid W$. 
For $c > 0$, however, we allow for violations of unconfoundedness by allowing the unobserved conditional probability
\[
	\Prob(X=1 \mid Y_x=y_x, W=w)
\]
to differ from the observed propensity score
\[
	\Prob(X=1 \mid W=w)
\]
by at most $c$. 
Thus we actually allow for some selection on unobservables, since treatment assignment may depend on $Y_x$, but in a constrained manner. 
For sufficiently large $c$, however, conditional $c$-dependence imposes no constraints on the relationship between $Y_x$ and $X$. 
This happens when $c \geq \overline{C}$ where $\overline{C} =\sup_{w \in \mathcal{W}} \max \{ p_{1 \mid w}, p_{0 \mid w} \}$. 
When $c \in (0,\overline{C})$, conditional $c$-dependence imposes some constraints on treatment assignment, but it does not require conditional independence to hold exactly. 
For this reason, we call it a \emph{conditional partial independence} assumption. 
Thus our sensitivity analysis replaces unconfoundedness with
\begin{itemize}
\item[] Conditional Partial Independence: $X$ is conditionally $c$-dependent with $Y_1$ and $Y_0$ given $W$. %
\end{itemize}

\subsection*{Treatment Effect Bounds}

By relaxing conditional independence our main parameters of interest---ATE and ATT---are no longer point identified. Instead they are partially identified: We can bound them from above and from below. As $c$ gets close to zero, however, these bounds collapse to a point. Hence for small $c$ these bounds can be quite narrow. The goal of a sensitivity analysis is to understand how the shape and width of these bounds changes as $c$ varies from 0 to 1.  

These bounds were derived in \cite{MastenPoirier2018}, which we summarize here. Although that paper studied both continuous and binary outcomes, here we only summarize the results for continuous $Y_x$. All of our parameters of interest can be written in terms of bounds on the quantile regressions $Q_{Y_x \mid W}(\tau \mid w)$. Under the conditional partial independence assumption stated above and some regularity conditions, \cite{MastenPoirier2018} showed that $[\underline{Q}^c_{Y_x \mid W}(\tau \mid w), \overline{Q}^c_{Y_x \mid W}(\tau \mid w)]$ are are sharp bounds on this quantile regression, uniformly in $\tau$, $x$, and $w$, where
\begin{align}\label{eq:quantile upperbound}
	\overline{Q}^c_{Y_x \mid W}(\tau \mid w)
	&= Q_{Y \mid X,W}\left( \overline{t}(\tau,x,w) \mid x,w\right) \\[1em]
	\text{where} \qquad \overline{t}(\tau,x,w) &= \min\left\{\tau + \frac{c}{p_{x \mid w}}\min\{\tau,1-\tau\},\frac{\tau}{p_{x \mid w}},1\right\} \notag
\end{align}
and 
\begin{align}\label{eq:quantile lowerbound}
	\underline{Q}^c_{Y_x \mid W}(\tau \mid w)
	&= Q_{Y \mid X,W}\left( \underline{t}(\tau,x,w) \mid x,w\right) \\[1em]
	\text{where} \qquad \underline{t}(\tau,x,w) &= \max\left\{\tau - \frac{c}{p_{x \mid w}}\min\{\tau,1-\tau\},\frac{\tau-1}{p_{x \mid w}} +1,0\right\}. \notag
\end{align}

Taking differences of these bounds for $x=1$ and $x=0$ yields sharp bounds on the conditional quantile treatment effect $\text{CQTE}(\tau \mid w)$, uniformly in $\tau$ and $w$:
\begin{align*}\label{eq:CQTE_bounds}
	&\left[\underline{\text{CQTE}}^c(\tau \mid w), \overline{\text{CQTE}}^c(\tau \mid w)\right] \notag \\
	&\hspace{25mm} \equiv 
	\left[\underline{Q}^c_{Y_1 \mid W}(\tau \mid w) - \overline{Q}^c_{Y_0 \mid W}(\tau \mid w), \,\overline{Q}^c_{Y_1 \mid W}(\tau \mid w) - \underline{Q}^c_{Y_0 \mid W}(\tau \mid w)\right].
\end{align*}
Integrating these bounds over $\tau$ yields sharp bounds on $\text{CATE}(w)$, uniformly in $w$:
\[
	\left[\underline{\text{CATE}}^c(w),\overline{\text{CATE}}^c(w) \right]
	\equiv
	\left[\int_0^1 \underline{\text{CQTE}}^c(\tau \mid w) \; d\tau, \int_0^1 \overline{\text{CQTE}}^c(\tau \mid w) \; d\tau \right].
\]
Further integrating over the marginal distribution of $W$ yields sharp bounds on ATE:
\[
	\left[\underline{\text{ATE}}^c,\overline{\text{ATE}}^c \right]
	\equiv \Big[ \Exp \big( \underline{\text{CATE}}^c(W) \big), \, \Exp \big( \overline{\text{CATE}}^c(W) \big) \Big]
\]
To obtain bounds on ATT, let
\[
	\underline{E}_x^c(w) = \int_0^1 \underline{Q}_{Y_x}^c(\tau \mid w) \; d\tau 
	\qquad \text{and} \qquad
	\overline{E}_x^c(w) = \int_0^1 \overline{Q}_{Y_x}^c(\tau \mid w) \; d\tau
\]
denote bounds on $\Exp(Y_x \mid W=w)$. 
Averaging these over the marginal distribution of $W$ yields bounds on $\Exp(Y_x)$, denoted by
\[
	\underline{E}_x^c = \Exp \big( \underline{E}_x^c(W) \big)
	\qquad \text{and} \qquad
	\overline{E}_x^c = \Exp \big( \overline{E}_x^c(W) \big).
\]
This yields the following bounds on ATT:
\begin{align}
	\left[\Exp(Y \mid X=1) - \frac{\overline{E}_0^c - p_0\Exp(Y \mid X=0)}{p_1}, \, \Exp(Y \mid X=1) - \frac{\underline{E}_0^c - p_0\Exp(Y \mid X=0)}{p_1}\right] \label{eq:ATTbounds}
\end{align}
where $p_x = \Prob(X=x)$ for $x \in \{0,1\}$. Finally, note that all of these bounds are sharp.

\subsection*{Breakdown Points}

So far we've discussed sharp bounds on various parameters of interest as a function of the sensitivity parameter $c$. In addition to the bounds themselves, it is common to analyze \emph{breakdown points} for various conclusions of interest. For example, suppose that under the baseline model ($c=0$) we find that $\text{ATE} > 0$. We then ask: How much can we relax unconfoundedness while still being able to conclude that the ATE is nonnegative? To answer this question, define the breakdown point for the conclusion that the ATE is nonnegative as
\begin{equation}\label{eq:ATEbreakdownPoint}
	c_\textsc{bp} = \sup \{ c \in [0,1]: \left[\underline{\text{ATE}}^c,\overline{\text{ATE}}^c \right] \subseteq [0,\infty)\}.
\end{equation}
This number is a quantitative measure of the robustness of the conclusion that ATE is positive to relaxations of the key identifying assumption of unconfoundedness. Breakdown points can be defined for other parameters and  conclusions as well. See \cite{MastenPoirier2020} for more discussion and additional references.

\subsection*{Interpreting Conditional $c$-Dependence}

We conclude this section by giving some suggestions for how to interpret conditional $c$-dependence in practice. In particular, what values of $c$ are large? What values are small? Here we summarize and extend the discussion on page 321 of \cite{MastenPoirier2018}. We illustrate these interpretations in our empirical analysis in section \ref{sec:empirical}.

Let $W_k$ denote a component of $W$. Denote the propensity score by
\[
	p_{1 \mid W}(w_{-k},w_k) = \Prob(X=1 \mid W=(w_{-k},w_k) ).
\]
Let 
\[
	p_{1 \mid W_{-k}}(w_{-k}) = \Prob(X=1 \mid W_{-k}=w_{-k})
\]
denote the leave-out-variable-$k$ propensity score. This is just the proportion of the population who are treated, conditional on only $W_{-k}$. Consider the random variable
\[
	\Delta_k = | p_{1 \mid W}(W_{-k}, W_k) - p_{1 \mid W_{-k}}(W_{-k}) |.
\]
This difference is a measure of the impact on the observed propensity score of adding $W_k$, given that we already included $W_{-k}$. Conditional $c$-dependence is defined by a similar difference, except there we add the unobservable $Y_x$ given that we already included $W$. Hence we suggest using the distribution of $\Delta_k$ to calibrate values of $c$. For example, you could examine the 50th, 75th, and 90th quantiles of $\Delta_k$, along with the upper bound on the support, $\bar{c}_k = \max \supp(\Delta_k)$. You may also find it useful to plot an estimate of the density of $\Delta_k$. All of these reference values can be compared to the breakdown point $c_\textsc{bp}$ for a specific conclusion of interest. Specifically, if $c_\textsc{bp}$ is larger than the chosen reference value, then the conclusion of interest could be considered robust. In contrast, if $c_\textsc{bp}$ is smaller than the chosen reference value, then the conclusion of interest could be considered sensitive. You may also want to see where $c_\textsc{bp}$ lies relative to the distribution of $\Delta_k$. This can be done by computing $F_{\Delta_k}(c_\textsc{bp})$.

While you could do this for all covariates $k$, it may be helpful to restrict attention to covariates $k$ that have a sufficiently large impact on the baseline point estimates. For example, suppose we are interested in the ATE. Let $\text{ATE}_{-k}$ denote the ATE estimand obtained in the baseline selection on observables model using only the covariates $W_{-k}$. Let $\text{ATE}$ denote the ATE estimand obtained in the baseline model using all the covariates. Then
\[
	\left| \frac{\text{ATE} - \text{ATE}_{-k} }{\text{ATE}} \right|
\]
denotes the effect of omitting covariate $k$ on the ATE point estimand, as a percentage of the baseline estimand that uses all covariates in $W$. You may want to restrict attention to covariates $k$ for which this ratio is relatively large. We illustrate this approach in our empirical analysis in section \ref{sec:empirical}.

\section{Estimation}\label{sec:estimation}

In the previous section we assumed the entire population distribution of $(Y,X,W)$ was known. In practice we only have a finite sample $\{ (Y_i,X_i,W_i) \}_{i=1}^n$ from this distribution. In this section we explain how to use this finite sample data to estimate the population bounds of section \ref{sec:model}. We give the corresponding asymptotic theory in section \ref{sec:asymptotics} where we obtain the joint limiting distribution of treatment effects bounds. We describe how to perform bootstrap based inference on these bounds in section \ref{sec:bootstrap}.

As shown in section \ref{sec:model}, all of our bounds can be constructed from the marginal distribution of $W$ and the bounds on $Q_{Y_x \mid W}$ given in equations \eqref{eq:quantile upperbound} and \eqref{eq:quantile lowerbound}. 
These bounds on $Q_{Y_x \mid W}$, in turn, depend on just two features of the data:
\begin{enumerate}
\item The conditional quantile function $Q_{Y \mid X,W}(\tau \mid x,w)$.

\item The propensity score $p_{x \mid w} = \Prob(X=x \mid W=w)$.
\end{enumerate}
In both cases, we can use parametric, semiparametric, or nonparametric estimation methods. 
In this paper we focus on flexible parametric approaches.
Even in this case the asymptotic distribution theory is non-standard and quite complicated.
We discuss this point further in the conclusion, section \ref{sec:conclusion}.

In section \ref{sec:estimationFirstStep} we describe our first step estimators of these two functions. Given these estimators, we then construct sample analog estimates of our bound functions in a second step. We describe these estimators in section \ref{sec:estimationSecondStep}.

\subsection{First Step Quantile Regression and Propensity Score Estimation}\label{sec:estimationFirstStep}

We estimate $Q_{Y \mid X,W}$ by a linear quantile regression of $Y$ on flexible functions of $(X,W)$ that we denote by $q(X,W) \in \R^{d_q}$. For example, $q(x,w)$ could be $(1,x,w)$, $(1,x, w, x \cdot w)$, or could contain additional interactions between the treatment indicator $X$ and functions of the covariates $W$. For $\tau \in (0,1)$, let 
\begin{align*}
	\widehat{\gamma}(\tau) &= \argmin_{a \in \R^{d_q}} \sum_{i=1}^n \rho_\tau(Y_i - a' q(X_i,W_i))
\end{align*}
be the estimated coefficients from a linear quantile regression of $Y$ on $q(X,W)$ at the quantile $\tau$. Here $\rho_\tau(s) = s (\tau - \indicator(s<0))$ is the check function. Let $\widehat{Q}_{Y \mid X,W}(\tau \mid x,w) = q(x,w)'\widehat{\gamma}(\tau)$ denote this estimator.

We estimate the propensity score by maximum likelihood. In particular, specify the parametric model
\[
	\Prob(X=1 \mid W=w) = F( r(w)'\beta_0)
\]
where $F$ is a known cdf, $r(w)$ is a known vector function, and $\beta_0$ is an unknown constant vector. The functions $r(w)$ could simply be $(1,w)$ or may contain functions of $w$, like squared or interaction terms. For notational simplicity, we will assume throughout the paper that $r(w) = w$. 
Given this assumption, the dimension of $\beta_0$ is $d_W$, the length of $W$. Suppose $\beta_0$ lies in the parameter space $\mathcal{B} \subseteq \R^{d_W}$.

This specification for the propensity score includes the probit and logit estimators as special cases. Those estimators are commonly used in the literature; for example, see chapter 13 of \cite{ImbensRubin2015}.
Let $\widehat{\beta}$ denote the maximum likelihood estimate of $\beta$:
\[
	\widehat{\beta} = \argmax_{\beta \in \mathcal{B}} \sum_{i=1}^n \log L(X_i,W_i'\beta)
\]
where
\[
	L(x,w'\beta) = F(w'\beta)^x(1-F(w'\beta))^{1-x}.
\]
For each $x\in \{0,1\}$, let $\widehat{p}_{x|w} = L(x,w'\widehat\beta)$ denote our propensity score estimator.

\subsection{Second Step Estimation of the Bound Functions}\label{sec:estimationSecondStep}

Given the first step estimators from section \ref{sec:estimationFirstStep}, we obtain the following sample analog estimators of the CQTE bound functions defined in equations \eqref{eq:quantile upperbound} and \eqref{eq:quantile lowerbound}:
\begin{align*}
	\widehat{\overline{Q}}^c_{Y_x \mid W}(\tau \mid w)
	&= \widehat{Q}_{Y \mid X,W}( \widehat{\overline{t}}(\tau,x,w) \mid x,w) \\[1em]
	\text{where} \qquad \widehat{\overline{t}}(\tau,x,w) &= \min\left\{\tau + \frac{c}{\widehat{p}_{x \mid w}}\min\{\tau,1-\tau\},\frac{\tau}{\widehat{p}_{x \mid w}},1\right\}
\end{align*}
and 
\begin{align*}
	\widehat{\underline{Q}}^c_{Y_x \mid W}(\tau \mid w)
	&= \widehat{Q}_{Y \mid X,W}( \widehat{\underline{t}}(\tau,x,w) \mid x,w) \\[1em]
	\text{where} \qquad \widehat{\underline{t}}(\tau,x,w) &= \max\left\{\tau - \frac{c}{\widehat{p}_{x \mid w}}\min\{\tau,1-\tau\},\frac{\tau-1}{\widehat{p}_{x \mid w}} +1,0\right\}.
\end{align*}
As discussed in section \ref{sec:model}, averaging these over $\tau \in (0,1)$ yields sample analog estimates of bounds on $\text{CATE}(w)$, which we can then use to get bounds on $\text{ATE}$. This approach requires estimation of extremal quantiles, however---estimation for $\tau$'s close to 0 or 1. This is well known to be a delicate problem (see \citealt{ChernozhukovFernandezValKaji2017} for details). So in this paper we use a common solution: Fixed trimming of the extremal quantiles. We do this by modifying the quantile bound estimators to ensure that the quantile index lies in $[\varepsilon,1-\varepsilon]$ for some fixed and known $\varepsilon \in (0,0.5)$. Specifically, this yields the trimmed estimators of the quantile bounds
\begin{equation}
	\widehat{\overline{Q}}^c_{Y_x \mid W}(\tau \mid w)
	= \widehat{Q}_{Y \mid X,W}\left( \max\{ \min \{ \widehat{\overline{t}}(\tau, x, w), 1- \varepsilon \}, \varepsilon \} \mid x,w\right)\label{eq:CQupperB}
\end{equation}
and 
\begin{equation}
	\widehat{\underline{Q}}^c_{Y_x \mid W}(\tau \mid w)
	=
	\widehat{Q}_{Y \mid X,W}\left( \max \{ \min \{ \widehat{\underline{t}} (\tau,x,w), 1 - \varepsilon \}, \varepsilon \} \mid x,w\right). \label{eq:CQlowerB}
\end{equation}
We use these estimators for the rest of the paper. Common choices of $\varepsilon$ are $0.05$ or $0.01$. In our asymptotic analysis we hold $\varepsilon$ fixed with sample size. In principle we could generalize the results to allow $\varepsilon \rightarrow 0$ as $n \rightarrow \infty$, but this would complicate the analysis of inference, which is already non-standard for other reasons. 
Since we fix $\varepsilon$ throughout, we omit $\varepsilon$ from the notation for brevity, except when necessary.

We next estimate the CQTE bounds by taking differences of the quantile bound estimators:
\begin{align*}
	&\left[\widehat{\underline{\text{CQTE}}}^c(\tau \mid w), \widehat{\overline{\text{CQTE}}}^c(\tau \mid w)\right] \notag \\
	&\hspace{25mm} \equiv 
	\left[\widehat{\underline{Q}}^c_{Y_1 \mid W}(\tau \mid w) - \widehat{\overline{Q}}^c_{Y_0 \mid W}(\tau \mid w), \,\widehat{\overline{Q}}^c_{Y_1 \mid W}(\tau \mid w) - \widehat{\underline{Q}}^c_{Y_0 \mid W}(\tau \mid w)\right].
\end{align*}
Since our CATE bounds are simply the integral of the CQTE bounds over all the quantiles $\tau$, we can estimate them by
\[
	\left[\widehat{\underline{\text{CATE}}}^c(w),\widehat{\overline{\text{CATE}}}^c(w) \right]
	\equiv
	\left[\int_0^1 \widehat{\underline{\text{CQTE}}}^c(\tau \mid w) \; d\tau, \int_0^1 \widehat{\overline{\text{CQTE}}}^c(\tau \mid w) \; d\tau \right].
\]
A second integration over $w$ with respect to the marginal distribution of $W$ yields bounds on $\text{ATE}$. Like much of the literature, we use the empirical distribution of $W$ to estimate the marginal distribution of $W$. This yields the following estimator of our $\text{ATE}$ bounds:
\[
	\left[ \widehat{\underline{\text{ATE}}}^c, \widehat{\overline{\text{ATE}}}^c \right] 
	= \left[ \avg \widehat{\underline{\text{CATE}}}^c(W_i), \, \avg \widehat{\overline{\text{CATE}}}^c(W_i) \right].
\]
Next consider the estimation of the $\text{ATT}$ bounds. Let
\[
	\widehat{\underline{E}}_0^c = \avg \int_0^1 \widehat{\underline{Q}}_{Y_0}^c(\tau \mid W_i) \; d\tau 
	\qquad \text{and} \qquad
	\widehat{\overline{E}}_0^c = \avg \int_0^1 \widehat{\overline{Q}}_{Y_0}^c(\tau \mid W_i) \; d\tau.
\]
For $x \in \{0,1\}$ let
\[
	\widehat{\Exp}(Y \mid X=x) = \frac{\sum_{i=1}^n Y_i \indicator(X_i=x)}{\sum_{i=1}^n \indicator(X_i=x)}
	\qquad \text{and} \qquad
	\widehat{p}_x = \avg \indicator(X_i = x).
\]
We can then estimate the ATT bounds by replacing the population quantities in \eqref{eq:ATTbounds} with their estimators that we just defined.

For $c = 0$, our estimated upper and lower bounds are equal and give point estimates of the various parameters of interest. For $c > 0$, our bounds have positive width. To use these bounds in a sensitivity analysis, we recommend producing the following plot: Pick a grid $\{ c_1,\ldots,c_K \} \subseteq [0,1]$ of values for $c$. Compute our bound estimates on this grid and plot them against these values of $c$. Then compute and plot confidence bands for these bound estimates against $c$ as well; we describe how to compute these bands in section \ref{sec:bootstrap}. We illustrate all of these steps in our empirical analysis of section \ref{sec:empirical}.

\section{Asymptotic Theory}\label{sec:asymptotics}

In this section we provide formal results on the consistency and limiting distributions of the estimators we described in section \ref{sec:estimation}. In section \ref{sec:bootstrap} we show how to use these results to do inference based on a non-standard bootstrap. In section \ref{sec:standardboot} we provide sufficient conditions under which standard bootstrap inference is valid.

\subsection{Convergence of the First Step Estimators}

Throughout this paper we assume that we observe a random sample.

\begin{partialIndepAssump}[Random Sample]\label{assn:iid}
$\{(Y_i,X_i,W_i)\}_{i=1}^n$ are iid.
\end{partialIndepAssump}

Our first step estimators are standard in the literature. Hence we only briefly review the main assumptions and results for these estimators. For completeness, we provide a formal analysis in appendix \ref{sec:prelimest}.

We assume that both the propensity score and quantile regression functions are correctly specified:
\[
	\Prob(X=x\mid W=w) = L(x,w'\beta_0)
\]
and
\[
	Q_{Y \mid X,W}(\tau \mid x,w) 
	= q(x,w)'\gamma_0(\tau)
\]
for all $\tau \in [\varepsilon,1-\varepsilon]$. 

Since the first step estimators consist of linear quantile regression and maximum likelihood estimation, their $\sqrt{n}$-convergence to Gaussian elements can be shown under standard assumptions and arguments. For example, see \cite{NeweyMcFadden1994}. Moreover, the convergence of $\widehat{\gamma}(\tau)$ to $\gamma_0(\tau)$ is uniform over $\tau \in [\varepsilon,1-\varepsilon]$. Formally, as we show in appendix \ref{sec:prelimest} lemma \ref{lemma:prelim estimators}, 
\[
	\sqrt{n}
	\begin{pmatrix}
		\widehat{\beta} - \beta_0\\
		\widehat{\gamma}(\tau) - \gamma_0(\tau)
	\end{pmatrix}
	\rightsquigarrow \mathbf{Z}_1(\tau),
\]
where $\mathbf{Z}_1(\cdot)$ is a mean-zero Gaussian process in $\R^{d_W} \times \ell^\infty([\varepsilon,1-\varepsilon],\R^{d_q})$ with continuous paths. The covariance kernel of this process is defined in appendix \ref{sec:prelimest}, equation \eqref{eq:prelimest_covkernel}. Also see appendix \ref{sec:prelimest} for the formal assumptions under which this result holds.

\subsection{Convergence of the Second Step Estimators}\label{sec:SecondStepEstConvergence}

Next we consider the limiting distribution of our various second step estimators.

\subsubsection*{The CATE Bounds}

We start with equations \eqref{eq:CQupperB} and \eqref{eq:CQlowerB}, our estimators of the conditional quantile bounds. 
The population conditional quantile bounds, equations \eqref{eq:quantile upperbound} and \eqref{eq:quantile lowerbound}, are known functions of $\theta_0 = (\beta_0,\gamma_0)$. 
Define
\begin{align*}
	\overline{\Gamma}_1(x,w,\tau,\theta)
	&= q(x,w)'\gamma\left(\max\left\{\min\left\{\tau + \frac{c}{L(x,w'\beta)}\min\{\tau,1-\tau\},\frac{\tau}{L(x,w'\beta)},1-\varepsilon\right\},\varepsilon\right\}\right)\\
	\underline{\Gamma}_1(x,w,\tau,\theta)
	&= q(x,w)'\gamma\left(\min\left\{\max\left\{\tau - \frac{c}{L(x,w'\beta)}\min\{\tau,1-\tau\},\frac{\tau-1}{L(x,w'\beta)} +1,\varepsilon\right\},1-\varepsilon\right\}\right)
\end{align*}
Throughout the paper, we let $\Gamma_j = (\overline{\Gamma}_j, \underline{\Gamma}_j)$ for $j\geq 1$.  Evaluating these at $\theta_0$ gives the trimmed population conditional quantile bounds. Evaluating these at $\widehat{\theta}$ gives their sample analog estimators. Define
\[
	\overline{\Gamma}_2(x,w,\theta) = \int_0^1 \overline{\Gamma}_1(x,w,\tau,\theta) \;d\tau
	\qquad \text{ and } \qquad
	\underline{\Gamma}_2(x,w,\theta) = \int_0^1 \underline{\Gamma}_1(x,w,\tau,\theta) \;d\tau.
\]
Then
\[
	\left[\underline{\text{CATE}}_\varepsilon^c(w),\overline{\text{CATE}}_\varepsilon^c(w) \right]
	=
	\Big[ \underline{\Gamma}_2(1,w,\theta_0)- \overline{\Gamma}_2(0,w,\theta_0), \,
	\overline{\Gamma}_2(1,w,\theta_0) - \underline{\Gamma}_2(0,w,\theta_0) \Big]
\]
are the trimmed population CATE bounds. We estimate them by
\[
	\left[\widehat{\underline{\text{CATE}}}^c(w),\widehat{\overline{\text{CATE}}}^c(w) \right]
	\equiv
	\left[\underline{\Gamma}_2(1,w,\widehat\theta)- \overline{\Gamma}_2(0,w,\widehat\theta), \,
	\overline{\Gamma}_2(1,w,\widehat\theta) - \underline{\Gamma}_2(0,w,\widehat\theta)\right].
\]
If these mappings were Hadamard differentiable in $\theta$ at $\theta_0$, we could use the functional delta method to show that the above estimators have limiting Gaussian distributions and converge at $\sqrt{n}$ rates. Because they depend on the $\min$ and $\max$ functions these mappings are not Hadamard differentiable. They are, however, Hadamard directionally differentiable (HDD); see definition \ref{def:HDD} in appendix \ref{sec:proofs}. It turns out that this weaker version of differentiability is sufficient to establish their (non-Gaussian) limiting distribution. 

To formally derive the limiting distribution of the CATE estimators, we show that the mapping
\[
	\Gamma_2(x,w,\cdot): \R^{d_W} \times \ell^\infty([\varepsilon,1-\varepsilon],\R^{d_q}) \to \R^2
\]
is Hadamard directionally differentiable at $\theta_0$ tangentially to $\R^{d_W} \times \mathscr{C}([\varepsilon,1-\varepsilon],\R^{d_q})$. Here $\mathscr{C}(A,B)$ is the set of continuous functions from $A$ to $B$.

As a technical assumption, we restrict the complexity of the space that the quantile regression coefficient $\gamma_0(\cdot)$ lives in. Specifically, we assume that it is in a H\"older ball. To define this parameter space precisely, let $\mathscr{C}_m(\mathcal{D})$ denote the set of $m$-times continuously differentiable functions $f : \mathcal{D} \rightarrow \R$, where $m$ is an integer and $\mathcal{D}$ be an open subset of $\R^{d_q}$. Denote the differential operator by
\[
	\nabla^\lambda = \frac{\partial^{|\lambda |}}{\partial x_1^{\lambda_1} \cdots \partial x_{d_q}^{\lambda_{d_q}}}
\]
where $\lambda = (\lambda_1,\ldots,\lambda_{d_q})$ is a $d_q$-tuple of nonnegative integers and $| \lambda | = \lambda_1 + \cdots + \lambda_{d_q}$. Let $\nu \in (0,1]$. Let $\| \cdot \|$ without any subscripts denote the $\R^{d_q}$-Euclidean norm. Define the H\"older norm of $f: \mathcal{D} \rightarrow \R$ by
\[
	\| f \|_{m,\infty,\nu} = \max_{| \lambda | \leq m} \sup_{x \in \text{int}(\mathcal{D})} | \nabla^{\lambda} f(x) | + \max_{| \lambda | = m} \sup_{x,y \in \text{int} (\mathcal{D}), x \neq y} \frac{ | \nabla^\lambda f(x) - \nabla^\lambda f(y) |}{\| x - y \|^\nu}.
\]
For any $B > 0$, let $\mathscr{C}_{m,\nu}^B(\mathcal{D}) = \{ f \in \mathscr{C}_m(\mathcal{D}) : \| f \|_{m, \infty, \nu} \leq B \}$ denote a H\"older ball.

\begin{partialIndepAssump}[Quantile regression regularity]\label{assn:quant reg regularity}
\medskip
Let $m$ be an integer with $m \geq 3$ and $\nu \in (0,1]$. Let $B > 0$. Then $\gamma_0 \in \mathcal{G}$ where $\mathcal{G} \subseteq \mathscr{C}_{m,\nu}^B([\smalleps,1-\smalleps])^{d_q}$ for some $\smalleps \in (0,\varepsilon)$.
\end{partialIndepAssump}

In this assumption we assume $m \geq 3$ to obtain bounded third derivatives of $\gamma_0$. 
In appendix \ref{sec:prelimest} we state several additional standard regularity conditions that we use to obtain asymptotic normality of the first step estimators; see assumptions A\ref{assn:prop score}--A\ref{assn:quant reg} starting on page \pageref{assn:prop score}. We continue to maintain these assumptions here. As a first preliminary result, we use these assumptions to derive the limiting distribution of the CQTE bound estimators; see proposition \ref{prop:CQTE convergence} in appendix \ref{sec:proofs}. Using that result, we can then derive the limiting distribution of the CATE bound estimators.

\medskip

\begin{proposition}[CATE convergence]\label{prop:CATE convergence}
Fix $w \in \mathcal{W}$. Suppose A\ref{assn:iid}, A\ref{assn:quant reg regularity}, and A\ref{assn:prop score}--A\ref{assn:quant reg} hold.  Fix $c \in [0,1]$. Then
\[
	\sqrt{n}
	\begin{pmatrix}
		\widehat{\overline{\text{CATE}}}^c( w) - \overline{\text{CATE}}_\varepsilon^c(w)\\
		\widehat{\underline{\text{CATE}}}^c(w) - \underline{\text{CATE}}_\varepsilon^c(w)
	\end{pmatrix}
	\xrightarrow{d} \mathbf{Z}_{\text{CATE}}(w),
\]
where $\mathbf{Z}_\text{CATE}(w)$ is a random vector in $\R^2$ whose distribution is characterized in the proof.
\end{proposition}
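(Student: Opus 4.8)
The plan is to write the CATE bound estimators as the image of the first-step estimator $\widehat{\theta}=(\widehat{\beta},\widehat{\gamma})$ under a fixed map that is Hadamard directionally differentiable (HDD) at $\theta_0=(\beta_0,\gamma_0)$, and then invoke the functional delta method for HDD maps of \cite{FangSantos2014}. Bundle the two bounds into $\Psi(w,\theta):=\big(\overline{\Gamma}_2(1,w,\theta)-\underline{\Gamma}_2(0,w,\theta),\,\underline{\Gamma}_2(1,w,\theta)-\overline{\Gamma}_2(0,w,\theta)\big)\in\R^2$, so that the left-hand vector in the proposition equals $\sqrt{n}\,(\Psi(w,\widehat{\theta})-\Psi(w,\theta_0))$. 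By lemma \ref{lemma:prelim estimators}, $\sqrt{n}(\widehat{\theta}-\theta_0)\rightsquigarrow\mathbf{Z}_1$, a tight mean-zero Gaussian element of $\R^{d_W}\times\ell^\infty([\varepsilon,1-\varepsilon],\R^{d_q})$ which, having continuous paths, is supported on $\R^{d_W}\times\mathscr{C}([\varepsilon,1-\varepsilon],\R^{d_q})$. Hence, once $\Psi(w,\cdot)$ is shown to be HDD at $\theta_0$ tangentially to $\R^{d_W}\times\mathscr{C}([\varepsilon,1-\varepsilon],\R^{d_q})$, the delta method gives $\sqrt{n}(\Psi(w,\widehat{\theta})-\Psi(w,\theta_0))\xrightarrow{d}\Psi'_{\theta_0}(w)(\mathbf{Z}_1)=:\mathbf{Z}_{\text{CATE}}(w)$, whose law is characterized by feeding the Gaussian process $\mathbf{Z}_1$ into the explicit directional-derivative formula derived below.

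The main work is the HDD of $\Psi(w,\cdot)$, which reduces (differences and finite linear combinations of HDD maps being HDD) to showing that $\Gamma_2(x,w,\cdot)=\int_0^1\Gamma_1(x,w,\tau,\cdot)\,d\tau$ is HDD at $\theta_0$ for $x\in\{0,1\}$. For a fixed $\tau$, write $\overline{\Gamma}_1(x,w,\tau,\theta)=q(x,w)'\gamma\big(T(\tau,x,w,\beta)\big)$ with $T(\tau,x,w,\beta)=\max\{\min\{\tau+\tfrac{c}{L(x,w'\beta)}\min\{\tau,1-\tau\},\tfrac{\tau}{L(x,w'\beta)},1-\varepsilon\},\varepsilon\}$. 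The map $\beta\mapsto L(x,w'\beta)$ is smooth and, by the overlap and parameter-space conditions in A\ref{assn:prop score}--A\ref{assn:quant reg}, uniformly bounded away from $0$ and $1$; composing it with the nested $\min$/$\max$ operator (which is globally Lipschitz and directionally differentiable, hence HDD, with derivative a selection among the gradients of whichever $\min$/$\max$ arguments are active) shows that $\beta\mapsto T(\tau,x,w,\beta)$ is HDD with directional derivative $d_T(\tau,x,w;h^\beta)$, positively homogeneous and continuous in $h^\beta$. Separately, using A\ref{assn:quant reg regularity} ($\gamma_0$ in a H\"older ball, in particular twice continuously differentiable with uniformly bounded derivatives), the evaluation map $(\gamma,t)\mapsto q(x,w)'\gamma(t)$ is \emph{fully} Hadamard differentiable at $(\gamma_0,T(\tau,x,w,\beta_0))$ tangentially to continuous directions, with derivative $q(x,w)'[\gamma_0'(t_0)\,\delta t+\delta\gamma(t_0)]$ --- a routine Taylor expansion, using uniform convergence of the $\gamma$-perturbation and continuity of the limiting direction to control the cross term. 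Chaining these (so that the only nonlinearity comes from $d_T$) yields that $\overline{\Gamma}_1(x,w,\tau,\cdot)$ is HDD at $\theta_0$ with $\overline{\Gamma}_1'(x,w,\tau)(\mathbf{h})=q(x,w)'\big[\gamma_0'(\bar{t}_0(\tau))\,d_T(\tau,x,w;h^\beta)+h^\gamma(\bar{t}_0(\tau))\big]$, where $\bar{t}_0(\tau):=T(\tau,x,w,\beta_0)$, and symmetrically for $\underline{\Gamma}_1$.

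To pass to $\Gamma_2$, I would interchange the directional derivative with the $\tau$-integral by dominated convergence. For any $\mathbf{h}_s\to\mathbf{h}$ (with $\theta_0+s\mathbf{h}_s$ in the domain), the difference quotient $s^{-1}[\Gamma_1(x,w,\tau,\theta_0+s\mathbf{h}_s)-\Gamma_1(x,w,\tau,\theta_0)]$ converges, for every $\tau$, to $\Gamma_1'(x,w,\tau)(\mathbf{h})$, and is bounded uniformly in $\tau$ and small $s$: the uniform bound follows from $\|q(x,w)\|<\infty$, the $1$-Lipschitzness of $\min$/$\max$, the H\"older-ball bound on $\gamma_0$ (controlling both the Lipschitz constant of $t\mapsto\gamma_0(t)$ and the second-order Taylor remainder), and $L(x,w'\beta)$ bounded away from $0$. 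Dominated convergence then gives $s^{-1}[\Gamma_2(x,w,\theta_0+s\mathbf{h}_s)-\Gamma_2(x,w,\theta_0)]\to\int_0^1\Gamma_1'(x,w,\tau)(\mathbf{h})\,d\tau$; the limit is continuous and positively homogeneous in $\mathbf{h}$ (dominated convergence again, using continuity of $h^\gamma\mapsto h^\gamma(\bar{t}_0(\tau))$ and of $d_T(\tau,x,w;\cdot)$, plus pointwise homogeneity). Hence $\Gamma_2(x,w,\cdot)$, and therefore $\Psi(w,\cdot)$, is HDD at $\theta_0$ tangentially to $\R^{d_W}\times\mathscr{C}([\varepsilon,1-\varepsilon],\R^{d_q})$, with $\Psi'_{\theta_0}(w)$ assembled from the $\Gamma_2'$ formulas; plugging in $\mathbf{Z}_1$ gives the claimed limit and its characterization. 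Equivalently, one may first obtain the CQTE process limit of proposition \ref{prop:CQTE convergence} in $\ell^\infty([0,1],\R^2)$ with a.s.\ continuous paths, and then note that $f\mapsto\int_0^1 f(\tau)\,d\tau$ is a bounded linear, hence continuous, map on $\ell^\infty([0,1],\R^2)$, so the continuous mapping theorem yields $\mathbf{Z}_{\text{CATE}}(w)=\int_0^1\mathbf{Z}_{\text{CQTE}}(\tau,w)\,d\tau$.

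I expect the main obstacle to be the interchange of directional differentiation with the $\tau$-integral --- in particular, producing a remainder bound that is uniform in $\tau$ even though the active faces of the nested $\min$/$\max$ in $T(\tau,x,w,\cdot)$ (and the kink of $\min\{\tau,1-\tau\}$ at $\tau=1/2$) vary with $\tau$ --- together with the chain-rule bookkeeping through $L(x,w'\beta)$ and the two layers of $\min$/$\max$. The delta-method step and the differentiability of the evaluation map are comparatively routine given the first-step lemma and Assumption A\ref{assn:quant reg regularity}.
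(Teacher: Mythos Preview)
Your main argument is correct and follows essentially the same route as the paper: establish HDD of $\Gamma_1(x,w,\tau,\cdot)$ at $\theta_0$ (this is precisely what the paper does in the proof of proposition~\ref{prop:CQTE convergence}), pass the directional derivative through the $\tau$-integral by dominated convergence using a Lipschitz-in-$\theta$ bound on $\Gamma_1$, and then apply the \cite{FangSantos2014} delta method. The paper's dominating function comes from exactly the ingredients you list---Lipschitzness of $\min/\max$, the H\"older-ball bound $\|\gamma_0'\|_\infty\le B$ from A\ref{assn:quant reg regularity}, and local control of $L_\beta/L^2$---so your sketch and the paper's proof are aligned.

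One caveat on your closing ``equivalently'' route: proposition~\ref{prop:CQTE convergence} in the paper delivers only \emph{pointwise}-in-$\tau$ convergence of the CQTE bounds, not weak convergence in $\ell^\infty([0,1],\R^2)$. So applying the continuous mapping theorem to $f\mapsto\int_0^1 f(\tau)\,d\tau$ would first require upgrading that result to process convergence uniform in $\tau$, which is additional work the paper does not do (and avoids precisely by pushing the DCT argument through at the level of $\Gamma_2$). Your primary approach sidesteps this, so the alternative is best dropped or flagged as requiring a stronger CQTE result than what is available.
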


In the statement of this result we deferred the full characterization of $\mathbf{Z}_\text{CATE}(w)$ to the proof. To get a brief idea of what it looks like, however, consider the first component. It is
\[
	\textbf{Z}_{\text{CATE}}^{(1)}(w) = \overline{\Gamma}_{2,\theta_0}'(1,w,\mathbf{Z}_1) - \underline{\Gamma}_{2,\theta_0}'(0,w,\mathbf{Z}_1)
\]
where $\overline{\Gamma}_{2,\theta_0}'(x,w,\mathbf{Z}_1)$ is the Hadamard directional derivative of $\overline{\Gamma}_{2}$ evaluated at $\mathbf{Z}_1$, the limiting distribution of the first step estimators. See page \pageref{eq:defOfGamma2HDD} for the expression for $\overline{\Gamma}_{2,\theta_0}'$. Likewise, $\underline{\Gamma}_{2,\theta_0}'(x,w,\mathbf{Z}_1)$ is the Hadamard directional derivative of $\underline{\Gamma}_2$ evaluated at $\mathbf{Z}_1$. Although $\mathbf{Z}_1$ is Gaussian, the HDDs are continuous but generally nonlinear functionals. Hence the distribution of $\mathbf{Z}_\text{CATE}(w)$ is non-Gaussian. In section \ref{sec:bootstrap} we show how to use a non-standard bootstrap to approximate its distribution.

\subsubsection*{The ATE Bounds}

Next we derive the limiting distribution of our ATE bound estimators. Let
\[
	\overline\Gamma_3(x,\theta) = \int_\mathcal{W} \overline{\Gamma}_2(x,w,\theta) \; dF_W(w) \quad \text{ and } \qquad
	\underline\Gamma_3(x,\theta) 	= \int_\mathcal{W} \underline{\Gamma}_2(x,w,\theta) \; dF_W(w).
\]
Then
\[
	[\underline{\text{ATE}}_\varepsilon^c, \overline{\text{ATE}}_\varepsilon^c ] = \left[ \underline{\Gamma}_3(1,\theta_0) - \overline{\Gamma}_3(0,\theta_0), \,\overline{\Gamma}_3(1,\theta_0) - \underline{\Gamma}_3(0,\theta_0) \right]
\]
are the trimmed population ATE bounds. We estimate them by
\[
	\left[\widehat{\underline{\text{ATE}}}^c,\widehat{\overline{\text{ATE}}}^c \right]
	\equiv
	\left[\avg\left(\underline{\Gamma}_2(1,W_i,\widehat\theta)- \overline{\Gamma}_2(0,W_i,\widehat\theta)\right), \,
	\avg\left(\overline{\Gamma}_2(1,W_i,\widehat\theta) - \underline{\Gamma}_2(0,W_i,\widehat\theta)\right)\right].
\]
Unlike $\Gamma_2$, the $\Gamma_3$ mapping depends on $F_W$, which is unknown. Here we estimate it by the empirical distribution of $W$. 

Next, let $\delta > 0$ and define
\[
	\mathcal{B}_\delta = \{ \beta \in \mathcal{B} : \|\beta - \beta_0\| \leq \delta\}
	\qquad \text{and} \qquad
	 L_\beta(x,w'\beta) = \frac{\partial}{\partial\beta}L(x,w'\beta).
\]
The following assumption bounds the inverse ratio of the squared propensity score to its derivative with respect to the parameter $\beta$. This assumption holds under common parametric specifications for the propensity score, like logit or probit. It also holds if strong overlap holds. Moreover, under our other assumptions, note that strong overlap holds when $W$ has finite support.

\begin{partialIndepAssump}\label{assn:propscoremoment}
There is a $\delta > 0$ such that
\[
	\Exp \left( \sup_{ \beta\in\mathcal{B}_\delta}\left\|\frac{L_\beta(x,W'\beta)}{L(x,W'\beta)^2}\right\|^4\right) < \infty
\]
for each $x \in \{ 0,1\}$. 
\end{partialIndepAssump}

Under these assumptions, we show the following result.

\begin{theorem}[ATE convergence]\label{thm:ATE convergence}
Suppose A\ref{assn:iid}--A\ref{assn:quant reg} hold. Then
\[
	\sqrt{n}
	\begin{pmatrix}
		\widehat{\overline{\text{ATE}}}^c - \overline{\text{ATE}}_\varepsilon^c \\
		\widehat{\underline{\text{ATE}}}^c - \underline{\text{ATE}}_\varepsilon^c
	\end{pmatrix}
	\xrightarrow{d} \mathbf{Z}_{\text{ATE}},
\]
where $\mathbf{Z}_\text{ATE}$ is a random vector in $\R^2$ whose distribution is characterized in the proof.
\end{theorem}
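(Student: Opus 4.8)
The plan is to apply the functional delta method for Hadamard directionally differentiable maps (\citealt{FangSantos2014}) to a joint functional whose input is the empirical distribution of $W$ together with the first step estimator $\widehat\theta = (\widehat\beta, \widehat\gamma)$. First I would write $\widehat{\overline{\text{ATE}}}^c = \frac{1}{n}\sum_{i=1}^n \overline{\Gamma}_2(1,W_i,\widehat\theta) - \frac{1}{n}\sum_{i=1}^n \underline{\Gamma}_2(0,W_i,\widehat\theta)$, and similarly for the lower bound, and observe that these are obtained by composing two operations: (i) the map $\theta \mapsto \overline{\Gamma}_2(\cdot,\theta)$ (and its $\underline{\Gamma}_2$ counterpart), viewed as a map into $\ell^\infty(\mathcal{W},\R)$ (or into $\mathscr{C}(\mathcal{W},\R)$), which is already known from the analysis underlying Proposition \ref{prop:CATE convergence} to be Hadamard directionally differentiable at $\theta_0$ tangentially to $\R^{d_W}\times\mathscr{C}([\varepsilon,1-\varepsilon],\R^{d_q})$, with derivative $\overline{\Gamma}_{2,\theta_0}'(x,\cdot,h)$; and (ii) integration of the resulting function of $w$ against a probability measure on $\mathcal{W}$. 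The key observation is that $\Gamma_3(x,\theta) = \int_{\mathcal{W}} \Gamma_2(x,w,\theta)\,dF_W(w)$ is bilinear-in-spirit: it is linear in the measure and HDD in $\theta$, so the composition is HDD in $(F_W,\theta)$ jointly, with derivative equal to $\int \Gamma_{2,\theta_0}'(x,w,h)\,dF_W(w) + \int \Gamma_2(x,w,\theta_0)\,dG(w)$ for a direction $(G,h)$.

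The steps in order would be: (1) Establish joint weak convergence of the input. By A\ref{assn:iid}, $\sqrt n(\mathbb{F}_{W,n} - F_W) \rightsquigarrow \mathbb{G}_W$, a $W$-Brownian bridge, in a suitable space (e.g., $\ell^\infty$ over a Donsker class of functions, or directly exploiting that we only need to integrate the fixed smooth family $\{\Gamma_2(x,\cdot,\theta_0)\}$ and the derivative family, plus the fact that $\mathcal{W}$-indexed empirical process convergence holds under A\ref{assn:propscoremoment} giving the requisite integrability/envelope); and, from the first step analysis (Lemma \ref{lemma:prelim estimators}), $\sqrt n(\widehat\theta - \theta_0) \rightsquigarrow \mathbf{Z}_1$. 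These are functionals of the same sample, so I would bundle them into a single empirical process and invoke the (Donsker) CLT to get joint convergence $\sqrt n(\mathbb{F}_{W,n}-F_W,\widehat\theta-\theta_0)\rightsquigarrow(\mathbb{G}_W,\mathbf{Z}_1)$, with the limit taking values in the tangent space $\R^{d_W}\times\mathscr{C}([\varepsilon,1-\varepsilon],\R^{d_q})$ times the relevant function space. (2) Verify that the map $(\mu,\theta)\mapsto \Gamma_3(x,\theta;\mu) := \int \Gamma_2(x,w,\theta)\,d\mu(w)$ is Hadamard directionally differentiable at $(F_W,\theta_0)$ tangentially to the appropriate subspace. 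Here I would use the chain-rule-type argument: along any sequence $(\mu_t,\theta_t) \to (F_W,\theta_0)$ with $(\mu_t - F_W)/t \to G$ and $(\theta_t-\theta_0)/t \to h$, decompose $\frac{1}{t}[\int \Gamma_2(w,\theta_t)\,d\mu_t - \int \Gamma_2(w,\theta_0)\,dF_W]$ into a term controlled by the HDD of $\Gamma_2$ in $\theta$ (uniformly in $w$, which requires the uniform-in-$w$ differentiability already available from the CATE analysis and A\ref{assn:quant reg regularity}), a term $\int \Gamma_2(w,\theta_0)\,dG(w)$ from the measure perturbation, and a cross term that is $o(1)$; A\ref{assn:propscoremoment} supplies the dominating function needed to pass limits through the integral. (3) Apply the functional delta method of \cite{FangSantos2014} to conclude $\sqrt n(\widehat{\Gamma}_3(x) - \Gamma_{3,\varepsilon}(x)) \xrightarrow{d} \Gamma_{3,\theta_0}'(x,\mathbf{Z}_1) + \int_{\mathcal{W}}\Gamma_2(x,w,\theta_0)\,d\mathbb{G}_W(w)$ jointly over $x \in \{0,1\}$ and over upper/lower bounds. (4) Take the linear combinations defining $\widehat{\overline{\text{ATE}}}^c$ and $\widehat{\underline{\text{ATE}}}^c$; since a fixed linear map is trivially Hadamard differentiable, another application of the delta method (or just the continuous mapping theorem) gives the stated joint convergence to $\mathbf{Z}_{\text{ATE}}$, with the explicit limit
\[
\mathbf{Z}_{\text{ATE}} =
\begin{pmatrix}
\overline{\Gamma}_{3,\theta_0}'(1,\mathbf{Z}_1) - \underline{\Gamma}_{3,\theta_0}'(0,\mathbf{Z}_1) + \int_{\mathcal{W}}\big(\overline{\Gamma}_2(1,w,\theta_0) - \underline{\Gamma}_2(0,w,\theta_0)\big)\,d\mathbb{G}_W(w)\\
\underline{\Gamma}_{3,\theta_0}'(1,\mathbf{Z}_1) - \overline{\Gamma}_{3,\theta_0}'(0,\mathbf{Z}_1) + \int_{\mathcal{W}}\big(\underline{\Gamma}_2(1,w,\theta_0) - \overline{\Gamma}_2(0,w,\theta_0)\big)\,d\mathbb{G}_W(w)
\end{pmatrix},
\]
where $\overline{\Gamma}_{3,\theta_0}'(x,h) = \int_{\mathcal{W}}\overline{\Gamma}_{2,\theta_0}'(x,w,h)\,dF_W(w)$ and similarly for $\underline{\Gamma}_{3,\theta_0}'$.

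The main obstacle I expect is Step (2), specifically the interplay between the empirical-measure perturbation and the HDD-in-$\theta$ perturbation when $\mathcal{W}$ is allowed to be unbounded or $W$ continuous. Two issues must be handled carefully: first, the derivative $\overline{\Gamma}_{2,\theta_0}'(x,w,h)$ involves terms like $q(x,w)'\gamma_0'(\cdot)$ divided by powers of the propensity score $L(x,w'\beta_0)$ and multiplied by $L_\beta(x,w'\beta_0)$ (through the chain rule on $\overline t(\tau,x,w)$), so establishing that the map into $\ell^\infty(\mathcal{W})$ is well-defined and that its directional derivative is uniformly controlled requires the fourth-moment bound in A\ref{assn:propscoremoment} together with the Hölder-ball restriction A\ref{assn:quant reg regularity} on $\gamma_0$ (which bounds $\gamma_0$ and its derivatives); second, to integrate the empirical process $\mathbb{G}_W$ against the fixed function $w\mapsto \Gamma_2(x,w,\theta_0)$ one needs this function to be $\mathbb{G}_W$-integrable, i.e., square-integrable under $F_W$, which again follows from A\ref{assn:propscoremoment} and boundedness of $\gamma_0$. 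A cleaner route that avoids full empirical-process machinery over $\mathcal{W}$: treat $\frac{1}{n}\sum_i \Gamma_2(x,W_i,\widehat\theta)$ directly, add and subtract $\frac{1}{n}\sum_i \Gamma_2(x,W_i,\theta_0)$ and $\int \Gamma_2(x,w,\theta_0)\,dF_W$; the first difference is handled by a uniform-in-$i$ version of the HDD expansion of $\Gamma_2$ in $\theta$ combined with $\sqrt n(\widehat\theta-\theta_0)\rightsquigarrow\mathbf{Z}_1$ and a stochastic-equicontinuity / dominated-convergence argument using A\ref{assn:propscoremoment}, while the second is a plain univariate CLT for the iid average $\frac{1}{n}\sum_i \Gamma_2(x,W_i,\theta_0)$. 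Joint convergence of these two pieces, together with the first step estimator, then feeds into the delta method as above. Everything downstream — the linear combinations, and the non-Gaussianity of $\mathbf{Z}_{\text{ATE}}$ inherited from the nonlinear HDDs — is routine given the CATE analysis.
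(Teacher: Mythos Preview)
Your proposal is essentially correct and matches the paper's approach; in particular, the ``cleaner route'' you describe at the end is exactly what the paper does. The paper decomposes $\sqrt{n}(\widehat{\overline{E}}_x^c - \overline{E}_{x,\varepsilon}^c)$ into three pieces: a stochastic equicontinuity term $\sqrt{n}(\mathbb{P}_n - P)(\overline{\Gamma}_2(x,\cdot,\widehat\theta)-\overline{\Gamma}_2(x,\cdot,\theta_0))$ shown to be $o_p(1)$ by verifying that $\{\overline{\Gamma}_2(x,\cdot,\theta):\theta\in\Theta_\delta\}$ is Donsker via a Lipschitz bound with square-integrable Lipschitz constant (this is where A\ref{assn:propscoremoment} and A\ref{assn:quant reg}.4 enter, exactly as you anticipated); a CLT term $\frac{1}{\sqrt n}\sum_i(\overline{\Gamma}_2(x,W_i,\theta_0)-\Exp[\overline{\Gamma}_2(x,W,\theta_0)])$; and the term $\sqrt{n}(\overline{\Gamma}_3(x,\widehat\theta)-\overline{\Gamma}_3(x,\theta_0))$ handled by showing $\overline{\Gamma}_3(x,\cdot)$ is HDD at $\theta_0$ via dominated convergence applied to the integral $\int_{\mathcal W}\int_0^1\overline{\Gamma}_1'(x,w,\tau,h)\,d\tau\,dF_W(w)$. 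Your limiting expression for $\mathbf{Z}_{\text{ATE}}$ coincides with the paper's. One small clarification: the CATE analysis in Proposition \ref{prop:CATE convergence} is pointwise in $w$, not uniform, so the paper does not pass through a map into $\ell^\infty(\mathcal{W})$ as in your first route; it works directly with $\Gamma_3$ and uses dominated convergence over $(\tau,w)$, which is the simpler path you also identified.
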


Like the CATE bound estimators, the limiting distribution of the ATE bound estimators is non-Gaussian. To understand this limiting distribution, first recall that we denote our bounds on the means $\Exp(Y_x)$ by
\[
	\overline{E}_{x,\varepsilon}^c = \overline{\Gamma}_3(x,\theta_0) = \Exp[ \overline{\Gamma}_2(x,W,\theta_0)]
	\qquad \text{and} \qquad	
	\underline{E}_{x,\varepsilon}^c = \underline{\Gamma}_3(x,\theta_0) = \Exp[ \underline{\Gamma}_2(x,W,\theta_0)].
\]
We estimate them by
\[
	\widehat{\overline{E}}_x^c = \avg \overline{\Gamma}_2(x,W_i,\widehat{\theta}) \quad \text{ and } \quad
	\widehat{\underline{E}}_x^c  = \avg \underline{\Gamma}_2(x,W_i,\widehat{\theta}).
\]
In the proof of theorem \ref{thm:ATE convergence}, we show that the following asymptotic expansion holds: 
\begin{align}\label{eq:twoPieceRepresentationForPotMeanEsts}
	\sqrt{n}\begin{pmatrix}
		\widehat{\overline{E}}_x^c - \overline{E}_{x,\varepsilon}^c \\
		\widehat{\underline{E}}_x^c - \underline{E}_{x,\varepsilon}^c
	\end{pmatrix}
	&= \Gamma_{3,\theta_0}'(x, \sqrt{n} (\widehat{\theta} - \theta_0)) + \frac{1}{\sqrt{n}} \sum_{i=1}^n \Big( \Gamma_2(x,W_i, \theta_0) - \Exp[ \Gamma_2(x,W,\theta_0)] \Big) + o_p(1),
\end{align}
where $\Gamma_{3,\theta_0}'$ is the Hadamard directional derivative of $\Gamma_3$, which we define in the proof of theorem \ref{thm:ATE convergence}. The first term in this expansion comes from the sample variation in the first step estimators: the propensity score $\widehat{p}_{x|w} = L(x,w'\widehat{\beta})$ and quantile function $\widehat{Q}(\tau \mid x,w) = p(x,w)'\widehat{\gamma}(\tau)$. The functional $\Gamma_{3,\theta_0}'(x,\cdot)$ is nonlinear in $\widehat{\beta}$. Therefore, since $\sqrt{n}(\widehat{\beta}-\beta_0)$ converges in distribution to a Gaussian limiting process, the limiting distribution of this functional is non-Gaussian. If $\beta_0$ was known---and hence the propensity score was known---then this component would follow a Gaussian distribution since the remaining component $\widehat{\gamma}$ is asymptotically Gaussian and enters $\Gamma_{3,\theta_0}'(x,\cdot)$ linearly.

The second term in this expansion comes from the variation of the CATE bounds over the values of the covariates $W$. It follows a limiting Gaussian distribution by the central limit theorem. This term is asymptotically independent of the sampling variation in the first step estimators $\widehat{\theta}$ since the influence function of $\widehat{\theta}$ is mean independent of $W$. Overall, we see that the limiting distribution of the ATE bounds is the sum of two independent random vectors, one Gaussian and one non-Gaussian. We approximate the distribution of these two random vectors using two separate bootstraps in section \ref{sec:bootstrap}.

\subsubsection*{The ATT Bounds}

Finally we study the limiting properties of our ATT bound estimators. Our trimmed population ATT bounds are
\[
	[\underline{\text{ATT}}_\varepsilon^c, \overline{\text{ATT}}_\varepsilon^c] =
		\left[\Exp(Y \mid X=1) - \frac{\overline{E}_{0,\varepsilon}^c - p_0 \Exp(Y \mid X=0)}{p_1}, \, \Exp(Y \mid X=1) - \frac{\underline{E}_{0,\varepsilon}^c - p_0\Exp(Y \mid X=0)}{p_1}\right].
\]
We estimate them by
\[
	[\widehat{\underline{\text{ATT}}}^c, \widehat{\overline{\text{ATT}}}^c] = \left[\widehat\Exp (Y \mid X=1) - \frac{\widehat{\overline{E}}_0^c - \widehat{p}_0 \widehat\Exp (Y \mid X=0)}{\widehat{p}_1}, \,
	\widehat\Exp (Y \mid X=1) - \frac{\widehat{\underline{E}}_0^c - \widehat{p}_0 \widehat\Exp (Y \mid X=0)}{\widehat{p}_1} \right], 
\]
where $\widehat{\Exp}(Y \mid X=0)$ and $\widehat{p}_0$ are defined in section \ref{sec:estimation}.

\begin{proposition}[ATT convergence]\label{prop:ATT convergence}
Suppose the assumptions of theorem \ref{thm:ATE convergence} hold. Suppose further that $\var(Y\indicator(X=x)) <\infty$ for each $x\in\{0,1\}$. Then
\[
	\sqrt{n}
	\begin{pmatrix}
		\widehat{\overline{\text{ATT}}}^c - \overline{\text{ATT}}_\varepsilon^c\\
		\widehat{\underline{\text{ATT}}}^c - \underline{\text{ATT}}_\varepsilon^c
	\end{pmatrix}
	\xrightarrow{d} \mathbf{Z}_{\text{ATT}},
\]
where $\mathbf{Z}_\text{ATT}$ is a random vector in $\R^2$ whose distribution is characterized in the proof.
\end{proposition}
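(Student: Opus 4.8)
The plan is to first perform a purely algebraic simplification. Writing $\widehat{m}_x = \avg Y_i \indicator(X_i = x)$ for $x \in \{0,1\}$, observe that $\widehat{p}_0 \widehat{\Exp}(Y \mid X = 0) = \widehat{m}_0$ and $\widehat{\Exp}(Y \mid X = 1) = \widehat{m}_1 / \widehat{p}_1$, so the estimators in \eqref{eq:ATTbounds} reduce to
\[
	\widehat{\overline{\text{ATT}}}^c = \frac{\widehat{m}_1 + \widehat{m}_0 - \widehat{\overline{E}}_0^c}{\widehat{p}_1},
	\qquad
	\widehat{\underline{\text{ATT}}}^c = \frac{\widehat{m}_1 + \widehat{m}_0 - \widehat{\underline{E}}_0^c}{\widehat{p}_1},
\]
and the trimmed population bounds equal the same expressions with $(\widehat{m}_1,\widehat{m}_0,\widehat{p}_1,\widehat{\overline{E}}_0^c,\widehat{\underline{E}}_0^c)$ replaced by $(m_1,m_0,p_1,\overline{E}_{0,\varepsilon}^c,\underline{E}_{0,\varepsilon}^c)$, where $m_x = \Exp(Y \indicator(X=x))$ and $p_1 = \Prob(X=1)$. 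The outer map $g(a_1,a_0,p,e) = (a_1 + a_0 - e)/p$ is continuously (hence totally, hence Hadamard) differentiable in a neighborhood of $(m_1,m_0,p_1,\overline{E}_{0,\varepsilon}^c)$ because overlap forces $p_1 = \Exp(p_{1 \mid W}) \in (0,1)$, so the denominator is bounded away from zero; its partial with respect to $p$ at the truth is $-\overline{\text{ATT}}_\varepsilon^c/p_1$ and the remaining partials are $\pm 1/p_1$. Thus the ATT bound estimators are a smooth transformation of three elementary sample averages, $\widehat{m}_1,\widehat{m}_0,\widehat{p}_1$, together with the mean-bound estimators $\widehat{\overline{E}}_0^c,\widehat{\underline{E}}_0^c$ already analyzed in theorem \ref{thm:ATE convergence}.

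Next I would establish joint weak convergence of all the ingredients. Under A\ref{assn:iid} and the added moment condition $\var(Y\indicator(X=x)) < \infty$, the vector $\sqrt{n}(\widehat{m}_1 - m_1, \widehat{m}_0 - m_0, \widehat{p}_1 - p_1)$ is a centered i.i.d.\ sum with finite variances. Lemma \ref{lemma:prelim estimators} gives $\sqrt{n}(\widehat{\theta} - \theta_0) \rightsquigarrow \mathbf{Z}_1$ in $\R^{d_W} \times \ell^\infty([\varepsilon,1-\varepsilon],\R^{d_q})$, again with an i.i.d.\ influence-function representation, and in the proof of theorem \ref{thm:ATE convergence} the $W$-variation term $\tfrac{1}{\sqrt{n}}\sum_{i=1}^n \big( \Gamma_2(0,W_i,\theta_0) - \Exp[\Gamma_2(0,W,\theta_0)] \big)$ is shown to be a centered i.i.d.\ sum with a Gaussian limit. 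Stacking the influence functions of all of these and applying a single (functional) central limit theorem yields joint weak convergence of the stacked object consisting of $\sqrt{n}(\widehat{m}_1 - m_1)$, $\sqrt{n}(\widehat{m}_0 - m_0)$, $\sqrt{n}(\widehat{p}_1 - p_1)$, $\sqrt{n}(\widehat{\theta} - \theta_0)$, and $\tfrac{1}{\sqrt{n}}\sum_{i=1}^n ( \Gamma_2(0,W_i,\theta_0) - \Exp[\Gamma_2(0,W,\theta_0)] )$, to a jointly Gaussian limit in the product space $\R^{3}\times(\R^{d_W}\times\ell^\infty([\varepsilon,1-\varepsilon],\R^{d_q}))\times\R^2$. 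Note that, unlike in theorem \ref{thm:ATE convergence}, the sample-average pieces involve $Y$ and $X$ directly and are in general correlated both with $\mathbf{Z}_1$ and with the $W$-variation term, so there is no clean decomposition into independent Gaussian and non-Gaussian parts; $\mathbf{Z}_{\text{ATT}}$ will simply be the image of this joint Gaussian under the map constructed below.

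With joint convergence in hand, I would combine the ordinary delta method for the smooth map $g$ with the \cite{FangSantos2014} delta method applied to the Hadamard directionally differentiable map $\theta \mapsto \Gamma_3(0,\theta)$ already used for theorem \ref{thm:ATE convergence}. A first-order Taylor expansion of $g$ gives
\begin{align*}
	\sqrt{n}\big( \widehat{\overline{\text{ATT}}}^c - \overline{\text{ATT}}_\varepsilon^c \big)
	&= \frac{1}{p_1}\Big[ \sqrt{n}(\widehat{m}_1 - m_1) + \sqrt{n}(\widehat{m}_0 - m_0) - \sqrt{n}\big( \widehat{\overline{E}}_0^c - \overline{E}_{0,\varepsilon}^c \big) \Big] \\
	&\quad - \frac{\overline{\text{ATT}}_\varepsilon^c}{p_1}\,\sqrt{n}(\widehat{p}_1 - p_1) + o_p(1),
\end{align*}
and substituting the expansion \eqref{eq:twoPieceRepresentationForPotMeanEsts} for $\sqrt{n}(\widehat{\overline{E}}_0^c - \overline{E}_{0,\varepsilon}^c)$ (and its analogue for the lower bound) writes the left-hand side, up to $o_p(1)$, as a fixed functional of the jointly Gaussian vector of the previous paragraph: a functional that is linear in the sample-average pieces and in the $W$-variation sum and that acts through the continuous, positively homogeneous directional derivative $\Gamma_{3,\theta_0}'(0,\cdot)$ on the $\widehat{\theta}$-coordinate. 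Since such a functional is continuous, the continuous mapping theorem (with Slutsky's lemma absorbing the $o_p(1)$ remainder) gives $\sqrt{n}\big( \widehat{\overline{\text{ATT}}}^c - \overline{\text{ATT}}_\varepsilon^c,\ \widehat{\underline{\text{ATT}}}^c - \underline{\text{ATT}}_\varepsilon^c \big) \xrightarrow{d} \mathbf{Z}_{\text{ATT}}$, with $\mathbf{Z}_{\text{ATT}}$ defined as the push-forward of the joint Gaussian limit through this functional. I expect the main obstacle to be the bookkeeping for the joint (not merely marginal) weak convergence of the finite-dimensional sample averages together with the function-valued first-step process, and checking that the composition of the total derivative of $g$ with the directional derivative $\Gamma_{3,\theta_0}'$ retains the continuity and positive homogeneity required to invoke the \cite{FangSantos2014} delta method; everything else is a routine Taylor expansion and an application of Slutsky's theorem.
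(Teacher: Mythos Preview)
Your approach is essentially the same as the paper's: establish joint weak convergence of the first-step process $\sqrt{n}(\widehat{\theta}-\theta_0)$, the $W$-variation sum $n^{-1/2}\sum_i(\Gamma_2(0,W_i,\theta_0)-\Exp[\Gamma_2(0,W,\theta_0)])$, and the elementary sample averages, then apply the delta method (ordinary for the smooth outer map, Hadamard-directional for $\Gamma_3$). Your algebraic reparametrization to $g(m_1,m_0,p,e)=(m_1+m_0-e)/p$ is slightly cleaner than the paper's, which keeps $\widehat{\Exp}(Y\mid X=x)$ and $\widehat{p}_x$ separate and expands their influence functions individually; the substance is identical. One small slip: you have the labels swapped---the upper ATT bound pairs with $\widehat{\underline{E}}_0^c$ and the lower with $\widehat{\overline{E}}_0^c$ (subtracting a smaller $E_0$ bound gives the larger ATT bound), though this does not affect the argument.
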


$\widehat{\Exp}(Y \mid X=0)$ and $\widehat{p}_0$ are asymptotically Gaussian. Like our analysis of the ATE bounds, however, $\widehat{\overline{E}}_0^c$ and $\widehat{\underline{E}}_0^c$ have non-Gaussian asymptotic distributions. Overall, $\mathbf{Z}_\text{ATT}$, the asymptotic distribution of our ATT bound estimators, is a linear combination of Gaussian and non-Gaussian random variables.

\section{Bootstrap Inference}\label{sec:bootstrap}

We now show how to conduct inference on our bounds for CATE, ATE, and ATT. Earlier we noted that these bounds are generally not ordinary Hadamard differentiable mappings of the underlying parameters $\theta_0$. By corollary 3.1 in \cite{FangSantos2014}, this implies that standard bootstrap approaches cannot be used for these bounds. We instead use the non-standard bootstrap approach developed by \cite{FangSantos2014}. For brevity we focus on ATE and ATT in this section. We provide analogous results for CQTE and CATE in lemmas \ref{prop:CQTE analytical boot} and \ref{prop:CATE analytical boot} in appendix \ref{sec:CQTEandCATEanalyticalboot}. 

\subsection{Inference on Potential Outcome Means}

The bounds for ATE and ATT can be written in terms of bounds on $\Exp(Y_x)$. In this section we describe how to do inference on bounds for these means. We'll then use these results to do inference on our ATE and ATT bounds in the next subsection. Recall that our bounds on $\Exp(Y_x)$ can be written as a functional of $\theta$. This functional is Hadamard directionally differentiable in $\theta$, but it is generally not ordinary Hadamard differentiable. Theorem 3.1 of \cite{FangSantos2014} shows how to do bootstrap inference by consistently estimating the Hadamard directional derivative (HDD). This can be done by using analytical estimators or by using a numerical derivative as described in \cite{HongLi2015}. Here we use analytical estimates of the HDD. This approach explicitly uses the functional form of the HDD to estimate it. It allows us to avoid picking the numerical derivative step size, although other tuning parameters are used to estimate the HDDs analytically.

\subsubsection*{Setup}

Next we define some general notation. Let $Z_i = (Y_i,X_i,W_i)$ and $Z^n = \{Z_1,\ldots,Z_n\}$. Let $\vartheta_0$ denote some parameter of interest and let $\widehat{\vartheta}$ be an estimator of $\vartheta_0$ based on the data $Z^n$. Let $\mathbf{A}_n^*$ denote $\sqrt{n}(\widehat{\vartheta}^* - \widehat{\vartheta})$ where $\widehat{\vartheta}^*$ is a draw from the nonparametric bootstrap distribution of $\widehat{\vartheta}$. Suppose $\A$ is the tight limiting process of $\sqrt{n} ( \widehat{\vartheta} - \vartheta_0)$. Denote bootstrap consistency by $\mathbf{A}_n^* \overset{P}{\rightsquigarrow} \mathbf{A}$ where $\overset{P}{\rightsquigarrow}$ denotes weak convergence in probability, conditional on the data $Z^n$. Weak convergence in probability conditional on $Z^n$ is defined as
\[
	\sup_{h\in \text{BL}_1} \left| \Exp [h(\mathbf{A}_n^*) \mid Z^n] - \Exp [h(\mathbf{A})] \right| = o_p(1)
\]
where $\text{BL}_1$ denotes the set of Lipschitz functions into $\R$ with Lipschitz constant no greater than 1. We leave the domain of these functions and its associated norm implicit.

We focus on the choices $\vartheta_0 = \theta_0$ and $\widehat{\vartheta} = \widehat{\theta}$. For these choices, let $\mathbf{Z}_n^* = \sqrt{n}(\widehat{\theta}^* - \widehat{\theta})$. Let $\mathbf{Z}_1$ denote the limiting distribution of $\sqrt{n}(\widehat{\theta} - \theta_0)$; see lemma \ref{lemma:prelim estimators} in appendix \ref{sec:prelimest}. Theorem 3.6.1 of \cite{VaartWellner1996} implies that $\mathbf{Z}_n^* \overset{P}{\rightsquigarrow} \mathbf{Z}_1$. Our parameters of interest are all functionals $\Gamma$ of $\theta_0$. In particular, in section \ref{sec:asymptotics} we showed that
\[
	\sqrt{n}(\Gamma(\widehat{\theta}) - \Gamma(\theta_0)) \rightsquigarrow \Gamma'_{\theta_0}(\mathbf{Z}_1)
\]
for a variety of functionals $\Gamma$. To do inference on these functionals, we therefore want to estimate the distribution of $\Gamma_{\theta_0}'(\mathbf{Z}_1)$. \cite{FangSantos2014} show that
\[
	\widehat{\Gamma}_{\theta_0}'(\mathbf{Z}_n^*) \overset{P}{\rightsquigarrow} \Gamma'_{\theta_0}(\mathbf{Z}_1)
\]
where $\widehat{\Gamma}_{\theta_0}'$ is a suitable estimator of the Hadamard directional derivative $\Gamma'_{\theta_0}$. In this section we construct the estimators $\widehat{\Gamma}_{\theta_0}'$ and show that they can be used in this bootstrap.

\subsubsection*{Main Result}

Next, recall the asymptotic expansion in equation \eqref{eq:twoPieceRepresentationForPotMeanEsts} on page \pageref{eq:twoPieceRepresentationForPotMeanEsts}. As we will show, the second term in this expansion can be approximated using standard bootstrap approaches and replacing $\theta_0$ by $\widehat{\theta}$. The first term requires estimating the HDDs $\overline{\Gamma}_{3,\theta_0}'$ and $\underline{\Gamma}_{3,\theta_0}'$. The formulas for our estimators of these HDDs are long, and so we describe them in appendix \ref{sec:HDDformulas}. Denote these estimators by $\widehat{\overline{\Gamma}}_{3,\theta_0}'$ and $\widehat{\underline{\Gamma}}_{3,\theta_0}'$. They require choosing two scalar tuning parameters, $\kappa_n$ and $\eta_n$. $\kappa_n$ is a slackness parameter and $\eta_n$ is a step size parameter used to compute numerical derivatives of $\widehat{\gamma}(\cdot)$. Although not used in our proof, the asymptotic independence of the two components implies that approximating their respective marginal distributions is sufficient to obtain their joint distribution. 

As we just mentioned, we'll use the standard nonparametric bootstrap to approximate the second term of equation \eqref{eq:twoPieceRepresentationForPotMeanEsts}. To formalize this, let $\mathbb{G}_n^*$ denote the nonparametric bootstrap empirical process:
\[\mathbb{G}_n^* = \frac{1}{\sqrt{n}}\sum_{i=1}^n (M_{n,i} - 1) \delta_{Z_i}\]
where $(M_{n,1},\ldots,M_{n,n})$ are multinomially distributed with parameters $(1/n,\ldots,1/n)$ independently of $Z^n$, and where $\delta_{Z_i}$ is a distribution which assigns probability one to the value ${Z_i} \in \R^{2+d_W}$. Then for any function $g$,
\[
	\mathbb{G}_n^* g(Z) = \sqrt{n}\left(\frac{1}{n} \sum_{i=1}^n g(Z_{i}^*) - \overline{g(Z)}\right)
\]
where $Z_{i}^*$, $i=1,\ldots,n$ are drawn independently with replacement from $\{Z_1,\ldots,Z_n\}$ and $\overline{g(Z)} = \avg g(Z_i)$. In particular, we'll study the asymptotic distribution of
\[
	\mathbb{G}_n^* \Gamma_2(x,W,\widehat{\theta}) = \sqrt{n}\left(\frac{1}{n} \sum_{i=1}^n \Gamma_2(x,W_i^*, \widehat{\theta}) - \frac{1}{n} \sum_{i=1}^n \Gamma_2(x,W_i,\widehat{\theta}) \right).
\]

The following proposition is our main bootstrap consistency result.

\begin{proposition}[Analytical Bootstrap for Mean Potential Outcomes]\label{prop:ATE analytical boot}
Suppose the assumptions of theorem \ref{thm:ATE convergence} hold. Let $\kappa_n \to 0$, $n\kappa_n^2\to \infty$, $\eta_n \to 0,$ and $n\eta_n^2 \to \infty$ as $n\to\infty$. Then
\[
	\widehat{\Gamma}_{3,\theta_0}'(x, \sqrt{n}(\widehat{\theta}^* - \widehat{\theta})) + \mathbb{G}_n^* \Gamma_2(x,W,\widehat{\theta})
	\overset{P}{\rightsquigarrow}
	\Z_4(x),
\]
where $\Z_4(\cdot)$ is the limiting process of the expression given in equation \eqref{eq:twoPieceRepresentationForPotMeanEsts}, as characterized in the proof.
\end{proposition}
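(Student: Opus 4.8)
By equation \eqref{eq:twoPieceRepresentationForPotMeanEsts}, the target $\Z_4(x)$ is the weak limit of $\Gamma_{3,\theta_0}'\big(x,\sqrt n(\widehat\theta-\theta_0)\big)+\frac{1}{\sqrt n}\sum_{i=1}^n\big(\Gamma_2(x,W_i,\theta_0)-\Exp[\Gamma_2(x,W,\theta_0)]\big)$, which by the chain rule for Hadamard directional derivatives already used to prove theorem \ref{thm:ATE convergence}, together with the central limit theorem, equals $\Gamma_{3,\theta_0}'(x,\mathbf{Z}_1)+\mathbb{G}\Gamma_2(x,W,\theta_0)$, where $\mathbf{Z}_1$ is as in lemma \ref{lemma:prelim estimators} and $\mathbb{G}$ denotes the Gaussian process arising from the CLT for the centered map $\Gamma_2(x,W,\theta_0)-\Exp[\Gamma_2(x,W,\theta_0)]$. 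The plan is to show that each of the two bootstrap summands on the left-hand side tracks the corresponding term of this representation, and then to combine them while keeping the joint law correct.

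For the empirical-process term I would write $\mathbb{G}_n^*\Gamma_2(x,W,\widehat\theta)=\mathbb{G}_n^*\Gamma_2(x,W,\theta_0)+\mathbb{G}_n^*\big(\Gamma_2(x,W,\widehat\theta)-\Gamma_2(x,W,\theta_0)\big)$. Under Assumption A\ref{assn:quant reg regularity} the class $\{\Gamma_2(x,\cdot,\theta):\|\theta-\theta_0\|\le\delta\}$ is Donsker — the H\"older ball $\mathscr{C}^B_{m,\nu}$ has polynomial uniform entropy, the clamped quantile indices $\overline t(\tau,x,\cdot)$ and $\underline t(\tau,x,\cdot)$ are Lipschitz in $\beta$ uniformly in $\tau$, and $q(x,\cdot)$ is fixed — and $\theta\mapsto\Gamma_2(x,\cdot,\theta)$ is $L^2(F_W)$-Lipschitz at $\theta_0$; hence a conditional asymptotic-equicontinuity argument, combined with $\|\widehat\theta-\theta_0\|=O_p(n^{-1/2})$, shows the increment is $o_p(1)$ conditionally on $Z^n$, while $\mathbb{G}_n^*\Gamma_2(x,W,\theta_0)\overset{P}{\rightsquigarrow}\mathbb{G}\Gamma_2(x,W,\theta_0)$ by the bootstrap central limit theorem (theorem 3.6.2 in \cite{VaartWellner1996}).

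For the analytical-derivative term $\widehat\Gamma_{3,\theta_0}'\big(x,\sqrt n(\widehat\theta^*-\widehat\theta)\big)$, I would apply the bootstrap delta method of \cite{FangSantos2014}: since $\sqrt n(\widehat\theta^*-\widehat\theta)\overset{P}{\rightsquigarrow}\mathbf{Z}_1$ (noted after lemma \ref{lemma:prelim estimators}, via theorem 3.6.1 of \cite{VaartWellner1996}), with $\mathbf{Z}_1$ tight and supported in $\R^{d_W}\times\mathscr{C}([\varepsilon,1-\varepsilon],\R^{d_q})$, it suffices that $\widehat\Gamma_{3,\theta_0}'$ consistently estimate the Hadamard directional derivative $\Gamma_{3,\theta_0}'$, i.e.\ $\sup_{h\in K}\big\|\widehat\Gamma_{3,\theta_0}'(x,h)-\Gamma_{3,\theta_0}'(x,h)\big\|\overset{p}{\to}0$ for every compact $K$ ($\Gamma_{3,\theta_0}'$ is itself continuous, being a Hadamard directional derivative). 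This is the crux of the proof and is where the rate conditions enter. The explicit formula in appendix \ref{sec:HDDformulas} is assembled from (i) the first-step parameters $\theta_0$, replaced by the consistent estimates $\widehat\theta$; (ii) conditional densities of $Y$ given $(X,W)$ at the relevant quantile levels, replaced by numerical derivatives of $\widehat\gamma(\cdot)$ with step $\eta_n$ — here $\eta_n\to0$ and $n\eta_n^2\to\infty$ balance the $O(\eta_n^2)$ bias (using the bounded third derivatives granted by Assumption A\ref{assn:quant reg regularity}, $m\ge3$) against the $O_p(n^{-1/2}\eta_n^{-1})$ sampling noise; and (iii) the directional derivatives of the $\min/\max$ operators appearing in $\overline t,\underline t$, which are governed by the \emph{active sets} of indices attaining those extrema. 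I would show that the $\kappa_n$-slackness rule of appendix \ref{sec:HDDformulas} recovers those active sets with probability tending to one: the gaps between attaining and non-attaining branches are estimated at rate $O_p(n^{-1/2})$, so $\kappa_n\to0$ and $n\kappa_n^2\to\infty$ ensure that no attaining branch is dropped and no strictly non-attaining branch is retained, whence uniform consistency of $\widehat\Gamma_{3,\theta_0}'$ on compacta follows.

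Finally, to combine the two pieces I would represent both bootstrap summands as continuous images of a single bootstrap empirical process. By the uniform Bahadur representation underlying lemma \ref{lemma:prelim estimators}, $\sqrt n(\widehat\theta^*-\widehat\theta)=\mathbb{G}_n^*\psi+o_p(1)$ conditionally on $Z^n$, where $\psi=(\psi_\beta,\psi_\gamma(\cdot))$ is the influence function of $\widehat\theta$; hence the pair $\big(\sqrt n(\widehat\theta^*-\widehat\theta),\,\mathbb{G}_n^*\Gamma_2(x,W,\theta_0)\big)$ is, up to $o_p(1)$, $\mathbb{G}_n^*$ applied to the stacked Donsker function $\big(\psi,\Gamma_2(x,W,\theta_0)\big)$, so a single bootstrap CLT gives joint conditional convergence to $\big(\mathbf{Z}_1,\mathbb{G}\Gamma_2(x,W,\theta_0)\big)$. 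Because the propensity-score and quantile-regression scores are mean zero given $(X,W)$, we have $\Exp[\psi\mid W]=0$, so this joint Gaussian limit has uncorrelated — hence independent — blocks, which is the asymptotic independence noted after the statement (only the joint convergence, not the independence, is used). Combining the extended continuous mapping theorem with the consistency of $\widehat\Gamma_{3,\theta_0}'$ established above then yields $\widehat\Gamma_{3,\theta_0}'(x,\sqrt n(\widehat\theta^*-\widehat\theta))+\mathbb{G}_n^*\Gamma_2(x,W,\widehat\theta)\overset{P}{\rightsquigarrow}\Gamma_{3,\theta_0}'(x,\mathbf{Z}_1)+\mathbb{G}\Gamma_2(x,W,\theta_0)=\Z_4(x)$; the version joint over $x\in\{0,1\}$ needed for the ATE bounds follows by applying the same steps to the stacked functional. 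The main obstacle is step (iii): getting the $\min/\max$ active sets right uniformly over compact direction sets, which is exactly what the lower-rate conditions $n\kappa_n^2\to\infty$ and $n\eta_n^2\to\infty$ are designed to guarantee.
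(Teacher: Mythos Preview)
Your proposal is essentially correct and follows the same overall strategy as the paper: verify the \cite{FangSantos2014} conditions for the nonstandard piece, use a Donsker/equicontinuity argument for $\mathbb{G}_n^*\Gamma_2(x,W,\widehat\theta)$, and combine. Two differences in the technical route are worth noting.

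First, for Assumption 4 of \cite{FangSantos2014} you aim directly at the uniform-over-compacta consistency $\sup_{h\in K}\|\widehat\Gamma_{3,\theta_0}'(x,h)-\Gamma_{3,\theta_0}'(x,h)\|\to 0$. The paper instead uses their Remark 3.4: it suffices that $\widehat\Gamma_{3,\theta_0}'(x,\cdot)$ be Lipschitz in $h$ with an $O_p(1)$ constant and that $\widehat\Gamma_{3,\theta_0}'(x,h)\xrightarrow{p}\Gamma_{3,\theta_0}'(x,h)$ for each fixed $h$. The Lipschitz bound comes cheaply from the explicit formulas in Appendix \ref{sec:HDDformulas} (lemma \ref{applemma:Gamma1 Lipschitz}); the pointwise consistency is then established via a four-term decomposition $R_1+\cdots+R_4$, with the $\kappa_n$ and $\eta_n$ rates entering exactly where you locate them (indicator recovery and numerical derivative bias/variance). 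Your direct route would work, but tracking uniformity in $h$ throughout the active-set argument is unnecessary once Lipschitz-plus-pointwise is available.

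Second, for the combination step you represent $\sqrt n(\widehat\theta^*-\widehat\theta)$ via its influence function, stack with $\Gamma_2(x,W,\theta_0)$, apply a single bootstrap CLT, and then invoke the extended continuous mapping theorem with the random map $\widehat\Gamma_{3,\theta_0}'$. This is valid but is effectively re-deriving \cite{FangSantos2014}'s Theorem 3.2 by hand. The paper takes a cleaner path: define the augmented functional $\Lambda(\theta,u)=\Gamma_3(x,\theta)+u$ on $\R^{d_W}\times\ell^\infty([\varepsilon,1-\varepsilon],\R^{d_q})\times\R^2$, note that its HDD is $\Lambda'_{\theta_0}(h_1,h_2,h_3)=\Gamma_{3,\theta_0}'(x,(h_1,h_2))+h_3$ with estimator $\widehat\Lambda'_{\theta_0}(h)=\widehat\Gamma_{3,\theta_0}'(x,(h_1,h_2))+h_3$, establish joint conditional weak convergence of $\big(\sqrt n(\widehat\theta^*-\widehat\theta),\,\mathbb{G}_n^*\Gamma_2(x,W,\widehat\theta)\big)$ to $(\mathbf{Z}_1,\widetilde{\mathbf{Z}}_4(x))$, and apply Theorem 3.2 once to $\Lambda$. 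This packaging avoids having to argue separately about random continuous maps and makes the joint treatment over $x\in\{0,1\}$ immediate.
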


This result shows how to use the bootstrap to approximate the joint limiting distribution of upper and lower bounds of $\Exp(Y_x)$, $x\in\{0,1\}$. As we show in section \ref{sec:ATEboundsInference} below, these approximations can be used to conduct pointwise or uniform-in-$c$ inference on the ATE bounds. As part of the proof, we show that $\widehat{\Gamma}_{3,\theta_0}'(x, \sqrt{n}(\widehat{\theta}^* - \widehat{\theta}))$ weakly converges in probability conditional on the data to $\Gamma_{3,\theta_0}'(x,\Z_1)$, a non-Gaussian vector which reflects the sample variation in the first step estimators. We also show weak convergence in probability conditional on the data of $\mathbb{G}_n^* \Gamma_2(x,W,\widehat{\theta})$ to $\mathbb{G} \Gamma_2(x,W,\theta_0) \sim \normal(0, \var(\Gamma_2(x,W,\theta_0))$, a bivariate Gaussian vector which reflects the variation of the CATE bounds over $W$. This variation can be approximated using the standard nonparametric bootstrap. Hence the bounds' limiting distribution is approximated by a combination of standard and non-standard bootstraps. Note that the two bootstrap distributions can be computed from a unique sequence of draws $Z_i^*$ and so has the same computational burden as a single bootstrap.

\subsection{Inference on the ATE Bounds}\label{sec:ATEboundsInference}

Next we show how to use proposition \ref{prop:ATE analytical boot} to do inference on our ATE bounds $[\underline{\text{ATE}}_\varepsilon^c, \overline{\text{ATE}}_\varepsilon^c]$. We first consider inference pointwise in $c$. We then construct confidence bands that are uniform over $c$.

\subsubsection{Pointwise in $c$ Confidence Sets}\label{sec:pointwiseInCconfidencesetsATE}

An immediate corollary of proposition \ref{prop:ATE analytical boot} is
\begin{align}\label{eq:bootATEexpr}
	&\sqrt{n}\begin{pmatrix}
	\widehat{\overline{\text{ATE}}}^{c,*} - \widehat{\overline{\text{ATE}}}^c \\
	\widehat{\underline{\text{ATE}}}^{c,*} - \widehat{\underline{\text{ATE}}}^c 
	\end{pmatrix} \\
	&\qquad =
	\begin{pmatrix}
	\Big( \widehat{\overline{\Gamma}}_{3,\theta_0}'(1, \sqrt{n}(\widehat{\theta}^* - \widehat{\theta})) + \mathbb{G}_n^* \overline{\Gamma}_2(1,W,\widehat{\theta}) \Big)
	-
	\Big(
	\widehat{\underline{\Gamma}}_{3,\theta_0}'(0, \sqrt{n}(\widehat{\theta}^* - \widehat{\theta})) + \mathbb{G}_n^* \underline{\Gamma}_2(0,W,\widehat{\theta})
	\Big) \\[1em]
	\Big( \widehat{\underline{\Gamma}}_{3,\theta_0}'(1, \sqrt{n}(\widehat{\theta}^* - \widehat{\theta})) + \mathbb{G}_n^* \underline{\Gamma}_2(1,W,\widehat{\theta}) \Big)
	-
	\Big(
	\widehat{\overline{\Gamma}}_{3,\theta_0}'(0, \sqrt{n}(\widehat{\theta}^* - \widehat{\theta})) + \mathbb{G}_n^* \overline{\Gamma}_2(0,W,\widehat{\theta})
	\Big)
	\end{pmatrix}
	\notag \\
	&\qquad 
	\overset{P}{\rightsquigarrow} \textbf{Z}_\text{ATE}.\notag
\end{align}
Thus we can also use this specific bootstrap to approximate the asymptotic distribution of our ATE bounds estimators. Given this result, we can construct a $100(1-\alpha)$\%  confidence set for the ATE identified set under $c$-dependence as follows. Let
\[
	\text{CI}_\text{ATE}^c(1-\alpha)=
	\left[\widehat{\underline{\text{ATE}}}^c - \frac{\widehat{d}_\alpha}{\sqrt{n}}, 
	\,
	\widehat{\overline{\text{ATE}}}^c + \frac{\widehat{d}_\alpha}{\sqrt{n}}\right]
\]
where
\[
	\widehat{d}_\alpha = \inf\left\{z\in\R: \Prob \left(\sqrt{n} (\widehat{\underline{\text{ATE}}}^{c,*} - \widehat{\underline{\text{ATE}}}^c ) \leq -z
	\; \text{ and } \;
	\sqrt{n} (\widehat{\overline{\text{ATE}}}^{c,*} - \widehat{\overline{\text{ATE}}}^c ) \geq z \mid Z^n \right) \geq 1-\alpha \right\}.
\]
The probability in this expression can be approximated by taking a large number of bootstrap draws according to equation \eqref{eq:bootATEexpr}. Proposition \ref{prop:ATE analytical boot} then implies that
\[
	\liminf_{n \rightarrow \infty} \Prob \Big( \text{CI}_\text{ATE}^c(1-\alpha) \supseteq [\underline{\text{ATE}}_\varepsilon^c,\overline{\text{ATE}}_\varepsilon^c] \Big) \geq 1-\alpha.
\]

Let  
\[d_\alpha = \inf\left\{z\in\R: \Prob \left(\textbf{Z}_\text{ATE}^{(2)} \leq -z	\; \text{ and } \; \textbf{Z}_\text{ATE}^{(1)} \geq z\right) \geq 1-\alpha\right\}.\]
If $\Prob (\textbf{Z}_\text{ATE}^{(2)} \leq -z \; \text{ and } \; \textbf{Z}_\text{ATE}^{(1)} \geq z)$ is continuous and strictly increasing in a neighborhood of $d_\alpha$, corollary 3.2 in \cite{FangSantos2015workingPaper} yields $\widehat{d}_\alpha = d_\alpha + o_p(1)$ and hence
 \[
	\lim_{n \rightarrow \infty} \Prob \Big( \text{CI}_\text{ATE}^c(1-\alpha) \supseteq [\underline{\text{ATE}}_\varepsilon^c,\overline{\text{ATE}}_\varepsilon^c] \Big) = 1-\alpha.
\]

\subsubsection{Uniform over $c$ ATE Bands}\label{sec:uniformATEbands}

We just described how to use proposition \ref{prop:ATE analytical boot} to do inference on the ATE bounds for any fixed $c$. Those results can be immediately extended to do inference the ATE bounds for any finite grid of $c$'s. In this section we show how to construct confidence bands that are uniform over all $c \in [0,1]$. We do this by using monotonicity of the ATE bound functions in $c$. This lets us extrapolate bands that are uniform on a finite grid in such a way that they have uniform coverage. A related procedure is described in corollary 1 of \cite{MastenPoirier2020}.

Although $\overline{\text{ATE}}_\varepsilon^c$ is nondecreasing in $c$, its estimate $\widehat{\overline{\text{ATE}}}^c$ may be nonmonotonic in $c$ because of the quantile crossing problem with linear quantile regression. In that case, we could monotonize the estimated ATE bound function by using the rearrangement procedure of \cite{ChernozhukovFernandez-ValGalichon2010}, for example. As they show, the rearrangement operator is Hadamard directionally differentiable, and thus can be accommodated in our inferential results. Likewise, $\underline{\text{ATE}}_\varepsilon^c$ is nonincreasing in $c$ and $\widehat{\underline{\text{ATE}}}^c$ is also nonincreasing after applying a suitable rearrangement. From here on we assume our bound estimators have been monotonized. Note that this does not affect the asymptotic distribution of the estimators, by corollary 1 of \cite{ChernozhukovFernandez-ValGalichon2010}.

Next consider a grid of values $\mathcal{C} = \{c_1,\ldots,c_K\}$ such that $0 = c_1 <\cdots < c_K = 1$. All of our analysis works with $0 < c_1$ and $c_K < 1$, but typically researchers will want to include the endpoints of $[0,1]$ so we do that from here on. Using methods similar to those in section \ref{sec:pointwiseInCconfidencesetsATE}, for all $c \in \mathcal{C}$ let
\[
	\text{CI}_\text{ATE}^c(1-\alpha) = \left[\widehat{\underline{\text{ATE}}}^c - \frac{\widehat{d}_\alpha(c)}{\sqrt{n}},\widehat{\overline{\text{ATE}}}^c + \frac{\widehat{d}_\alpha(c)}{\sqrt{n}}\right]
\]
be $100(1-\alpha)$\% confidence sets for $[\underline{\text{ATE}}_\varepsilon^c,\overline{\text{ATE}}_\varepsilon^c]$ where the critical values $\widehat{d}_\alpha(c)$ are chosen such that these sets are uniform over $c$ in the finite grid $\mathcal{C}$. That is, 
\[
	\Prob\left( \text{CI}_\text{ATE}^c(1-\alpha) \supseteq [\underline{\text{ATE}}_\varepsilon^c,\overline{\text{ATE}}_\varepsilon^c] \text{ for all $c \in \mathcal{C}$} \right) \rightarrow 1-\alpha
\]
as $n \rightarrow \infty$.
Finally, let $\min(c) = \inf \{ c_k \in \mathcal{C} : c \leq c_k \}$ denote the smallest element in the grid $\mathcal{C}$ that is still larger than $c$. For $c \in [0,1]$ define
\[
	\widehat{\text{UB}}(c) = \widehat{\overline{\text{ATE}}}^{\min(c)} + \frac{\widehat{d}_\alpha(\min(c))}{\sqrt{n}}
	\qquad \text{and} \qquad
	\widehat{\text{LB}}(c) = \widehat{\underline{\text{ATE}}}^{\min(c)} - \frac{\widehat{d}_\alpha(\min(c))}{\sqrt{n}}.
\]
$\widehat{\text{UB}}(c)$ is the greatest monotonic interpolation of the upper bounds of the confidence intervals on the grid $\mathcal{C}$. $\widehat{\text{LB}}(c)$ is the least monotonic interpolation of the lower bounds of the confidence intervals on the grid $\mathcal{C}$. By the definition of these interpolated bands and by monotonicity of the population ATE bounds,
\begin{align*}
	&\Prob \left( [\widehat{\text{LB}}(c), \widehat{\text{UB}}(c)] \supseteq [\underline{\text{ATE}}_\varepsilon^c,\overline{\text{ATE}}_\varepsilon^c] \text{ for all $c \in [0,1]$} \right) \\
	&= \Prob\left( [\widehat{\text{LB}}(c), \widehat{\text{UB}}(c)] \supseteq [\underline{\text{ATE}}_\varepsilon^c,\overline{\text{ATE}}_\varepsilon^c] \text{ for all $c \in \mathcal{C}$} \right) \\
	&\to 1-\alpha
\end{align*}
as $n \rightarrow \infty$.

In this subsection we've shown that, although we cannot obtain the limiting distribution of the ATE bounds uniformly over $c \in [0,1]$, the fact that these bounds are monotonic lets us nonetheless do inference on them uniformly over $[0,1]$. This monotonicity comes from the nested nature of $c$-dependence: $c_1$-dependence implies $c_2$-dependence when $c_1 \leq c_2$. This kind of monotonicity is common in many other approaches to sensitivity analysis and hence greatest and least monotonic interpolations can likely be used more broadly to construct uniform confidence bands.

\subsection{Inference on the ATT bounds}

Bootstrap inference on the ATT bounds is quite similar to that on the ATE bounds. By examining the ATT bounds' limiting distribution (see the proof of proposition \ref{prop:ATT convergence}) we see that it depends on two types of terms:
\begin{enumerate}
\item One term comes from the limiting distribution of 
\[
	\sqrt{n}
	\begin{pmatrix}
	\widehat{\overline{E}}_0^c - \overline{E}_{0,\varepsilon}^c \\
	\widehat{\underline{E}}_0^c - \underline{E}_{0,\varepsilon}^c
\end{pmatrix},
\]
which is non-standard. We'll approximate this term distribution by using the non-standard bootstrap of proposition \ref{prop:ATE analytical boot}. 

\item The other terms are due to the limiting distributions of
\[
	\sqrt{n}
	\begin{pmatrix}
		\widehat{\Exp}(Y \mid X=x) - \Exp(Y \mid X=x) \\
		\widehat{p}_x - p_x
	\end{pmatrix},
\]
which are standard and Gaussian. The distribution of these terms can be approximated by the nonparametric bootstrap. For example, standard arguments show that the limiting distribution of $\sqrt{n}(\widehat{\Exp}(Y \mid X=x) - \Exp(Y \mid X=x))$ is approximated by 
\[
	\mathbb{Z}_{\Exp(Y \mid X=x)}^* \equiv \frac{1}{\widehat{p}_x} \left(\mathbb{G}_n^* Y \indicator(X=x) - \widehat{\Exp}(Y \mid X=x) \cdot  \mathbb{G}_n^* \indicator(X=x) \right).
\]
Similarly, $\mathbb{Z}_{p_x}^* \equiv \mathbb{G}_n^* \indicator(X=x)$ converges weakly in probability conditional on $Z^n$ to the limiting distribution of $\sqrt{n}(\widehat{p}_x - p_x)$. 
\end{enumerate}
Combining all the terms gives
\[
	\begin{pmatrix}	
	\begin{array}{l}
	\displaystyle
		\mathbb{Z}_{\Exp(Y \mid X=1)}^* - \frac{\widehat{\underline{\Gamma}}_{3,\theta_0}'(0,\sqrt{n}(\widehat{\theta}^* - \widehat{\theta})) + \mathbb{G}_n^*\underline{\Gamma}_{2}(0,W,\widehat{\theta})}{\widehat{p}_1} \\[1em]
	\displaystyle	
		\qquad + \frac{\widehat{p}_0}{\widehat{p}_1}\mathbb{Z}^*_{\Exp(Y \mid X=0)} + \frac{\widehat{\Exp}(Y \mid X=0)}{\widehat{p}_1}\mathbb{Z}^*_{p_0} + \frac{\widehat{\underline{E}}_0^c - \widehat{\Exp}(Y \mid X=0)\widehat{p}_0}{\widehat{p}_1^2}\mathbb{Z}^*_{p_1}\\[3em]
	\displaystyle
		\mathbb{Z}_{\Exp(Y \mid X=1)}^* - \frac{\widehat{\overline{\Gamma}}_{3,\theta_0}'(0,\sqrt{n}(\widehat{\theta}^* - \widehat{\theta})) + \mathbb{G}_n^*\overline{\Gamma}_{2}(0,W,\widehat{\theta})}{\widehat{p}_1} \\[1em]
	\displaystyle
		\qquad + \frac{\widehat{p}_0}{\widehat{p}_1}\mathbb{Z}^*_{\Exp(Y \mid X=0)} + \frac{\widehat{\Exp}(Y \mid X=0)}{\widehat{p}_1}\mathbb{Z}^*_{p_0} + \frac{\widehat{\overline{E}}_0^c - \widehat{\Exp}(Y \mid X=0)\widehat{p}_0}{\widehat{p}_1^2}\mathbb{Z}^*_{p_1}
	\end{array}
	\end{pmatrix} 
	\overset{P}{\rightsquigarrow} \textbf{Z}_\text{ATT}.
\]
We can use this result to construct pointwise confidence sets for the ATT bounds for a fixed $c$, or to construct confidence bands that are uniform on a finite grid $\mathcal{C}$. Like the ATE bounds, the ATT bounds are monotonic in $c$. Thus a similar interpolation can be used to construct confidence bands for the ATT bounds that are uniform over $c \in [0,1]$.

\subsection{Inference on Breakdown Points}\label{sec:breakdownPointInference}

We conclude this section be showing how to use the confidence bands we just described to do inference on breakdown points. For brevity we focus on the breakdown point for the conclusion that the ATE is nonnegative, which we defined earlier in equation \eqref{eq:ATEbreakdownPoint}. Inference on other breakdown points for other conclusions can be done similarly.

Let $\text{CI}_\text{ATE}^c(1-\alpha)$ be a pointwise-in-$c$ confidence band for the ATE bounds, as described in section \ref{sec:pointwiseInCconfidencesetsATE}. Define
\[
	c_L = \sup \{ c \in [0,1] : \text{CI}_\text{ATE}^c(1-\alpha) \subseteq [0,\infty) \}.
\]
This is simply the value at which the confidence band first intersects the horizontal line at zero. By proposition S.2 in Appendix D of \cite{MastenPoirier2020},
\[
	\lim_{n \rightarrow \infty} \Prob( c_L \leq c_\textsc{bp} ) \geq 1 - \alpha.
\]
Thus $[c_L, 1]$ is a valid one-sided lower confidence interval for the breakdown point $c_\textsc{bp}$.

\section{Sufficient Conditions for Standard Inference}\label{sec:standardboot}

In the previous section we showed how to use a non-standard bootstrap method to conduct inference on the CATE, ATE, and ATT bounds. The key technical problem was that these bounds are not necessarily Hadamard differentiable functionals of the first step estimators; they are only Hadamard directionally differentiable. In this section, we provide simple sufficient conditions on the propensity score under which the CATE, ATE, and ATT bounds are in fact Hadamard differentiable. Under this condition, the methods in section \ref{sec:bootstrap} are still valid, but so is the standard nonparametric bootstrap. After stating the formal result, we discuss when this sufficient condition holds and when it does not.

First consider the average treatment effect. Recall that the ATE bounds depend on the functional $\Gamma_3(x,\theta)$. We will show that its Hadamard directional derivative $\Gamma_{3,\theta_0}'(x, h)$ is linear in $h$ under a condition on the value of $c$, the propensity score $p_{1 | w}$, and the distribution of $W$. By proposition 2.1 in \cite{FangSantos2014}, this linearity is equivalent to Hadamard differentiability. By theorem 3.9.11 in \cite{VaartWellner1996}, this linearity also implies that the bootstrap process $\sqrt{n}(\Gamma_{3,\theta_0}(x, \widehat{\theta}^*) - \Gamma_{3,\theta_0}(x, \widehat{\theta}))$ converges weakly in probability conditional on the data to $\Gamma_{3,\theta_0}'(x, \mathbf{Z}_1)$, a Gaussian vector. In other words, we can conduct inference on the ATE bounds using the standard nonparametric bootstrap: Take $n$ independent draws from the data with replacement, compute the bound estimates in this bootstrap sample, and then use the distribution of these bound estimates across many such bootstrap samples to approximate the sampling distribution of the bound estimators. The following theorem provides the explicit sufficient condition for validity of this bootstrap. In this result, let $p_{1 \mid W} = \Prob(X=1 \mid W)$ denote the random variable obtained by evaluating the propensity score at the random vector $W$.

\begin{theorem}\label{thm:standardbootATE}
Suppose the assumptions of theorem 1 hold. Suppose $\Prob(p_{1|W} \in \{c,1-c\}) = 0$. Then
\[
	\sqrt{n}
	\begin{pmatrix}
		\widehat{\overline{\text{ATE}}}^{c \; \ast} - \widehat{\overline{\text{ATE}}}^c \\
		\widehat{\underline{\text{ATE}}}^{c \; \ast} - \widehat{\underline{\text{ATE}}}^c
	\end{pmatrix}
	\overset{P}{\rightsquigarrow} \mathbf{Z}_{\text{ATE}},
\]
where $(\widehat{\underline{\text{ATE}}}^{c \; \ast}, \widehat{\overline{\text{ATE}}}^{c \; \ast})$ are drawn from the nonparametric bootstrap distribution of $(\widehat{\underline{\text{ATE}}}^{c}, \widehat{\overline{\text{ATE}}}^{c})$.
\end{theorem}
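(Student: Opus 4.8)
The plan is to show that the extra hypothesis $\Prob(p_{1\mid W}\in\{c,1-c\})=0$ forces the Hadamard directional derivative of the ATE-bound functionals at $\theta_0$ to be \emph{linear}, and then to feed this into the machinery already developed in the paper. Recall that the ATE bound estimators are the fixed linear combinations $\overline{\Gamma}_3(1,\cdot)-\underline{\Gamma}_3(0,\cdot)$ and $\underline{\Gamma}_3(1,\cdot)-\overline{\Gamma}_3(0,\cdot)$ of the maps $\Gamma_3(x,\theta)=\int_{\mathcal{W}}\Gamma_2(x,w,\theta)\,dF_W(w)$, with $F_W$ replaced by the empirical measure in the estimator. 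In the proof of theorem \ref{thm:ATE convergence} these maps are already shown to be Hadamard directionally differentiable at $\theta_0$ with derivative $\Gamma_{3,\theta_0}'$. If $\Gamma_{3,\theta_0}'(x,\cdot)$ is linear, then proposition 2.1 of \cite{FangSantos2014} upgrades this to full Hadamard differentiability, and theorem 3.9.11 of \cite{VaartWellner1996}---combined with $\sqrt n(\widehat\theta^*-\widehat\theta)\overset{P}{\rightsquigarrow}\mathbf Z_1$ (noted in section \ref{sec:bootstrap}) and the standard bootstrap consistency of the empirical process $\mathbb G_n^*$ of $W$---yields $\sqrt n(\Gamma_3(x,\widehat\theta^*)-\Gamma_3(x,\widehat\theta))\overset{P}{\rightsquigarrow}\Gamma_{3,\theta_0}'(x,\mathbf Z_1)$, and hence the stated conclusion for the ATE bounds. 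Since the directional derivative is the same object whether or not it happens to be linear, the limit is exactly the $\mathbf Z_{\text{ATE}}$ of theorem \ref{thm:ATE convergence}.

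The work is therefore the linearity claim, which I would extract from the structure of the trimmed quantile bounds. The only source of non-differentiability is the $\min$/$\max$ in $\overline t(\tau,x,w)$ and $\underline t(\tau,x,w)$, and---since the $\gamma$-dependence of $\Gamma_1$ is linear once the quantile index is fixed---it enters only through the dependence of these thresholds on $\beta$, which is confined to $p_{x\mid w}=L(x,w'\beta)$. I would enumerate the kink loci of $\overline t$ and $\underline t$ as functions of $(\tau,\beta)$, i.e.\ the sets where two competing branches agree, distinguishing $\tau\le 1/2$ from $\tau>1/2$. The key calculation is that, for the upper bound at $\tau\le 1/2$, the two interior branches $\tau(1+c/p_{x\mid w})$ and $\tau/p_{x\mid w}$ coincide precisely when $p_{x\mid w}=1-c$ (independently of $\tau$), and likewise for the lower bound at $\tau>1/2$; every other branch coincidence---interior-vs-trim at $\varepsilon$ or $1-\varepsilon$, and interior-vs-interior for the upper bound at $\tau>1/2$---occurs, for each fixed $w$, only at isolated values of $\tau$. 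Hence, whenever $p_{x\mid w}\ne 1-c$, the map $\beta\mapsto\Gamma_2(x,w,\beta)=\int_0^1\Gamma_1(x,w,\tau,\beta)\,d\tau$ is continuously differentiable at $\beta_0$: the isolated kinks in $\tau$ are integrated away---a Leibniz-rule computation shows the boundary terms generated by the moving kink locations cancel because the integrand is continuous there---so $\Gamma_{2,\theta_0}'(x,w,\cdot)$ is linear for such $w$. Translating to $x=0$ via $p_{0\mid w}=1-p_{1\mid w}$, the exceptional set is $\{w:p_{1\mid w}=1-c\}$ for $x=1$ and $\{w:p_{1\mid w}=c\}$ for $x=0$, so the hypothesis says exactly that both have $F_W$-measure zero.

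It then remains to pass this through the $w$-integration: differentiating $\Gamma_3(x,\theta)=\int_{\mathcal W}\Gamma_2(x,w,\theta)\,dF_W(w)$ under the integral sign is justified by dominated convergence, using the $L^4$ envelope for $L_\beta/L^2$ in assumption A\ref{assn:propscoremoment} and the H\"older bound on $\gamma_0'$ from assumption A\ref{assn:quant reg regularity} to dominate $\partial_\beta\Gamma_2(x,w,\beta)$ uniformly over a neighborhood of $\beta_0$; and since $\Gamma_{2,\theta_0}'(x,w,\cdot)$ is linear for $F_W$-a.e.\ $w$, its $F_W$-integral is linear, while the empirical-measure perturbation $\int\Gamma_2(x,w,\theta_0)\,d(\cdot)(w)$ is trivially linear. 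I expect the main obstacle to be the bookkeeping in the kink analysis---carefully ruling out, via the Leibniz argument, that any crossing other than $p_{x\mid w}\in\{c,1-c\}$ survives integration over $\tau\in[\varepsilon,1-\varepsilon]$, and producing a single integrable envelope valid on a neighborhood of $\beta_0$ so that $C^1$-ness transfers to the $w$-integral. Everything after that (the differentiability upgrade, the functional delta method for the bootstrap, and matching the limit to $\mathbf Z_{\text{ATE}}$) follows from results already established in the paper.
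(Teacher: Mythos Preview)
Your proposal is correct and follows essentially the same strategy as the paper: show that the hypothesis $\Prob(p_{1\mid W}\in\{c,1-c\})=0$ makes $\Gamma_{3,\theta_0}'(x,\cdot)$ linear, upgrade to full Hadamard differentiability via \cite{FangSantos2014} proposition 2.1, and invoke the bootstrap delta method (theorem 3.9.11 in \cite{VaartWellner1996}). Your identification of the problematic kink locus---$p_{x\mid w}=1-c$ from the coincidence of the two interior branches of $\overline t$ on $\tau\le 1/2$ and of $\underline t$ on $\tau>1/2$, with all other branch crossings isolated in $\tau$---matches the paper's case analysis exactly, as does the translation to $p_{1\mid w}\in\{c,1-c\}$ across $x\in\{0,1\}$.

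The only difference is packaging. The paper works directly with the explicit HDD formulas $T_1,T_2$ and indicators $\indicator_{1,j}$ already derived in appendix \ref{sec:HDDformulas}, showing the nonlinear pieces $T_{1,j}$ for $j\ge 4$ are supported on $\tau$-sets $\mathcal S_j$ of Lebesgue measure zero (unless $p_{x\mid w}=1-c$), and then integrates. You instead propose to differentiate $\beta\mapsto\Gamma_2(x,w,\beta)$ directly via a Leibniz argument with moving kink boundaries, noting that continuity of the integrand at the kinks kills the boundary terms. Both routes arrive at the same conclusion; the paper's is shorter because the HDD infrastructure is already in place, while yours is more self-contained but requires redoing some of that bookkeeping.
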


In the proof of this result we show that when the propensity score does not contain a point mass on either $c$ or $1-c$, the mapping $\Gamma_3(x,\theta_0)$ is Hadamard differentiable for $x \in \{0,1\}$. Hence the nonparametric bootstrap is valid. Although it is not formally stated in the theorem, we conjecture that our sufficient condition for validity of the standard bootstrap is also a necessary condition. That is, we expect the standard bootstrap to be invalid when $c$ or $1-c$ are point masses of the propensity score's distribution. From the proof of theorem \ref{thm:standardbootATE}, we see when this condition on the propensity score fails, $\Gamma'_{1,\theta_0}(x,w,\tau,h)$ is nonlinear in $h$ on a set of $(\tau,w)$ values of positive measure. Since the HDD of $\Gamma_3(x,\cdot)$ is the integral over $(\tau,w)$ of the HDD of $\Gamma_1(x,w,\tau,\cdot)$, we expect that $\Gamma_{3,\theta_0}'(x,h)$ will also be nonlinear in $h$, a failure of Hadamard differentiability.

By further examining the proof of theorem \ref{thm:standardbootATE}, we can also show that the CATE bounds are Hadamard differentiable at covariate values $w$ and sensitivity parameter values $c$ such that $p_{1|w} \notin \{c,1-c\}$.
Finally, the following proposition gives a similar result for the ATT, using slightly weaker assumptions.

\begin{proposition}\label{prop:standardbootATT}
Suppose the assumptions of theorem 1 hold. Suppose $\var(Y\ind(X=x)) <\infty$ for each $x\in\{0,1\}$. Suppose $\Prob(p_{1|W} = c) = 0$. Then,
\[
	\sqrt{n}
	\begin{pmatrix}
		\widehat{\overline{\text{ATT}}}^{c \; \ast} - \widehat{\overline{\text{ATT}}}^c \\
		\widehat{\underline{\text{ATT}}}^{c \; \ast} - \widehat{\underline{\text{ATT}}}^c
	\end{pmatrix}
	\overset{P}{\rightsquigarrow} \mathbf{Z}_{\text{ATT}},
\]
where $(\widehat{\underline{\text{ATT}}}^{c \; \ast}, \widehat{\overline{\text{ATT}}}^{c \; \ast})$ are drawn from the nonparametric bootstrap distribution of $(\widehat{\underline{\text{ATT}}}^{c}, \widehat{\overline{\text{ATT}}}^{c})$.
\end{proposition}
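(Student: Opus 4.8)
The plan is to show that, under the extra hypothesis $\Prob(p_{1\mid W}=c)=0$, the ATT bound estimators are Hadamard \emph{(fully) differentiable} functionals of the empirical distribution of $(Y,X,W)$, so that the ordinary functional delta method for the nonparametric bootstrap applies directly. This is the ATT analogue of the proof of Theorem \ref{thm:standardbootATE}, and it requires strictly less than that result: by \eqref{eq:ATTbounds} the ATT bounds are built only from the point-identified quantities $\Exp(Y\mid X=x)$, $p_x$, and the mean bounds $\overline{E}_0^c=\overline{\Gamma}_3(0,\theta_0)$ and $\underline{E}_0^c=\underline{\Gamma}_3(0,\theta_0)$ for $\Exp(Y_0)$; they never involve the $x=1$ functionals $\Gamma_3(1,\cdot)$. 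As explained below, the $x=0$ mean-bound functional has a single ``persistent'' kink, located at $p_{0\mid w}=1-c$, i.e.\ at $p_{1\mid w}=c$, whereas $\Gamma_3(1,\cdot)$ would add a kink at $p_{1\mid w}=1-c$ --- this is exactly why Theorem \ref{thm:standardbootATE} asks for $\Prob(p_{1\mid W}\in\{c,1-c\})=0$ while the present proposition only needs $\Prob(p_{1\mid W}=c)=0$.

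The key lemma is: under the assumptions of Theorem \ref{thm:ATE convergence} and $\Prob(p_{1\mid W}=c)=0$, the map $\theta\mapsto\Gamma_3(0,\theta)=(\overline{\Gamma}_3(0,\theta),\underline{\Gamma}_3(0,\theta))$ is Hadamard differentiable at $\theta_0$. To see this, fix $w$ with $p_{0\mid w}\neq1-c$ and fix $\tau$, and examine $\theta\mapsto\overline{\Gamma}_1(0,w,\tau,\theta)=q(0,w)'\gamma(T^+(\beta))$, where $T^+$ is the $\max/\min$ of the constants $\varepsilon,1-\varepsilon$ and the two functions $A(\beta)=\tau+\tfrac{c}{L(0,w'\beta)}\min\{\tau,1-\tau\}$ and $B(\beta)=\tau/L(0,w'\beta)$, which are smooth in $\beta$. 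A $\max/\min$ of smooth functions is always Hadamard directionally differentiable, with a derivative that is linear in the perturbation wherever the active selection is unique; since $\gamma_0$ is $C^3$ (A\ref{assn:quant reg regularity} with $m\geq3$), the chain rule transfers this to $\overline{\Gamma}_1(0,w,\tau,\cdot)$. Inspecting the explicit formulas, the only coincidences among $A,B$ that can hold on a $\tau$-interval (for fixed $w$) are $A\equiv B$, which happens precisely at $p_{0\mid w}=1-c$, and $A\equiv1$ (for $\tau>\tfrac12$), which happens precisely at $p_{0\mid w}=c$; every other coincidence (two of $A$, $B$, a trimming bound being equal) holds at only finitely many $\tau$. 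The coincidence at $p_{0\mid w}=c$ is harmless, because there the \emph{trimmed} index $T^+$ equals the constant $1-\varepsilon$ on a neighborhood of $\theta_0$, so its directional derivative vanishes. Hence, for $w$ with $p_{0\mid w}\neq1-c$, $\overline{\Gamma}_1(0,w,\tau,\cdot)$ is Hadamard differentiable at $\theta_0$, with derivative linear in the perturbation, for Lebesgue-a.e.\ $\tau$; the symmetric argument handles $\underline{\Gamma}_1$. Using A\ref{assn:quant reg regularity} (uniform bounds on the derivatives of $\gamma_0$) and A\ref{assn:propscoremoment} (the fourth-moment bound on $L_\beta/L^2$) to dominate the difference quotients uniformly in $(\tau,w)$ near $\theta_0$, dominated convergence lets me carry the directional limit through $\int_0^1(\cdot)\,d\tau$ and then --- using $\Prob(p_{0\mid W}=1-c)=\Prob(p_{1\mid W}=c)=0$ --- through $\int_{\mathcal{W}}(\cdot)\,dF_W$, so $\Gamma_3(0,\cdot)$ is Hadamard directionally differentiable at $\theta_0$ with a \emph{linear} derivative, equal to $\int_{\mathcal{W}}\int_0^1\Gamma_{1,\theta_0}'(0,w,\tau,\cdot)\,d\tau\,dF_W(w)$. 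By proposition 2.1 in \cite{FangSantos2014}, linearity is equivalent to full Hadamard differentiability; and since the dependence on the marginal of $W$ is linear, the joint map $(\theta,Q)\mapsto\int\Gamma_2(0,w,\theta)\,dQ(w)$ is jointly Hadamard differentiable at $(\theta_0,F_W)$, the form used in the expansion \eqref{eq:twoPieceRepresentationForPotMeanEsts}.

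With the lemma in hand, the proof is an application of the ordinary delta method for the bootstrap. The quantities $\widehat{\Exp}(Y\mid X=1)$, $\widehat{\Exp}(Y\mid X=0)$, $\widehat{p}_0$, $\widehat{p}_1$ are Hadamard differentiable functionals of the empirical distribution of $(Y,X,W)$ (ratios and averages of integrable functions; here $\var(Y\indicator(X=x))<\infty$ and overlap, $p_1\in(0,1)$, are used), $\widehat{\theta}$ is an asymptotically linear $Z$-estimator so the empirical bootstrap is consistent for it (Theorem 3.6.1 of \cite{VaartWellner1996}, as in section \ref{sec:bootstrap}), and by the lemma $(\widehat{\overline{E}}_0^c,\widehat{\underline{E}}_0^c)=(\overline{\Gamma}_3(0,\widehat{\theta}),\underline{\Gamma}_3(0,\widehat{\theta}))$ --- evaluated at the empirical $F_W$ --- is a Hadamard differentiable functional of the same empirical distribution. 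Stacking these and composing with the smooth map $(\mu_1,\mu_0,p_0,p_1,\overline{e},\underline{e})\mapsto\big(\mu_1-\tfrac{\underline{e}-p_0\mu_0}{p_1},\,\mu_1-\tfrac{\overline{e}-p_0\mu_0}{p_1}\big)$, which is continuously differentiable in a neighborhood of the true values, writes $(\widehat{\overline{\text{ATT}}}^c,\widehat{\underline{\text{ATT}}}^c)$ as a Hadamard differentiable functional of the empirical distribution. Theorem 3.9.11 of \cite{VaartWellner1996} then gives
\[
	\sqrt{n}\begin{pmatrix}\widehat{\overline{\text{ATT}}}^{c\;\ast}-\widehat{\overline{\text{ATT}}}^c\\[0.3em]\widehat{\underline{\text{ATT}}}^{c\;\ast}-\widehat{\underline{\text{ATT}}}^c\end{pmatrix}\overset{P}{\rightsquigarrow}\mathbf{Z}_{\text{ATT}},
\]
and since the Hadamard directional derivative appearing in the proof of Proposition \ref{prop:ATT convergence} is, under $\Prob(p_{1\mid W}=c)=0$, exactly the linear map computed above, this $\mathbf{Z}_{\text{ATT}}$ coincides with the limit in Proposition \ref{prop:ATT convergence}.

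The main obstacle is the key lemma: the pointwise kink analysis of $T^\pm$ together with the interchange of the Hadamard-directional limit with the two integrations. One has to enumerate all the ways the $\max/\min$ selection can be non-unique in $(\tau,w)$, check that the only selection failure surviving integration over $\tau$ is $A\equiv B$ at $p_{0\mid w}=1-c$ (and that the $p_{0\mid w}=c$ coincidence is neutralized by the trimming at $1-\varepsilon$), and then supply the uniform domination --- from A\ref{assn:quant reg regularity} and A\ref{assn:propscoremoment} --- that upgrades ``differentiable for a.e.\ $(\tau,w)$ with a linear derivative'' to ``$\Gamma_3(0,\cdot)$ Hadamard differentiable.'' Everything downstream --- smoothness of the ATT map and the standard bootstrap for the sample-average components --- is routine.
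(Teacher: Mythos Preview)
Your proposal is correct and takes essentially the same approach as the paper: show that $\Gamma_3(0,\cdot)$ is fully Hadamard differentiable under $\Prob(p_{1\mid W}=c)=0$ (equivalently $\Prob(p_{0\mid W}=1-c)=0$), then apply the bootstrap delta method (Theorem 3.9.11 in \cite{VaartWellner1996}). The paper's own proof is a two-line appeal to the kink enumeration already carried out in the proof of Theorem \ref{thm:standardbootATE}, whereas you spell out that enumeration directly for $x=0$; your extra discussion of the $p_{0\mid w}=c$ case being neutralized by trimming is correct but not needed, since that coincidence never produces a positive-measure tie in any of the indicators $\indicator_{1,j}$ or $\indicator_{3,j}$ once $\varepsilon>0$.
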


The ATT bounds only depend on our bounds for $\Exp(Y_0)$, and not our bounds for $\Exp(Y_1)$. Hence we only need to examine $\Gamma_3(x,\theta_0)$ for $x=0$. So the proof of this proposition proceeds by showing that $\Gamma_3(0,\theta_0)$ is Hadamard differentiable when the propensity score does not have a point mass at $c$.

The sufficient conditions in theorem \ref{thm:standardbootATE} and proposition \ref{prop:standardbootATT} depend on the support of the propensity score $p_{1 \mid W}$. If the propensity score's distribution is absolutely continuous, it contains no point masses and therefore these condition holds. This holds when one covariate $W_k$ has nonzero coefficient $\beta_{0,k}$ and has a continuous distribution conditional on the other covariates $W_{-k}$. However, if all covariates are discrete or mixed, the support of the propensity score will generally contain point masses.  The nonparametric bootstrap may not be valid whenever $c$ coincides with these points. Even when $p_{1|W}$ has point masses, the nonparametric bootstrap is valid for $c$ outside of these points. To use this bootstrap, one could in principle estimate the support of $p_{1|W}$ to determine at which values of $c$ inference might be invalid, and select sensitivity parameters outside of this support.

The nonparametric bootstrap has the advantage of being computationally simple and does not require the choice of tuning parameters. While more involved, the bootstrap technique detailed in section \ref{sec:bootstrap} is valid regardless of the support of the propensity score. 
For example, in our empirical analysis in section \ref{sec:empirical}, all of the covariates are either mixed or discrete. Given our analysis above, we therefore use the non-standard bootstrap in our empirical analysis since the standard bootstrap may fail at some values of $c$. 

\section{Empirical Illustration}\label{sec:empirical}

In this section we illustrate our methods using data on the National Supported Work (NSW) demonstration project studied by \cite{LaLonde1986}. 
Since this is a highly studied and well-known program, we only briefly summarize it here. 
See, for example, \cite{HLS1999} for further details. We use LaLonde's data as reconstructed by \cite{DehejiaWahba1999}.

The NSW demonstration project randomly assigned participants to either receive a guaranteed job for 9 to 18 months along with frequent counselor meetings or to be left in the labor market by themselves. 
We use the \cite{DehejiaWahba1999} sample, which are all males in LaLonde's NSW dataset and where earnings are observed in 1974, 1975, and 1978.
This dataset has 445 people: 185 in the treatment group and 260 in the control group.
Like \cite{Imbens2003}, we use this experimental sample primarily as an illustration; in experiments where treatment was truly randomized it is not necessary to assess sensitivity to unconfoundedness. Our results may be useful for assessing the impact of randomization failure in experiments, but that is not our focus here.

\begin{table}[tb]
\caption[]{Summary statistics.
\label{summaryStats}}

\begin{adjustwidth}{-0.25in}{-0.25in}

\setlength{\linewidth}{.1cm}
\newcommand{\contents}{
\centering
\begin{tabular}{lccc} \hline \hline
 & \multicolumn{2}{c}{Experimental dataset} & Observational dataset \\
 & Control  & Treatment  & Control  \\  \hline 
Married &      0.15 &      0.19 &      0.78 \\  
&      (0.36) &      (0.39) &      (0.42) \\  
[0.5em]
Age &     25.05 &     25.82 &     38.61 \\  
&      (7.06) &      (7.16) &     (11.45) \\  
[0.5em]
Black &      0.83 &      0.84 &      0.27 \\  
&      (0.38) &      (0.36) &      (0.44) \\  
[0.5em]
Hispanic &      0.11 &      0.06 &      0.04 \\  
&      (0.31) &      (0.24) &      (0.20) \\  
[0.5em]
Education &     10.09 &     10.35 &     11.37 \\  
&      (1.61) &      (2.01) &      (3.40) \\  
[0.5em]
Earnings in 1974 &   2107.03 &   2095.57 &    765.75 \\  
&   (5687.91) &   (4886.62) &   (1399.79) \\  
[0.5em]
Earnings in 1975 &   1266.91 &   1532.06 &    650.54 \\  
&   (3102.98) &   (3219.25) &   (1332.89) \\  
[0.5em]
Positive earnings in 1974 &      0.25 &      0.29 &      0.29 \\  
&      (0.43) &      (0.46) &      (0.46) \\  
[0.5em]
Positive earnings in 1975 &      0.32 &      0.40 &      0.25 \\  
&      (0.47) &      (0.49) &      (0.43) \\  
[0.5em]
Sample size & 260 & 185 & 242 \\
\hline \hline 
\multicolumn{4}{p{\linewidth}}{\scriptsize Variable mean is shown in each cell, with that variable's standard deviation in parentheses.}\\
\end{tabular}
}

\setbox0=\hbox{\contents}
\setlength{\linewidth}{\wd0-2\tabcolsep-.25em}
\contents

\end{adjustwidth}
\end{table}

In addition to this experimental sample, we construct a sample using observational data.
This sample combines the 185 people in the NSW treatment group with 2490 people in a control group constructed from the Panel Study of Income Dynamics (PSID). 
This control group, called PSID-1 by LaLonde, consists of all male household heads observed in all years between 1975 and 1978 who were less than 55 years old and who did not classify themselves as retired.
We further drop observations with earnings above \$5,000 in 1974, 1975, or both. This leaves 148 treated units (out of 185) and 242 untreated units (out of 2490).
This observational sample was also considered by \cite{Imbens2003}.

The outcome of interest is earnings in 1978. There are also nine covariates: Earnings in 1974, earnings in 1975, years of education, age, indicators for race (Black, Hispanic, other), an indicator for marriage, an indicator for having a high school degree, and an indicator for treatment. All earnings variables are measured in 1982 dollars. Table \ref{summaryStats} shows the summary statistics, as reported in table 1 of \cite{Imbens2003}.

\begin{table}[H]
\caption[]{\label{baselineEsts} Baseline treatment effect estimates (in 1982 dollars).}

\begin{adjustwidth}{-0.25in}{-0.25in}

\setlength{\linewidth}{.1cm}
\newcommand{\contents}{
\centering
\begin{tabular}{lccc} \hline \hline
 & ATE  & ATT & Sample size \\  \hline 
Experimental dataset &      1633 &      1738 & 445\\  
 &       (650) &       (689) \\
[0.5em]
Observational dataset &      3337 &      4001 & 390\\  
 &       (769) &       (762) \\
\hline \hline
\multicolumn{4}{p{\linewidth}}{\scriptsize Standard errors in parentheses.}\\
\end{tabular}
}

\setbox0=\hbox{\contents}
\setlength{\linewidth}{\wd0-2\tabcolsep-.25em}
\contents

\end{adjustwidth}
\end{table}

\subsection*{Baseline Estimates}

Table \ref{baselineEsts} shows the baseline point estimates of both ATE and ATT under the unconfoundedness assumption in the two samples we consider. 
These estimates are all computed by inverse probability weighting (IPW) using a parametric logit propensity score estimator. 
We do not consider other estimators, since our goal is to illustrate sensitivity to identifying assumptions, rather than finite sample sensitivity to the choice of estimator.

\begin{figure}[H]
\centering
\includegraphics[width=100mm]{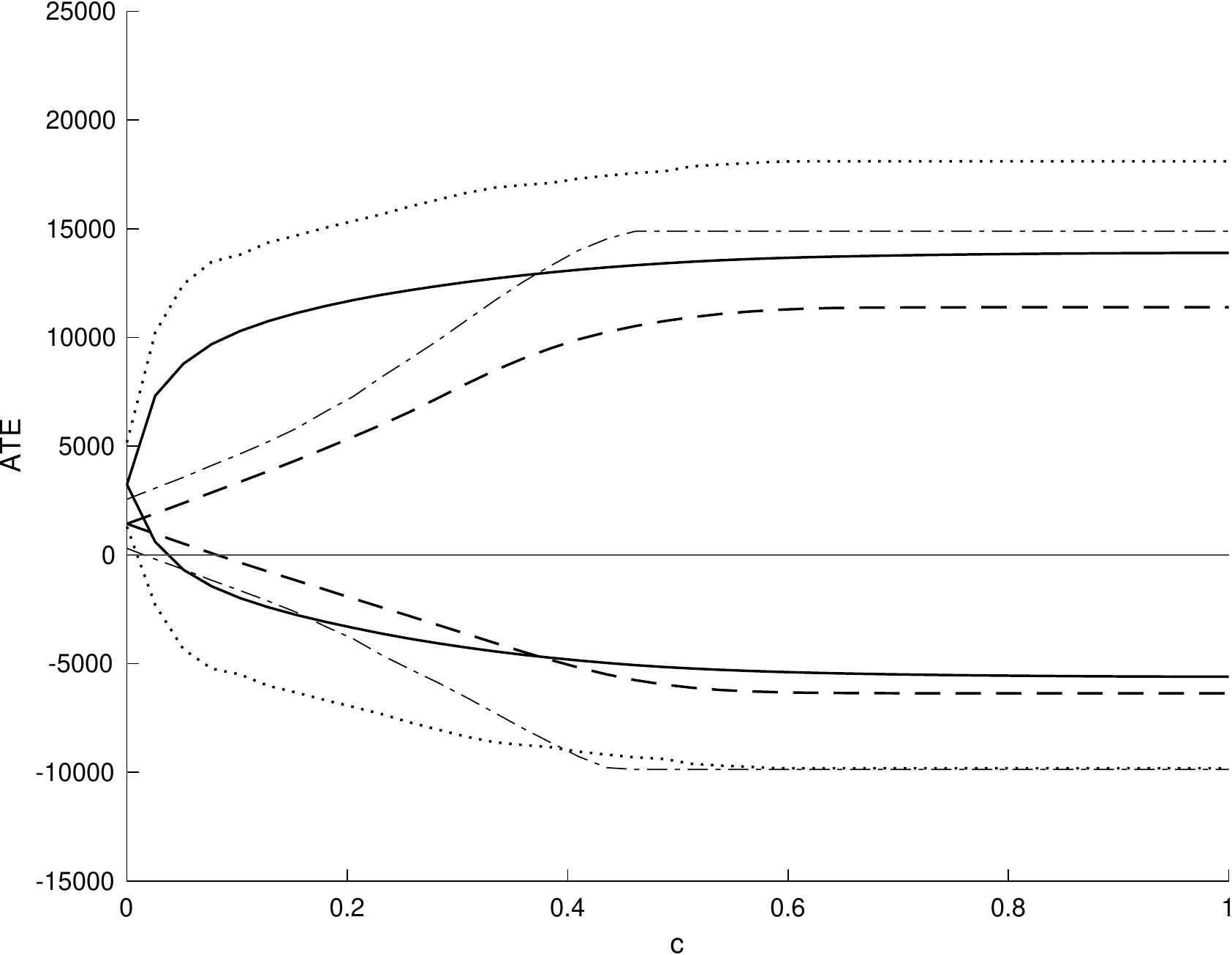} \\
\vspace{5mm}
\includegraphics[width=100mm]{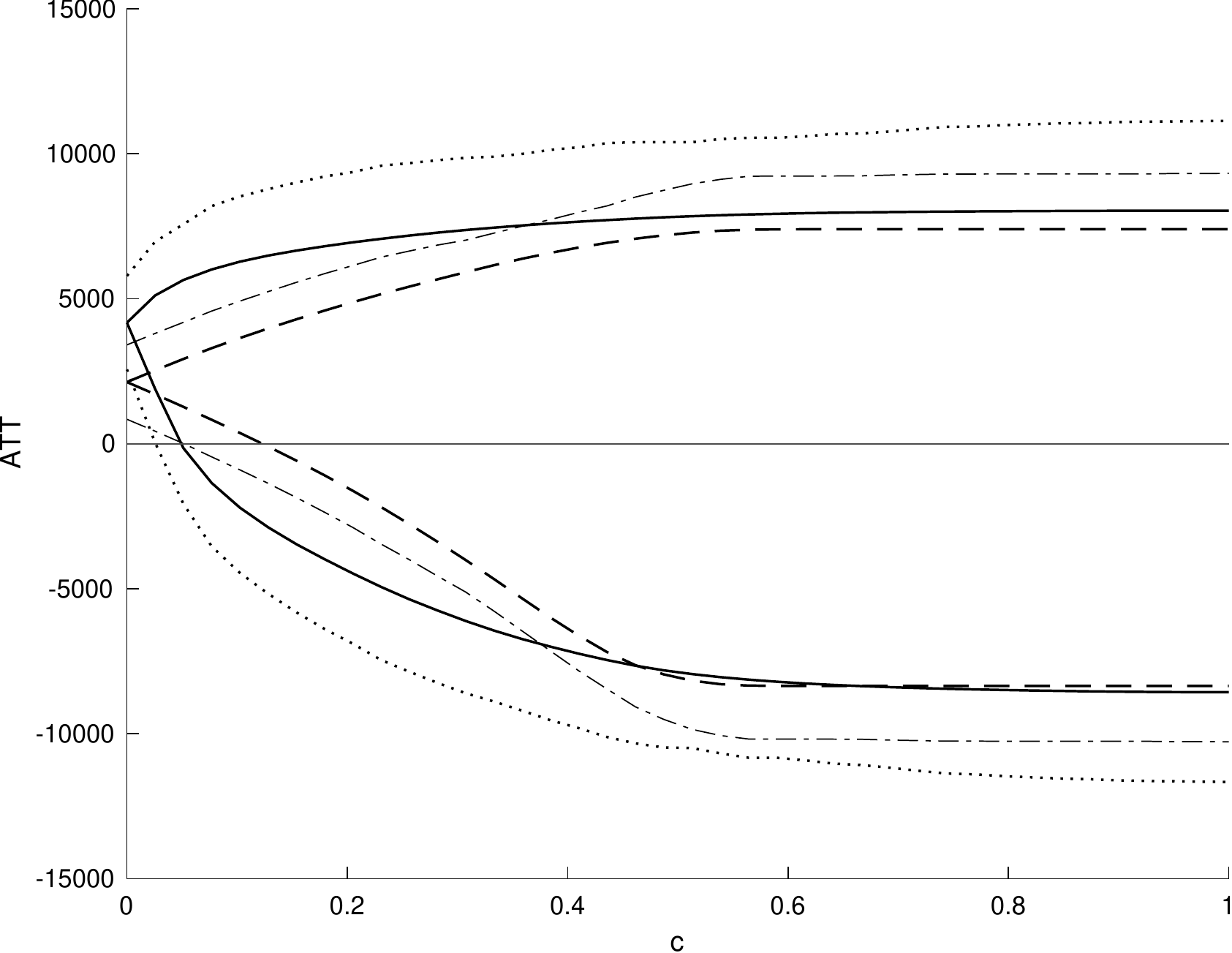}
\caption{Sensitivity of ATE (top) and ATT (bottom) estimates to relaxations of the selection on observables assumption. The solid lines are bounds computed using the observational dataset while the dashed lines are bounds computed using the experimental dataset. The light dotted lines are confidence bands for the observational dataset while the light dashed-dotted lines are confidence bands for the experimental dataset.}
\label{LaLondePlots}
\end{figure}

\subsection*{Relaxing Unconfoundedness}

Figure \ref{LaLondePlots} shows our main results. 
These are estimated treatment effect bounds under $c$-dependence, along with corresponding pointwise confidence bands, as described in sections \ref{sec:model}--\ref{sec:bootstrap}. 
The top plot shows bounds on ATE while the bottom plot shows bounds on ATT. 
The solid lines are bounds for the observational dataset while the dashed lines are bounds for the experimental dataset.
The light dotted lines are confidence bands for the observational dataset while the light dashed-dotted lines are confidence bands for the experimental dataset. 
These bands are constructed to have nominal 95\% coverage probability pointwise in $c$ based on our non-standard bootstrap results in section \ref{sec:bootstrap}.
For the tuning parameters we use $\varepsilon = 0.05$, $\eta_n = 0.05 n^{-1/4}$, and $\kappa_n = n^{-1/3}$.
Note that the sufficient conditions for validity of the standard bootstrap that we gave in section \ref{sec:standardboot} do not apply here, since the distribution of the propensity score variable $p_{1 \mid W}$ has point masses. 
This occurs because seven of the nine covariates are discrete, while the other two mixed discrete-continuous. 
The mixed variables are earnings in 1974 and earnings in 1975, which have point masses at zero since many people in the sample did not work in those years.

For both datasets, at $c=0$ the bounds collapse to the baseline point estimate. 
When $c>0$, we allow for some selection on unobservables. 
Comparing the shape of the bounds for both datasets we see that the experimental data are substantially more robust to relaxations from the baseline assumptions than the observational data. 
Specifically, for most values of $c$ the bounds for the experimental data are substantially tighter than the bounds for the observational data. 
Even the no assumptions bounds ($c = 1$) are tighter for the experimental data than for the observational data.

A second way to measure robustness uses \emph{breakdown points}. \cite{MastenPoirier2020} discuss these in detail and give additional references. 
In the current context, the breakdown point is simply the largest value of $c$ such that we can no longer draw a specific conclusion about some parameter. 
Specifically, in the next two subsections we consider two conclusions: The conclusion that ATE is nonnegative, and the conclusion that ATT is less than the per participant program cost.

\subsection*{Breakdown Points for Nonnegative ATE}

\newcommand*{\ATEexpBP}{0.082} %
\newcommand*{\ATEobsBP}{0.037} %
\newcommand*{\ATTexpBP}{0.123} %
\newcommand*{\ATTobsBP}{0.049} %

\newcommand*{\ATEexpBPconf}{0.0156} 
\newcommand*{\ATEobsBPconf}{0.009} 
\newcommand*{\ATTexpBPconf}{0.052} 
\newcommand*{\ATTobsBPconf}{0.026} 

First consider the conclusion that ATE is nonnegative. 
Our point estimates support this conclusion, but does it still hold if the baseline unconfoundedness assumption fails? 
In the experimental dataset, the estimated breakdown point is $\ATEexpBP$. 
This is simply the value of $c$ such that the lower bound function in figure \ref{LaLondePlots} intersects the horizontal axis. 
For all $c \leq \ATEexpBP$, the estimated identified sets for ATE only contains nonnegative values.
For $c > \ATEexpBP$, the estimated identified sets contain both positive and negative values. 
Hence, for such relaxations of unconfoundedness, we cannot be sure that the average treatment effect is positive.

For the observational dataset, the estimated breakdown point for the conclusion that ATE is nonnegative is $\ATEobsBP$. 
This is more than twice as small as the breakdown point for the experimental dataset. 
Hence again we see that conclusions about ATE from the experimental dataset are substantially more robust than the observational dataset. 
The same conclusion holds for ATT: The point estimates in both datasets suggest that it is positive. 
But how robust is that conclusion? 
The estimated breakdown point for the conclusion that ATT is nonnegative in the experimental data is $\ATTexpBP$ while it is $\ATTobsBP$ for the observational dataset. 
By this measure, the conclusion that ATT is positive is more than twice as robust using the experimental data compared to the observational data.

Thus far we have compared the robustness of results obtained from the experimental data with results obtained from the observational data. 
Next we discuss whether either of these results are robust in an absolute sense. 
To do this, we use the leave-out-variable $k$ propensity score analysis discussed in section \ref{sec:model}.

\begin{table}[htbp]
\caption{\label{tableCkInfo_d1} Variation in leave-out-variable-$k$ propensity scores, experimental data.}\centering
\begin{tabular}{lcccc} \hline \hline
 & p50  & p75  & p90  & $\bar{c}_k$  \\  \hline 
Earnings in 1975 &     0.001 &     0.004 &     0.008 &     0.053 \\  
[0.2em]
Black &     0.007 &     0.009 &     0.014 &     0.082 \\  
[0.2em]
Positive earnings in 1974 &     0.002 &     0.010 &     0.018 &     0.034 \\  
[0.2em]
Education &     0.012 &     0.022 &     0.031 &     0.087 \\  
[0.2em]
Married &     0.006 &     0.012 &     0.032 &     0.042 \\  
[0.2em]
Age &     0.015 &     0.024 &     0.034 &     0.099 \\  
[0.2em]
Earnings in 1974 &     0.002 &     0.011 &     0.035 &     0.209 \\  
[0.2em]
Positive earnings in 1975 &     0.013 &     0.017 &     0.062 &     0.082 \\  
[0.2em]
Hispanic &     0.007 &     0.017 &     0.099 &     0.124 \\  
\hline \hline \end{tabular}
\end{table}

First consider table \ref{tableCkInfo_d1}, which uses data from the experimental sample. 
For each variable $k$, listed in the rows of this table, we compute four summary statistics from the estimated distribution of
\[
	\Delta_k = | p_{1 \mid W}(W_{-k},W_k) - p_{1 \mid W_{-k}}(W_{-k}) |.
\]
Specifically, we estimate the 50th, 75th, and 90th percentiles of $\Delta_k$, along with the maximum observed value, denoted $\bar{c}_k$. 
As discussed in section \ref{sec:model}, these quantities tell us about the marginal impact of covariate $k$ on treatment assignment. 
$c$-dependence constrains the maximum value of the marginal impact of the \emph{unobserved} potential outcome on treatment assignment, above and beyond the \emph{observed} covariates. 
Thus the values in table \ref{tableCkInfo_d1} can help us calibrate $c$. Specifically, we will compare the breakdown point to the values in this table. These values could be interpreted as upper bounds on the magnitude of selection on unobservables that we might think is present. Thus, for a given reference value from this table, if the breakdown point is larger than the reference value, we could consider the conclusion of interest to be robust to failure of unconfoundedness. In contrast, if the breakdown point is smaller than the reference value, we could consider the conclusion of interest to be sensitive to failure of unconfoundedness.

Recall that the estimated breakdown point for the conclusion that ATE is nonnegative is $\ATEexpBP$. 
This is larger than three of the $\bar{c}_k$ values and on the same order of magnitude as four more. 
If we look at a less stringent comparison, the 90th percentile, we see that the estimated breakdown point is now larger than all but one of the rows, corresponding to the indicator for Hispanic. Let's examine this variable more closely. Figure \ref{propScoreVar_d1} plots the density $\Delta_k$ for $k=$ Hispanic indicator. 
Here we see that there is a small proportion of mass who have values larger than $\ATEexpBP$, but most people have values well below the breakdown point. Next suppose we weaken the criterion even more by considering the 75th percentile column in table \ref{tableCkInfo_d1}. The breakdown point is larger than all values in this column.

\begin{figure}[ht]
\centering
\includegraphics[width=100mm]{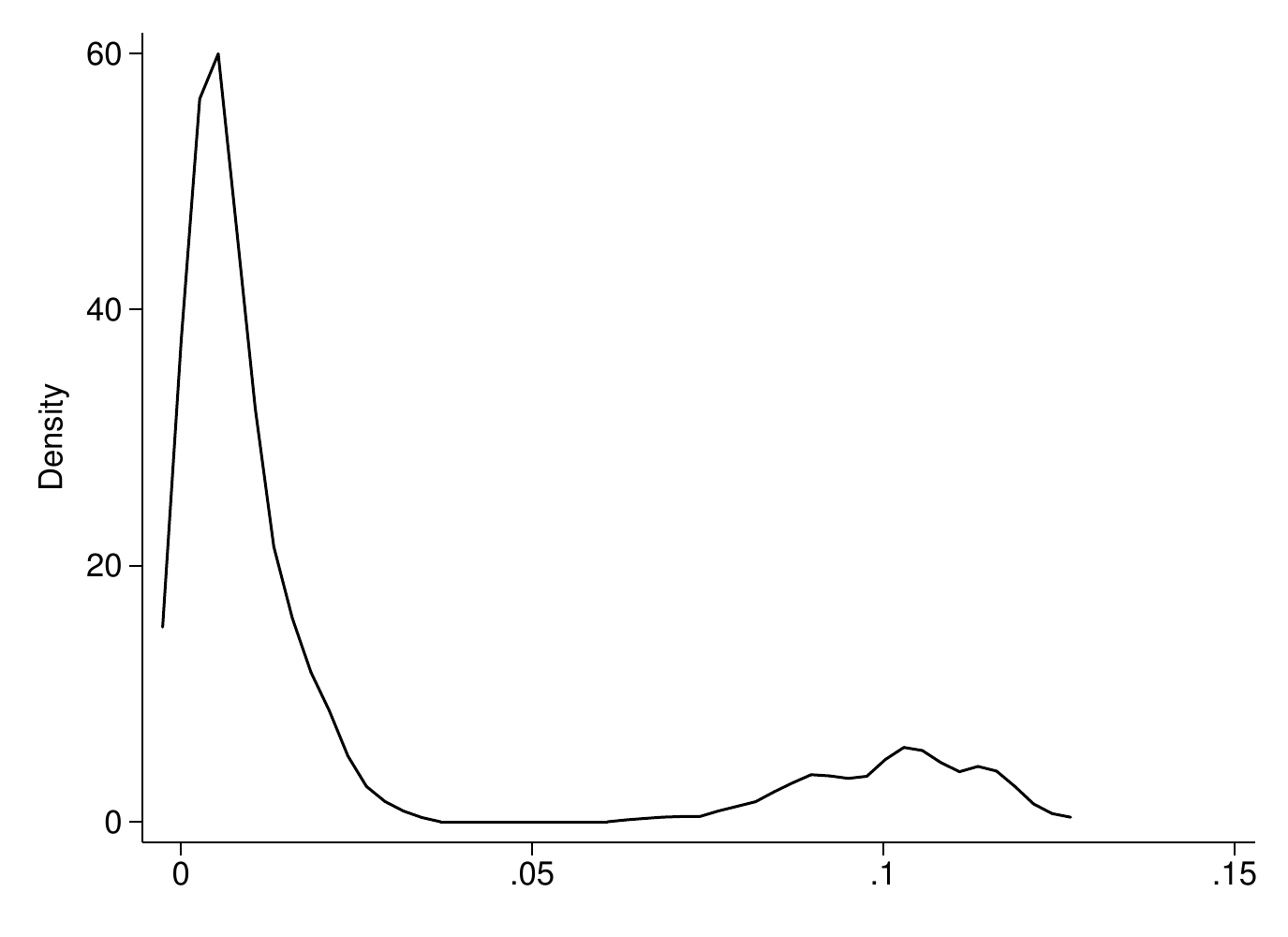}
\caption{Kernel density estimate of $\Delta_k$, the absolute difference between propensity score and leave-out-variable $k$ propensity score, for $k=$ Hispanic indicator, in the experimental dataset.}
\label{propScoreVar_d1}
\end{figure}

The leave-out-variable $k$ propensity score analysis focuses on the relationship between observed covariates and treatment assignment. 
It does not use data on outcomes. 
A less conservative analysis is to only worry about covariates $k$ which have large values in table \ref{tableCkInfo_d1} \emph{and} which also affect our outcomes in some way. 
Specifically, we next consider leave-out-variable $k$ IPW estimates of ATE under the baseline unconfoundedness assumption. 
Table \ref{tableATEdiffs} shows the effect of leaving out a single variable on the ATE point estimates for both datasets. 
Continue to consider just the experimental dataset. 
Here we first see that omitting any covariate at most changes the point estimate by 5.4\%. 
Moreover, recall the main variable we were concerned about before: the indicator for Hispanic. Omitting this variable only changes the ATE point estimate by 1.5\%. 

Overall, the leave-out-variable $k$ analysis suggests that, on an absolute scale, the conclusion that ATE is nonnegative using the experimental data is quite robust. 
A similar analysis applies to conclusions about ATT.

\begin{table}[t]
\caption{\label{tableATEdiffs} Magnitude of the effect of omitting a single variable on ATE point estimates (as a percentage of the baseline estimate).}\centering
\begin{tabular}{lcc} \hline \hline
 & Experimental dataset  & Observational dataset  \\  \hline 
Earnings in 1975 &      0.07 &      0.02 \\  
[0.1em]
Married &      0.21 &     14.27 \\  
[0.1em]
Positive earnings in 1974 &      1.35 &     10.20 \\  
[0.1em]
Hispanic &      1.51 &      1.01 \\  
[0.1em]
Black &      2.91 &     14.11 \\  
[0.1em]
Positive earnings in 1975 &      3.32 &      0.64 \\  
[0.1em]
Age &      3.36 &      6.49 \\  
[0.1em]
Earnings in 1974 &      3.90 &      0.34 \\  
[0.1em]
Education &      5.39 &      1.84 \\  
\hline \hline \end{tabular}
\end{table}

Next consider the observational data. 
Table \ref{tableCkInfo_d3} shows the leave-out-variable $k$ propensity score analysis. 
Recall that the estimated breakdown point for the conclusion that ATE is nonnegative in the observational dataset is \ATEobsBP. 
By \emph{any} of these measures the conclusion that ATE is nonnegative is not robust.
Suppose we only consider variables which also substantially change the point estimates, as shown in table \ref{tableATEdiffs}. 
Even then we still find that the results are sensitive. 
For example, the indicator for Black changes the ATE point estimate by 14\% and also has substantial marginal impact on the propensity score, with its 50th percentile in table \ref{tableCkInfo_d3} about 1.5 times as large as the estimated ATE breakdown point. 
Thus, using these as absolute measures of robustness, we find that the conclusion that ATE is positive using the observational data is not robust.

\begin{table}[htbp]
\caption{\label{tableCkInfo_d3} Variation in leave-out-variable-$k$ propensity scores, observational data.}\centering
\begin{tabular}{lcccc} \hline \hline
 & p50 & p75 & p90 & $\bar{c}_k$  \\  \hline 
Earnings in 1974 &     0.000 &     0.001 &     0.009 &     0.065 \\  
[0.2em]
Hispanic &     0.003 &     0.011 &     0.024 &     0.214 \\  
[0.2em]
Education &     0.006 &     0.017 &     0.042 &     0.127 \\  
[0.2em]
Earnings in 1975 &     0.002 &     0.010 &     0.057 &     0.276 \\  
[0.2em]
Positive earnings in 1975 &     0.007 &     0.019 &     0.076 &     0.295 \\  
[0.2em]
Positive earnings in 1974 &     0.012 &     0.028 &     0.099 &     0.423 \\  
[0.2em]
Married &     0.028 &     0.079 &     0.172 &     0.314 \\  
[0.2em]
Age &     0.035 &     0.093 &     0.205 &     0.508 \\  
[0.2em]
Black &     0.053 &     0.143 &     0.266 &     0.477 \\  
\hline \hline \end{tabular}
\end{table}

This conclusion that findings based on the observational dataset are not robust contrasts with the sensitivity analysis of \cite{Imbens2003}, who finds that the same observational dataset yields relatively robust results. 
Imbens' analysis relied importantly on fully parametric assumptions about the joint distribution of the observables and unobservables.
In particular, he assumed outcomes were normally distributed, that the treatment effect is homogeneous, and that any selection on unobservables arises due to an omitted binary variable.
Our identification analysis does not require any of these assumptions.
As discussed in section \ref{sec:estimation}, we do impose some parametric assumptions to simplify estimation, but even these assumptions are substantially weaker than those used by Imbens.
Given that we are making weaker auxiliary assumptions, it is not surprising that our analysis shows the findings to be more sensitive than the analysis in \cite{Imbens2003}.
Nonetheless, even with these weaker assumptions, we continue to find that conclusion from the experimental dataset remain robust.

Finally, note that all of our discussion thus far has focused on the point estimates of the breakdown points. In section \ref{sec:breakdownPointInference} we showed that the value at which the pointwise confidence band intersects the horizontal axis is a valid one-sided lower confidence interval for the breakdown point. For the ATE with experimental data, this gives a confidence set of $[\ATEexpBPconf,1]$, with a point estimate of $\ATEexpBP$. For the ATE with observational data, this gives a confidence set of $[\ATEobsBPconf,1]$, with a point estimate of $\ATEobsBP$. Thus the lower bound of the confidence interval for the experimental data is almost twice as large as the lower bound for the observational data. So the relative comparison of the two datasets continues to hold once we account for sampling uncertainty. Unfortunately, the lower bound of $\ATEexpBPconf$ for the experimental dataset is quite small, if we compare it to the variation in the leave-out-variable-$k$ propensity scores. This is not surprising though, given that there is a substantial amount of sampling uncertainty---even the lower bound of the confidence intervals for the baseline estimates are quite close to zero.

\subsection*{Can Selection on Unobservables Help the Program Pass a Cost-Benefit Analysis?}

In the previous subsection we studied the sensitivity of the conclusion that the ATE is nonnegative. In practice, however, this is not necessarily the most policy relevant conclusion. For example, \citet[section 3]{HeckmanSmith1998} give a model where the socially optimal decision whether to continue a small scale program or to shut it down can be computed by comparing the ATT with the program's per participant cost. In this subsection we show how our sensitivity analysis can be used in these kinds of cost-benefit analyses. Specifically, we consider the conclusion that the ATT is less than the per participant program cost. Under the model in \cite{HeckmanSmith1998}, the program should be shut down when this conclusion holds.

Chapter 8 of MDRC \citeyearpar{MDRC1983} reports NSW per participant program costs. For males, total costs ranges between \$4,637 and \$5,218 (tables 8-2, 8-3, and 8-4) in 1976 dollars. Our treatment effect estimates are in 1982 dollars. Adjusting these reported costs to 1982 dollars (CPI-U series) gives a range of \$7,865 to \$8,850.

First consider the experimental dataset. The conclusion of interest holds for the baseline estimate: The ATT of \$1,738 is far less than the per participant cost. Suppose, however, that a supporter of the program claims that this baseline estimate is implausible due to selection on unobservables. How strong does selection on unobservables need to be to allow for the possibility that the program is cost effective? Formally, what is the smallest $c$ such that the identified set for ATT includes values that are larger than the per participant cost? If we look at the bounds' point estimates, there are \emph{no} values of $c$ under which the program is cost effective. Accounting for sampling uncertainty by examining the confidence bands, we need $c$ to be at least about 0.5 before it is possible that the program is cost effective. As we argued earlier, these are very large values, so it is unlikely that selection on unobservables is this strong.

Next consider the observational dataset. Here again the conclusion of interest holds for the baseline estimate: The ATT of \$4,001 is smaller than the per participant cost. Next consider the breakdown point for this conclusion. Since the bounds for the observational dataset are larger than those for the experimental dataset, we need less selection on unobservables to allow for possibly large values of the ATT. Despite this, there are still no values of $c$ under which the program is cost effective, based on the bounds' point estimates. This is largely because the uncertainty due to the impact of selection on unobservables is asymmetric in this example: The lower bound grows much faster in $c$ than the upper bound does. Hence conclusions about the largest possible value of the ATT are more robust to relaxations of unconfoundedness than conclusions about the smallest possible value of the ATT. If we account for sampling uncertainty by examining the confidence bands, then we need $c$ to be at least about 0.08 before the confidence intervals contain ATT values larger than the per participant costs. This is a relatively large value, although it is smaller than a decent number of the leave-out-variable $k$ propensity score values in table \ref{tableCkInfo_d3}. That, however, likely just reflects the large amount of sampling uncertainty in this data.

Overall, our analysis suggests that the program does not pass a cost-benefit analysis, even if we allow for a large amount of selection on unobservables. Hence the conclusion that the ATT is less than the per participant cost, and hence that the program should be shut down, is quite robust to failures of unconfoundedness. 

Finally, note that our analysis here is primarily illustrative. A more comprehensive cost-benefit analysis would require examining many other program outcomes besides just short run post-program earnings. For example, see the analysis in chapter 8 of MDRC \citeyearpar{MDRC1983} and section 10 of \cite{HLS1999}. Note, however, that given data on these additional outcomes, our methods could then be used to analyze the sensitivity of total program impacts to failures of unconfoundedness.

\section{Conclusion}\label{sec:conclusion}

Identification, estimation, and inference on treatment effects under unconfoundedness has been widely studied and applied. This approach uses two assumptions: Unconfoundedness and Overlap. The overlap assumption is refutable, and many tools have been developed for checking this assumption in practice. For example, Stata's built-in package \texttt{teffects} has commands for checking overlap. In this paper, we provide a complementary suite of tools for assessing the unconfoundedness assumption. There are two key distinctions between our results and the previous literature. First, we begin from fully nonparametric bounds. In contrast, most of the previous literature relies on parametric assumptions for their identification analysis. Second, we provide tools for inference. This is important because, just like baseline estimators, sensitivity analyses are also subject to sampling uncertainty.

\subsection{Extensions and Future Work}\label{sec:additionalEstimators}

We conclude by discussing several extensions and directions for future work. As we just mentioned, a key distinguishing feature of our sensitivity analysis is that we begin from fully nonparametric bounds. We then estimated these bounds using flexible parametric estimators of the propensity score and the quantile regression of outcomes on treatment and covariates. These estimators can include quadratic terms, cubic terms, and interactions, for example, but they are not fully nonparametric. We restricted attention to parametric estimators for one reason: Even in this case, the asymptotic distribution theory is non-standard, complicated, and at the frontier of current research. This difficulty comes from the fact that our estimands are not Hadamard differentiable. Extending our analysis to first step nonparametric estimators is an important next step, but doing so will likely require both deriving and applying more general asymptotic theory for non-Hadamard differentiable functionals than currently exists. Hence we leave that analysis to future work.

A second extension is to consider additional parameters of interest. In this paper we focus on estimation and inference on the ATE and ATT bounds. We also developed analogous results for the CQTE and CATE. The conditional average treatment effect for the treated, $\text{CATT}(w) = \Exp(Y_1 - Y_0 \mid X=1, W=w)$, can be studied with the same tools we use in section \ref{sec:asymptotics}. We omit that analysis for brevity. \cite{MastenPoirier2018} also derive sharp bounds on unconditional quantile treatment effects (QTEs). Estimation and inference on the QTE bounds is more complicated than the ATE and ATT bounds. The reason is identical to the explanation \citet[page 307]{Vaart2000} gives when discussing inference on unconditional sample quantiles: ``to derive the asymptotic normality of even a single quantile estimator $\widehat{F}_n^{-1}(p)$, we need to know that the estimators $\widehat{F}_n$ are asymptotically normal as a process, in a neighborhood of $F^{-1}(p)$.'' In our case, performing inference on the QTE bounds requires showing convergence of corresponding bounds on the unconditional potential outcome cdfs as a process in a neighborhood of the quantile of interest.\footnote{\cite{MastenPoirier2020} prove some results along these lines; see their lemma 1. Those results are only valid for sufficiently small values of $c$ and with discrete $W$, which substantially simplifies the analysis.} For this reason, we leave estimation and inference on the QTE bounds to a separate paper.

\singlespacing
\bibliographystyle{econometrica}
\bibliography{Cdep_estimation}

\allowdisplaybreaks

\appendix

\section{Asymptotics for the First Step Estimators} \label{sec:prelimest}

In this appendix we formally state assumptions that ensure asymptotic normality of our first step estimators. 

\subsection{Assumptions}

We begin with the propensity score, which we estimate by maximum likelihood.

\begin{partialIndepAssump}[Propensity Score]\label{assn:prop score}
\hfill
\begin{enumerate}
\item (Correct specification) Let $\mathcal{B} \subseteq \R^{d_W}$ be compact. There is a $\beta_0 \in \text{int}(\mathcal{B})$ such that
\begin{align*}
	\Prob(X=x \mid W=w)
	&= F(w'\beta_0)^x(1-F(w'\beta_0))^{1-x} \\
	&\equiv L(x,w'\beta_0)
\end{align*}
for all $x \in \{0,1\}$ and $w \in \mathcal{W}$.

\item (Sufficient variation) There is no proper linear subspace $A$ of $\R^{d_W}$ such that $\Prob(W\in A) = 1$.

\item (Regularity of link function) $F: \R \to (0,1)$ is strictly increasing and twice continuously differentiable with uniformly bounded derivative.
\end{enumerate}
\end{partialIndepAssump}

This assumption requires our propensity score specification to be correct. It also imposes some standard assumptions on parameter space $\mathcal{B}$, the link function $F(\cdot)$, and the distribution of the covariates $W$. Next, let
\[
	\ell(x,w'\beta) = \log L(x,w'\beta)
\]
denote the log likelihood function. Let
\[
	\ell_\beta(x,w'\beta) = \frac{\partial}{\partial\beta} \ell(x,w'\beta)
	\qquad \text{and} \qquad
	\ell_{\beta\beta}(x,w'\beta) = \frac{\partial^2}{\partial\beta\partial\beta'} \ell(x,w'\beta)
\]
denote its vector of derivatives and second derivative matrix, respectively. Recall from section \ref{sec:asymptotics} that $\mathcal{B}_\delta = \{ \beta \in \mathcal{B} : \|\beta - \beta_0\| \leq \delta\}$. We impose the following assumptions on the propensity score as well.

\begin{partialIndepAssump}[Propensity Score Regularity]\label{assn:prop score regularity}
For each $x\in\{0,1\}$,
\begin{enumerate}
\item We have
\[
	\Exp \left( \sup_{\beta \in \mathcal{B}}|\ell(x,W'\beta)| \right) < \infty.
\]

\item For some $\delta >0$,
\[
	\int_{\mathcal{W}} \sup_{\beta \in \mathcal{B}_\delta} \left\| \frac{\partial}{\partial \beta} L(x,w'\beta)\right\| \; dw < \infty
	\qquad \text{and} \qquad
	\int_{\mathcal{W}} \sup_{\beta \in \mathcal{B}_\delta} \left\| \frac{\partial^2}{\partial \beta\partial\beta'} L(x,w'\beta)\right\| \; dw < \infty.
\]

\item The matrix $V_\beta = \Exp [\ell_\beta(X,W'\beta_0)\ell_\beta(X,W'\beta_0)']$ exists and is nonsingular.

\item For some $\delta >0$,
\[
	\Exp \left(\sup_{\beta \in \mathcal{B}_\delta}\|\ell_{\beta\beta}(x,W'\beta)\|\right)<\infty.
\] 
\end{enumerate}
\end{partialIndepAssump}

These conditions are standard for maximum likelihood estimators. For example, see theorem 3.3. in \cite{NeweyMcFadden1994} along with their discussion. Note that dominance conditions A\ref{assn:prop score regularity}.1, A\ref{assn:prop score regularity}.3, and A\ref{assn:prop score regularity}.4 hold in standard parametric models like logit and probit. Although not necessary, it also holds when $\Exp (\|W\|^2)<\infty$ and strong overlap holds; that is, when there exists $0<\underline{p}<\overline{p}<1$ such that $p_{1|w} \in [\underline{p},\overline{p}]$ for all $w \in \mathcal{W}$.

Besides the propensity score, the other first step estimator is the conditional quantile function of $Y$ given $(X,W)$. We consider here a linear quantile regression of $Y$ on $q(X,W)$, a set of flexible functions of $(X,W)$. We make the following assumptions.

\begin{partialIndepAssump}[Quantile Regression]\label{assn:quant reg}
There exists an $\smalleps \in (0,\varepsilon)$ such that
\begin{enumerate}
\item There is some $\gamma_0 \in \mathscr{C}( [\smalleps,1-\smalleps],\R^{d_q})$ such that
\[
	Q_{Y \mid X,W}(\tau \mid x,w) = q(x,w)'\gamma_0(\tau)
\]
for every $\tau \in [\smalleps,1-\smalleps]$.

\item The conditional density $f_{Y \mid q(X,W)}(y \mid q(x,w))$ exists and is bounded and uniformly continuous in $y$, uniformly in $q(x,w) \in \supp(q(X,W))$. 

\item The matrix
\[
	J(\tau) = \Exp \left[ f_{Y \mid q(X,W)} \big( q(X,W)'\gamma_0(\tau) \mid q(X,W) \big) q(X,W)q(X,W)' \right]
\]
is positive definite for all $\tau \in [\smalleps,1-\smalleps]$.

\item $\Exp( \|q(x,W)\|^4 ) < \infty$ for $x\in\{0,1\}$.
\end{enumerate}
\end{partialIndepAssump}

These are standard assumptions for obtaining limiting distributions of quantile regression processes indexed by $\tau \in [\smalleps,1-\smalleps]$. For example, see theorem 3 in \cite{AngristChernozhukovFernandez-Val2006}.

\subsection{Convergence Results}

We next prove two convergence results. The first is joint asymptotic normality of the first step estimators.

\begin{lemma}[First step estimators]\label{lemma:prelim estimators}
Suppose A\ref{assn:iid} and A\ref{assn:prop score}--A\ref{assn:quant reg} hold. Then
\[
	\sqrt{n}
	\begin{pmatrix}
		\widehat{\beta} - \beta_0\\
		\widehat{\gamma}(\tau) - \gamma_0(\tau)
	\end{pmatrix}
	\rightsquigarrow \mathbf{Z}_1(\tau),
\]
where $\mathbf{Z}_1(\cdot)$ is a mean-zero Gaussian process in $\R^{d_W} \times \ell^\infty([\varepsilon,1-\varepsilon],\R^{d_q})$ with uniformly continuous paths. Moreover, its covariance kernel can be written in block form as
\begin{align}\label{eq:prelimest_covkernel}
	\Exp [\mathbf{Z}_1(\tau_1)\mathbf{Z}_1(\tau_2)']
	=
	\begin{pmatrix}
		V_\beta & 0 \\
		0 & V_\gamma(\tau_1,\tau_2)	
	\end{pmatrix}
\end{align}
where
\[
	V_\beta = \Exp \left[\frac{F'(W'\beta_0)^2}{F(W'\beta_0)(1-F(W'\beta_0))}WW'\right]^{-1}
\]
and
\[
	V_\gamma(\tau_1,\tau_2) =  J(\tau_1)^{-1}(\min \{\tau_1, \tau_2 \} - \tau_1 \tau_2)\Exp [q(X,W)q(X,W)'] J(\tau_2)^{-1}.
\]
\end{lemma}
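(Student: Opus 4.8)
The plan is to handle the two first-step estimators separately using standard results, then stack their asymptotically linear expansions into a single empirical-process expansion and read off the covariance kernel.

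\textbf{Propensity score.} First I would invoke a standard maximum-likelihood argument, as in \citet[Theorems 3.3--3.4]{NeweyMcFadden1994}. Compactness of $\mathcal{B}$ and the dominance condition A\ref{assn:prop score regularity}.1 deliver a uniform law of large numbers for the sample log-likelihood; correct specification A\ref{assn:prop score}.1, strict monotonicity of $F$ in A\ref{assn:prop score}.3, and the sufficient-variation condition A\ref{assn:prop score}.2 together imply that the population log-likelihood is uniquely maximized at $\beta_0$ (identification); A\ref{assn:prop score regularity}.2 and A\ref{assn:prop score regularity}.4 supply the dominance needed to differentiate under the expectation and to obtain stochastic equicontinuity of the score; and A\ref{assn:prop score regularity}.3 gives an invertible information matrix. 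Consistency plus a mean-value expansion of the first-order conditions then yields
\[
	\sqrt{n}\,(\widehat{\beta} - \beta_0) = \frac{1}{\sqrt{n}}\sum_{i=1}^n \psi_\beta(X_i,W_i) + o_p(1),
	\qquad
	\psi_\beta(x,w) = \big( \Exp[\ell_\beta(X,W'\beta_0)\ell_\beta(X,W'\beta_0)'] \big)^{-1} \ell_\beta(x,w'\beta_0),
\]
where I have used the information matrix equality $-\Exp[\ell_{\beta\beta}(X,W'\beta_0)] = \Exp[\ell_\beta(X,W'\beta_0)\ell_\beta(X,W'\beta_0)']$, valid under correct specification. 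Since $\ell_\beta(x,w'\beta_0) = \frac{(x-F(w'\beta_0))F'(w'\beta_0)}{F(w'\beta_0)(1-F(w'\beta_0))}\,w$ and $\Exp[(X-F(W'\beta_0))^2 \mid W] = F(W'\beta_0)(1-F(W'\beta_0))$, the variance of $\psi_\beta$ equals the matrix $V_\beta$ in the statement.

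\textbf{Quantile regression.} Next I would apply the standard weak-convergence theory for linear quantile regression processes, e.g.\ \citet[Theorem 3]{AngristChernozhukovFernandez-Val2006}, after restricting the index set from $[\smalleps,1-\smalleps]$ to $[\varepsilon,1-\varepsilon]$. Correct linear specification A\ref{assn:quant reg}.1 with $\gamma_0$ continuous, the bounded uniformly continuous conditional density A\ref{assn:quant reg}.2, the positive-definite Jacobian A\ref{assn:quant reg}.3, and the fourth-moment bound A\ref{assn:quant reg}.4 give
\[
	\sqrt{n}\,\big( \widehat{\gamma}(\tau) - \gamma_0(\tau) \big) = \frac{1}{\sqrt{n}}\sum_{i=1}^n \psi_\gamma(Y_i,X_i,W_i;\tau) + o_p(1)
\]
uniformly over $\tau \in [\varepsilon,1-\varepsilon]$, with $\psi_\gamma(y,x,w;\tau) = J(\tau)^{-1}q(x,w)\big( \tau - \indicator(y \le q(x,w)'\gamma_0(\tau)) \big)$; the uniform remainder and stochastic equicontinuity hold because $\{ (y,x,w)\mapsto \indicator(y\le q(x,w)'\gamma_0(\tau)) : \tau\in[\varepsilon,1-\varepsilon] \}$ is uniformly bounded and monotone in $\tau$, hence Donsker. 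The structural fact I will exploit below is that correct specification forces $\Exp[\indicator(Y\le q(X,W)'\gamma_0(\tau))\mid X,W]=\tau$, so that $\Exp[\psi_\gamma(Y,X,W;\tau)\mid X,W]=0$.

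\textbf{Stacking and the kernel.} Because both expansions hold on the same sample with $o_p(1)$ remainders, stacking $\psi=(\psi_\beta,\psi_\gamma)$ gives a single i.i.d.-sum expansion for $\sqrt{n}(\widehat{\beta}-\beta_0,\widehat{\gamma}(\cdot)-\gamma_0(\cdot))$. The stacked index class is Donsker (finite-dimensional in the $\beta$-block, Donsker in the $\gamma$-block) and $\psi$ has finite second moments by A\ref{assn:prop score regularity}.3, A\ref{assn:quant reg}.4, and boundedness of $F'$, so by the functional central limit theorem the process converges weakly in $\R^{d_W}\times\ell^\infty([\varepsilon,1-\varepsilon],\R^{d_q})$ to a tight mean-zero Gaussian process $\mathbf{Z}_1$ with covariance kernel $\Exp[\psi(Z;\tau_1)\psi(Z;\tau_2)']$. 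It remains to evaluate the three blocks. The $(\beta,\beta)$ block is $V_\beta\,\Exp[\ell_\beta\ell_\beta']\,V_\beta = V_\beta$. The $(\gamma,\gamma)$ block is $J(\tau_1)^{-1}\Exp\big[ q(X,W)q(X,W)'\big(\tau_1-\indicator(Y\le q'\gamma_0(\tau_1))\big)\big(\tau_2-\indicator(Y\le q'\gamma_0(\tau_2))\big) \big]J(\tau_2)^{-1}$, which, upon conditioning on $(X,W)$ and using $\Exp[(\tau_1-\indicator(U\le\tau_1))(\tau_2-\indicator(U\le\tau_2))]=\min\{\tau_1,\tau_2\}-\tau_1\tau_2$ for $U$ uniform on $(0,1)$, reduces to $V_\gamma(\tau_1,\tau_2)$. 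The $(\beta,\gamma)$ block is $\Exp[\psi_\beta(X,W)\,\Exp[\psi_\gamma(Y,X,W;\tau)'\mid X,W]] = 0$ because $\psi_\beta$ is a function of $(X,W)$ only. Hence the kernel is block diagonal as claimed. Finally, the $\beta$-coordinates of $\mathbf{Z}_1$ are constant in $\tau$ and $V_\gamma(\cdot,\cdot)$ is jointly continuous on $[\varepsilon,1-\varepsilon]^2$ (continuity and invertibility of $J(\cdot)$ from A\ref{assn:quant reg}.2--.3 and continuity of $\gamma_0$, plus continuity of $(\tau_1,\tau_2)\mapsto\min\{\tau_1,\tau_2\}-\tau_1\tau_2$), while the weak-convergence step concentrates the limit on $\mathscr{C}([\varepsilon,1-\varepsilon],\R^{d_q})$; together these give uniformly continuous paths.

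\textbf{Main obstacle.} This proof is mostly an assembly of standard results, so the real work is verifying that A\ref{assn:prop score}--A\ref{assn:quant reg} supply exactly the hypotheses of the cited M-estimation and quantile-regression-process theorems --- the least mechanical piece being identification of $\beta_0$ from the sufficient-variation condition A\ref{assn:prop score}.2 together with strict monotonicity of $F$ --- and executing the joint functional CLT so that the block-diagonal kernel emerges, which ultimately rests entirely on the conditional-mean-zero property of the quantile-regression score under correct specification.
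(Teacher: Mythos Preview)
Your proposal is correct and follows essentially the same approach as the paper: the paper likewise invokes \cite{NeweyMcFadden1994} for $\widehat{\beta}$ and \cite{AngristChernozhukovFernandez-Val2006} for $\widehat{\gamma}(\cdot)$, then stacks the two Donsker influence functions to obtain joint weak convergence. Your argument for the vanishing off-diagonal block---that $\psi_\beta$ is $(X,W)$-measurable and $\Exp[\psi_\gamma\mid X,W]=0$ under correct specification---is exactly the paper's iterated-expectations argument.
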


Next we provide a convergence result for estimates of the derivatives of the quantile regression coefficients. We estimate these derivatives as follows. Let $\tau \in [\varepsilon, 1-\varepsilon]$. Then
\begin{equation}\label{eq:QRderivativeEstimator}
	\widehat{\gamma}'(\tau)
	= \frac{\widehat{\gamma}(\tau + \eta_n) - \widehat{\gamma}(\tau -\eta_n)}{2\eta_n}
\end{equation}
where $\eta_n>0$ is a tuning parameter that is chosen to be small enough such that $[\tau - \eta_n, \tau + \eta_n]$ is contained in $(\smalleps, 1-\smalleps)$. The next result shows that these estimators are uniformly consistent. %

\begin{lemma}[Convergence of QR derivatives]\label{applemma:QRderivatives}
Let $\eta_n \to 0$ and $n\eta_n^2 \to \infty$ as $n\to\infty$. Suppose the assumptions of lemma \ref{lemma:prelim estimators} hold. Suppose A\ref{assn:quant reg regularity} holds. Then
\[
	\sup_{\tau\in[\varepsilon,1-\varepsilon]} \|\widehat{\gamma}'(\tau) - \gamma'_0(\tau)\| = o_p(1).
\]
\end{lemma}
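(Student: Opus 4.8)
The plan is to split the estimation error of the symmetric difference quotient into a deterministic smoothing bias and a stochastic term, and to bound each separately. Writing $G_n(\tau) = \sqrt{n}(\widehat{\gamma}(\tau) - \gamma_0(\tau))$ for the normalized quantile regression process,
\[
	\widehat{\gamma}'(\tau) - \gamma_0'(\tau)
	= \underbrace{\left(\frac{\gamma_0(\tau+\eta_n) - \gamma_0(\tau-\eta_n)}{2\eta_n} - \gamma_0'(\tau)\right)}_{=:\, b_n(\tau)}
	\; + \; \underbrace{\frac{G_n(\tau+\eta_n) - G_n(\tau-\eta_n)}{2\eta_n\sqrt{n}}}_{=:\, s_n(\tau)} .
\]
For $n$ large enough $\eta_n < \varepsilon - \smalleps$, so for every $\tau \in [\varepsilon,1-\varepsilon]$ the shifted arguments $\tau\pm\eta_n$ lie in $(\smalleps,1-\smalleps)$, where A\ref{assn:quant reg} guarantees the quantile regression is correctly specified and $\gamma_0$ is defined; hence both terms are well defined uniformly in $\tau$.

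For the bias term I would invoke A\ref{assn:quant reg regularity}: since $\gamma_0 \in \mathscr{C}_{m,\nu}^B([\smalleps,1-\smalleps])^{d_q}$ with $m\geq 3$, each coordinate of $\gamma_0$ is three times continuously differentiable with $\sup|\gamma_0'''| \leq B$. A third-order Taylor expansion of $\gamma_0(\tau\pm\eta_n)$ about $\tau$, after cancellation of the zeroth-, first-, and second-order contributions in the symmetric difference, gives $b_n(\tau) = \frac{\eta_n^2}{12}\big(\gamma_0'''(\xi_{n,+}(\tau)) + \gamma_0'''(\xi_{n,-}(\tau))\big)$ for intermediate points $\xi_{n,\pm}(\tau)$ between $\tau$ and $\tau\pm\eta_n$. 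Therefore $\sup_{\tau\in[\varepsilon,1-\varepsilon]}\|b_n(\tau)\| \leq \tfrac{1}{6}\sqrt{d_q}\,B\,\eta_n^2 \to 0$ since $\eta_n\to 0$. (Only $\eta_n \to 0$ is needed here; the assumption $m\ge 3$ is what yields the clean $O(\eta_n^2)$ rate.)

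For the stochastic term I would use lemma \ref{lemma:prelim estimators}, which gives $G_n \rightsquigarrow \mathbf{Z}_1^{\gamma}$ in $\ell^\infty([\varepsilon',1-\varepsilon'],\R^{d_q})$ for some $\varepsilon' \in (\smalleps,\varepsilon)$ — the argument establishing that lemma applies verbatim on any such slightly enlarged index interval — where the limit process has uniformly continuous sample paths. Weak convergence in $\ell^\infty$ to a tight limit implies that the sequence $G_n$ is asymptotically uniformly equicontinuous in probability: for every $\epsilon>0$, $\lim_{\delta\downarrow 0}\limsup_{n}\Prob^*\big(\sup_{|s-t|\leq\delta}\|G_n(s)-G_n(t)\| > \epsilon\big) = 0$ (the standard characterization of asymptotic tightness, e.g.\ in \cite{VaartWellner1996}). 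Since $\eta_n\to 0$, evaluating this along the sequence of windows $\delta = 2\eta_n$ yields $\sup_{\tau\in[\varepsilon,1-\varepsilon]}\|G_n(\tau+\eta_n) - G_n(\tau-\eta_n)\| = o_p(1)$. Dividing by $2\eta_n\sqrt{n} = 2\sqrt{n\eta_n^2}$ and using $n\eta_n^2 \to \infty$ gives $\sup_{\tau}\|s_n(\tau)\| = o_p(1)\cdot o(1) = o_p(1)$. Combining with the bias bound via the triangle inequality proves the claim.

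The main obstacle is the stochastic term, specifically converting the weak convergence of lemma \ref{lemma:prelim estimators} into a modulus-of-continuity statement over a window $|s-t| = 2\eta_n$ that shrinks with $n$; the asymptotic equicontinuity route handles this cleanly, but one must check that the first-step weak convergence is available on an index interval strictly containing $[\varepsilon,1-\varepsilon]$ so that the shifted arguments remain in range, which follows from A\ref{assn:quant reg} together with the proof of lemma \ref{lemma:prelim estimators}. The remaining pieces — the Taylor expansion for the bias and the final rate bookkeeping invoking $n\eta_n^2\to\infty$ — are routine.
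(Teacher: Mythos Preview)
Your proof is correct and uses the same bias--plus--stochastic decomposition as the paper, with the bias term handled identically via a third-order Taylor expansion under A\ref{assn:quant reg regularity}. The only genuine difference is in the stochastic term: the paper simply bounds
\[
\|G_n(\tau+\eta_n)-G_n(\tau-\eta_n)\|\le 2\sup_{s\in[\smalleps,1-\smalleps]}\|G_n(s)\|=O_p(1)
\]
using only the uniform $\sqrt{n}$-rate of $\widehat{\gamma}$ from lemma \ref{applemma:quantile estimation}, and then divides by $2\eta_n\sqrt{n}\to\infty$. Your route instead invokes asymptotic equicontinuity of $G_n$ to get $o_p(1)$ for the increment before dividing. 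Both are valid; the paper's argument is more elementary since it does not require the full weak-convergence/tightness characterization (nor the check that the intrinsic semimetric is comparable to the Euclidean one), while your equicontinuity argument is sharper than needed here---it would still deliver $o_p(1)$ even if $n\eta_n^2$ were merely bounded away from zero rather than diverging, a relaxation the lemma's hypotheses do not exploit.
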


\subsection{Proofs}

We begin by examining each of the first step estimators separately.

\begin{lemma}[Propensity score estimation] \label{applemma:prop score estimation}
Suppose A\ref{assn:iid} and A\ref{assn:prop score}--A\ref{assn:prop score regularity} hold. Then
\[
	\sqrt{n}(\widehat{\beta} - \beta_0) = \frac{1}{\sqrt{n}}\sum_{i=1}^n V_\beta \frac{F'(W_i'\beta_0)(X_i - F(W_i'\beta_0))W_i}{F(W_i'\beta_0)(1-F(W_i'\beta_0))} + o_p(1)
\]
and hence
\[
	\sqrt{n}(\widehat{\beta} - \beta_0)
	\xrightarrow{d}
	\normal \left(0, V_\beta\right).
\]
\end{lemma}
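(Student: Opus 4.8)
The plan is to follow the classical two-step argument for maximum likelihood estimators, verifying the hypotheses of standard extremum-estimator theory (see, e.g., \cite{NeweyMcFadden1994}). First I would establish consistency, $\widehat{\beta} \xrightarrow{p} \beta_0$. The sample objective is $\widehat{Q}_n(\beta) = \frac{1}{n}\sum_{i=1}^n \ell(X_i, W_i'\beta)$ with population version $Q_0(\beta) = \Exp[\ell(X, W'\beta)]$. Compactness of $\mathcal{B}$ is A\ref{assn:prop score}.1, continuity of $\beta \mapsto \ell(x, w'\beta)$ follows from A\ref{assn:prop score}.3, and the dominance condition A\ref{assn:prop score regularity}.1 together with the uniform law of large numbers gives $\sup_{\beta \in \mathcal{B}}|\widehat{Q}_n(\beta) - Q_0(\beta)| = o_p(1)$. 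Identification --- that $Q_0$ is uniquely maximized at $\beta_0$ --- follows from correct specification A\ref{assn:prop score}.1, strict monotonicity of $F$, and the sufficient-variation condition A\ref{assn:prop score}.2, which rules out $W'\beta = W'\beta_0$ almost surely for $\beta \neq \beta_0$, via the usual Kullback--Leibler argument. Hence $\widehat{\beta} \xrightarrow{p} \beta_0$, and since $\beta_0 \in \text{int}(\mathcal{B})$ by A\ref{assn:prop score}.1, with probability approaching one $\widehat{\beta}$ solves the first-order condition $\frac{1}{n}\sum_{i=1}^n \ell_\beta(X_i, W_i'\widehat{\beta}) = 0$.

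Next I would linearize the score around $\beta_0$. A componentwise mean-value expansion gives $0 = \frac{1}{n}\sum_{i=1}^n \ell_\beta(X_i, W_i'\beta_0) + \bar{H}_n(\widehat{\beta} - \beta_0)$, where the $k$th row of $\bar{H}_n$ equals the $k$th row of $\frac{1}{n}\sum_{i=1}^n \ell_{\beta\beta}(X_i, W_i'\bar{\beta}^{(k)})$ for some $\bar{\beta}^{(k)}$ on the segment joining $\widehat{\beta}$ and $\beta_0$. The dominance condition A\ref{assn:prop score regularity}.4 and a uniform law of large numbers yield $\sup_{\beta \in \mathcal{B}_\delta}\left\|\frac{1}{n}\sum_{i=1}^n \ell_{\beta\beta}(X_i, W_i'\beta) - \Exp[\ell_{\beta\beta}(X, W'\beta)]\right\| = o_p(1)$; combined with consistency of each $\bar{\beta}^{(k)}$ and continuity of $\beta \mapsto \Exp[\ell_{\beta\beta}(X, W'\beta)]$, this gives $\bar{H}_n \xrightarrow{p} \Exp[\ell_{\beta\beta}(X, W'\beta_0)]$. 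By the information matrix equality this limit equals $-\mathcal{I}$, where $\mathcal{I} = \Exp[\ell_\beta(X, W'\beta_0)\ell_\beta(X, W'\beta_0)']$; and a direct computation for the binary-choice log-likelihood gives $\ell_\beta(x, w'\beta_0) = \frac{F'(w'\beta_0)(x - F(w'\beta_0))}{F(w'\beta_0)(1 - F(w'\beta_0))}\, w$, so that $\mathcal{I} = \Exp\left[\frac{F'(W'\beta_0)^2}{F(W'\beta_0)(1 - F(W'\beta_0))}WW'\right]$. This matrix is nonsingular by A\ref{assn:prop score regularity}.3, and $V_\beta = \mathcal{I}^{-1}$ in the notation of lemma \ref{lemma:prelim estimators}.

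Finally I would invert and apply a central limit theorem. On the event that the first-order condition holds and $\bar{H}_n$ is invertible --- which has probability tending to one --- $\sqrt{n}(\widehat{\beta} - \beta_0) = -\bar{H}_n^{-1}\frac{1}{\sqrt{n}}\sum_{i=1}^n \ell_\beta(X_i, W_i'\beta_0)$; since $-\bar{H}_n^{-1} \xrightarrow{p} V_\beta$ and the right-hand sum is $O_p(1)$ by the CLT below, this yields the stated representation $\sqrt{n}(\widehat{\beta} - \beta_0) = \frac{1}{\sqrt{n}}\sum_{i=1}^n V_\beta \frac{F'(W_i'\beta_0)(X_i - F(W_i'\beta_0))W_i}{F(W_i'\beta_0)(1 - F(W_i'\beta_0))} + o_p(1)$. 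Because $\Exp[\ell_\beta(X, W'\beta_0) \mid W] = 0$ (as $\Exp[X - F(W'\beta_0) \mid W] = 0$) and $\Exp[\ell_\beta \ell_\beta'] = \mathcal{I}$ is finite by A\ref{assn:prop score regularity}.3, the Lindeberg--Lévy CLT under A\ref{assn:iid} gives $\frac{1}{\sqrt{n}}\sum_{i=1}^n \ell_\beta(X_i, W_i'\beta_0) \xrightarrow{d} \normal(0, \mathcal{I})$, and Slutsky's theorem then gives $\sqrt{n}(\widehat{\beta} - \beta_0) \xrightarrow{d} \normal(0, V_\beta \mathcal{I} V_\beta) = \normal(0, V_\beta)$ since $V_\beta = \mathcal{I}^{-1}$. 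No step here is genuinely difficult: this is a textbook MLE argument, and \cite{NeweyMcFadden1994} already provides the needed general results. The only points requiring a little care are the identification of $\beta_0$ from the sufficient-variation condition A\ref{assn:prop score}.2 and checking that the intermediate points $\bar{\beta}^{(k)}$ inherit consistency, both of which are routine given the stated compactness and dominance assumptions.
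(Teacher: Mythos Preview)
Your proposal is correct and follows essentially the same approach as the paper: both invoke the standard MLE theory of \cite{NeweyMcFadden1994}, with the paper simply citing their Theorem 3.3 (and Theorem 2.5 for consistency) and verifying its hypotheses, while you spell out the underlying Taylor-expansion and ULLN steps explicitly. The identification argument via A\ref{assn:prop score}.2 and the dominance/continuity verifications you give are exactly what the paper points to when checking the Newey--McFadden assumptions.
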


\begin{proof}[Proof of lemma \ref{applemma:prop score estimation}]
This result follows from theorem 3.3 (asymptotic normality of MLEs) in \cite{NeweyMcFadden1994}. So it suffices to verify that their assumptions hold.
\begin{enumerate}
\item Their theorem 3.3 begins by supposing the assumptions of their theorem 2.5 (consistency of MLEs) holds. So we verify those assumptions first. By A\ref{assn:prop score}.1 $\ell(x,w'\beta) = \ell(x,w' \widetilde{\beta})$ for all $(x,w) \in \supp(X,W)$ implies that $w'\beta = w' \widetilde{\beta}$ for all $w \in \supp(W)$. By A\ref{assn:prop score}.2 this implies that $\beta = \widetilde{\beta}$. So assumption (i) of their theorem 2.5 holds. We directly assume that their assumptions (ii), (iii), and (iv) hold (via our A\ref{assn:prop score} and A\ref{assn:prop score regularity}.1). Finally, note that A\ref{assn:iid} is our assumption that $\{ (Y_i,X_i,W_i) \}_{i=1}^n$ are iid. Thus all assumptions of their theorem 2.5 hold. %

\item Next we consider the additional assumptions imposed in their theorem 3.3, (i)--(v). These are directly implied by our A\ref{assn:prop score} and A\ref{assn:prop score regularity}.
\end{enumerate}
Thus all assumptions of their theorem 3.3 hold. This gives us $\sqrt{n}(\widehat{\beta} - \beta_0) \xrightarrow{d} \normal \left(0, V_\beta\right)$. The asymptotic linear representation holds by arguments in the proof of their theorem 3.1 and the discussion on pages 2142--2143.
\end{proof}

\begin{lemma}[Quantile regression estimation]\label{applemma:quantile estimation}
Suppose A\ref{assn:iid} and A\ref{assn:quant reg} hold. Then
\begin{align*}
	\sqrt{n}(\widehat{\gamma}(\tau) - \gamma_0(\tau))
	&= J(\tau)^{-1} \frac{1}{\sqrt{n}}
	\sum_{i=1}^n \Big( \tau - \mathbbm{1} \big( Y_i \leq q(X_i,W_i)'\gamma_0(\tau) \big) \Big) q(X_i,W_i) + o_p(1)\\
	&\rightsquigarrow J(\tau)^{-1} \mathbf{Z}_\gamma(\tau),
\end{align*}
where $\mathbf{Z}_\gamma(\cdot)$ is a mean-zero Gaussian process in $\ell^{\infty}([\smalleps,1-\smalleps],\R^{d_q})$ with continuous paths and covariance kernel equal to $\Sigma(\tau_1, \tau_2) = (\min \{ \tau_1, \tau_2 \} - \tau_1 \tau_2) \Exp[q(X,W)q(X,W)']$.
\end{lemma}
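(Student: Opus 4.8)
The plan is to obtain this lemma as an immediate consequence of the standard functional central limit theorem for the linear quantile regression process, exactly in the way lemma \ref{applemma:prop score estimation} was obtained by verifying the hypotheses of a standard maximum-likelihood theorem. Concretely, I would invoke theorem 3 of \cite{AngristChernozhukovFernandez-Val2006}, which establishes the uniform Bahadur representation and the weak limit of the quantile regression coefficient process over a compact subinterval of $(0,1)$ in the possibly-misspecified case; under A\ref{assn:quant reg}.1 (correct specification) this specializes to the classical form, and it then remains only to simplify the resulting covariance kernel.

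First I would check that our assumptions imply theirs. A\ref{assn:iid} is iid sampling. A\ref{assn:quant reg}.1 gives the correct linear form of the conditional quantile on $[\smalleps,1-\smalleps]$, so that $\Prob\big(Y \leq q(X,W)'\gamma_0(\tau) \mid X,W\big) = \tau$ almost surely for each such $\tau$; moreover $\tau \mapsto q(x,w)'\gamma_0(\tau)$ is nondecreasing, being the true conditional quantile function. A\ref{assn:quant reg}.2 (the conditional density $f_{Y \mid q(X,W)}$ exists, is bounded, and is uniformly continuous) supplies the smoothness needed for the population moment $a \mapsto \Exp\big[(\tau - \mathbbm{1}(Y \leq q(X,W)'a))q(X,W)\big]$ to be continuously differentiable near $\gamma_0(\tau)$ with Jacobian $-J(\tau)$. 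A\ref{assn:quant reg}.3 (positive definiteness of $J(\tau)$, uniformly in $\tau$ since $J(\cdot)$ is continuous on a compact set) gives invertibility of the Jacobian and local identification of $\gamma_0(\tau)$. A\ref{assn:quant reg}.4 ($\Exp\|q(x,W)\|^4 < \infty$) provides a square-integrable envelope; in particular the class $\{(y,x,w) \mapsto (\tau - \mathbbm{1}(y \leq q(x,w)'\gamma_0(\tau)))q(x,w) : \tau \in [\smalleps,1-\smalleps]\}$ is the product of a fixed square-integrable function with a uniformly bounded VC-subgraph class, hence Donsker.

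Given the cited theorem, I obtain
\[
	\sqrt{n}(\widehat{\gamma}(\tau) - \gamma_0(\tau)) = J(\tau)^{-1} \frac{1}{\sqrt{n}}\sum_{i=1}^n \big(\tau - \mathbbm{1}(Y_i \leq q(X_i,W_i)'\gamma_0(\tau))\big) q(X_i,W_i) + o_p(1)
\]
uniformly over $\tau \in [\smalleps,1-\smalleps]$, together with weak convergence of the score process $n^{-1/2}\sum_{i=1}^n (\tau - \mathbbm{1}(Y_i \leq q(X_i,W_i)'\gamma_0(\tau)))q(X_i,W_i)$ to a tight mean-zero Gaussian process $\mathbf{Z}_\gamma$ in $\ell^\infty([\smalleps,1-\smalleps],\R^{d_q})$; the conclusion for $\widehat{\gamma}$ then follows by composing with the continuous deterministic map $\tau \mapsto J(\tau)^{-1}$. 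To identify the kernel, write $\mathbbm{1}_j = \mathbbm{1}(Y \leq q(X,W)'\gamma_0(\tau_j))$; since $q(X,W)'\gamma_0(\cdot)$ is monotone and $\Prob(\mathbbm{1}_j = 1 \mid X,W) = \tau_j$, we have $\Exp[\mathbbm{1}_1\mathbbm{1}_2 \mid X,W] = \min\{\tau_1,\tau_2\}$, so $\Exp[(\tau_1 - \mathbbm{1}_1)(\tau_2 - \mathbbm{1}_2) \mid X,W] = \min\{\tau_1,\tau_2\} - \tau_1\tau_2$, and taking expectations gives $\Sigma(\tau_1,\tau_2) = (\min\{\tau_1,\tau_2\} - \tau_1\tau_2)\Exp[q(X,W)q(X,W)']$. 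Continuity of the sample paths of $\mathbf{Z}_\gamma$ follows from continuity of $(\tau_1,\tau_2) \mapsto \Sigma(\tau_1,\tau_2)$ by a standard sufficient condition for Gaussian processes.

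Since we are citing \cite{AngristChernozhukovFernandez-Val2006} for the hard analytic content — the stochastic equicontinuity of the score process and the treatment of the nonsmooth check function that underlie the uniform Bahadur expansion — the only real work in writing the proof is the bookkeeping of the preceding paragraph: confirming that A\ref{assn:quant reg}.1--A\ref{assn:quant reg}.4 together with A\ref{assn:iid} imply their regularity conditions verbatim, and noting that correct specification (A\ref{assn:quant reg}.1) is precisely what collapses their general sandwich covariance into the clean Brownian-bridge-type kernel above. I expect no genuine obstacle beyond this verification.
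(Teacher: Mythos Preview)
Your proposal is correct and matches the paper's approach: the paper's proof also cites theorem 3 of \cite{AngristChernozhukovFernandez-Val2006} directly. The only step the paper adds that you gloss over is using Minkowski's inequality to pass from A\ref{assn:quant reg}.4 (which bounds $\Exp\|q(x,W)\|^4$ for each fixed $x\in\{0,1\}$) to $\Exp\|q(X,W)\|^4<\infty$, but this is immediate since $q(X,W)=Xq(1,W)+(1-X)q(0,W)$.
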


\begin{proof}[Proof of lemma \ref{applemma:quantile estimation}]
By Minkowski's inequality,
\begin{align*}
	\Exp(\|q(X,W)\|^4)^{1/4} &= \Exp( \|Xq(1,W) + (1-X)q(0,W)\|^4 )^{1/4}\\
	&\leq \Exp( |X| \cdot \| q(1,W)\|^4)^{1/4} + \Exp( |1-X| \cdot \|q(0,W)\|^4)^{1/4}.
\end{align*}	
By A\ref{assn:quant reg}.4 and $X\in\{0,1\}$, it follows that $\Exp (\|q(X,W)\|^4) < \infty$.  The result then follows directly from theorem 3 in \cite{AngristChernozhukovFernandez-Val2006}.
\end{proof}

\begin{proof}[Proof of lemma \ref{lemma:prelim estimators}]
Note that the influence functions in lemmas \ref{applemma:prop score estimation} and \ref{applemma:quantile estimation} are Donsker. Therefore we can stack them to obtain joint weak convergence to $\mathbf{Z}_1(\tau)$. Next, note that this result holds over $\tau \in [\smalleps,1-\smalleps]$. This is a strict superset of $[\varepsilon,1-\varepsilon]$, so it holds on that set too.

Finally, we note that the diagonal element of the covariance kernel is zero. This diagonal element is
\[
	C_{\beta,\gamma}(\tau) =
	\cov\left(V_\beta \frac{F'(W'\beta_0)(X - F(W'\beta_0))W}{F(W'\beta_0)(1-F(W'\beta_0))},  \Big( \tau - \indicator(Y \leq q(X,W)'\gamma_0(\tau) ) \Big) q(X,W)'J(\tau)^{-1}\right).
\]
By iterated expectations,
\begin{align*}
	&\Exp \left[ \Big( \tau - \indicator(Y \leq q(X,W)'\gamma_0(\tau) ) \Big) q(X,W)'J(\tau)^{-1} \right] \\
	&= \Exp \left[ \Big( \tau - \Exp \left[ \indicator(Y \leq q(X,W)'\gamma_0(\tau) ) \mid X,W \right] \Big) q(X,W)'J(\tau)^{-1} \right] \\
	&= 0
\end{align*}
since $\Prob(Y \leq q(x,w)' \gamma_0(\tau) \mid X=x, W=w) = \tau$ by correct specification of the conditional quantile function. Also,
\begin{align*}
	&\Exp \left[
	V_\beta \frac{F'(W'\beta_0)(X - F(W'\beta_0))W}{F(W'\beta_0)(1-F(W'\beta_0))} \Big( \tau - \indicator(Y \leq q(X,W)'\gamma_0(\tau) ) \Big) q(X,W)'J(\tau)^{-1}
	\right] = 0
\end{align*}
by a similar argument, using iterated expectations conditional on $(X,W)$, by correct specification of the conditional quantile function, and since the first term is deterministic conditional on $(X,W)$. Thus $C_{\beta,\gamma}(\tau) = 0$ by definition of the covariance.
\end{proof}

\begin{proof}[Proof of lemma \ref{applemma:QRderivatives}]
Without loss of generality, consider the convergence of $\widehat{\gamma}'_{(1)}$ to $\gamma'_{0,(1)}$ the first component of $\gamma'_0$. Since $\eta_n \to 0$, let $\eta_n$ be small enough such that $\eta_n \in (0, \varepsilon - \smalleps)$. Then
\begin{align*}
	&\sup_{\tau\in[\varepsilon,1-\varepsilon]}|\widehat{\gamma}'_{(1)}(\tau) - \gamma'_{0,(1)}(\tau)|\\
	&\leq \sup_{\tau\in[\varepsilon,1-\varepsilon]}\left| \frac{\widehat{\gamma}_{(1)}(\tau + \eta_n) - \widehat{\gamma}_{(1)}(\tau -\eta_n)}{2\eta_n} -  \frac{\gamma_{0,(1)}(\tau + \eta_n) - \gamma_{0,(1)}(\tau -\eta_n)}{2\eta_n}\right|\\
	& \quad + \sup_{\tau\in[\varepsilon,1-\varepsilon]}\left| \frac{\gamma_{0,(1)}(\tau + \eta_n) - \gamma_{0,(1)}(\tau -\eta_n)}{2\eta_n} - \gamma_{0,(1)}'(\tau)\right|\\
	&\leq \frac{1}{2\eta_n}\left(\sup_{\tau\in[\varepsilon,1-\varepsilon]}\left| \widehat{\gamma}_{(1)}(\tau + \eta_n)  - \gamma_{0,(1)}(\tau + \eta_n)\right| + \sup_{\tau\in[\varepsilon,1-\varepsilon]}\left| \widehat{\gamma}_{(1)}(\tau- \eta_n)  - \gamma_{0,(1)}(\tau - \eta_n)\right|\right) \\
	& \quad + \sup_{\tau\in[\varepsilon,1-\varepsilon]}\left|\frac{\left(\gamma_{0,(1)}(\tau) + \gamma_{0,(1)}'(\tau)\eta_n + \gamma_{0,(1)}^{\prime\prime}(\tau)\frac{\eta_n^2}{2} + \gamma_{0,(1)}^{\prime\prime\prime}(\tau_{n1}^*)\frac{\eta_n^3}{6}\right)}{2\eta_n} \right. \\
	&\hspace{30mm} -\left. \frac{\left(\gamma_{0,(1)}(\tau) - \gamma_{0,(1)}'(\tau)\eta_n + \gamma_{0,(1)}^{\prime\prime}(\tau)\frac{\eta_n^2}{2} - \gamma_{0,(1)}^{\prime\prime\prime}(\tau_{n2}^*)\frac{\eta_n^3}{6}\right)}{2\eta_n} - \gamma_{0,(1)}'(\tau)\right| \\
	&\leq \frac{1}{2\eta_n}
	\left(
	\sup_{\tau\in[\smalleps,1- \smalleps]}
	\left| \widehat{\gamma}_{(1)}(\tau )  - \gamma_{0,(1)}(\tau)\right| + \sup_{\tau\in[\smalleps,1- \smalleps]}\left| \widehat{\gamma}_{(1)}(\tau)  - \gamma_{0,(1)}(\tau)\right|\right) \\
	&\quad + \sup_{\tau\in[\varepsilon,1-\varepsilon]}\left|\gamma_{0,(1)}^{\prime\prime\prime}(\tau_{n1}^*)\frac{\eta_n^2}{12} + \gamma_{0,(1)}^{\prime\prime\prime}(\tau_{n2}^*)\frac{\eta_n^2}{12} \right|.
\end{align*}
The first inequality follows by the triangle inequality and the definition of $\widehat{\gamma}_{(1)}'$. The second inequality follows by taking two third order Taylor expansions of $\gamma_{0,(1)}$, where $\tau_{n1}^* \in (\tau, \tau + \eta_n)$ and $\tau_{n2}^* \in (\tau-\eta_n,\tau)$. There we use A\ref{assn:quant reg regularity} with $m \geq 3$. The last inequality follows for two reasons: In the first term, $\smalleps < \varepsilon$ and $\eta_n$ is chosen such that $[\tau - \eta_n, \tau + \eta_n ] \subset (\smalleps, 1-\smalleps)$. So we are taking the supremum over a larger set in this term in the last line. In the second term, the 0th, 1st, and 2nd derivatives of $\gamma_{0,(1)}$ all cancel, leaving only the third order derivatives remaining.

Finally,
\begin{align*}
	\sup_{\tau\in[\varepsilon,1-\varepsilon]}|
	\widehat{\gamma}'_{(1)}(\tau) - \gamma'_{0,(1)}(\tau)|
	&= \frac{1}{2\eta_n} O_p\left(\frac{1}{\sqrt{n}}\right) + \frac{B}{6} \eta_n^2\\
	&= o_p(1).
\end{align*}
The first line follows by lemma \ref{applemma:quantile estimation}, which shows that the first term is $(1/2\eta_n) O_p(1/\sqrt{n}) = O_p(1/\sqrt{n \eta_n^2})$. In the second term $B > 0$ is a constant that doesn't depend on $n$. This constant comes from A\ref{assn:quant reg regularity}, which implies the function $\gamma_0$ has uniformly bounded third derivatives. The last line follows by $\eta_n \to 0$, $n\eta_n^2 \to \infty$ as $n\to\infty$. 

Repeating this argument across all components of $\widehat{\gamma}'(\tau) - \gamma_0'(\tau)$ shows that $\sup_{\tau\in[\varepsilon,1-\varepsilon]} \|\widehat{\gamma}'(\tau) - \gamma'_0(\tau)\| = o_p(1)$, as desired.
\end{proof}

\section{Proofs for Section \ref{sec:asymptotics}}\label{sec:proofs}

In this section we give the proofs for the results in section \ref{sec:asymptotics}. We start with a preliminary result on the asymptotic distribution of the CQTE bound estimators. We use this for all of our later results. We then state and prove a useful lemma. Finally we give the proofs for our CATE, ATE, and ATT bound estimators.

All of these results rely on proving Hadamard directional differentiability of various functionals. For that reason, it is helpful to recall its definition.

\begin{definition}\label{def:HDD}
Let $\phi: \mathbb{D}_\phi \to \mathbb{E}$ where $\mathbb{D}$, $\mathbb{E}$ are Banach spaces and $\mathbb{D}_\phi \subseteq \mathbb{D}$. Say $\phi$ is \emph{Hadamard directionally differentiable} at $\theta \in \mathbb{D}_\phi$ tangentially to $\mathbb{D}_0 \subseteq \mathbb{D}$ if there is a continuous map $\phi_\theta':\mathbb{D}_0 \to \mathbb{E}$ such that
\[
	\lim_{m\to\infty}\left\| \frac{\phi(\theta + t_m h_m) - \phi(\theta)}{t_m} - \phi_\theta'(h)\right\|_\mathbb{E} = 0
\]
for all sequences $\{h_m\}\subset \mathbb{D}$, $t_m > 0$ such that $t_m \to 0$, $h_m \to h \in \mathbb{D}_0$ as $m\to\infty$ and $\theta + t_m h_m \in \mathbb{D}_\phi$ for all $m$. 
\end{definition}

By proposition 2.1 in \cite{FangSantos2014}, the mapping $\phi$ is Hadamard differentiable at $\theta$ tangentially to $\mathbb{D}_0$ if and only if it is Hadamard directionally differentiable at $\theta$ tangentially to $\mathbb{D}_0$ and the mapping $\phi'_\theta$ is linear.

Throughout the proofs we let $\|\gamma\|_\infty = \sup_{\tau \in [\varepsilon,1-\varepsilon]} \|\gamma(\tau)\|$ denote the sup-norm in $\ell^\infty([\varepsilon,1-\varepsilon],\R^{d_q})$.

\subsection{The CQTE Bounds}

We start with a preliminary result for our estimates of the CQTE bounds. All our other bounds are built from these, so it is helpful to understand them first.

\begin{proposition}[CQTE convergence]\label{prop:CQTE convergence}
Suppose A\ref{assn:iid}, A\ref{assn:quant reg regularity}, and A\ref{assn:prop score}--A\ref{assn:quant reg} hold. Fix $\varepsilon > 0$, $w \in \mathcal{W}$, $c \in [0,1]$, and $\tau \in (0,1)$. Then
\[
	\sqrt{n}
	\begin{pmatrix}
		\widehat{\overline{\text{CQTE}}}^c(\tau \mid w) - \overline{\text{CQTE}}_\varepsilon^c(\tau \mid w)\\
		\widehat{\underline{\text{CQTE}}}^c(\tau \mid w) - \underline{\text{CQTE}}_\varepsilon^c(\tau \mid w)
	\end{pmatrix}
	\xrightarrow{d}\mathbf{Z}_{\text{CQTE}}(w,\tau),
\]
where $\mathbf{Z}_{\text{CQTE}}$ is a random vector in $\R^2$ whose distribution is characterized in the proof.
\end{proposition}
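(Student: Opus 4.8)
The plan is to write each CQTE bound estimator as a known, non-smooth functional of the first step estimators $\widehat\theta=(\widehat\beta,\widehat\gamma)$, prove Hadamard directional differentiability of that functional at $\theta_0$, and then apply the functional delta method for Hadamard directionally differentiable maps. Recall from lemma \ref{lemma:prelim estimators} that $\sqrt n(\widehat\theta-\theta_0)\rightsquigarrow\mathbf Z_1$ in $\R^{d_W}\times\ell^\infty([\varepsilon,1-\varepsilon],\R^{d_q})$, where $\mathbf Z_1$ has uniformly continuous paths and therefore concentrates on the tangent set $\mathbb D_0=\R^{d_W}\times\mathscr C([\varepsilon,1-\varepsilon],\R^{d_q})$. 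By definition of the estimators, $\widehat{\overline{\text{CQTE}}}^c(\tau\mid w)=\overline\Gamma_1(1,w,\tau,\widehat\theta)-\underline\Gamma_1(0,w,\tau,\widehat\theta)$ and $\widehat{\underline{\text{CQTE}}}^c(\tau\mid w)=\underline\Gamma_1(1,w,\tau,\widehat\theta)-\overline\Gamma_1(0,w,\tau,\widehat\theta)$, and the trimmed population CQTE bounds $\overline{\text{CQTE}}_\varepsilon^c(\tau\mid w)$, $\underline{\text{CQTE}}_\varepsilon^c(\tau\mid w)$ are obtained by evaluating the same maps at $\theta_0$. So it suffices to show that $\theta\mapsto\big(\overline\Gamma_1(x,w,\tau,\theta),\underline\Gamma_1(x,w,\tau,\theta)\big)_{x\in\{0,1\}}$ is Hadamard directionally differentiable at $\theta_0$ tangentially to $\mathbb D_0$, and then form the two linear combinations that define the CQTE bounds.

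I would establish this by decomposing $\Gamma_1(x,w,\tau,\cdot)$ into three composed pieces. First, $\beta\mapsto L(x,w'\beta)$, and hence $\beta\mapsto\tau+\tfrac c{L(x,w'\beta)}\min\{\tau,1-\tau\}$ and $\beta\mapsto\tfrac\tau{L(x,w'\beta)}$, are continuously differentiable in a neighborhood of $\beta_0$ by the regularity of the link function and overlap in A\ref{assn:prop score}. Second, the nested $\min/\max$ and trimming operations that produce the population trimmed index $\overline t_\varepsilon(\tau,x,w;\beta)$ (and the lower analogue $\underline t_\varepsilon$) from these smooth inputs are Lipschitz and directionally differentiable; on a finite-dimensional domain this makes them Hadamard directionally differentiable, so by the chain rule for such maps (see \cite{FangSantos2014}) the composite $\beta\mapsto\overline t_\varepsilon(\tau,x,w;\beta)$ from $\R^{d_W}$ into $[\varepsilon,1-\varepsilon]$ is Hadamard directionally differentiable at $\beta_0$, with an explicit derivative $\overline t_{\varepsilon,\beta_0}'$ determined by which constraints are active at $\theta_0$ (generically linear, piecewise-linear at ties). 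Third, I would show the evaluation map $(t,\gamma)\mapsto q(x,w)'\gamma(t)$ from $\R\times\ell^\infty([\varepsilon,1-\varepsilon],\R^{d_q})$ to $\R$ is Hadamard differentiable at $(\overline t_\varepsilon,\gamma_0)$ tangentially to $\R\times\mathscr C([\varepsilon,1-\varepsilon],\R^{d_q})$ with derivative $(a,h_\gamma)\mapsto q(x,w)'\big(\gamma_0'(\overline t_\varepsilon)a+h_\gamma(\overline t_\varepsilon)\big)$: for $s_m\downarrow0$ and $(a_m,h_{\gamma,m})\to(a,h_\gamma)$ with $h_\gamma$ continuous,
\[
	\frac{(\gamma_0+s_m h_{\gamma,m})(\overline t_\varepsilon+s_m a_m)-\gamma_0(\overline t_\varepsilon)}{s_m}=\frac{\gamma_0(\overline t_\varepsilon+s_m a_m)-\gamma_0(\overline t_\varepsilon)}{s_m}+h_{\gamma,m}(\overline t_\varepsilon+s_m a_m),
\]
where the first term converges to $\gamma_0'(\overline t_\varepsilon)a$ because A\ref{assn:quant reg regularity} (with $m\geq3$) gives $\gamma_0$ a uniformly continuous, bounded derivative on $[\varepsilon,1-\varepsilon]\subset[\smalleps,1-\smalleps]$, and the second term converges to $h_\gamma(\overline t_\varepsilon)$ because $h_{\gamma,m}\to h_\gamma$ uniformly, $h_\gamma$ is continuous, and $\overline t_\varepsilon+s_m a_m\to\overline t_\varepsilon$. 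Composing the three pieces via the chain rule — checking that the image of $\mathbb D_0$ under the derivative of the first two pieces lies in $\R\times\mathscr C$, the tangency set of the third — gives that $\overline\Gamma_1(x,w,\tau,\cdot)$ and $\underline\Gamma_1(x,w,\tau,\cdot)$ are Hadamard directionally differentiable at $\theta_0$ tangentially to $\mathbb D_0$, with $\overline\Gamma_{1,\theta_0}'(x,w,\tau,h_\beta,h_\gamma)=q(x,w)'\big(\gamma_0'(\overline t_\varepsilon)\cdot\overline t_{\varepsilon,\beta_0}'(h_\beta)+h_\gamma(\overline t_\varepsilon)\big)$ and analogously for the lower bound; the explicit formulas are collected in appendix \ref{sec:HDDformulas}. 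Stacking over $x\in\{0,1\}$ and over upper/lower preserves Hadamard directional differentiability (all are functionals of the single $\widehat\theta$), and forming the differences that define the CQTE bounds is linear, hence preserves it as well.

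Given this, the functional delta method for Hadamard directionally differentiable maps (theorem 2.1 of \cite{FangSantos2014}), applied to the stacked functional and combined with the continuous linear map defining the CQTE bounds, yields
\[
	\sqrt n\begin{pmatrix}\widehat{\overline{\text{CQTE}}}^c(\tau\mid w)-\overline{\text{CQTE}}_\varepsilon^c(\tau\mid w)\\[0.3em]\widehat{\underline{\text{CQTE}}}^c(\tau\mid w)-\underline{\text{CQTE}}_\varepsilon^c(\tau\mid w)\end{pmatrix}\xrightarrow{d}\mathbf Z_{\text{CQTE}}(w,\tau),
\]
where $\mathbf Z_{\text{CQTE}}(w,\tau)=\big(\overline\Gamma_{1,\theta_0}'(1,w,\tau,\mathbf Z_1)-\underline\Gamma_{1,\theta_0}'(0,w,\tau,\mathbf Z_1),\ \underline\Gamma_{1,\theta_0}'(1,w,\tau,\mathbf Z_1)-\overline\Gamma_{1,\theta_0}'(0,w,\tau,\mathbf Z_1)\big)$. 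Since $\mathbf Z_1$ is Gaussian but the derivatives $\Gamma_{1,\theta_0}'$ are in general continuous, piecewise-linear, and thus nonlinear in their argument — linear only when no $\min/\max$ branch or trimming constraint ties at $\theta_0$, e.g.\ when $p_{x\mid w}\notin\{c,1-c\}$ and $\overline t_\varepsilon,\underline t_\varepsilon$ are interior to $[\varepsilon,1-\varepsilon]$ — the law of $\mathbf Z_{\text{CQTE}}(w,\tau)$ is generally non-Gaussian; this is the distribution referred to in the statement. The step I expect to be the main obstacle is the Hadamard directional differentiability of $\Gamma_1$, and within it two items need care: (i) correctly assembling the directional derivative of the nested $\min/\max$ and trimming operations — which depends on which constraints are active and which branches coincide at $\theta_0$ — with the smooth $\beta$-dependence, and verifying the tangency hypotheses of the chain rule at each composition; and (ii) the joint perturbation of the evaluation point $t$ and the function $\gamma$ in $(t,\gamma)\mapsto\gamma(t)$, which is exactly where A\ref{assn:quant reg regularity} (differentiability of $\gamma_0$) and the restriction of the tangent set to continuous directions are used. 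The remaining ingredients — the first step limit from lemma \ref{lemma:prelim estimators}, smoothness of $L$, and the passage from the pointwise HDD to the limit law — are routine given the cited results.
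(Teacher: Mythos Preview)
Your proposal is correct and follows essentially the same route as the paper: establish Hadamard directional differentiability of $\Gamma_1(x,w,\tau,\cdot)$ at $\theta_0$ tangentially to $\R^{d_W}\times\mathscr C([\varepsilon,1-\varepsilon],\R^{d_q})$, then invoke the Fang--Santos delta method together with lemma \ref{lemma:prelim estimators}, and finally take the linear combinations that define the CQTE bounds. The only difference is organizational: where you invoke the chain rule for HDD maps (smooth inner map $\beta\mapsto$ arguments, HDD outer $\min/\max$, then Hadamard-differentiable evaluation $(t,\gamma)\mapsto q(x,w)'\gamma(t)$), the paper instead verifies HDD of $S_1$, $S_2$, and $\overline\Gamma_1$ directly by computing the secant-line limits through an explicit seven-case analysis of the $\min$ (and a three-case analysis for the outer $\max$), arriving at the same derivative formula $\overline\Gamma_{1,\theta_0}'(x,w,\tau,h)=q(x,w)'\gamma_0'(S_2)\,T_2(h_1)+q(x,w)'h_2(S_2)$ that you state.
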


\begin{proof}[Proof of proposition \ref{prop:CQTE convergence}]
\hfill

\medskip

\textbf{Part 1: The upper bound is HDD}. Recall from section \ref{sec:SecondStepEstConvergence} that $\overline{\Gamma}_1(x,w,\tau,\theta)$ denotes our trimmed population conditional quantile upper bound. We write this parameter as a function of a few different pieces:
\[
	\overline{\Gamma}_1(x,w,\tau, \theta) = q(x,w)'\gamma (S_2(x,w,\tau, \beta))
\]
where
\[
	S_2(x,w,\tau, \beta) = \max\{S_1(x,w,\tau, \beta), \varepsilon\}
\]
and
\[
	S_1(x,w,\tau, \beta) = \min\left\{\tau + \frac{c}{L(x,w'\beta)}\min\{\tau,1-\tau\}, \ \frac{\tau}{L(x,w'\beta)}, \ 1-\varepsilon\right\}.
\]
For simplicity, we leave the dependence on $\varepsilon$ and $c$ implicit in our notation for $S_1$ and $S_2$. There are now three steps: We show Hadamard directional differentiability (HDD) of $\overline{\Gamma}_1(x,w,\tau,\cdot)$ at $\theta_0$ tangentially to $\R^{d_W} \times \mathscr{C}([\varepsilon,1-\varepsilon],\R^{d_q})$ by examining the two pieces $S_1$ and $S_2$ separately. We then combine these to show HDD of $\overline{\Gamma}_1$.

\bigskip

\emph{\textbf{Step 1: HDD of $S_1$}}. We first show $S_1(x,w,\tau, \cdot)$ is HDD at $\beta_0$. Let $t_m \searrow 0$ and $h_{1m} \to h_1 \in \R^{d_W}$ as $m \rightarrow \infty$. Define the secant line
\[
	T_1^m(x,w,\tau, \beta_0,h_{1m})
	= \frac{S_1(x,w,\tau, \beta_0 + t_m h_{1m}) - S_1(x,w,\tau, \beta_0)}{t_m}.
\]
We will show that
\begin{align*}
	T_1^m(x,w,\tau, \beta_0,h_{1m})
		&\to T_1(x,w,\tau,\beta_0,h_1,0) \\
		&= \sum_{j=1}^7 T_{1,j}(x,w,\tau, \beta_0,h_{1}) \indicator_{1,j}(x,w,\tau, \beta_0,0)
\end{align*}
as $m \to \infty$, where $T_{1,j}$ are defined in appendix \ref{sec:HDDformulas} below. To see this, we consider the seven cases associated with $\indicator_{1,j}$ for $j=1,\ldots,7$. 

First suppose $\indicator_{1,1}(x,w,\tau,\beta_0,0) = 1$. Then
\[
	S_1(x,w,\tau,\beta_0) = \tau + \frac{c}{L(x,w'\beta_0)}\min\{\tau,1-\tau\}.
\]
Moreover, for $m$ large enough and by continuity of $S_1$ in $\beta$, $\indicator_{1,1}(x,w,\tau,\beta_0 + t_m h_{1m},0) = 1$. Hence
\[
	S_1(x,w,\tau,\beta_0 + t_m h_{1m}) = \tau + \frac{c}{L(x,w'(\beta_0 + t_m h_{1m}))}\min\{\tau,1-\tau\}.
\]
So for $m$ large enough
\begin{align*}
	T_{1}^m(x,w,\tau, \beta_0,h_{1m}) 
	&= c \min\{\tau,1-\tau\} \frac{1}{t_m} \left(\frac{1}{L(x,w'(\beta_0 + t_m h_{1m}))}- \frac{1}{L(x,w'\beta_0)}\right) \\
	&\to T_{1,1}(x,w,\tau,\beta_0,h_1)
\end{align*}
by the definition of the directional derivative of $1/L(x,w'\beta)$ with respect to $\beta$ in the direction $h_1$ at $\beta_0$.

Similarly, if $\indicator_{1,2}(x,w,\tau,\beta_0,0) = 1$ then
\begin{align*}
	T_{1}^m(x,w,\tau, \beta_0,h_{1m})
		&= \tau \frac{1}{t_m} \left(\frac{1}{L(x,w'(\beta_0 + t_m h_{1m}))}- \frac{1}{L(x,w'\beta_0)}\right) \\
		&\rightarrow T_{1,2}(x,w,\tau,\beta_0,h_1)
\end{align*}
where the first line holds for $m$ large enough and the second line holds as $m \rightarrow \infty$. Likewise, if $\indicator_{1,3}(x,w,\tau,\beta_0,0) = 1$ then
\begin{align*}
	T_{1}^m(x,w,\tau, \beta_0,h_{1m})
		&= 0 \\
		&\rightarrow 0 \\
		&= T_{1,3}(x,w,\tau,\beta_0,h_1).
\end{align*}
where the first line holds for $m$ large enough and the convergence in the second line is as $m \rightarrow \infty$.

If $\indicator_{1,4}(x,w,\tau,\beta_0,0) = 1$ then
\[
	\tau + \frac{c}{L(x,w'\beta_0)} \min\{\tau,1-\tau\} = \frac{\tau}{L(x,w'\beta_0)} < 1-\varepsilon.
\]
For $m$ large enough,
\[
	S_1(x,w,\tau,\beta_0 + t_m h_{1m}) = \min\left\{\tau + \frac{c}{L(x,w'(\beta_0 + t_m h_{1m}))}\min\{\tau,1-\tau\}, \ \frac{\tau}{L(x,w'(\beta_0 + t_m h_{1m}))}\right\}.
\]
Hence
\begin{align*}
	T_{1}^m(x,w,\tau, \beta_0,h_{1m}) &= \min\left\{ c\min\{\tau,1-\tau\} \frac{1}{t_m}  \left(\frac{1}{L(x,w'(\beta_0 + t_m h_{1m}))}- \frac{1}{L(x,w'\beta_0)}\right), \right. \\[1em]
	&\hspace{40mm} \left.
	\tau \frac{1}{t_m} \left(\frac{1}{L(x,w'(\beta_0 + t_m h_{1m}))}- \frac{1}{L(x,w'\beta_0)}\right) \right\}
\end{align*}
for $m$ large enough. Similarly,
\begin{align*}
T_{1}^m(x,w,\tau, \beta_0,h_{1m}) &= \min\left\{c\min\{\tau,1-\tau\} \frac{1}{t_m} \left(\frac{1}{L(x,w'(\beta_0 + t_m h_{1m}))}- \frac{1}{L(x,w'\beta_0)}\right), 0 \right\} \\[1em]
	T_{1}^m(x,w,\tau, \beta_0,h_{1m}) &= \min\left\{ \tau \frac{1}{t_m} \left(\frac{1}{L(x,w'(\beta_0 + t_m h_{1m}))}- \frac{1}{L(x,w\beta_0)}\right),0\right\} \\[1em]
	T_{1}^m(x,w,\tau, \beta_0,h_{1m})
	&= \min\left\{c\min\{\tau,1-\tau\} \frac{1}{t_m} \left(\frac{1}{L(x,w'(\beta_0 + t_m h_{1m}))}- \frac{1}{L(x,w'\beta_0)}\right), \right. \\
	&\hspace{35mm} \left. \tau \frac{1}{t_m} \left(\frac{1}{L(x,w'(\beta_0 + t_m h_{1m}))}- \frac{1}{L(x,w'\beta_0)}\right),0\right\}
\end{align*}
for $m$ large enough when $\indicator_{1,j} = 1$ for $j=5,6,7$ respectively. Letting $t_m \searrow 0$, $h_{1m} \rightarrow h_1$, and by examining $T_{1}^m$  for $\indicator_{1,j} = 1$, $j=4,5,6,7$, we see that
\[
	\frac{S_1(x,w,\tau, \beta_0 + t_m h_{1m}) - S_1(x,w,\tau, \beta_0)}{t_m} = T_{1}^m(x,w,\tau, \beta_0,h_{1m})
	\rightarrow 
	T_1(x,w,\tau,\beta_0, h_1, 0).
\]
Hence $S_1(x,w,\tau, \cdot)$ is Hadamard directionally differentiable at $\beta_0$.

\bigskip

\emph{\textbf{Step 2: HDD of $S_2$}}. Recall that
\[
	S_2(x,w,\tau, \beta) = \max\{S_1(x,w,\tau, \beta), \varepsilon\}.
\]
As before, let $t_m \searrow 0$ and $h_{1m} \rightarrow h_1 \in \R^{d_W}$ as $m \rightarrow \infty$. Define
\[
	T_2^m(x,w,\tau,\beta_0,h_{1m}) = \frac{S_2(x,w,\tau, \beta_0 + t_m h_{1m}) - S_2(x,w,\tau, \beta_0)}{t_m}.
\]
Substituting the functional form for $S_1$ into the definition of $S_2$ gives
\begin{align*}
	&T_2^m(x,w,\tau,\beta_0,h_{1m}) \\
	&= \frac{1}{t_m} \max\left\{\min\left\{\tau + \frac{c}{L(x,w'(\beta_0 + t_mh_{1m}))}\min\{\tau,1-\tau\},\frac{\tau}{L(x,w'(\beta_0 + t_mh_{1m}))},1-\varepsilon\right\},\varepsilon\right\} \\
	&\qquad- \frac{1}{t_m} \max\left\{\min\left\{\tau + \frac{c}{L(x,w'\beta_0)}\min\{\tau,1-\tau\},\frac{\tau}{L(x,w'\beta_0)},1-\varepsilon\right\},\varepsilon\right\}.
\end{align*}
As in step 1, we next characterize the value of this secant line by splitting it into three different cases.
\begin{enumerate}
\item If
\[
	\min\left\{\tau + \frac{c}{L(x,w'\beta_0)}\min\{\tau,1-\tau\}, \frac{\tau}{L(x,w'\beta_0)},1-\varepsilon\right\} > \varepsilon
\]
then
\begin{align*}
	&T_2^m(x,w,\tau,\beta_0,h_{1m}) \\
	&=  \frac{1}{t_m} \min\left\{\tau + \frac{c}{L(x,w'(\beta_0 + t_m h_{1m}))}\min\{\tau,1-\tau\},\frac{\tau}{L(x,w'(\beta_0 + t_m h_{1m}))},1-\varepsilon\right\} \\
	 &\qquad - \frac{1}{t_m} \min\left\{\tau + \frac{c}{L(x,w'\beta_0)}\min\{\tau,1-\tau\},\frac{\tau}{L(x,w'\beta_0)},1-\varepsilon\right\}
\end{align*}
for large enough $m$.

\item If
\[
	 \min\left\{\tau + \frac{c}{L(x,w'\beta_0)}\min\{\tau,1-\tau\}, \frac{\tau}{L(x,w'\beta_0)},1-\varepsilon\right\} < \varepsilon
\]
then
\[
	T_2^m(x,w,\tau,\beta_0,h_{1m}) = 0
\]
for large enough $m$.

\item If
\[
	\min\left\{\tau + \frac{c}{L(x,w'\beta_0)}\min\{\tau,1-\tau\}, \frac{\tau}{L(x,w'\beta_0)},1-\varepsilon\right\} = \varepsilon
\]
then
\begin{align*}
	&T_2^m(x,w,\tau,\beta_0,h_{1m}) \\
	&= \max\left\{ \frac{1}{t_m} \min\left\{\tau + \frac{c}{L(x,w'(\beta_0 + t_m h_{1m}))}\min\{\tau,1-\tau\},\frac{\tau}{L(x,w'(\beta_0 + t_m h_{1m}))},1-\varepsilon\right\} \right. \\
	&\qquad \qquad \left. - \frac{1}{t_m} \min\left\{\tau + \frac{c}{L(x,w'\beta_0)}\min\{\tau,1-\tau\},\frac{\tau}{L(x,w'\beta_0)},1-\varepsilon\right\} ,0\right\}
\end{align*}
for large enough $m$.
\end{enumerate}
Using similar arguments as in step 1, by examining each of the three cases we see that
\[
	T_2^m(x,w,\tau,\beta_0,h_{1m}) \rightarrow T_2(x,w,\tau,\beta_0, h_1, 0)
\]
as $m \rightarrow \infty$. Hence $S_2(x,w,\tau, \cdot)$ is Hadamard directionally differentiable at $\beta_0$.

\bigskip

\emph{\textbf{Step 3: HDD of $\overline{\Gamma}_1$}}. Next we show that $\overline{\Gamma}_1(x,w,\tau, \cdot)$ is HDD at $\theta_0$ tangentially to $\R^{d_W} \times \mathscr{C}([\varepsilon,1-\varepsilon],\R^{d_q})$. Let $t_m \searrow 0$, $h_{1m} \to h_1 \in \R^{d_W}$ and $h_{2m} \to h_2 \in \mathscr{C}([\varepsilon,1-\varepsilon],\R^{d_q})$ endowed with the sup norm, as $m \rightarrow \infty$. Let $h_m = (h_{1m}, h_{2m})$ and $h = (h_1,h_2)$. Then
\begin{align*}
	&\frac{\overline{\Gamma}_1(x,w,\tau, \theta_0 + t_m h_m) - \overline{\Gamma}_1(x,w,\tau, \theta_0)}{t_m}\\
	 &= \frac{q(x,w)'[\gamma_0 + t_m h_{2m}](S_2(x,w,\tau, \beta_0 + t_m h_{1m})) - q(x,w)'\gamma_0(S_2(x,w,\tau, \beta_0))}{t_m}\\
	 &= \frac{q(x,w)'(\gamma_0(S_2(x,w,\tau, \beta_0 + t_m h_{1m})) - \gamma_0(S_2(x,w,\tau, \beta_0)))}{t_m}  + q(x,w)'h_{2m}(S_2(x,w,\tau, \beta_0 + t_m h_{1m})).
\end{align*}	
Consider the first term. By A\ref{assn:quant reg regularity}, $\gamma_0(u)$ is differentiable for any $u\in [\varepsilon,1-\varepsilon]$. By the chain rule,
\begin{multline*}
	\frac{q(x,w)'\left[\gamma_0(S_2(x,w,\tau, \beta_0 + t_m h_{1m})) - \gamma_0(S_2(x,w,\tau, \beta_0))\right]}{t_m} \\
	\to q(x,w)'\gamma_0'(S_2(x,w,\tau, \beta_0)) T_2(x,w,\tau,\beta_0, h_1, 0)
\end{multline*}
as $m \rightarrow \infty$. Next consider the second term. We have
\begin{align*}
		&|q(x,w)'h_{2m}(S_2(x,w,\tau, \beta_0 + t_m h_{1m})) - q(x,w)'h_2(S_2(x,w,\tau, \beta_0))|\\
		&\leq |q(x,w)'h_{2m}(S_2(x,w,\tau, \beta_0 + t_m h_{1m})) - q(x,w)'h_{2}(S_2(x,w,\tau, \beta_0 + t_m h_{1m}))| \\
		&\quad + |q(x,w)'h_{2}(S_2(x,w,\tau, \beta_0 + t_m h_{1m})) - q(x,w)'h_{2}(S_2(x,w,\tau, \beta_0))|\\
		&\leq \|q(x,w)\| \cdot \|h_{2m} - h_2\|_\infty + |q(x,w)'h_{2}(S_2(x,w,\tau, \beta_0 + t_m h_{1m})) - q(x,w)'h_{2}(S_2(x,w,\tau, \beta_0))|.
\end{align*}
The first inequality follows by the triangle inequality. Consider the second inequality. By continuity of $h_2(\cdot)$ and of $S_2(x,w,\tau, \cdot)$, the second term converges to zero as $m \rightarrow \infty$. By sup-norm convergence of $h_{2m}$ to $h_2$, the first term also converges to zero as $m \rightarrow \infty$. Thus
\[
	q(x,w)'h_{2m}(S_2(x,w,\tau, \beta_0 + t_m h_{1m})) \rightarrow q(x,w)'h_2(S_2(x,w,\tau, \beta_0))
\]
as $m \rightarrow \infty$. Putting the two terms together gives
\begin{align*}
	&\frac{\overline{\Gamma}_1(x,w,\tau, \theta_0 + t_m h_m) - \overline{\Gamma}_1(x,w,\tau, \theta_0)}{t_m} \\
	&\qquad\to q(x,w)'\gamma_0'(S_2(x,w,\tau, \beta_0))T_2(x,w,\tau,\beta_0, h_1, 0) + q(x,w)'h_{2}(S_2(x,w,\tau, \beta_0)) \\
	  &\qquad\equiv \overline{\Gamma}_{1,\theta_0}'(x,w,\tau, h).
\end{align*}
Thus $\overline{\Gamma}_1(x,w,\tau, \cdot)$ is HDD at $\theta_0$ tangentially to $\R^{d_W} \times \mathscr{C}([\varepsilon,1-\varepsilon],\R^{d_q})$.

\bigskip

\textbf{Part 2: The lower bound is HDD}. An analogous argument applies to the lower bound $\underline{\Gamma}_1(x,w,\tau, \cdot)$. This gives
\begin{align*}
	&\frac{\underline{\Gamma}_1(x,w,\tau, \theta_0 + t_m h_m) - \underline{\Gamma}_1(x,w,\tau, \theta_0)}{t_m} \\
	&\qquad \to q(x,w)'\gamma_0'(S_4(x,w,\tau, \beta_0))T_4(x,w,\tau,\beta_0, h_1, 0) + q(x,w)'h_{2}(S_4(x,w,\tau, \beta_0)) \\
	&\qquad \equiv \underline{\Gamma}_{1,\theta_0}'(x,w,\tau, h)
\end{align*}
where
\begin{align*}
	S_3(x,w,\tau, \beta)
		&= \max\left\{\tau - \frac{c}{L(x,w'\beta)}\min\{\tau,1-\tau\},\frac{\tau-1}{L(x,w'\beta)}+1,\varepsilon\right\} \\
	S_4(x,w,\tau, \beta) &= \min\{S_3(x,w,\tau, \beta), 1-\varepsilon\}
\end{align*}
and
\begin{align*}
	T_3(x,w,\tau,\beta_0, h_1, 0)
	&= \lim_{m\to\infty} \frac{S_3(x,w,\tau, \beta_0 + t_m h_{1m}) - S_3(x,w,\tau, \beta_0)}{t_m} \\
	T_4(x,w,\tau,\beta_0, h_1,0)
	&= \lim_{m\to\infty} \frac{S_4(x,w,\tau, \beta_0 + t_m h_{1m}) - S_4(x,w,\tau, \beta_0)}{t_m}.
\end{align*}
We give explicit expressions for these limits in appendix \ref{sec:HDDformulas}.

\bigskip

\textbf{Part 3: Apply the delta method}. We've shown that $\overline{\Gamma}_1(x,w,\tau, \cdot)$ and $\underline{\Gamma}_1(x,w,\tau, \cdot)$ are HDD at $\theta_0$. Moreover, by A\ref{assn:iid} and A\ref{assn:prop score}--A\ref{assn:quant reg}, lemma \ref{lemma:prelim estimators} gives $\sqrt{n} (\widehat{\theta} - \theta_0) \rightsquigarrow \mathbf{Z}_1$. Thus the delta method for HDD functionals (theorem 2.1 in \citealt{FangSantos2014}) gives
\begin{align*}
	\sqrt{n}
	\begin{pmatrix}
		\overline{\Gamma}_1(x,w,\tau, \widehat{\theta}) - \overline{\Gamma}_1(x,w,\tau, \theta_0)\\
		\underline{\Gamma}_1(x,w,\tau, \widehat{\theta}) - \underline{\Gamma}_1(x,w,\tau, \theta_0)
	\end{pmatrix}
	& \xrightarrow{d}
	\begin{pmatrix}
		\overline{\Gamma}_{1,\theta_0}'(x,w,\tau, \textbf{Z}_1)\\
		\underline{\Gamma}_{1,\theta_0}'(x,w,\tau, \textbf{Z}_1)
	\end{pmatrix}
	\equiv \mathbf{Z}_2(x,w,\tau).
\end{align*}
This convergence in uniform in $x\in\{0,1\}$.

Finally, the CQTE bounds are just the difference between certain conditional quantile function bounds. Thus we immediately get
\begin{align*}
	\sqrt{n}
	\begin{pmatrix}
		\widehat{\overline{\text{CQTE}}}^c(\tau \mid w) - \overline{\text{CQTE}}_\varepsilon^c(\tau \mid w)\\
		\widehat{\underline{\text{CQTE}}}^c(\tau \mid w) - \underline{\text{CQTE}}_\varepsilon^c(\tau \mid w)
	\end{pmatrix}
	&\xrightarrow{d}
	\begin{pmatrix}
		\mathbf{Z}_2^{(1)}(1,w,\tau) - \mathbf{Z}_2^{(2)}(0,w,\tau)\\
		\mathbf{Z}_2^{(2)}(1,w,\tau) - \mathbf{Z}_2^{(1)}(0,w,\tau)
	\end{pmatrix}
	\equiv \textbf{Z}_{\text{CQTE}}(w,\tau).
\end{align*}
\end{proof}

\subsection{A Useful Lemma}

The following is a technical lemma that we will use a few times in the upcoming proofs.

\begin{lemma}[Min and Max are Lipschitz]\label{applemma:minmaxlipschitz}
	The following hold for any $(x_1,\ldots,x_n),(y_1,\ldots,y_n) \in \R^n$:
	\begin{align*}
		|\min\{x_1,\ldots,x_n\} - \min\{y_1,\ldots,y_n\}| &\leq \sum_{i=1}^n |x_i - y_i|\\
		|\max\{x_1,\ldots,x_n\} - \max\{y_1,\ldots,y_n\}| &\leq \sum_{i=1}^n |x_i - y_i|.
	\end{align*}
\end{lemma}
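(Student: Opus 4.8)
The plan is to prove the inequality for $\max$ first and then deduce the one for $\min$ by negation. The key reduction is the identity $\min\{x_1,\ldots,x_n\} = -\max\{-x_1,\ldots,-x_n\}$, which holds for all real numbers; since $|-x_i - (-y_i)| = |x_i - y_i|$, once the $\max$ bound is established the $\min$ bound follows immediately by applying it to the tuples $(-x_1,\ldots,-x_n)$ and $(-y_1,\ldots,-y_n)$.

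For the $\max$ inequality, I would argue as follows. For each $i$ we have $x_i \le y_i + |x_i - y_i| \le \max_{1\le j\le n} y_j + \max_{1\le j\le n} |x_j - y_j|$, and taking the maximum over $i$ on the left gives $\max_i x_i \le \max_j y_j + \max_j |x_j - y_j|$, i.e. $\max_i x_i - \max_j y_j \le \max_j |x_j - y_j|$. Swapping the roles of $x$ and $y$ (the bound is symmetric in the two tuples) gives $\max_i y_i - \max_j x_j \le \max_j |x_j - y_j|$, and combining the two yields
\[
	\bigl| \max\{x_1,\ldots,x_n\} - \max\{y_1,\ldots,y_n\} \bigr| \le \max_{1\le j\le n} |x_j - y_j| \le \sum_{j=1}^n |x_j - y_j|,
\]
where the last step is the trivial bound of a maximum by a sum of nonnegative terms. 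This establishes the $\max$ inequality, and then the $\min$ inequality follows by the negation argument above.

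There is no real obstacle here; the statement is elementary. The only points requiring a modicum of care are (i) correctly invoking the negation identity to pass between $\min$ and $\max$, and (ii) phrasing the "swap $x$ and $y$" step so that it is clearly just relabeling. I would also remark, for use elsewhere in the paper, that the argument in fact proves the slightly stronger bound with $\max_j |x_j - y_j|$ in place of $\sum_j |x_j - y_j|$, i.e. that $\min$ and $\max$ of $n$ coordinates are $1$-Lipschitz in the sup-norm; the stated form is the convenient consequence obtained by passing to the $\ell^1$ bound.
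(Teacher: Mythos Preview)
Your proof is correct, and in fact proves a slightly sharper statement than the paper does. The paper takes a different route: it proves the $\min$ inequality first by induction on $n$, establishing the $n=2$ case by a direct case analysis on the four possible orderings of $(x_1,x_2)$ and $(y_1,y_2)$, and then reducing $n$ to $n-1$ via $\min\{x_1,\ldots,x_n\} = \min\{\min\{x_1,\ldots,x_{n-1}\},x_n\}$; the $\max$ inequality is then deduced from the same negation identity you use, just in the opposite direction. Your argument is shorter and more direct, avoids induction and case-checking entirely, and delivers the stronger sup-norm bound $|\max_i x_i - \max_i y_i| \le \max_i |x_i - y_i|$ as a byproduct; the paper's approach only ever works at the $\ell^1$ level. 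Both are valid; yours is the cleaner of the two.
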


\begin{proof}[Proof of lemma \ref{applemma:minmaxlipschitz}]
We proceed by induction over $n \geq 1$. The inequalities trivially hold for $n=1$. First, consider the minimum function and let $n=2$. Consider the case where $x_1 \leq x_2$ and $y_1 \leq y_2$. Then
\begin{align*}
	|\min\{x_1,x_2\} - \min\{y_1,y_2\}| &= |x_1 - y_1|\leq |x_1 -y_1| + |x_2 - y_2|.
\end{align*}
Now consider the case where $x_1 \leq x_2$ and $y_1 \geq y_2$. Then,
\begin{align*}
	\min\{x_1,x_2\} - \min\{y_1,y_2\} &= x_1 - y_2 \leq x_2 - y_2 \leq |x_1 - y_1| + |x_2 - y_2| 
\end{align*}
and
\begin{align*}
	\min\{x_1,x_2\} - \min\{y_1,y_2\} &= x_1 - y_2 \geq x_1 - y_1 \geq -|x_1 - y_1| -|x_2 - y_2|.
\end{align*}
Hence
\[
	|\min\{x_1,x_2\} - \min\{y_1,y_2\}| \leq |x_1 -y_1| + |x_2 - y_2|.
\]
To exhaust all cases, we also consider cases where $(x_1 \geq x_2, y_1 \geq y_2)$ and where $(x_1 \geq x_2, y_1 \leq y_2)$. By symmetry across cases, the Lipschitz inequality for the minimum holds when $n=2$. Now suppose it holds for $n-1$. Then,
\begin{align*}
	|\min\{x_1,\ldots,x_n\} - \min\{y_1,\ldots,y_n\}| &= |\min\{\min\{x_1,\ldots,x_{n-1}\},x_n\} - \min\{\min\{y_1,\ldots,y_{n-1}\},y_n\}| \\
	&\leq |\min\{x_1,\ldots,x_{n-1}\} - \min\{y_1,\ldots,y_{n-1}\}| + |x_n - y_n| \\
	&\leq \sum_{i=1}^{n-1}|x_i - y_i| + |x_n - y_n|\\
	&= \sum_{i=1}^n |x_i - y_i|.
\end{align*}
Therefore it holds for all $n \geq 1$. Noting that $\max\{x_1,\ldots,x_n\} = - \min\{-x_1,\ldots,-x_n\}$, this inequality applies to the maximum as well.
\end{proof}

\subsection{The CATE Bounds}

\begin{proof}[Proof of proposition \ref{prop:CATE convergence}]
\hfill

\medskip

\textbf{Part 1: The upper bound is HDD.} We first show that the mapping
\[
	\overline{\Gamma}_2(x,w, \cdot):\R^{d_W} \times \ell^\infty([\varepsilon,1-\varepsilon],\R^{d_q}) \to \R
\]
is HDD at $\theta_0$ tangentially to $\R^{d_W} \times \mathscr{C}([\varepsilon,1-\varepsilon],\R^{d_q})$. Recall its definition:
\[
	\overline{\Gamma}_2(x,w, \theta) = \int_0^1 \overline{\Gamma}_1(x,w,\tau, \theta) \; d\tau.
\]
We will use the dominated convergence theorem to show that
\begin{equation}\label{eq:defOfGamma2HDD}
	\overline{\Gamma}'_{2,\theta_0}(x,w, h) = \int_0^1 \overline{\Gamma}'_{1,\theta_0}(x,w,\tau, h) \; d\tau.
	\notag
\end{equation}
For $\delta > 0$ let
\[
	\mathcal{G}_\delta = \{\gamma \in \mathcal{G} : \|\gamma - \gamma_0\|_\infty \leq \delta\}
	\qquad \text{and} \qquad
	\Theta_\delta = \mathcal{B}_\delta \times \mathcal{G}_\delta.
\]
To show dominated convergence can be applied, we first show that the mapping $\overline{\Gamma}_1(x,w,\tau, \theta)$ is Lipschitz in $\theta \in  \Theta_\delta$ for some $\delta > 0$. To see this, let $\widetilde{\theta},\theta \in \Theta_\delta$. Then
	\begin{align*}
		&|\overline{\Gamma}_1(x,w,\tau, \widetilde{\theta}) - \overline{\Gamma}_1(x,w,\tau, \theta)|\\
		 &= \left|q(x,w)'\left(\widetilde{\gamma}(S_2(x,w,\tau, \widetilde{\beta})) - \gamma(S_2(x,w,\tau, \widetilde{\beta}))\right) + q(x,w)'\left(\gamma(S_2(x,w,\tau, \widetilde{\beta})) - \gamma(S_2(x,w,\tau, \beta))\right)\right|\\
		&\leq \|q(x,w)\| \cdot \|\widetilde{\gamma} - \gamma\|_\infty + \|q(x,w)'\gamma'(\bar{S}_2)\| \cdot \|S_2(x,w,\tau, \widetilde{\beta}) - S_2(x,w,\tau, \beta)\|.
\end{align*}
The last line follows by a Taylor expansion, where $\bar{S}_2$ is on the line segment connecting $S_2(x,w,\tau, \widetilde{\beta})$ and $S_2(x,w,\tau, \beta)$. By $\gamma \in \mathscr{C}_{m,\nu}^B([\varepsilon,1-\varepsilon])^{d_q}$ for $m \geq 3$ we have $\|\gamma' \|_\infty \leq B$. Hence
\[
	\|q(x,w)'\gamma'(\bar{S}_2))\| \leq \|q(x,w)\| \cdot \|\gamma'(\bar{S}_2)\| \leq \|q(x,w)\| B.
\]
Next,
\begin{align*}
		&|S_2(x,w,\tau, \widetilde{\beta}) - S_2(x,w,\tau, \beta)| \\
		&= |\max\{S_1(x,w,\tau, \widetilde{\beta}),\varepsilon\} - \max\{S_1(x,w,\tau, \beta),\varepsilon\}|\\
		&\leq |S_1(x,w,\tau, \widetilde{\beta}) - S_1(x,w,\tau, \beta)|\\
		&\leq \left|\tau + c\min\{\tau,1-\tau\} \frac{1}{L(x,w'\widetilde{\beta})} - \left(\tau + c\min\{\tau,1-\tau\} \frac{1}{L(x,w'\beta)}\right)\right|
		 + \left|\frac{\tau}{L(x,w'\widetilde{\beta})} - \frac{\tau}{L(x,w'\beta)}\right|\\
		&= (\tau + c \min\{\tau,1-\tau\}) \left|\frac{1}{L(x,w'\widetilde{\beta})} - \frac{1}{L(x,w'\beta)}\right|\\
		&\leq (\tau + c \min\{\tau,1-\tau\})\sup_{ \beta \in \mathcal{B}_\delta}\left\|\frac{L_\beta(x,w'\beta)}{L(x,w'\beta)^2}\right\|\|\widetilde{\beta} - \beta\|.
\end{align*}
The second and third lines follow from lemma \ref{applemma:minmaxlipschitz}. The last line follows by a Taylor expansion. To see that $\sup_{ \beta \in \mathcal{B}_\delta}\left\| L_\beta(x,w'\beta) / L(x,w'\beta)^2 \right\| <\infty$, write
\begin{align*}
	\sup_{ \beta \in \mathcal{B}_\delta}\left\|\frac{L_\beta(x,w'\beta)}{L(x,w'\beta)^2}\right\| & =  \sup_{ \beta \in \mathcal{B}_\delta}\left\|\frac{(2x-1) F'(w'\beta)w}{xF(w'\beta)^2 + (1-x)(1-F(w'\beta))^2}\right\|\\
	&\leq \|w\| \sup_{a \in \R} |F'(a)| \frac{1}{x F(\inf_{\beta \in \beta} w'\beta)^2 + (1-x)(1 - F(\sup_{\beta \in \beta} w'\beta))^2}\\
	&<\infty.
\end{align*}
The last line follows since $\{w'\beta: \beta \in \mathcal{B}_\delta\}$ is bounded for fixed $w$, and since $F'(a)$ is uniformly bounded by assumption A\ref{assn:prop score}.3. Thus
\begin{multline*}
	|\overline{\Gamma}_1(x,w,\tau, \widetilde{\theta}) - \overline{\Gamma}_1(x,w,\tau, \theta)| \\
	\leq \|q(x,w)\| \cdot \|\widetilde{\gamma} - \gamma\|_\infty + \|q(x,w)\| B (\tau + c\min\{\tau,1-\tau\}) \sup_{ \beta \in \mathcal{B}_\delta}\left\|\frac{L_\beta(x,w'\beta)}{L(x,w'\beta)^2}\right\| \|\widetilde{\beta} - \beta\|.
\end{multline*}
Hence $\overline\Gamma_1(x,w,\tau, \theta)$ is Lipschitz in $\theta$. Therefore,
\[
	\frac{\overline{\Gamma}_1(x,w,\tau, \theta_0 + t_m h_m) - \overline{\Gamma}_1(x,w,\tau, \theta_0)}{t_m}
\]
is dominated by 
\begin{align*}
	&\|q(x,w)\| \cdot \|h_{2m}\|_\infty + \|q(x,w)\|B(\tau + c\min\{\tau,1-\tau\}) \sup_{ \beta \in \mathcal{B}_\delta}\left\|\frac{L_\beta(x,w'\beta)}{L(x,w'\beta)^2}\right\| \|h_{1m}\|\\
	&\leq \|q(x,w)\| \; (\|h_{2}\|_\infty + \lambda) + \|q(x,w)\|B(\tau + c\min\{\tau,1-\tau\}) \sup_{ \beta \in \mathcal{B}_\delta}\left\|\frac{L_\beta(x,w'\beta)}{L(x,w'\beta)^2}\right\| (\|h_{1}\| + \lambda) \\
	&<\infty.
\end{align*}
In the second line $\lambda > 0$ is a constant that can be made arbitrarily small by choosing $m$ sufficiently large, since $h_m$ converges to $h$. Moreover, note that this dominating function is integrable over $\tau \in (0,1)$. Thus we can apply the dominated convergence theorem to show that
\begin{align*}
		\frac{\overline{\Gamma}_2(x,w, \theta_0 + t_m h_m) - \overline{\Gamma}_2(x,w, \theta_0)}{t_m} &= \int_0^1 \left(\frac{\overline{\Gamma}_1(x,w,\tau, \theta_0 + t_m h_m) - \overline{\Gamma}_1(x,w,\tau, \theta_0)}{t_m}\right) \; d\tau\\
		&\to \int_0^1 \overline{\Gamma}'_{1,\theta_0}(x,w,\tau, h) \; d\tau \\
		&= q(x,w)'\int_0^1 h_{2}(S_2(x,w,\tau, \beta_0)) \; d\tau \\
		&\quad + q(x,w)'\int_0^1\gamma_0'(S_2(x,w,\tau, \beta_0)) T_2(x,w,\tau,\beta_0, h_1, 0) \; d\tau\\
		&\equiv \overline{\Gamma}'_{2,\theta_0}(x,w, h).
\end{align*}

\textbf{Part 2: The lower bound is HDD.} We can similarly show that
\begin{align*}
		\frac{\underline{\Gamma}_2(x,w, \theta_0 + t_m h_m) - \underline{\Gamma}_2(x,w, \theta_0)}{t_m} &\to  q(x,w)'\int_0^1 h_{2}(S_4(x,w,\tau, \beta_0)) \; d\tau\\
		&\qquad + q(x,w)'\int_0^1\gamma_0'(S_4(x,w,\tau, \beta_0))T_4(x,w,\tau, h_1, 0) \; d\tau\\
		&\equiv \underline{\Gamma}'_{2,\theta_0}(x, w, h).
\end{align*}

\textbf{Part 3: Apply the delta method.} The functional delta method for HDD functionals now implies that, uniformly in $x \in \{0,1\}$,
\begin{align*}
	\sqrt{n}
	\begin{pmatrix}
		\overline{\Gamma}_2(x,w, \widehat{\theta}) - \overline{\Gamma}_2(x,w, \theta_0)\\
		\underline{\Gamma}_2(x,w, \widehat{\theta}) - \underline{\Gamma}_2(x,w, \theta_0)
	\end{pmatrix}
	&\xrightarrow{d} 
	\begin{pmatrix}
		\overline{\Gamma}'_{2,\theta_0}(x,w, \textbf{Z}_1)\\
		\underline{\Gamma}'_{2,\theta_0}(x,w, \textbf{Z}_1)
	\end{pmatrix}
	\equiv \mathbf{Z}_3(x,w)
\end{align*}
and hence
\begin{align*}
	\sqrt{n}
	\begin{pmatrix}
		\widehat{\overline{\text{CATE}}}^c(w) - \overline{\text{CATE}}_\varepsilon^c(w)\\
		\widehat{\underline{\text{CATE}}}^c(w) - \underline{\text{CATE}}_\varepsilon^c(w)
	\end{pmatrix}
	&\xrightarrow{d}
	\begin{pmatrix}
		\mathbf{Z}_3^{(1)}(1,w) - \mathbf{Z}_3^{(2)}(0,w)\\
		\mathbf{Z}_3^{(2)}(1,w) - \mathbf{Z}_3^{(1)}(0,w)
	\end{pmatrix} \equiv \mathbf{Z}_{\text{CATE}}(w).
\end{align*}
\end{proof}

\subsection{The ATE Bounds}

\begin{proof}[Proof of theorem \ref{thm:ATE convergence}]
\hfill

\medskip

\textbf{Part 1: The expectation upper bound}. Write
\begin{align*}
		\sqrt{n}(\widehat{\overline{E}}_x^c - \overline{E}_{x,\varepsilon}^c)
		&= \sqrt{n}\left(\avg \overline{\Gamma}_2(x,W_i, \widehat{\theta}) - \int_\mathcal{W} \overline{\Gamma}_2(x,w, \theta_0)\; dF_W(w)\right)\\
		&= \sqrt{n}\left(\avg (\overline{\Gamma}_2(x,W_i, \widehat{\theta}) - \overline{\Gamma}_2(x,W_i, \theta_0))  - \int_\mathcal{W} (\overline{\Gamma}_2(x,w, \widehat{\theta}) - \overline{\Gamma}_2(x,w, \theta_0)) \; dF_W(w)\right)\\
		&\quad + \frac{1}{\sqrt{n}}\sum_{i=1}^n \Big( \overline{\Gamma}_2(x,W_i, \theta_0) - \Exp[\overline{\Gamma}_2(x,W, \theta_0)] \Big) + \sqrt{n} \big( \overline\Gamma_3(x, \widehat{\theta}) - \overline\Gamma_3(x, \theta_0) \big).
\end{align*}
There are three terms here. We'll show that the first is $o_p(1)$ and that the second and third contribute to the asymptotic distribution.

\medskip

\textbf{\emph{Step 1.}} We'll begin by showing that the first term is $o_p(1)$. For some $\delta > 0$, consider the class of functions
\[
	\overline{\mathcal{F}} = \left\{\overline{\Gamma}_2(x,w, \theta): \theta \in  \Theta_\delta \right\}.
\]
As in the proof of proposition \ref{prop:CATE convergence}, we will show that $\overline{\Gamma}_2(x,w, \theta)$ is Lipschitz in $\theta$. Let $\widetilde{\theta},\theta \in \Theta_\delta$. Then
\begin{align*}
	&|\overline{\Gamma}_2(x,w, \widetilde{\theta}) - \overline{\Gamma}_2(x,w, \theta)| \\
	&= \left|\int_0^1 \overline{\Gamma}_1(x,w,\tau, \widetilde{\theta}) \; d\tau - \int_0^1 \overline{\Gamma}_1(x,w,\tau, \theta) \; d\tau\right| \\
	&\leq \|q(x,w)\| \int_0^1 \|\widetilde{\gamma} - \gamma\|_\infty  \; d\tau
	 + \|q(x,w)\|B\int_0^1(\tau + c\min\{\tau,1-\tau\}) \; d\tau \sup_{ \beta \in \mathcal{B}_\delta}\left\|\frac{L_\beta(x,w'\beta)}{L(x,w'\beta)^2}\right\| \|\widetilde{\beta} - \beta\|\\
	&= \|q(x,w)\|\left(1 +  \frac{B (2+c)}{4} \sup_{ \beta \in \mathcal{B}_\delta}\left\|\frac{L_\beta(x,w'\beta)}{L(x,w'\beta)^2}\right\|\right) \left(\|\widetilde{\gamma} - \gamma\|_\infty + \|\widetilde{\beta} - \beta\|\right)\\
	&\equiv K(w) \|\widetilde{\theta} - \theta\|_\Theta.
\end{align*}
The second line follows by our derivations in the proof of proposition \ref{prop:CATE convergence}. In the last line we let $\|\theta\|_\Theta = \|\beta\| + \|\gamma\|_\infty$ and defined
\begin{align*}
	K(w) &= \|q(x,w)\|\left(1 +  \frac{B(2+c)}{4} \sup_{ \beta \in \mathcal{B}_\delta}\left\|\frac{L_\beta(x,w'\beta)}{L(x,w'\beta)^2}\right\|\right).
\end{align*}
Assumption A\ref{assn:propscoremoment} says that
\[
	\Exp \left( \sup_{\beta\in\mathcal{B}_\delta}\left\|\frac{L_\beta(x,W'\beta)}{L(x,W'\beta)^2}\right\|^4\right) <\infty
\]
and A\ref{assn:quant reg}.4 says $\Exp(\| q(x,W) \|^4) < \infty$. These assumptions imply that $K(W)$ has a bounded second moment:
\begin{align*}
	\Exp[K(W)^2] &= \Exp \left[\|q(x,W)\|^2\left(1 + \frac{B(2+c)}{4} \sup_{ \beta \in \mathcal{B}_\delta}\left\|\frac{L_\beta(x,W'\beta)}{L(x,W'\beta)^2}\right\|\right)^2\right]\\
	&\leq \Exp \left[\|q(x,W)\|^4\right]^{1/2} \times \Exp \left[\left(1 +  \frac{B(2+c)}{4} \sup_{ \beta \in \mathcal{B}_\delta}\left\|\frac{L_\beta(x,W'\beta)}{L(x,W'\beta)^2}\right\|\right)^4\right]^{1/2}\\
	&<\infty
\end{align*}
where the second line follows by the Cauchy-Schwarz inequality. Thus $\overline{\Gamma}_2(x,w, \theta)$ is Lipschitz in $\theta$. This lets us apply theorem 2.7.11 in \cite{VaartWellner1996} to see that the bracketing number $N_{[\cdot]}(2 \, \epsilon \, \Exp[K(W)^2]^{1/2},\overline{\mathcal{F}},L_2(\Prob))$ is bounded above by 
\[
	N(\epsilon,\mathcal{B}_\delta \times \mathcal{G}_\delta, \|\cdot\|_\Theta) \leq N(\epsilon,\mathcal{B}_\delta, \|\cdot\|) + N(\epsilon,\mathcal{G}_\delta, \|\cdot\|_\infty).
\]
By example 19.7 of \cite{Vaart2000}, $N(\epsilon,\mathcal{B}_\delta, \|\cdot\|) \lesssim \epsilon^{-d_W}$ for small enough $\epsilon$. By theorem 2.7.1 in \cite{VaartWellner1996},
\[
	N(\epsilon,\mathcal{G}_\delta, \|\cdot\|_\infty) \lesssim \exp\left(\epsilon^{-\frac{1}{m+\nu}}\right).
\]
So
\[
	\int_0^\delta \sqrt{\log N_{[\cdot]}(2 \, \epsilon \, \Exp[K(W)^2]^{1/2},\overline{\mathcal{F}},L_2(\Prob))} \; d\epsilon <\infty.
\]
Hence $\overline{\mathcal{F}}$ is Donsker.

By convergence of $\widehat{\theta}$ to $\theta_0$ (lemma \ref{lemma:prelim estimators}) and the Lipschitz property of $\overline{\Gamma}_2$, we have
\begin{align*}
	\int_\mathcal{W} \left|\overline{\Gamma}_2(x,w, \widehat{\theta}) - \overline{\Gamma}_2(x,w, \theta_0)\right|^2 dF_W(w)
	&\leq \Exp[K(W)^2] \cdot \|\widehat{\theta} - \theta_0\|_\Theta^2 \\
	&= o_p(1).
\end{align*}
Therefore, by lemma 19.24 in \cite{Vaart2000},
$$\sqrt{n}\left(\frac{1}{n}\sum_{i=1}^n (\overline{\Gamma}_2(x,W_i, \widehat{\theta}) - \overline{\Gamma}_2(x,W_i, \theta_0))  - \int_\mathcal{W} (\overline{\Gamma}_2(x,w, \widehat{\theta}) - \overline{\Gamma}_2(x,w, \theta_0)) \; dF_W(w)\right) = o_p(1).$$

\medskip

\textbf{\emph{Step 2.}} Next consider the second term. First note that
\begin{align*}
	\Exp[ \overline{\Gamma}_2(x,W,\theta_0)^2 ]
		&= \Exp \left[ \left| \int_0^1 q(x,W)' \gamma_0(S_2(x,W,\tau,\beta_0)) \; d\tau \right|^2 \right] \\
		&\leq \Exp \left[ \left( \int_0^1 \| q(x,W) \| \cdot \| \gamma_0(S_2(x,W,\tau,\beta_0)) \| \; d\tau \right)^2 \right] \\
		&\leq \Exp \left[ \left( \int_0^1 \| q(x,W) \| \sup_{u \in [\varepsilon,1-\varepsilon]} \| \gamma_0(u) \| \; d\tau \right)^2 \right] \\
		&\leq \Exp( \| q(x,W) \|^2 ) B^2 \\
		&< \infty.
\end{align*}
The first line follows be definition of $\overline{\Gamma}_2$. The second line follows by the Cauchy-Schwarz inequality. The third line follows the fact that $S_2$ lies between $\varepsilon$ and $1-\varepsilon$.  The fourth line follows by  A\ref{assn:quant reg regularity}.  The last line follows by A\ref{assn:quant reg}.4.

This result lets us apply a CLT to the second term. Combining that with the fact that the influence functions for the two first step estimators are Donsker (lemmas \ref{applemma:prop score estimation} and \ref{applemma:quantile estimation}) we get
\begin{align*}
	\sqrt{n}
	\begin{pmatrix}
		\widehat{\theta} - \theta_0\\
		\displaystyle \frac{1}{n}\sum_{i=1}^n \Big(\Gamma_2(x,W_i, \theta_0) - \Exp \left[\Gamma_2(x,W, \theta_0)\right] \Big)
	\end{pmatrix}
	&\rightsquigarrow
	\begin{pmatrix}
		\mathbf{Z}_1\\
		\widetilde{\mathbf{Z}}_4(x)
	\end{pmatrix},
\end{align*}	
a mean-zero Gaussian process in $\R^{d_W}\times \ell^\infty([\varepsilon,1-\varepsilon],\R^{d_q}) \times \R^2$. Notice here we use $\Gamma_2 = (\overline{\Gamma}_2, \underline{\Gamma}_2)$, not just $\overline{\Gamma}_2$, as preparation for parts 2 and 3.

\medskip

\textbf{\emph{Step 3.}} Next, consider the third term. For this step we'll show that the mapping $\overline\Gamma_3(x, \theta)$ is HDD at $\theta_0$ tangentially to $\R^{d_W} \times \mathscr{C}([\varepsilon,1-\varepsilon],\R^{d_q})$. This will let us apply the delta method for HDD functionals in the last step. To see that this functional is HDD, note that
\begin{align*}
	\frac{\overline{\Gamma}_3(x, \theta_0 + t_m h_m) - \overline{\Gamma}_3(x, \theta_0)}{t_m} &= \int_\mathcal{W} \int_0^1 \frac{\overline{\Gamma}_1(x,w,\tau, \theta_0 + t_m h_m) - \overline{\Gamma}_1(x,w,\tau, \theta_0)}{t_m} \; d\tau \; dF_W(w).
\end{align*}
By proposition \ref{prop:CATE convergence}, for $m$ large enough,
$$\frac{\overline{\Gamma}_1(x,w,\tau, \theta_0 + t_m h_m) - \overline{\Gamma}_1(x,w,\tau, \theta_0)}{t_m}$$
is dominated by 
\[
	\|q(x,w)\|(\|h_{2}\|_\infty + \lambda) + \|q(x,w)\|B(\tau + c\min\{\tau,1-\tau\}) \sup_{ \beta \in \mathcal{B}_\delta}\left\|\frac{L_\beta(x,w'\beta)}{L(x,w'\beta)^2}\right\| (\|h_{1}\| + \lambda).
\]
This expression which has finite integral over $(\tau,w) \in (0,1)\times \mathcal{W}$ by $\|h\|_\Theta <\infty$, $\Exp(\|q(x,W)\|^4) < \infty$, and
\[
	\Exp \left(\sup_{ \beta \in \mathcal{B}_\delta}\left\|\frac{L_\beta(x,W, \beta)}{L(x,W'\beta)^2}\right\|^4\right) <\infty.
\]
So we can apply dominated convergence to see that $\overline{\Gamma}_3$ is HDD:
\begin{align*}
	\frac{\overline{\Gamma}_3(x, \theta_0 + t_m h_m) - \overline{\Gamma}_3(x, \theta_0)}{t_m} &\to \int_\mathcal{W} \left[ q(x,w)'\int_0^1 h_{2}(S_2(x,w,\tau, \beta_0)) \; d\tau \right. \\
	&\qquad \left. +  q(x,w)'\int_0^1\gamma_0'(S_2(x,w,\tau, \beta_0))T_2(x,w,\tau,\beta_0, h_1, 0) \; d\tau\right] \; dF_W(w) \\
	&\equiv \overline{\Gamma}_{3,\theta_0}'(x, h).
\end{align*} 

\medskip

\textbf{\emph{Step 4.}} Finally, putting all the previous steps together and applying the delta method for HDD functionals gives that, uniformly in $x \in \{0,1\}$,
\begin{align*}
	\sqrt{n}(\widehat{\overline{E}}_x^c - \overline{E}_{x,\varepsilon}^c)
	&= o_p(1) +  \frac{1}{\sqrt{n}}\sum_{i=1}^n \left(\overline{\Gamma}_2(x,W_i, \theta_0) - \Exp[\overline{\Gamma}_2(x,W, \theta_0)]\right) + \sqrt{n} (\overline\Gamma_3(x, \widehat{\theta}) - \overline\Gamma_3(x, \theta_0))\\[0.5em]
	&\xrightarrow{d} \widetilde{\mathbf{Z}}_4^{(1)}(x) + \overline{\Gamma}_{3,\theta_0}'(x, \mathbf{Z}_{1}) \\[0.6em]
	&\equiv \mathbf{Z}_4^{(1)}(x).
\end{align*}

\medskip

\textbf{Part 2: The expectation lower bound.} An identical argument can be applied to show that
\begin{align*}
	\sqrt{n}(\widehat{\underline{E}}_x^c - \underline{E}_x^c)
	&\xrightarrow{d} \widetilde{\mathbf{Z}}_4^{(2)}(x) + \underline{\Gamma}_{3,\theta_0}'(x, \mathbf{Z}_{1}) \\[0.6em]
	&\equiv \mathbf{Z}_4^{(2)}(x)
\end{align*}
where
\[
	\underline{\Gamma}_{3,\theta_0}'(x, h) = \int_{\mathcal{W}}\underline{\Gamma}_{2,\theta_0}'(x,w,h)\; dF_W(w).
\]

\medskip

\textbf{Part 3: Putting them together.} Finally, the analysis in parts 1 and 2 can be combined to obtain joint convergence:
\begin{align*}
	\sqrt{n}
	\begin{pmatrix}
		\widehat{\overline{\text{ATE}}}^c - \overline{\text{ATE}}_\varepsilon^c \\
		\widehat{\underline{\text{ATE}}}^c - \underline{\text{ATE}}_\varepsilon^c
	\end{pmatrix}
	&\xrightarrow{d}
	\begin{pmatrix}
		\mathbf{Z}_4^{(1)}(1) - \mathbf{Z}_4^{(2)}(0)\\
		\mathbf{Z}_4^{(2)}(1) - \mathbf{Z}_4^{(1)}(0)
	\end{pmatrix} \equiv \mathbf{Z}_{\text{ATE}}.
\end{align*}
\end{proof}

\subsection{The ATT Bounds}

\begin{proof}[Proof of proposition \ref{prop:ATT convergence}]
Note that in this proof we'll use several of the results we derived in the proof of proposition \ref{thm:ATE convergence}. By $\var(Y \indicator(X=x))<\infty$ and $\var(\indicator(X=x)) <\infty$, we have that
\begin{align*}
	\sqrt{n} \Big( \widehat{\Exp}(Y \mid X=x) - \Exp(Y \mid X=x) \Big)
	&= \frac{1}{p_x}\sqrt{n}\left(\frac{1}{n} \sum_{i=1}^n Y_i \indicator(X_i = x)- \Exp(Y\indicator(X=x)) \right)  \\
	&\qquad \qquad - \frac{\Exp(Y \mid X=x)}{p_x}\sqrt{n}\left(\frac{1}{n} \sum_{i=1}^n  \indicator(X_i = x) - p_x \right) + o_p(1)\\	
&	\xrightarrow{d} \textbf{Z}_{\Exp(Y \mid X=x)},
\end{align*}
a mean-zero Gaussian variable.
Also, $\sqrt{n}(\widehat{p}_x - p_x) \xrightarrow{d} \textbf{Z}_{p_x}$, another mean-zero Gaussian variable. As before the influence functions are Donsker and hence we can stack them to obtain
\begin{align*}
	\sqrt{n}
	\begin{pmatrix}
		\widehat{\theta} - \theta_0\\
		\displaystyle \frac{1}{n}\sum_{i=1}^n \left(\Gamma_2(x,W_i, \theta_0) - \Exp \left[\Gamma_2(x,W, \theta_0)\right]\right)	\\[1.5em]
		\widehat{\Exp}(Y \mid X=x) - \Exp(Y \mid X=x) \\[0.5em]
		\widehat{p}_x - p_x
	\end{pmatrix}
	&\rightsquigarrow
	\begin{pmatrix}
		\mathbf{Z}_1\\
		\widetilde{\mathbf{Z}}_4(x)\\
		\textbf{Z}_{\Exp(Y \mid X=x)}\\
		\textbf{Z}_{p_x}
	\end{pmatrix},
\end{align*}	
a mean-zero Gaussian process in $\R^{d_W} \times \ell^{\infty}([\varepsilon,1-\varepsilon],\R^{d_q}) \times \R^4$, for $x \in \{0,1\}$ and whose covariance kernel can be calculated as in the proof of theorem \ref{thm:ATE convergence}. By the delta method,
\begin{align*}
	&\sqrt{n}
	\begin{pmatrix}
		\widehat{\overline{\text{ATT}}}^c - \overline{\text{ATT}}_\varepsilon^c\\[0.5em]
		\widehat{\underline{\text{ATT}}}^c - \underline{\text{ATT}}_\varepsilon^c
	\end{pmatrix} \\[0.5em]
	&\xrightarrow{d}
	\begin{pmatrix}
	\displaystyle
		\textbf{Z}_{\Exp(Y \mid X=1)} - \frac{\textbf{Z}_4^{(2)}(0)}{p_1} + \frac{p_0}{p_1}\textbf{Z}_{\Exp(Y \mid X=0)} + \frac{\Exp(Y \mid X=0)}{p_1}\textbf{Z}_{p_0} + \frac{\underline{E}_{0,\varepsilon}^c - \Exp(Y \mid X=0) p_0}{p_1^2}\mathbf{Z}_{p_1}\\[2em]
	\displaystyle
		\textbf{Z}_{\Exp(Y \mid X=1)} - \frac{\textbf{Z}_4^{(1)}(0)}{p_1} + \frac{p_0}{p_1}\textbf{Z}_{\Exp(Y \mid X=0)} + \frac{\Exp(Y \mid X=0)}{p_1}\textbf{Z}_{p_0} + \frac{\overline{E}_{0,\varepsilon}^c - \Exp(Y \mid X=0) p_0}{p_1^2}\mathbf{Z}_{p_1}
	\end{pmatrix} \label{eq:ATTlimdist} \\
	&\equiv \mathbf{Z}_{\text{ATT}},
\end{align*}
a random vector in $\R^2$.
\end{proof}

\section{Estimating the Analytical Hadamard Directional Derivatives}\label{sec:HDDformulas}

In this section we give formulas for our estimators of the analytical Hadamard directional derivatives (HDD) used in the CQTE, CATE, ATE, and ATT functionals. In these estimators we use a tuning parameter $\kappa_n \geq 0$. This parameter acts as a slackness value, which lets us estimate when certain equalities hold in the population, but which may not hold exactly in finite samples. We assume $\kappa_n \to 0$ and $n \kappa_n^2 \to \infty$ as $n \to \infty$.

We estimate $\Gamma_{1,\theta_0}'(x,w,\tau,h)$ by
\[
	\begin{pmatrix}
	\widehat{\overline{\Gamma}}_{1,\theta_0}'( x,w,\tau,h) \\[0.5em]
	\widehat{\underline{\Gamma}}_{1,\theta_0}'( x,w,\tau,h)
	\end{pmatrix}
	=
	\begin{pmatrix}
		q(x,w)'h_{2}(S_2(x,w,\tau, \widehat\beta)) + q(x,w)'\widehat{\gamma}'(S_2(x,w,\tau, \widehat\beta)) \cdot T_2(x,w,\tau,\widehat\beta, h_1, \kappa_n) \\[0.5em]
		q(x,w)'h_{2}(S_4(x,w,\tau, \widehat\beta)) + q(x,w)'\widehat{\gamma}'(S_4(x,w,\tau, \widehat\beta)) \cdot T_4(x,w,\tau,\widehat\beta, h_1, \kappa_n)
	\end{pmatrix}.
\]
Note that $q(x,w)'$ refers to the transpose of $q(x,w)$ while $\widehat{\gamma}'(\cdot)$ refers to our estimator of the derivative of $\gamma(\cdot)$, defined in equation \eqref{eq:QRderivativeEstimator} on page \pageref{eq:QRderivativeEstimator}.
We defined the functions $S_2$ and $S_4$ in the proof of proposition \ref{prop:CQTE convergence}. We define $T_2$ and $T_4$ below. Estimate $\Gamma'_{2,\theta_0}(x,w,h)$ by
\[
	\begin{pmatrix}
	\widehat{\overline{\Gamma}}_{2,\theta_0}'( x,w,h) \\[0.5em]
	\widehat{\underline{\Gamma}}_{2,\theta_0}'( x,w,h)
	\end{pmatrix}
	=
	\begin{pmatrix}
	\int_0^1 \widehat{\overline{\Gamma}}_{1,\theta_0}'( x,w,\tau,h) \; d\tau \\[0.5em]
	\int_0^1 \widehat{\underline{\Gamma}}_{1,\theta_0}'( x,w,\tau,h) \; d\tau
	\end{pmatrix}.
\]
Estimate $\Gamma_{3,\theta_0}(x,h)$ by
\[
	\begin{pmatrix}
	\widehat{\overline{\Gamma}}_{3,\theta_0}'( x,h) \\[0.5em]
	\widehat{\underline{\Gamma}}_{3,\theta_0}'( x,h)
	\end{pmatrix}
	=
	\begin{pmatrix}
	\avg \widehat{\overline{\Gamma}}_{2,\theta_0}'( x,W_i,h) \\[0.5em]
	\avg \widehat{\underline{\Gamma}}_{2,\theta_0}'( x,W_i,h)
	\end{pmatrix}.
\]
Throughout the paper we will also use the vector notation $\widehat{\Gamma}_{j,\theta_0}' = (\widehat{\overline{\Gamma}}_{j,\theta_0}', \widehat{\underline{\Gamma}}_{j,\theta_0}' )$ and $\Gamma_{j,\theta_0}' = (\overline{\Gamma}_{j,\theta_0}', \underline{\Gamma}_{j,\theta_0}' )$ for $j=1,2,3$.
Next we define the functions $T_2$ and $T_4$. To do this, we'll also define functions $T_1$ and $T_3$.

\subsection*{The function $T_1$}

We first define $T_1(x,w,\tau,\beta, h_1, \kappa_n)$. We do this by splitting it into seven different mutually exclusive cases. This lets us write
\begin{equation}\label{eq:T1summationRepresentation}
	T_1(x,w,\tau,\beta, h_1, \kappa_n) = \sum_{j=1}^7 T_{1,j}(x,w,\tau,\beta,h_1) \cdot \indicator_{1,j}(x,w,\tau,\beta, \kappa_n).
\end{equation}
For these cases, it is helpful to recall our notation
\[
	L(x,w'\beta) = F(w'\beta)^x(1-F(w'\beta))^{1-x}.
\]
These seven cases are defined as follows:
\begin{enumerate}
\item Let 
\[
	\indicator_{1,1}(x,w,\tau,\beta, \kappa_n) = \ind\left\{\tau + \frac{c}{L(x,w'\beta)} \min\{\tau,1-\tau\}
	< \min \left\{ \frac{\tau}{L(x,w'\beta)},1-\varepsilon \right\} - \kappa_n\right\}
\]
and
\[
	T_{1,1}(x,w,\tau,\beta, h_1) = - c\min\{\tau,1-\tau\} \frac{L_\beta(x,w'\beta)'h_1}{L(x,w'\beta)^2}.
\]

\item 
Let 
\[
	\indicator_{1,2}(x,w,\tau,\beta, \kappa_n) = \ind\left\{\frac{\tau}{L(x,w'\beta)} < \min \left\{ \tau + \frac{c}{L(x,w'\beta)}\min\{\tau,1-\tau\},1-\varepsilon \right\} - \kappa_n\right\}
\]
and
\[
	T_{1,2}(x,w,\tau,\beta, h_1) = - \tau \frac{L_\beta(x,w'\beta)'h_1}{L(x,w'\beta)^2}.
\]

\item 
Let 
\[
	\indicator_{1,3}(x,w,\tau,\beta, \kappa_n) = \ind\left\{ 1-\varepsilon 
	< \min \left\{ \tau + \frac{c}{L(x,w'\beta)}\min\{\tau,1-\tau\}, \frac{\tau}{L(x,w'\beta)} \right\} - \kappa_n \right\}
\]
and
\[
	T_{1,3}(x,w,\tau,\beta, h_1) = 0.
\]

\item Let 
\begin{align*}
	\indicator_{1,4}(x,w,\tau,\beta, \kappa_n) &= \ind\left\{ \left|\tau + \frac{c}{L(x,w'\beta)}\min\{\tau,1-\tau\} - \frac{\tau}{L(x,w'\beta)}\right| \leq \kappa_n \right\}\\
	&\cdot \ind\left\{\max\left\{\tau + \frac{c}{L(x,w'\beta)}\min\{\tau,1-\tau\},\frac{\tau}{L(x,w'\beta)}\right\} < 1-\varepsilon - \kappa_n\right\}
\end{align*}
and
\[
	T_{1,4}(x,w,\tau,\beta, h_1) = \min\left\{- c\min\{\tau,1-\tau\} \frac{L_\beta(x,w'\beta)'h_1}{L(x,w'\beta)^2}, - \tau \frac{L_\beta(x,w'\beta)'h_1}{L(x,w'\beta)^2} \right\} .
\]

\item Let
\begin{align*}
	\indicator_{1,5}(x,w,\tau,\beta, \kappa_n) &= \ind\left\{ \left|\tau + \frac{c}{L(x,w'\beta)}\min\{\tau,1-\tau\} -(1-\varepsilon)\right| \leq \kappa_n \right\}\\
	&\cdot \ind\left\{\max\left\{\tau + \frac{c}{L(x,w'\beta)}\min\{\tau,1-\tau\},1-\varepsilon\right\} < \frac{\tau}{L(x,w'\beta)} - \kappa_n \right\}
\end{align*}
and
\[
	T_{1,5}(x,w,\tau,\beta, h_1) = \min\left\{- c\min\{\tau,1-\tau\} \frac{L_\beta(x,w'\beta)'h_1}{L(x,w'\beta)^2}, 0 \right\}.
\]

\item Let 
\begin{align*}
	\indicator_{1,6}(x,w,\tau,\beta, \kappa_n) &= \ind\left\{ \left|\frac{\tau}{L(x,w'\beta)} -(1-\varepsilon)\right| \leq \kappa_n \right\}\\
	&\cdot \ind\left\{\max\left\{\frac{\tau}{L(x,w'\beta)} , 1-\varepsilon \right\} < \tau + \frac{c}{L(x,w'\beta)}\min\{\tau,1-\tau\} - \kappa_n \right\}
\end{align*}
and
\[
	T_{1,6}(x,w,\tau,\beta, h_1) = \min\left\{- \tau \frac{L_\beta(x,w'\beta)'h_1}{L(x,w'\beta)^2},0\right\}.
\]

\item Let $\indicator_{1,7}(x,w,\tau,\beta, \kappa_n)$ equal 1 if at least two of the three following hold
\begin{align*}
		\left|\tau + \frac{c}{L(x,w'\beta)}\min\{\tau,1-\tau\} -(1-\varepsilon)\right| &\leq \kappa_n \\
		\left|\frac{\tau}{L(x,w'\beta)} -(1-\varepsilon)\right| &\leq \kappa_n\\
		\left|\tau + \frac{c}{L(x,w'\beta)}\min\{\tau,1-\tau\} -\frac{\tau}{L(x,w'\beta)}\right| &\leq \kappa_n
\end{align*}
and zero otherwise. Let
\[
	T_{1,7}(x,w,\tau,\beta, h_1) = \min\left\{- c\min\{\tau,1-\tau\} \frac{L_\beta(x,w'\beta)'h_1}{L(x,w'\beta)^2}, - \tau \frac{L_\beta(x,w'\beta)'h_1}{L(x,w'\beta)^2},0 \right\}.
\]

\end{enumerate}

\subsection*{The function $T_2$}

Next we define $T_2(x,w,\tau,\beta,h_1,\kappa_n)$ in terms of $T_1(x,w,\tau,\beta,h_1,\kappa_n)$ and two indicator functions. Recall from the proof of proposition \ref{prop:CQTE convergence} that
\[
	S_1(x,w,\tau, \beta) = \min\left\{\tau + \frac{c}{L(x,w'\beta)}\min\{\tau,1-\tau\},\frac{\tau}{L(x,w'\beta)},1-\varepsilon\right\}.
\]
Then
\begin{align*}
	T_2(x,w,\tau,\beta,h_1,\kappa_n)
	 &= T_1(x,w,\tau,\beta,h_1,\kappa_n) \cdot \ind \Big( S_1(x,w,\tau,\beta) > \varepsilon  + \kappa_n \Big) \\
	&\qquad + \max\left\{T_1(x,w,\tau,\beta, h_1, \kappa_n),0\right\} \cdot \ind \Big( \left|S_1(x,w,\tau,\beta) - \varepsilon\right| \leq \kappa_n \Big).
\end{align*}

\subsection*{The function $T_3$}

Next we define $T_3(x,w,\tau,\beta, h_1, \kappa_n)$. Again we split this into seven cases, which lets us write this function as
\begin{equation}\label{eq:T3summationRepresentation}
	T_3(x,w,\tau,\beta, h_1, \kappa_n) = \sum_{j=1}^7 T_{3,j}(x,w,\tau,\beta,h_1) \cdot \indicator_{3,j}(x,w,\tau,\beta, \kappa_n).
\end{equation}
where the seven cases are defined as follows:
\begin{enumerate}
\item Let 
\[
	\indicator_{3,1}(x,w,\tau,\beta, \kappa_n) = \ind\left\{\tau - \frac{c}{L(x,w'\beta)}\min\{\tau,1-\tau\}
	> 
	\max \left\{ \frac{\tau-1}{L(x,w'\beta)}+1,\varepsilon \right\} + \kappa_n\right\}
\]
and
\[
	T_{3,1}(x,w,\tau,\beta, h_1) = c \min\{\tau,1-\tau\} \frac{L_\beta(x,w'\beta)'h_1}{L(x,w'\beta)^2} .
\]

\item 
Let 
\[
	\indicator_{3,2}(x,w,\tau,\beta, \kappa_n) = \ind\left\{\frac{\tau-1}{L(x,w'\beta)}+1
	> 
	\max \left\{ \tau - \frac{c}{L(x,w'\beta)}\min\{\tau,1-\tau\},\varepsilon \right\} + \kappa_n \right\}
\]
and
\[
	T_{3,2}(x,w,\tau,\beta, h_1) = (1- \tau) \frac{L_\beta(x,w'\beta)'h_1}{L(x,w'\beta)^2}.
\]

\item 
Let 
\[
	\indicator_{3,3}(x,w,\tau,\beta, \kappa_n) = \ind\left\{ \varepsilon > 
	\max \left\{ \tau - \frac{c}{L(x,w'\beta)}\min\{\tau,1-\tau\}, \frac{\tau-1}{L(x,w'\beta)}+1 \right\} + \kappa_n\right\}
\]
and
\[
	T_{3,3}(x,w,\tau,\beta, h_1) = 0.
\]

\item Let 
\begin{align*}
	\indicator_{3,4}(x,w,\tau,\beta, \kappa_n) &= \ind\left\{ \left|\tau - \frac{c}{L(x,w'\beta)}\min\{\tau,1-\tau\} - \left(\frac{\tau-1}{L(x,w'\beta)}+1\right)\right| \leq \kappa_n \right\}\\
	&\cdot \ind\left\{\min\left\{\tau - \frac{c}{L(x,w'\beta)}\min\{\tau,1-\tau\},\frac{\tau-1}{L(x,w'\beta)}+1\right\} > \varepsilon + \kappa_n \right\}
\end{align*}
and
\[
	T_{3,4}(x,w,\tau,\beta, h_1) = \max\left\{ c\min\{\tau,1-\tau\} \frac{L_\beta(x,w'\beta)'h_1}{L(x,w'\beta)^2}, (1- \tau) \frac{L_\beta(x,w'\beta)'h_1}{L(x,w'\beta)^2} \right\}.
\]

\item Let
\begin{align*}
	\indicator_{3,5}(x,w,\tau,\beta, \kappa_n) &= \ind\left\{ \left|\tau - \frac{c}{L(x,w'\beta)}\min\{\tau,1-\tau\} - \varepsilon\right| \leq \kappa_n \right\}\\
	&\cdot \ind\left\{\min\left\{\tau - \frac{c}{L(x,w'\beta)}\min\{\tau,1-\tau\},\varepsilon\right\} > \frac{\tau-1}{L(x,w'\beta)} +1 + \kappa_n \right\}
\end{align*}
and
\[
	T_{3,5}(x,w,\tau,\beta, h_1) = \max\left\{ c\min\{\tau,1-\tau\} \frac{L_\beta(x,w'\beta)'h_1}{L(x,w'\beta)^2}, 0 \right\}.
\]

\item Let 
\begin{align*}
	\indicator_{3,6}(x,w,\tau,\beta, \kappa_n) &= \ind\left\{ \left|\frac{\tau-1}{L(x,w'\beta)}+1- \varepsilon\right| \leq \kappa_n \right\}\\
	&\cdot \ind\left\{\min\left\{\frac{\tau-1}{L(x,w'\beta)}+1 , \varepsilon \right\} > \tau - \frac{c}{L(x,w'\beta)}\min\{\tau,1-\tau\} + \kappa_n \right\}
\end{align*}
and
\[
	T_{3,6}(x,w,\tau,\beta, h_1) = \max\left\{(1- \tau) \frac{L_\beta(x,w'\beta)'h_1}{L(x,w'\beta)^2},0\right\}.
\]

\item Let $\indicator_{3,7}(x,w,\tau,\beta, \kappa_n)$ equal 1 if at least 2 of the 3 following hold
\begin{align*}
	\left|\tau - \frac{c}{L(x,w'\beta)}\min\{\tau,1-\tau\} - \left(\frac{\tau-1}{L(x,w'\beta)}+1\right)\right|
	&\leq \kappa_n \\
	\left|\tau - \frac{c}{L(x,w'\beta)}\min\{\tau,1-\tau\} - \varepsilon\right| &\leq \kappa_n \\
	\left|\frac{\tau-1}{L(x,w'\beta)}+1- \varepsilon\right| &\leq \kappa_n
\end{align*}
and zero otherwise. Let
\[
	T_{3,7}(x,w,\tau,\beta, h_1) = \max\left\{ c\min\{\tau,1-\tau\} \frac{L_\beta(x,w'\beta)'h_1}{L(x,w'\beta)^2}, (1- \tau) \frac{L_\beta(x,w'\beta)'h_1}{L(x,w'\beta)^2},0 \right\}.
\]

\end{enumerate}

\subsection*{The function $T_4$}

Finally we define $T_4(x,w,\tau,\beta,h_1,\kappa_n)$ in terms of $T_3(x,w,\tau,\beta,h_1,\kappa_n)$ and two indicator functions. Recall from the proof of proposition \ref{prop:CQTE convergence} that
\[
	S_3(x,w,\tau, \beta) = \max\left\{\tau - \frac{c}{L(x,w'\beta)}\min\{\tau,1-\tau\},\frac{\tau-1}{L(x,w'\beta)}+1,\varepsilon\right\}.
\]
Then,
\begin{align*}
	T_4(x,w,\tau,\beta,h_1,\kappa_n)
	&= T_3(x,w,\tau,\beta,h_1,\kappa_n) \cdot \indicator \Big( S_3(x,w,\tau,\beta) < 1-\varepsilon  - \kappa_n \Big) \\
	&\qquad + \min\left\{T_3(x,w,\tau,\beta, h_1, \kappa_n),0\right\} \cdot \indicator \Big( \left|S_3(x,w,\tau,\beta) - (1-\varepsilon) \right| \leq \kappa_n \Big).
\end{align*}

\section{Proofs for Section \ref{sec:bootstrap}}\label{sec:BootstrapProofs}

In this appendix we give the proofs for section \ref{sec:bootstrap}. We start with a lemma about the HDD of $\Gamma_1$.

\begin{lemma}[$\Gamma_{1,\theta_0}'(x,w,\tau,h)$ is Lipschitz in $h$] \label{applemma:Gamma1 Lipschitz}
Suppose the assumptions of proposition \ref{prop:CQTE convergence} hold. Let $\kappa_n \to 0$, $n\kappa_n^2  \to \infty$, $\eta_n \to 0$, and $n\eta_n^2 \to \infty$ as $n\to\infty$. Then
\[
	\left\|\widehat{\Gamma}_{1,\theta_0}'(x,w,\tau,\widetilde{h}) - \widehat{\Gamma}_{1,\theta_0}'(x,w,\tau,h)\right\|
	\leq K(x,w,\widehat{\theta}) \cdot \|\widetilde{h} - h\|_\Theta
\]
for some $K(x,w,\widehat{\theta}) = O_p(1)$ defined below.
\end{lemma}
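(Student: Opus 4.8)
The plan is to decompose $\widehat{\Gamma}_{1,\theta_0}'(x,w,\tau,\cdot)$ into its two additive pieces — the part linear in $h_2$ and the part built from $T_2$ (resp.\ $T_4$), which is Lipschitz in $h_1$ — bound the Lipschitz modulus of each uniformly in $\tau$, and then collect the constants. Write $h=(h_1,h_2)$, $\widetilde h = (\widetilde h_1,\widetilde h_2)$, and recall $\|\widetilde h - h\|_\Theta = \|\widetilde h_1 - h_1\| + \|\widetilde h_2 - h_2\|_\infty$. Since $S_2(x,w,\tau,\widehat\beta)$ does not depend on $h$, the difference in the upper-bound component of $\widehat\Gamma_{1,\theta_0}'$ is
\[
	q(x,w)'\bigl(\widetilde h_2 - h_2\bigr)\!\left(S_2(x,w,\tau,\widehat\beta)\right)
	+ q(x,w)'\widehat\gamma'\!\left(S_2(x,w,\tau,\widehat\beta)\right)\!\Bigl(T_2(x,w,\tau,\widehat\beta,\widetilde h_1,\kappa_n) - T_2(x,w,\tau,\widehat\beta,h_1,\kappa_n)\Bigr).
\]
As $S_2 \in [\varepsilon,1-\varepsilon]$, the first term is at most $\|q(x,w)\|\,\|\widetilde h_2 - h_2\|_\infty$ in absolute value, and $|q(x,w)'\widehat\gamma'(S_2(\cdot))| \le \|q(x,w)\| \sup_{\tau\in[\varepsilon,1-\varepsilon]}\|\widehat\gamma'(\tau)\|$, so the only substantive step is to bound the Lipschitz modulus of $T_2(x,w,\tau,\widehat\beta,\cdot,\kappa_n)$ in $h_1$, uniformly in $\tau$.

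For that I would use the seven-branch decomposition $T_1 = \sum_{j=1}^7 T_{1,j}(x,w,\tau,\widehat\beta,h_1)\,\indicator_{1,j}(x,w,\tau,\widehat\beta,\kappa_n)$ from appendix~\ref{sec:HDDformulas}. The indicators $\indicator_{1,j}$ do not involve $h_1$, and each $T_{1,j}(x,w,\tau,\widehat\beta,\cdot)$ is either identically $0$ or a $\min$ or $\max$ of at most three linear functionals $h_1\mapsto a_k'h_1$ whose coefficients satisfy $\|a_k\| \le \max\{c\min\{\tau,1-\tau\},\tau\}\,\|L_\beta(x,w'\widehat\beta)\|/L(x,w'\widehat\beta)^2 \le R_n$, where $R_n := \|L_\beta(x,w'\widehat\beta)\|/L(x,w'\widehat\beta)^2$ and we used $c\le1$, $\tau\le1$. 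By Lemma~\ref{applemma:minmaxlipschitz}, each branch is Lipschitz in $h_1$ with constant bounded by an absolute multiple of $R_n$, uniformly in $\tau$; since the active branch does not depend on $h_1$, so is $T_1(x,w,\tau,\widehat\beta,\cdot,\kappa_n)$. Because $T_2 = T_1\cdot\indicator(S_1 > \varepsilon+\kappa_n) + \max\{T_1,0\}\cdot\indicator(|S_1-\varepsilon|\le\kappa_n)$ with $S_1$ free of $h_1$ and $t\mapsto\max\{t,0\}$ being $1$-Lipschitz, $T_2(x,w,\tau,\widehat\beta,\cdot,\kappa_n)$ inherits the bound. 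The identical reasoning applied to $T_3$ and $T_4$ handles the lower-bound component. Collecting the two components and using $\|(a,b)\|\le|a|+|b|$ yields
\[
	\left\|\widehat{\Gamma}_{1,\theta_0}'(x,w,\tau,\widetilde h) - \widehat{\Gamma}_{1,\theta_0}'(x,w,\tau,h)\right\|
	\le \|q(x,w)\|\Bigl(C_0 + C_1\,\sup_{\tau\in[\varepsilon,1-\varepsilon]}\|\widehat\gamma'(\tau)\|\cdot R_n\Bigr)\,\|\widetilde h - h\|_\Theta
	\equiv K(x,w,\widehat\theta)\,\|\widetilde h - h\|_\Theta
\]
for absolute constants $C_0,C_1$.

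It then remains to verify $K(x,w,\widehat\theta)=O_p(1)$, which follows from three observations: for fixed $(x,w)$ the factor $\|q(x,w)\|$ is a finite constant; by Lemma~\ref{applemma:QRderivatives} (applicable under the assumed rates $\eta_n\to0$, $n\eta_n^2\to\infty$) together with A\ref{assn:quant reg regularity}, $\sup_{\tau\in[\varepsilon,1-\varepsilon]}\|\widehat\gamma'(\tau)\| \le \sup_{\tau\in[\varepsilon,1-\varepsilon]}\|\gamma_0'(\tau)\| + o_p(1) \le B + o_p(1) = O_p(1)$; and, since $\widehat\beta\xrightarrow{p}\beta_0$ by Lemma~\ref{applemma:prop score estimation} and $\beta\mapsto\|L_\beta(x,w'\beta)\|/L(x,w'\beta)^2$ is continuous at $\beta_0$ (as $F$ is continuously differentiable with values in $(0,1)$ by A\ref{assn:prop score}.3, so $L$ and $L_\beta$ are continuous and $L(x,w'\beta_0)>0$), the continuous mapping theorem gives $R_n\xrightarrow{p}\|L_\beta(x,w'\beta_0)\|/L(x,w'\beta_0)^2$, a finite constant. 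Sums and products of $O_p(1)$ sequences are $O_p(1)$, so $K(x,w,\widehat\theta)=O_p(1)$. The only place any real care is needed is the middle paragraph: checking that every one of the fourteen branches of $T_1$ and $T_3$ is Lipschitz in $h_1$ with a constant controlled uniformly in $\tau$ — which Lemma~\ref{applemma:minmaxlipschitz} plus the elementary bounds $c\min\{\tau,1-\tau\}\le1$ and $\tau\le1$ deliver — and that the $\min$/$\max$ and indicator operations forming $T_2$ and $T_4$ do not degrade this.
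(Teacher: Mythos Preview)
Your proposal is correct and follows essentially the same route as the paper's proof: decompose $\widehat{\Gamma}_{1,\theta_0}'$ into the $h_2$-linear piece and the $T_2$ (resp.\ $T_4$) piece, bound the latter via the seven-branch representation of $T_1$ together with Lemma~\ref{applemma:minmaxlipschitz}, and then establish $K(x,w,\widehat\theta)=O_p(1)$ from $\|\widehat\gamma'\|_\infty\le B+o_p(1)$ (Lemma~\ref{applemma:QRderivatives}) and boundedness of $\|L_\beta(x,w'\widehat\beta)\|/L(x,w'\widehat\beta)^2$. The only cosmetic differences are that the paper tracks the explicit constant (obtaining $\overline K(x,w,\widehat\theta)=\|q(x,w)\|(1+16\|\widehat\gamma'\|_\infty\|L_\beta/L^2\|)$ by crudely summing all seven branches) and argues $R_n=O_p(1)$ via $\Prob(\widehat\beta\in\mathcal{B}_\delta)\to1$ together with finiteness of $\sup_{\beta\in\mathcal{B}_\delta}\|L_\beta(x,w'\beta)\|/L(x,w'\beta)^2$, whereas you invoke the continuous mapping theorem directly; both are valid for fixed $(x,w)$.
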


\begin{proof}[Proof of lemma \ref{applemma:Gamma1 Lipschitz}]
We will show that $\widehat{\overline{\Gamma}}_{1,\theta_0}'(x,w,\tau,h)$ is Lipschitz in $h$. The proof for the lower bound is similar. Let $h, \widetilde{h} \in \R^{d_W}\times \mathscr{C}([\varepsilon,1-\varepsilon],\R^{d_q})$. Then
\begin{align}
	&\Big| \widehat{\overline{\Gamma}}_{1,\theta_0}'(x,w,\tau, \widetilde{h}) - \widehat{\overline{\Gamma}}_{1,\theta_0}'(x,w,\tau, h)\Big| \notag \\
	&\leq \|q(x,w)\| \cdot \|\widetilde{h}_2(S_2(x,w,\tau, \widehat{\beta})) - h_{2}(S_2(x,w,\tau, \widehat{\beta}))\| \nonumber \\
	& \quad + |q(x,w)'\widehat{\gamma}'(S_2(x,w,\tau, \widehat{\beta}))| \cdot |T_2(x,w,\tau,\widehat{\beta}, \widetilde{h}_1, \kappa_n) - T_2(x,w,\tau,\widehat{\beta}, h_1, \kappa_n)|\nonumber \\
	&\leq \|q(x,w)\| \cdot \|\widetilde{h}_2 - h_2\|_\infty \nonumber \\
	& \quad + |q(x,w)'\widehat{\gamma}'(S_2(x,w,\tau, \widehat{\beta}))| \cdot |T_2(x,w,\tau,\widehat{\beta}, \widetilde{h}_1, \kappa_n) - T_2(x,w,\tau,\widehat{\beta}, h_1, \kappa_n)|. \label{eq:tau1hat_lipschitz}
\end{align}
The first line follows by the definition of $\widehat{\overline{\Gamma}}_1$, the triangle inequality, and the Cauchy-Schwarz inequality. Next,
\begin{align*}
	|q(x,w)'\widehat{\gamma}'(S_2(x,w,\tau, \widehat{\beta}))|
		&\leq \|q(x,w)\| \cdot \|\widehat{\gamma}'\|_\infty \\
		&\leq \|q(x,w)\|\left(\|\gamma_0'\|_\infty + \|\widehat{\gamma}' - \gamma_0\|_\infty\right) \\
		&= O_p(1).
\end{align*}
The last line follows since $\|\gamma_0'\|_\infty \leq B$ (by $\gamma_0 \in \mathcal{G}$) and $\|\widehat{\gamma}' - \gamma_0'\|_\infty = o_p(1)$ (by lemma \ref{applemma:QRderivatives}). Thus it suffices to show that $T_2(x,w,\tau,\widehat{\beta}, h_1, \kappa_n)$ is Lipschitz in $h_1$. To see this, write
\begin{align*}
	&|T_2(x,w,\tau,\widehat{\beta}, \widetilde{h}_1, \kappa_n) - T_2(x,w,\tau,\widehat{\beta}, h_1, \kappa_n)|\\
	&\leq |T_1(x,w,\tau,\widehat{\beta}, \widetilde{h}_1, \kappa_n) - T_1(x,w,\tau,\widehat{\beta}, h_1, \kappa_n)|\cdot\ind\left(S_1(x,w,\tau,\widehat{\beta}) > \varepsilon  + \kappa_n\right)\\
	&+ |\max\{T_1(x,w,\tau,\widehat{\beta}, \widetilde{h}_1, \kappa_n),0\} - \max\{T_1(x,w,\tau,\widehat{\beta}, h_1, \kappa_n),0\}|\cdot \ind\left(\left|S_1(x,w,\tau,\widehat{\beta}) - \varepsilon\right| \leq \kappa_n \right)\\
	&\leq |T_1(x,w,\tau,\widehat{\beta}, \widetilde{h}_1, \kappa_n) - T_1(x,w,\tau,\widehat{\beta}, h_1, \kappa_n)|\\
	&+ |\max\{T_1(x,w,\tau,\widehat{\beta}, \widetilde{h}_1, \kappa_n),0\} - \max\{T_1(x,w,\tau,\widehat{\beta}, h_1, \kappa_n),0\}|\\
	&\leq 2 \cdot |T_1(x,w,\tau,\widehat{\beta}, \widetilde{h}_1, \kappa_n) - T_1(x,w,\tau,\widehat{\beta}, h_1, \kappa_n)|
\end{align*}
The first inequality follows by the definition of $T_2$ and the triangle inequality. The last inequality follows from lemma \ref{applemma:minmaxlipschitz}. Thus it suffices to show that $T_1$ is Lipschitz in $h_1$.

To see this, consider
\begin{align}
	&|T_1(x,w,\tau,\widehat{\beta}, \widetilde{h}_1,\kappa_n) - T_1(x,w,\tau,\widehat{\beta}, h_1,\kappa_n)| \notag \\
	&\leq \sum_{j=1}^7|T_{1,j}(x,w,\tau,\widehat{\beta}, \widetilde{h}_1) - T_{1,j}(x,w,\tau,\widehat{\beta}, h_1)| \nonumber \\
&\leq 4 \left| c \min\{\tau,1-\tau\} \frac{L_\beta(x,w'\widehat{\beta})'(\widetilde{h}_1-h_1)}{L(x,w'\widehat{\beta})^2} \right| + 4 \left| \tau \frac{L_\beta(x,w'\widehat{\beta})'(\widetilde{h}_1 - h_1)}{L(x,w'\widehat{\beta})^2} \right| \nonumber \\
	&\leq 8 \left\|\frac{L_\beta(x,w'\widehat{\beta})}{L(x,w'\widehat{\beta})^2} \right\| \cdot \|\widetilde{h}_1 - h_1\|.\label{eq:t1hat_lipschitz}
\end{align}
The second line uses the definition of $T_1$ and repeated applications of lemma \ref{applemma:minmaxlipschitz}. The last line follows from $\tau \leq 1$, $c \min\{\tau,1-\tau\} \leq 1$, and the Cauchy-Schwarz inequality. We have
\[
	\left\|\frac{L_\beta(x,w'\widehat{\beta})}{L(x,w'\widehat{\beta})^2} \right\| = O_p(1)
\]
since
\[
	\sup_{ \beta \in \mathcal{B}_\delta} \left\|\frac{L_\beta(x,w'\beta)}{L(x,w'\beta)^2}\right\| < \infty
\]
and $\Prob(\widehat{\beta} \in \mathcal{B}_\delta) \to 1$. Thus $T_1$ is Lipschitz in $h_1$. 

Overall, we have shown that $\widehat{\overline{\Gamma}}_{1,\theta_0}'(x,w,\tau, h)$ is Lipschitz in $h$ with Lipschitz constant equal to
\[
\overline{K}(x,w,\widehat{\theta}) = \|q(x,w)\|\left(1 + 16 \cdot \|\widehat{\gamma}'\|_\infty \left\|\frac{L_\beta(x,w'\widehat{\beta})}{L(x,w'\widehat{\beta})^2} \right\|\right) = O_p(1).
\]
A similar argument can be used to show that $\widehat{\underline{\Gamma}}_{1,\theta_0}'(x,w,\tau, h)$ is Lipschitz in $h$ with the same constant.
Setting $K(x,w,\widehat{\theta})  = \left(\overline{K}(x,w,\widehat{\theta})^2 + \overline{K}(x,w,\widehat{\theta})^2\right)^{1/2} = \sqrt{2} \cdot \overline{K}(x,w,\widehat{\theta})$ concludes the proof.
\end{proof}

Next we prove proposition \ref{prop:ATE analytical boot}. This is our main result on the analytical bootstrap for mean potential outcomes. In section \ref{sec:bootstrap} we use this result to do bootstrap inference on our ATE bounds. There we discussed how the asymptotic distribution of our mean potential outcome bounds comes from two terms. The first term requires using HDDs while the second term is standard. We consider each term one at a time in the following two lemmas.

\begin{lemma}[Non-standard component]\label{applemma:ATEcomp1 analytical boot}
Suppose the assumptions of proposition \ref{thm:ATE convergence} hold. Suppose $\kappa_n \to 0$, $n \kappa_n^2 \to \infty$, $\eta_n \to 0$, and $n\eta_n^2 \to \infty$ as $n\to\infty$. Then
\[
	\widehat{\Gamma}_{3,\theta_0}'(x, \sqrt{n}( \widehat{\theta}^* - \widehat{\theta})) \overset{P}{\rightsquigarrow} \overline{\Gamma}_{3,\theta_0}'(x,\mathbf{Z}_1).
\]
\end{lemma}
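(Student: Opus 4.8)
The goal is to show that the analytical estimator $\widehat{\Gamma}_{3,\theta_0}'(x,\cdot)$, evaluated at the bootstrap draw $\sqrt{n}(\widehat{\theta}^* - \widehat{\theta})$, converges weakly in probability (conditional on the data) to $\Gamma_{3,\theta_0}'(x,\mathbf{Z}_1)$. The natural strategy is to invoke the framework of \cite{FangSantos2014}, specifically their theorem 3.2, which says that if (i) $\sqrt{n}(\widehat{\theta}-\theta_0) \rightsquigarrow \mathbf{Z}_1$ with $\mathbf{Z}_1$ tight and supported on a subspace $\mathbb{D}_0$, (ii) the bootstrap is consistent for the first step, $\mathbf{Z}_n^* := \sqrt{n}(\widehat{\theta}^*-\widehat{\theta}) \overset{P}{\rightsquigarrow} \mathbf{Z}_1$, and (iii) $\widehat{\Gamma}_{3,\theta_0}'$ is a consistent estimator of the Hadamard directional derivative $\Gamma_{3,\theta_0}'$ in the appropriate uniform-over-compacta sense, then $\widehat{\Gamma}_{3,\theta_0}'(x,\mathbf{Z}_n^*) \overset{P}{\rightsquigarrow} \Gamma_{3,\theta_0}'(x,\mathbf{Z}_1)$. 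Condition (i) is lemma \ref{lemma:prelim estimators}; condition (ii) is the cited theorem 3.6.1 of \cite{VaartWellner1996}; so the real work is condition (iii).

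First I would reduce the problem to the CATE-level HDD. Recall $\widehat{\overline{\Gamma}}_{3,\theta_0}'(x,h) = \frac{1}{n}\sum_{i=1}^n \widehat{\overline{\Gamma}}_{2,\theta_0}'(x,W_i,h)$ and $\overline{\Gamma}_{3,\theta_0}'(x,h) = \int_{\mathcal{W}} \overline{\Gamma}_{2,\theta_0}'(x,w,h)\,dF_W(w)$, and that $\widehat{\overline{\Gamma}}_{2,\theta_0}'(x,w,h) = \int_0^1 \widehat{\overline{\Gamma}}_{1,\theta_0}'(x,w,\tau,h)\,d\tau$ with analogous population versions. So it suffices to control $\widehat{\overline{\Gamma}}_{1,\theta_0}'(x,w,\tau,h)$ uniformly. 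The consistency requirement amounts to showing that for any compact $K \subset \R^{d_W}\times\mathscr{C}([\varepsilon,1-\varepsilon],\R^{d_q})$,
\[
	\sup_{h \in K}\left| \frac{1}{n}\sum_{i=1}^n \widehat{\overline{\Gamma}}_{2,\theta_0}'(x,W_i,h) - \int_{\mathcal{W}}\overline{\Gamma}_{2,\theta_0}'(x,w,h)\,dF_W(w) \right| = o_p(1).
\]
I would split this into three pieces: (a) replacing $\widehat{\theta}$ by $\theta_0$ inside the analytical formula (i.e.\ comparing $\widehat{\overline{\Gamma}}_{2,\theta_0}'(x,w,h)$, which uses $\widehat{\beta},\widehat{\gamma}'$, with the ``oracle'' version using $\beta_0,\gamma_0'$); (b) comparing the oracle-$\kappa_n$ version with the true HDD $\overline{\Gamma}_{2,\theta_0}'(x,w,h)$ (which has $\kappa_n$ replaced by $0$ and $\widehat{\gamma}'$ by $\gamma_0'$); and (c) the LLN step $\frac{1}{n}\sum_i \overline{\Gamma}_{2,\theta_0}'(x,W_i,h) \to \int \overline{\Gamma}_{2,\theta_0}'(x,w,h)\,dF_W(w)$ uniformly in $h$. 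For (a), the key inputs are: $\widehat{\beta} \xrightarrow{p} \beta_0$ and $\|\widehat{\gamma}'-\gamma_0'\|_\infty = o_p(1)$ (lemma \ref{applemma:QRderivatives}); continuity of $S_2,S_4$ in $\beta$; Lipschitz continuity of $T_1$ (hence $T_2$) in $\beta$ away from the kink sets; and the moment bound A\ref{assn:propscoremoment} to dominate the $\|L_\beta/L^2\|$ factors when integrating over $w$. For (b), I would argue that the indicator functions $\indicator_{1,j}(x,w,\tau,\beta_0,\kappa_n)$ converge (as $\kappa_n\to 0$) to the correct region indicators $\indicator_{1,j}(x,w,\tau,\beta_0,0)$ for $(\tau,w)$ outside a set whose $F_W\otimes\mathrm{Leb}$-measure shrinks to zero — the only issue being boundary/tie cases — and that $\max\{T_1,0\}\cdot\ind(|S_1-\varepsilon|\le\kappa_n)$ terms behave correctly in the limit; combined with dominated convergence (the dominating function comes from the Lipschitz bounds in the proof of proposition \ref{prop:CATE convergence} together with A\ref{assn:propscoremoment} and A\ref{assn:quant reg}.4), this kills (b). For (c), uniformity in $h$ over compacta follows because $h \mapsto \overline{\Gamma}_{2,\theta_0}'(x,w,h)$ is Lipschitz in $h$ with an integrable Lipschitz constant $K(x,w,\theta_0)$ — this is essentially lemma \ref{applemma:Gamma1 Lipschitz} integrated over $\tau$ and $w$ — so a standard bracketing/ULLN argument (or just compactness of $K$ plus the Lipschitz property plus a pointwise LLN) gives the uniform convergence.

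Having established (iii), I would then assemble: since $\widehat{\overline{\Gamma}}_{3,\theta_0}'(x,\cdot)$ is Lipschitz in $h$ with $O_p(1)$ constant (again via lemma \ref{applemma:Gamma1 Lipschitz} and A\ref{assn:propscoremoment}), and since $\mathbf{Z}_n^* \overset{P}{\rightsquigarrow} \mathbf{Z}_1$ with $\mathbf{Z}_1$ concentrated on the tangent space $\R^{d_W}\times\mathscr{C}([\varepsilon,1-\varepsilon],\R^{d_q})$, theorem 3.2 of \cite{FangSantos2014} (or equivalently a direct continuous-mapping-plus-extended-continuous-mapping argument using the uniform consistency of $\widehat{\Gamma}_{3,\theta_0}'$ on compacta and asymptotic tightness of $\mathbf{Z}_n^*$) yields $\widehat{\overline{\Gamma}}_{3,\theta_0}'(x,\mathbf{Z}_n^*) \overset{P}{\rightsquigarrow} \overline{\Gamma}_{3,\theta_0}'(x,\mathbf{Z}_1)$, and similarly for the lower bound. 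Stacking the two gives the stated result for $\Gamma_{3,\theta_0}' = (\overline{\Gamma}_{3,\theta_0}',\underline{\Gamma}_{3,\theta_0}')$.

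The main obstacle I anticipate is step (b): controlling the behavior of the $\kappa_n$-slackened indicator functions near the kink sets, i.e.\ proving that the set of $(\tau,w)$ for which $\indicator_{1,j}(x,w,\tau,\beta_0,\kappa_n)$ disagrees with the limiting region indicator has vanishing measure, and that the ``max'' correction terms in $T_2,T_4$ (the pieces active exactly on ties) do not contribute in the limit. This is where the structure of the problem is genuinely non-standard: the HDD is a sum of products of region indicators and piecewise-linear-in-$h_1$ terms, and one must show the estimated regions converge without the slackness either being too aggressive (wiping out a positive-measure region) or too timid (missing ties). The conditions $\kappa_n \to 0$ and $n\kappa_n^2\to\infty$ are exactly what make this work: $\kappa_n \to 0$ ensures no positive-measure region is lost asymptotically, while $n\kappa_n^2 \to\infty$ (together with $\sqrt{n}$-consistency of $\widehat{\beta}$) ensures that estimation error in $\widehat{\beta}$ does not flip the slackened indicators relative to their $\beta_0,\kappa_n$ values. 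A clean way to organize this is to first prove the deterministic convergence at $\beta_0$ (letting $\kappa_n\to 0$, using continuity of $L(x,w'\beta_0)$ in $(\tau,w)$ and dominated convergence), and then separately absorb the $\widehat{\beta}\to\beta_0$ perturbation using the rate condition.
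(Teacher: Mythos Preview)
Your proposal is essentially correct and follows the same high-level strategy as the paper: invoke theorem 3.2 of \cite{FangSantos2014}, verify assumptions 1--3 via the earlier lemmas and theorem 3.6.1 of \cite{VaartWellner1996}, and then establish assumption 4 using the Lipschitz property (lemma \ref{applemma:Gamma1 Lipschitz}) together with consistency of the HDD estimator. The one organizational difference worth noting is that the paper verifies assumption 4 via remark 3.4 of \cite{FangSantos2014}---i.e., Lipschitz in $h$ plus \emph{pointwise} consistency $\widehat{\Gamma}_{3,\theta_0}'(x,h) \xrightarrow{p} \Gamma_{3,\theta_0}'(x,h)$ for each fixed $h$---rather than directly proving uniform-over-compacta consistency as you propose; for the pointwise consistency the paper uses a four-term decomposition ($R_1$--$R_4$: separating the $h_2(S_2)$ piece, the $\widehat{\gamma}' \to \gamma_0'$ replacement, a ULLN over $\beta \in \mathcal{B}_\delta$, and finally the $(\widehat{\beta},\kappa_n) \to (\beta_0,0)$ convergence) which is finer than your (a)/(b)/(c) split but covers the same ground. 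Your anticipated obstacle---the indicator convergence near kink sets under the rate conditions on $\kappa_n$---is exactly the content of the paper's $R_4$ step, and your explanation of why $\kappa_n \to 0$ and $n\kappa_n^2 \to \infty$ are both needed matches the paper's argument.
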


\begin{proof}[Proof of lemma \ref{applemma:ATEcomp1 analytical boot}]
We prove this by applying theorem 3.2 in \cite{FangSantos2014}. To do this we must verify their assumptions 1--4.
\begin{enumerate}
\item Their assumption 1 requires $\Gamma_3(x,\theta)$ to be HDD. We showed this in the proof of theorem \ref{thm:ATE convergence}.

\item Their assumption 2 is about the asymptotic distribution of the first step estimator $\widehat{\theta}$. This holds by our lemma \ref{lemma:prelim estimators} in appendix \ref{sec:prelimest}.

\item Their assumption 3 is about validity of the bootstrap for $\widehat{\theta}$. This holds by theorem 3.6.1 in \cite{VaartWellner1996}.
\end{enumerate}
Finally, in their remark 3.4, they note that sufficient conditions for their assumption 4 are
\begin{enumerate}
\item (Smoothness) $\widehat{\Gamma}_{3,\theta_0}'(x,h)$ is Lipschitz in $h$.

\item (Consistency) $\| \widehat{\Gamma}_{3,\theta_0}'(x,h) - \Gamma_{3,\theta_0}(x,h) \| = o_p(1)$ for any $h$.
\end{enumerate}
These are properties of the HDD estimator $\widehat{\Gamma}_{3,\theta_0}'(x,h)$. We finish this proof by verifying that these properties hold in our setting.

\bigskip

\textbf{Part 1: (Smoothness) $\widehat{\Gamma}_{3,\theta_0}'(x,h)$ is Lipschitz in $h$.} Recall that
\[
	\widehat{\overline{\Gamma}}_{3,\theta_0}'(x, h) = \avg \int_0^1 \widehat{\overline\Gamma}_{1,\theta_0}'(x, W_i,\tau, h) \; d\tau.
\]
So
\begin{align*}
		\left|\widehat{\overline\Gamma}_{3,\theta_0}'(x,\widetilde{h}) - \widehat{\overline\Gamma}_{3,\theta_0}'(x,h)\right| 
		&\leq \avg \int_0^1 \left|\widehat{\overline\Gamma}_{1,\theta_0}'(x,W_i,\tau, \widetilde{h}) - \widehat{\overline\Gamma}_{1,\theta_0}'(x,W_i,\tau, h)\right| \; d\tau\\
		&\leq \left(\avg \int_0^1 \overline{K}_1(x,W_i,\widehat{\theta})  \; d\tau\right)\|\widetilde{h} - h\|_\Theta \\
		&= \left(\avg \overline{K}_1(x,W_i,\widehat{\theta}) \right)\|\widetilde{h} - h\|_\Theta.
\end{align*}
The second line follows by lemma \ref{applemma:Gamma1 Lipschitz}. Next we'll show that
\[
	\avg \overline{K}_1(x,W_i,\widehat{\theta}) = O_p(1).
\]
We have
\begin{align*}
	&\avg \overline{K}_1(x,W_i,\widehat{\theta}) \\
	 &= \avg \|q(x,W_i)\|\left(1 + 16 \cdot \|\widehat{\gamma}'\|_\infty \left\|\frac{L_\beta(x,W_i'\widehat{\beta})}{L(x,W_i'\widehat{\beta})^2} \right\|\right)\\
	&\leq \left( \avg \|q(x,W_i)\| \right) + 16 \cdot \|\widehat{\gamma}'\|_\infty \left(\avg \|q(x,W_i)\|^2\right)^{1/2} \left(  \avg\left\|\frac{L_\beta(x,W_i'\widehat{\beta})}{L(x,W_i'\widehat{\beta})^2} \right\|^2 \right)^{1/2}.
\end{align*}
The first line follows by the definition of $\overline{K}_1$. The second line follows by the Cauchy-Schwarz inequality. By $\Exp(\|q(x,W)\|^2)<\infty$, by $\Prob(\widehat{\beta} \in \mathcal{B}_\delta) \to 1$, and by
\[
	\Exp \left( \sup_{\beta\in\mathcal{B}_\delta} \left\|\frac{L_\beta(x,W'\beta)}{L(x,W'\beta)^2} \right\|^2\right) <\infty
\]
we have
\[
	\avg \int_0^1 \overline{K}_1(x,W_i,\widehat{\theta}) \; d\tau = O_p(1).
\] 
Thus $\widehat{\overline{\Gamma}}_{3,\theta_0}'(x,h)$ is Lipschitz in $h$. It can be similarly shown that $\widehat{\underline{\Gamma}}_{3,\theta_0}'(x,h)$ is Lipschitz in $h$.

\bigskip

\textbf{Part 2: Consistency of $\widehat{\Gamma}_{3,\theta_0}'(x,h)$.} Next we show that
\[
	\widehat{\overline{\Gamma}}_{3,\theta_0}'(x,h) \xrightarrow{p} \overline{\Gamma}_{3,\theta_0}'(x,h).
\]
To do this we use the triangle inequality to decompose their difference into four different terms:
\[
	\left|\widehat{\overline{\Gamma}}_{3,\theta_0}'(x, h) - \overline{\Gamma}_{3,\theta_0}'(x, h)\right|
	\leq R_1 + R_2 + R_3 + R_4
\]
where
\begin{align*}
	R_1
		&= \left|\avg  q(x,W_i)'\int_0^1 h_2(S_2(x,W_i,\tau, \widehat{\beta})) \; d\tau - \Exp \left[q(x,W)' \int_0^1 h_{2}(S_2(x,W,\tau, \beta_0)) \; d\tau \right]\right| \\[1em]
	R_2
		&= \Bigg| \avg \left[q(x,W_i)'\int_0^1\widehat{\gamma}'(S_2(x,W_i,\tau, \widehat{\beta}))T_2(x,W_i,\tau,\widehat{\beta}, h_1, \kappa_n) \; d\tau \right. \\
		&\hspace{20mm} \left. - q(x,W_i)'\int_0^1\gamma_0'(S_2(x,W_i,\tau, \widehat{\beta}))T_2(x,W_i,\tau,\widehat{\beta}, h_1, \kappa_n) \; d\tau\right] \Bigg| \\[1em]
	R_3
		&= \Bigg| \avg q(x,W_i)'\int_0^1\gamma_0'(S_2(x,W_i,\tau, \widehat{\beta}))T_2(x,W_i,\tau,\widehat{\beta}, h_1, \kappa_n) \; d\tau \\
		&\hspace{20mm} - \Exp \left[q(x,W)'\int_0^1\gamma_0'(S_2(x,W,\tau, \widehat\beta))T_2(x,W,\tau,\widehat\beta, h_1, \kappa_n) \; d\tau\right] \Bigg| \\[1em]
	R_4
		&= \Bigg| \Exp \left[q(x,W)'\int_0^1\gamma_0'(S_2(x,W,\tau, \widehat\beta))T_2(x,W,\tau,\widehat\beta, h_1, \kappa_n) \; d\tau\right] \\
		&\hspace{20mm} - \Exp \left[q(x,W)'\int_0^1\gamma_0'(S_2(x,W,\tau, \beta_0))T_2(x,W,\tau,\beta_0, h_1,0) \; d\tau\right] \Bigg|.
\end{align*}

\bigskip

\emph{\textbf{Convergence of $R_1$.}} $h_2$ is continuous on compact domain $[\varepsilon,1-\varepsilon]$. Hence $h_2$ is bounded on $[\varepsilon,1-\varepsilon]$. Since $S_2$ lies between $[\varepsilon, 1-\varepsilon]$, the composite function $h_2(S_2(x,w,\tau, \beta))$ is thus bounded uniformly over $(x,w,\tau) \in \{0,1\}\times\mathcal{W}\times(0,1)$. This composite function is also continuous in $\beta$ for any $(x,w,\tau)$, since $S_2$ is continuous in $\beta$. Therefore, by the dominated convergence theorem,
\[
	\int_0^1 h_{2}(S_2(x,w,\tau, \beta)) \; d\tau
\]
is continuous in $\beta$ for all $(x,w) \in \{0,1\}\times\mathcal{W}$. Moreover, it has a bounded envelope:
\[
	\Exp \left( \sup_{\beta \in \mathcal{B}} \int_0^1 h_2(S_2(x,W,\tau,\beta)) \; d\tau \right) < \infty.
\]
These properties plus compactness of $\mathcal{B}$ imply that 
\[
	\left\{\int_0^1 h_{2}(S_2(x,W,\tau, \beta)) \; d\tau: x\in\{0,1\}, \beta \in \mathcal{B}\right\}
\]
is Glivenko-Cantelli, by example 19.8 in \cite{Vaart2000}. By $\Exp ( \|q(x,W)\|^2) < \infty$ (A\ref{assn:quant reg}.4) and by corollary 9.27 part (ii) in \cite{Kosorok2007}, the class of functions
\[
	\left\{q(x,W)'\int_0^1 h_{2}(S_2(x,W,\tau, \beta)) \; d\tau: x\in\{0,1\}, \beta \in \mathcal{B}\right\}
\]
is also Glivenko-Cantelli. Hence
\begin{align*}
	R_1 &= \left|\avg  q(x,W_i)'\int_0^1 h_2(S_2(x,W_i,\tau, \widehat{\beta})) \; d\tau - \Exp \left[q(x,W)' \int_0^1 h_{2}(S_2(x,W,\tau, \beta_0)) \; d\tau \right]\right|\\
	&\leq \sup_{\beta\in\mathcal{B}}\left|\avg  q(x,W_i)'\int_0^1 h_{2}(S_2(x,W_i,\tau, \beta)) \; d\tau - \Exp \left[q(x,W)'\int_0^1 h_{2}(S_2(x,W,\tau, \beta)) \; d\tau\right]\right|\\
	& + \left|\int_\mathcal{W} q(x,w)'\int_0^1 h_{2}(S_2(x,w,\tau, \widehat{\beta})) \; d\tau  \; dF_W(w) - \int_\mathcal{W} q(x,w)'\int_0^1 h_{2}(S_2(x,w,\tau, \beta_0)) \; d\tau  \; dF_W(w)\right| \\
	&= o_p(1) + o_p(1)  \\
	&= o_p(1).
\end{align*}
The second line follows by the triangle inequality. The first term in that line is $o_p(1)$ by the Glivenko-Cantelli property. The second term is $o_p(1)$ by its continuity in $\beta$ and $\widehat{\beta} \xrightarrow{p} \beta_0$.

\bigskip

\emph{\textbf{Convergence of $R_2$}}. We have
\begin{align*}
	R_2 &= \Bigg|\avg \left[q(x,W_i)'\int_0^1\widehat{\gamma}'(S_2(x,W_i,\tau, \widehat{\beta}))T_2(x,W_i,\tau,\widehat{\beta}, h_1, \kappa_n) \; d\tau \right. \\
	&\hspace{60mm} \left. - q(x,W_i)'\int_0^1\gamma_0'(S_2(x,W_i,\tau, \widehat{\beta}))T_2(x,W_i,\tau,\widehat{\beta}, h_1, \kappa_n) \; d\tau\right]\Bigg| \\
	&\leq \avg \left\|q(x,W_i)\right\|\int_0^1 \left\|\widehat{\gamma}'(S_2(x,W_i,\tau, \widehat{\beta}))-\gamma_0'(S_2(x,W_i,\tau, \widehat{\beta}))\right\| \left|T_2(x,W_i,\tau,\widehat{\beta}, h_1, \kappa_n) \right| \; d\tau\\
	&\leq \left\|\widehat{\gamma}'-\gamma_0'\right\|_\infty \avg \left\|q(x,W_i)\right\|\int_0^1 \left|T_2(x,W_i,\tau,\widehat{\beta}, h_1, \kappa_n)\right| \; d\tau\\
	&\leq  o_p(1) \times \left(\avg \left\|q(x,W_i)\right\|^2\right)^{1/2} \times \left(\avg \left(\int_0^1 \left|T_2(x,W_i,\tau,\widehat{\beta}, h_1, \kappa_n)\right| \; d\tau\right)^2\right)^{1/2}.
\end{align*}
The first line is the definition of $R_2$. The second line follows by the triangle inequality and the Cauchy-Schwarz inequality. The last line follows by uniform convergence of $\widehat{\gamma}'$ to $\gamma_0'$ (lemma \ref{applemma:QRderivatives}) and the Cauchy-Schwarz inequality. By A\ref{assn:quant reg}.4,
\[
	\left(\avg \left\|q(x,W_i)\right\|^2\right)^{1/2} = O_p(1).
\]
Also, 
\begin{align*}
	\int_0^1 \left|T_2(x,W_i,\tau,\widehat{\beta}, h_1, \kappa_n)\right| \; d\tau &\leq \int_0^1 \left|T_1(x,W_i,\tau,\widehat{\beta}, h_1, \kappa_n)\right| \; d\tau\\
	&\leq \int_0^1 \left|\max\left\{- c\min\{\tau,1-\tau\} \frac{L_\beta(x,W_i'\widehat{\beta})'h_1}{L(x,W_i'\widehat{\beta})^2}, - \tau \frac{L_\beta(x,W_i'\widehat{\beta})'h_1}{L(x,W_i'\widehat{\beta})^2}, 0 \right\}\right| \; d\tau\\
	&\leq \int_0^1 (c\min\{\tau,1-\tau\} + \tau) \; d\tau \left|\frac{L_\beta(x,W_i'\widehat{\beta})'h_1}{L(x,W_i'\widehat{\beta})^2}\right|\\
	&= \frac{c+2}{4} \left|\frac{L_\beta(x,W_i'\widehat{\beta})'h_1}{L(x,W_i'\widehat{\beta})^2}\right|.
\end{align*}
The first line follows by the definition of $T_2$. The second line follows by the definition of $T_1$ (notice the maximum of the three values $T_1$ can take is an upper bound for it). By A\ref{assn:prop score},
\[
	\left(\frac{L_\beta(x,w'\beta)'h_1}{L(x,w'\beta)^2}\right)^2
\]
is continuous in $\beta$ for any $w \in \mathcal{W}$. Moreover, this term has a bounded envelope in a neighborhood of $\beta_0$ by our assumption that
\[
	\Exp \left( \sup_{\beta\in\mathcal{B}_\delta}\left\|\frac{L_\beta(x,W'\beta)}{L(x,W'\beta)^2}\right\|^2\right) <\infty.
\]
Hence
\begin{align*}
	\avg \left(\int_0^1 \left|T_2(x,W_i,\tau,\widehat{\beta}, h_1, \kappa_n)\right| \; d\tau\right)^2 &\leq \frac{(c+2)^2}{16}\avg \left\|\frac{L_\beta(x,W_i'\widehat\beta)}{L(x,W_i'\widehat\beta)^2}\right\|^2 \|h_1\|^2 \\
	&= O_p(1)
\end{align*}
where the last line follows by the uniform law of large numbers, as in example 19.8 in \cite{Vaart2000}. Thus $R_2 = o_p(1)$.

\bigskip

\emph{\textbf{Convergence of $R_3$}}. For fixed $h_1$ and $x$, let
\[
	g_n(w,\beta) = q(x,w)'\int_0^1\gamma_0'(S_2(x,w,\tau, \beta))T_2(x,w,\tau,\beta, h_1, \kappa_n) \; d\tau.
\]
Define
\[
	R_3(\beta) = \left| \frac{1}{n} \sum_{i=1}^n g_n(W_i,\beta) - \Exp[g_n(W,\beta)] \right|.
\]
Then $R_3 = R_3(\widehat{\beta})$. We want to show that $R_3 = o_p(1)$. For any $\epsilon > 0$,
\begin{align*}
	\Prob(| R_3 | \geq \epsilon)
		&= \Prob(| R_3 | \geq \epsilon, \widehat{\beta} \in \mathcal{B}_\delta) + \Prob(| R_3 | \geq \epsilon, \widehat{\beta} \notin \mathcal{B}_\delta) \\
		&\leq \Prob \left( \sup_{\beta \in \mathcal{B}_\delta} | R_3(\beta) | \geq \epsilon, \widehat{\beta} \in \mathcal{B}_\delta \right) + \Prob(\widehat{\beta} \notin \mathcal{B}_\delta) \\
		&\leq \Prob \left( \sup_{\beta \in \mathcal{B}_\delta} | R_3(\beta) | \geq \epsilon \right) + \Prob(\widehat{\beta} \notin \mathcal{B}_\delta).
\end{align*}
The second term converges to zero by consistency of $\widehat{\beta}$. Thus it suffices to show that the first term converges to zero. That is, we want to show that
\[
	\sup_{\beta \in \mathcal{B}_\delta} \left| \frac{1}{n} \sum_{i=1}^n g_n(W_i,\beta) - \Exp[g_n(W,\beta)] \right| 
	= o_p(1).
\]
This follows from a uniform law of large numbers. Specifically, we use theorem 4.2.2 in \cite{Amemiya1985}.
There are two main properties required to apply this theorem:
\begin{enumerate}
\item A dominance condition:
\[
	\Exp \left(\sup_{\beta\in\mathcal{B}_\delta}|g_n(W,\beta)|^2\right) < \infty.
\]

\item A continuity condition: $g_n(w,\beta)$ is continuous at any $\beta \in \mathcal{B}_\delta$ for all $w \in \mathcal{W}$. 
\end{enumerate}
So we conclude the proof by verifying these two properties.

\bigskip

\emph{The dominance condition}. We have
\begin{align*}
	\Exp \left(\sup_{\beta\in\mathcal{B}_\delta}|g_n(W,\beta)|^2\right)
	&= \Exp \left( \sup_{\beta \in \mathcal{B}_\delta} \left|q(x,W)'\int_0^1\gamma_0'(S_2(x,W,\tau, \beta))T_2(x,W,\tau,\beta, h_1, \kappa_n) \; d\tau\right|^2\right)\\
	&\leq \Exp \left(\|q(x,W)\|^2 \sup_{\beta\in\mathcal{B}_\delta} \int_0^1 \|\gamma_0'(S_2(x,W,\tau, \beta))\|^2 | T_2(x,W,\tau,\beta, h_1, \kappa_n)|^2 \; d\tau\right)\\
	&\leq \Exp \left(\|q(x,W)\|^2 B^2 \left(\frac{c+2}{4}\right)^2 \sup_{\beta\in\mathcal{B}_\delta} \left|\frac{L_\beta(x,W'\beta)'h_1}{L(x,W'\beta)^2}\right|^2\right)\\
	&\leq B^2 \left(\frac{c+2}{4}\right)^2 \Exp \left(\|q(x,W)\|^4\right)^{1/2}\Exp \left(\sup_{\beta\in\mathcal{B}_\delta} \left\|\frac{L_\beta(x,W'\beta)}{L(x,W'\beta)^2}\right\|^4\right)^{1/2} \|h_1\|^2\\
	&<\infty.
\end{align*}
The second and fourth lines follow by the Cauchy-Schwarz inequality.

\bigskip

\emph{The continuity condition.} Fix $w \in \mathcal{W}$. We next show continuity of
\[
	g_n(w,\beta) = q(x,w)'\int_0^1\gamma_0'(S_2(x,w,\tau, \beta))T_2(x,w,\tau,\beta, h_1, \kappa_n) \; d\tau
\]
in $\beta$. First note that
\[
	\lim_{m \rightarrow \infty} g_n(w,\beta_m) = 
	q(x,w)' \left( \lim_{m \rightarrow \infty} \int_0^1 \gamma_0'(S_2(x,w,\tau, \beta_m))T_2(x,w,\tau,\beta_m, h_1, \kappa_n) \; d\tau \right).
\]
We now show that we can bring the limit inside the integral by applying the dominated convergence theorem. First note that the integrand satisfies a dominance condition, similar to our analysis above. The other condition is pointwise convergence of the integrand for all $\tau \in (0,1)$ except possibly on a set of Lebesgue measure zero. Thus we need to show that
\[
	\lim_{m \rightarrow \infty} \gamma_0'(S_2(x,w,\tau, \beta_m))T_2(x,w,\tau,\beta_m, h_1, \kappa_n) =
	\gamma_0'(S_2(x,w,\tau, \beta))T_2(x,w,\tau,\beta, h_1, \kappa_n)
\]
for all $\tau \in (0,1)$ except possibly a set of Lebesgue measure zero. We do this by showing that
\[
	\gamma_0'(S_2(x,w,\tau, \beta))T_2(x,w,\tau,\beta, h_1, \kappa_n)
\]
is continuous in $\beta$, for all $\tau \in (0,1)$ except possibly a set of Lebesgue measure zero. This term is the product of two pieces, so it suffices to show that each piece separately is continuous. After we do this, the overall proof will be complete.

\bigskip

\emph{Piece 1}: By $\gamma_0 \in \mathcal{G}$ and by continuity of $S_2(x,w,\tau,\beta)$ in $\beta$, the function $\gamma_0'(S_2(x,w,\tau, \beta))$ is continuous. %

\bigskip

\emph{Piece 2}: Next we'll show that $T_2(x,w,\tau,\beta,h_1,\kappa_n)$ is continuous in $\beta$ for all $\tau \in (0,1)$ except a set of Lebesgue measure zero. Recall that, omitting the arguments of the functions,
\[
	T_2 = T_1 \cdot \indicator( S_1 - \varepsilon  > \kappa_n) + \max \{ T_1, 0 \} \cdot \indicator( | S_1 - \varepsilon | \leq \kappa_n ).
\]
So we'll start by studying continuity of $T_1(x,w,\tau,\beta, h_1, \kappa_n)$ in $\beta$. Recall equation \eqref{eq:T1summationRepresentation}:
\[
	T_1(x,w,\tau,\beta,h_1,\kappa_n) = \sum_{j=1}^7 T_{1,j}(x,w,\tau,\beta,h_1) \cdot \ind_{1,j}(x,w,\tau,\beta,\kappa_n).
\]
Given our assumptions on the propensity score (A\ref{assn:prop score}), $T_{1,j}(x,w,\tau,\beta,h_1)$ is a composition of functions that are continuous in $\beta$. Thus it is also continuous in $\beta$. This holds for all $j=1,\ldots,7$. For example,
\[
	T_{1,1}(x,w,\tau,\beta, h_1) = -c\min\{\tau,1-\tau\} \frac{L_\beta(x,w'\beta)'h_1}{L(x,w'\beta)^2},
\]
which is continuous in $\beta$ by continuity of $L(x,\cdot)$. This holds for all $\tau \in (0,1)$.

Next consider the indicator functions $\ind_{1,j}(x,w,\tau,\beta, \kappa_n)$. Fix $\beta \in \mathcal{B}_\delta$. We will show that these functions are constant, and therefore continuous at $\beta$, except on a set of $\tau$'s of Lebesgue measure zero.

First consider
\[
	\indicator_{1,1}(x,w,\tau,\beta, \kappa_n) = \ind\left\{\tau + \frac{c}{L(x,w'\beta)} \min\{\tau,1-\tau\}
	< \min \left\{ \frac{\tau}{L(x,w'\beta)},1-\varepsilon \right\} - \kappa_n\right\}
\]
Recall that $w \in \mathcal{W}$ is fixed, along with $x$ and $\kappa_n >0$.
Let
\[
	\mathcal{T}_1 = \left\{ \tau \in (0,1) : \tau + \frac{c}{L(x,w'\beta)}\min\{\tau,1-\tau\} = \min\left\{\frac{\tau}{L(x,w'\beta)},1-\varepsilon \right\} - \kappa_n \right\}.
\]
Then $\indicator_{1,1}(x,w,\tau,\beta, \kappa_n)$ is continuous at $\beta$ for any $\tau \notin \mathcal{T}_1$. Moreover, $\mathcal{T}_1$ has Lebesgue measure zero. To see this, let $\bar\tau = (1-\varepsilon) L(x,w'\beta)$. Suppose $\bar{\tau} \leq 1/2$. Then
\begin{align*}
	\mathcal{T}_1
		&= \left\{ \tau \in (0, \bar{\tau}] : \tau\left(1 - \frac{1-c}{L(x,w'\beta)}\right) = -\kappa_n\right\}\\
		&\qquad \cup \left\{ \tau \in (\bar{\tau},1/2] : \tau\left(1 + \frac{c}{L(x,w'\beta)}\right) = 1-\varepsilon-\kappa_n\right\}\\
		&\qquad \cup \left\{ \tau \in (1/2,1) : \tau\left(1 - \frac{c}{L(x,w'\beta)}\right) = 1-\varepsilon-\frac{c}{L(x,w'\beta)}-\kappa_n \right\}.
\end{align*}
Since $\kappa_n \neq 0$ the first set in this union contains at most one point. Since %
\[
	1 + \frac{c}{L(x,w'\beta)} \neq 0,
\]
the second set in this union also contains at most one point. Finally, the last set contains more than one point only if
\[
	L(x,w'\beta) = c
	\qquad \text{and} \qquad
	1-\varepsilon - \frac{c}{L(x,w'\beta)} - \kappa_n = 0,
\]
which yields a contradiction since $-\varepsilon - \kappa_n <0$. Therefore, this is the union of at most three points and hence has measure zero. If $\bar{\tau} \geq 1/2$ then
\begin{align*}
	\mathcal{T}_1
	&= \left\{ \tau \in (0,1/2] :  \tau\left(1 - \frac{1-c}{L(x,w'\beta)}\right) = -\kappa_n\right\}\\
	&\qquad \cup \left\{ \tau \in (1/2,\bar{\tau}] : \tau\left(1 - \frac{1+c}{L(x,w'\beta)}\right) =- \frac{c}{L(x,w'\beta)} -\kappa_n\right\}\\
	&\qquad \cup \left\{\tau \in (\bar{\tau},1) : \tau\left(1 - \frac{c}{L(x,w'\beta)}\right) = 1-\varepsilon-\frac{c}{L(x,w'\beta)}-\kappa_n\right\}.
\end{align*}
Once again this is the union of at most three points and hence has measure zero. Thus we've shown that for any $\beta \in \mathcal{B}_\delta$ and for all $\tau$ except a set of Lebesgue measure zero,
\[
	T_{1,1}(x,w,\tau,\beta, h_1) \cdot\ind_{1,1}(x,w,\tau,\beta, \kappa_n)
\]
is continuous at $\beta$.

A similar argument holds for the other indicators. Specifically, let $\mathcal{T}_j$ denote the set of $\tau$'s at which $\ind_{1,j}(x,w,\tau,\beta, \kappa_n)$ is discontinuous at $\beta$. Then
\begin{align*}
	\mathcal{T}_2 &= \left\{\tau \in (0,1) : \frac{\tau}{L(x,w'\beta)} = \min\left\{\tau + \frac{c}{L(x,w'\beta)}\min\{\tau,1-\tau\},1-\varepsilon \right\} - \kappa_n\right\}\\[0.5em]
	\mathcal{T}_3
	&= \left\{ \tau \in (0,1) : 1 - \varepsilon = \min \left\{ \tau + \frac{c}{L(x,w'\beta)} \min \{\tau,1-\tau\}, \frac{\tau}{L(x,w'\beta)} \right\} - \kappa_n \right\}
	\\[0.5em]
	\mathcal{T}_4 &\subseteq \left\{\tau \in (0,1) : \left|\tau + \frac{c}{L(x,w'\beta)}\min\{\tau,1-\tau\} - \frac{\tau}{L(x,w'\beta)}\right| = \kappa_n\right\}\\
	&\qquad \cup \left\{\tau \in (0,1) : \tau + \frac{c}{L(x,w'\beta)}\min\{\tau,1-\tau\}  = 1-\varepsilon-\kappa_n\right\} \\
	&\qquad \cup \left\{\tau \in (0,1) : \frac{\tau}{L(x,w'\beta)} = 1-\varepsilon-\kappa_n\right\}\\[0.5em]
	\mathcal{T}_5 &\subseteq \left\{\tau \in (0,1) : \left|\tau + \frac{c}{L(x,w'\beta)}\min\{\tau,1-\tau\} -(1-\varepsilon)\right| = \kappa_n\right\}\\
	&\qquad \cup \left\{\tau \in (0,1) : \tau + \frac{c}{L(x,w'\beta)}\min\{\tau,1-\tau\} = \frac{\tau}{L(x,w'\beta)} - \kappa_n\right\} \\
	&\qquad \cup \left\{\tau \in (0,1) : 1-\varepsilon = \frac{\tau}{L(x,w'\beta)} - \kappa_n\right\}\\[0.5em]
	\mathcal{T}_6 &\subseteq \left\{\tau \in (0,1) : \left|\frac{\tau}{L(x,w'\beta)} -(1-\varepsilon)\right| = \kappa_n\right\}\\
	&\qquad \cup \left\{\tau \in (0,1) : \frac{\tau}{L(x,w'\beta)} = \tau + \frac{c}{L(x,w'\beta)}\min\{\tau,1-\tau\} - \kappa_n\right\}\\
	&\qquad \cup \left\{\tau \in (0,1) : 1-\varepsilon = \tau + \frac{c}{L(x,w'\beta)}\min\{\tau,1-\tau\} - \kappa_n\right\}\\[0.5em]
	\mathcal{T}_7 &\subseteq \left\{\tau \in (0,1) : \left|\tau + \frac{c}{L(x,w'\beta)}\min\{\tau,1-\tau\} -(1-\varepsilon)\right| = \kappa_n\right\}\\
	&\qquad \cup \left\{\tau \in (0,1) : \left|\frac{\tau}{L(x,w'\beta)} -(1-\varepsilon)\right| = \kappa_n\right\}\\
	&\qquad \cup \left\{\tau \in (0,1) : \left|\tau + \frac{c}{L(x,w'\beta)}\min\{\tau,1-\tau\} -\frac{\tau}{L(x,w'\beta)}\right|= \kappa_n\right\}.
\end{align*}
Using similar arguments to the $j=1$ case, we see that all of these sets have Lebesgue measure zero. So $\bigcup_{j=1}^7 \mathcal{T}_j$ has Lebesgue measure zero. Hence the function $T_{1}(x,w,\tau,\beta, h_1, \kappa_n)$ is continuous at all $\beta \in \mathcal{B}_\delta$ for all $\tau \in (0,1)$ except a set of Lebesgue measure zero.

\bigskip

Now let's return to $T_2$. This function is continuous in $\beta$ for all $\tau \in (0,1)$ except possibly on the set
\[
	\mathcal{T} = \left( \bigcup_{j=1}^7 \mathcal{T}_j \right) \cup
	\left\{ \tau \in (0,1) : \left|\min\left\{\tau + \frac{c}{L(x,w'\beta)}\min\{\tau,1-\tau\},\frac{\tau}{L(x,w'\beta)},1-\varepsilon\right\} - \varepsilon\right| = \kappa_n\right\}.
\]
The second term here comes from the indicators $\indicator( |S_1 - \varepsilon| \leq \kappa_n)$ and $\indicator( S_1 - \varepsilon  > \kappa_n)$. We can see that this set has Lebesgue measure zero using similar arguments as above. Thus the overall set $\mathcal{T}$ has Lebesgue measure zero. Hence we've shown that, for a fixed $(x,w,h_1,\kappa_n)$, and for any $\beta \in \mathcal{B}_\delta$, $T_2(x,w,\tau,\beta,h_1,\kappa_n)$ is continuous at $\beta$ for all $\tau \in (0,1)$ except a set of Lebesgue measure zero. As noted earlier, this is sufficient to complete the proof that $R_3 = o_p(1)$.

\bigskip

\emph{\textbf{Convergence of $R_4$.}} This part is the difference between two expectations, one evaluated at $\widehat{\beta}$ and the other at $\beta_0$. Note that the expectations are over $W$, not $\widehat{\beta}$. We'll show that $R_4 = o_p(1)$ by applying the dominated convergence theorem and then using the fact that $\widehat{\beta} \xrightarrow{p} \beta_0$. The $R_4$ term is similar to $R_1$, and so our proof here will use some of our derivations from our proof that $R_1 = o_p(1)$. The main difference is that $R_4$ is a function of $T_2$. So we'll spend most of our time on that. $T_2$, in turn, depends on $T_1$. So we'll begin by showing that
\[
	T_1(x,w,\tau,\widehat{\beta},h_1,\kappa_n)
	\xrightarrow{p}
	T_1(x,w,\tau,\beta_0,h_1,0).
\]
By the definition of $T_1$, this convergence holds if
\[
	T_{1,j}(x,w,\tau,\widehat\beta,h_1) \xrightarrow{p} T_{1,j}(x,w,\tau,\beta_0,h_1)
	\qquad \text{and} \qquad
	\ind_{1,j}(x,w,\tau,\widehat\beta,\kappa_n) \xrightarrow{p} \ind_{1,j}(x,w,\tau,\beta_0,0).
\]
for all $j=1,\ldots,7$. By continuity of $T_{1,j}(x,w,\tau,\beta,h_1)$ in $\beta$ for all $j=1,\ldots,7$, and by the consistency of $\widehat{\beta}$, the $T_{1,j}(x,w,\tau,\widehat\beta,h_1)$ terms are consistent. The indicator functions are slightly trickier. Given $(x,w,\tau)$, the value of $\beta_0$ determines which of the seven cases we are in. That is, which of the indicators $\ind_{1,j}(x,w,\tau,\beta_0,0)$ is 1; the other six are all zero.

We'll consider each case separately. First suppose that
\[
	\tau + \frac{c}{L(x,w'\beta_0)} \min\{\tau,1-\tau\}
	< \min \left\{ \frac{\tau}{L(x,w'\beta_0)},1-\varepsilon \right\}.
\]
Thus we have $\ind_{1,1}(x,w,\tau,\beta_0,0) = 1$. By $\widehat{\beta} \xrightarrow{p} \beta_0$ and $\kappa_n \to 0$, 
\begin{align*}
	\ind_{1,1}(x,w,\tau,\widehat\beta,\kappa_n) &= \ind\left\{\tau + \frac{c}{L(x,w'\widehat\beta)} \min\{\tau,1-\tau\}
	< \min \left\{ \frac{\tau}{L(x,w'\widehat\beta)},1-\varepsilon \right\} - \kappa_n\right\}\\
	& \xrightarrow{p} \ind\left\{\tau + \frac{c}{L(x,w'\beta_0)} \min\{\tau,1-\tau\}
	< \min \left\{ \frac{\tau}{L(x,w'\beta_0)},1-\varepsilon \right\}\right\}\\
	&= \ind_{1,1}(x,w,\tau,\beta_0,0) \\
	&= 1.
\end{align*}
Moreover, by taking complements, we see that for these values of $(x,w,\tau,\beta_0)$ all the other indicators converge (to zero) as well.

Next suppose $\ind_{1,2}(x,w,\tau,\beta_0,0) = 1$ or $\ind_{1,3}(x,w,\tau,\beta_0,0) = 1$. In either of these cases, we can similarly show that
\[
	\ind_{1,j}(x,w,\tau,\widehat\beta,\kappa_n) \xrightarrow{p} \ind_{1,j}(x,w,\tau,\beta_0,0)
\]
for $j=2,3$. Next suppose
\[
	\tau + \frac{c}{L(x,w'\beta_0)}\min\{\tau,1-\tau\} = \frac{\tau}{L(x,w'\beta_0)} < 1-\varepsilon,
\]
which puts us in the $\ind_{1,4}(x,w,\tau,\beta_0,0) = 1$ case. This case is more delicate, and shows the second place where the $\kappa_n$'s are important. $\ind_{1,4}(x,w,\tau,\widehat{\beta},\kappa_n)$ can be viewed as the product of three indicator functions. Two of them are handled like the $j=1,2,3$ cases:
\[
	\ind\left\{\tau + \frac{c}{L(x,w'\widehat\beta)}\min\{\tau,1-\tau\} < 1-\varepsilon - \kappa_n\right\} \xrightarrow{p} \ind\left\{\tau + \frac{c}{L(x,w'\beta_0)}\min\{\tau,1-\tau\} < 1-\varepsilon \right\}
\]
and
\[
	\ind\left\{\frac{\tau}{L(x,w'\widehat\beta)} < 1-\varepsilon - \kappa_n\right\} \xrightarrow{p} \ind\left\{\frac{\tau}{L(x,w'\beta_0)} < 1-\varepsilon \right\}
\]
by $\widehat{\beta}\xrightarrow{p} \beta_0$ and $\kappa_n \to 0$. The third indicator requires a different argument:
\begin{align*}
	&\indicator\left\{\left|\tau + \frac{c}{L(x,w'\widehat{\beta})}\min\{\tau,1-\tau\} - \frac{\tau}{L(x,w'\widehat{\beta})}\right| \leq \kappa_n \right\}\\
		&= \indicator\left\{ \frac{1}{\sqrt{n \kappa_n^2}} \cdot \sqrt{n}\left(\tau + \frac{c}{L(x,w'\widehat{\beta})}\min\{\tau,1-\tau\} - \frac{\tau}{L(x,w'\widehat{\beta})}\right) \in [-1,1]\right\}\\
		&\to 1
\end{align*}
since $n\kappa_n^2 \to \infty$ and
\[
	\sqrt{n}\left(\tau + \frac{c}{L(x,w'\widehat{\beta})}\min\{\tau,1-\tau\} - \frac{\tau}{L(x,w'\widehat{\beta})}\right) = O_p(1).
\]
This term is $O_p(1)$ since we're looking at the case where
\[
	\tau + \frac{c}{L(x,w'\widehat{\beta})}\min\{\tau,1-\tau\} - \frac{\tau}{L(x,w'\widehat{\beta})} \xrightarrow{p} \tau + \frac{c}{L(x,w'\beta_0)}\min\{\tau,1-\tau\} - \frac{\tau}{L(x,w'\beta_0)} = 0,
\]
and by the delta method. Combining the consistency of these three indicator functions, we have that
\[
	\ind_{1,4}(x,w,\tau,\widehat\beta,\kappa_n) \xrightarrow{p} \ind_{1,4}(x,w,\tau,\beta_0,0).
\]
As before, by taking complements we see that all of the other indicator functions are all also consistent in this case. Notice that in this case $\kappa_n$ is a slackness parameter that we introduced to allow the indicator $\ind_{1,4}(x,w,\tau,\widehat\beta,\kappa_n)$ to be 1 even if the inequality 
\[
	\tau + \frac{c}{L(x,w' \widehat{\beta})}\min\{\tau,1-\tau\} = \frac{\tau}{L(x,w'\widehat{\beta})} 
\]
does not hold exactly in finite samples. 

The last three cases are all similar to the case $\ind_{1,4}(x,w,\tau,\beta_0,0) = 1$ that we just studied. Thus, putting all of these cases together gives
\begin{align*}
	T_1(x,w,\tau,\widehat{\beta}, h_1, \kappa_n) &\xrightarrow{p} T_1(x,w,\tau,\beta_0, h_1,0).
\end{align*}
By similar arguments, we can also show that
\[
	T_2(x,w,\tau,\widehat{\beta}, h_1, \kappa_n) \xrightarrow{p} T_2(x,w,\tau,\beta_0, h_1, 0).
\]
By continuity of $\gamma_0'(\cdot)$ and of $S_2(x,w,\tau,\beta)$ in $\beta$, this implies that
\[
	q(x,w)'\gamma_0'(S_2(x,w,\tau,\widehat{\beta}))T_2(x,w,\tau,\widehat{\beta},h_1,\kappa_n) \xrightarrow{p} q(x,w)'\gamma_0'(S_2(x,w,\tau,\beta_0))T_2(x,w,\tau,\beta_0,h_1,0).
\]
Finally, note that 
\[
	q(x,w)'\int_0^1\gamma_0'(S_2(x,w,\tau, \beta))T_2(x,w,\tau,\beta, h_1, \kappa_n) \;  \; d\tau
\]
is continuous in $\beta$ and has a bounded envelope (which can be seen using arguments similar to that in our proof for $R_1$). Thus we can apply the dominated convergence theorem, which gives $R_4 = o_p(1)$.

\bigskip

\textbf{\emph{Putting the four pieces together}}. We've shown that $R_1,\ldots,R_4$ are all $o_p(1)$. Thus we've shown that
\[
	|\widehat{\overline{\Gamma}}_{3,\theta_0}'(x,h) - \overline{\Gamma}_{3,\theta_0}'(x,h)| = o_p(1).
\]
A similar argument can be used to show that
\[
	|\widehat{\underline{\Gamma}}_{3,\theta_0}'(x,h) - \underline{\Gamma}_{3,\theta_0}'(x,h)| = o_p(1).
\]
As discussed at the beginning of the proof this consistency of $\widehat{\Gamma}_{3,\theta_0}'(x,h)$ was all that we had left to show, so we are done.
\end{proof}

\begin{lemma}[Standard component]\label{applemma:ATEcomp2 analytical boot}
Suppose the assumptions of theorem \ref{thm:ATE convergence} hold. Then
\begin{align*}
	\mathbb{G}_n^* \Gamma_2(x,W,\widehat{\theta})
	&\overset{P}{\rightsquigarrow} \mathbb{G} \Gamma_2(x,W,\theta_0) \\
	&\equiv \widetilde{\mathbf{Z}}_4(x),
\end{align*}
a mean-zero Gaussian vector in $\R^2$.
\end{lemma}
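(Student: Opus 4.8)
The plan is to exploit the fact that the relevant function class has already been shown to be $\Prob$-Donsker in the proof of theorem \ref{thm:ATE convergence}, and then to push the estimated parameter $\widehat{\theta}$ through this class via a stochastic equicontinuity argument. Recall from Step 1 of Part 1 of that proof that, for small enough $\delta > 0$, the class $\overline{\mathcal{F}} = \{\overline{\Gamma}_2(x,\cdot,\theta): \theta \in \Theta_\delta\}$ is $\Prob$-Donsker with a square-integrable envelope, and that $\overline{\Gamma}_2(x,w,\theta)$ is Lipschitz in $\theta$ with coefficient $K(w)$ satisfying $\Exp[K(W)^2] < \infty$. The identical argument applies to $\underline{\Gamma}_2$, so the $\R^2$-valued class $\mathcal{F} = \{\Gamma_2(x,\cdot,\theta): \theta \in \Theta_\delta\}$ is Donsker as well; moreover Step 2 of that proof gives $\Exp[\Gamma_2(x,W,\theta_0)^2] < \infty$.

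The first ingredient is the bootstrap central limit theorem at the true value. Since $\Gamma_2(x,\cdot,\theta_0) \in \mathcal{F}$ and $\mathcal{F}$ is Donsker, theorem 3.6.2 (see also theorem 3.6.1) in \cite{VaartWellner1996} gives $\mathbb{G}_n^* \overset{P}{\rightsquigarrow} \mathbb{G}$ in $\ell^\infty(\mathcal{F})$, where $\mathbb{G}$ is the same tight mean-zero Gaussian process that is the weak limit of the ordinary empirical process indexed by $\mathcal{F}$. Evaluating at the single index $\theta_0$ yields $\mathbb{G}_n^*\Gamma_2(x,W,\theta_0) \overset{P}{\rightsquigarrow} \mathbb{G}\Gamma_2(x,W,\theta_0)$, a mean-zero Gaussian vector in $\R^2$ with covariance matrix $\var(\Gamma_2(x,W,\theta_0))$, which we denote $\widetilde{\mathbf{Z}}_4(x)$ consistently with the proof of theorem \ref{thm:ATE convergence}.

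The second ingredient is to replace $\widehat{\theta}$ by $\theta_0$ inside the bootstrap process, that is, to show $\mathbb{G}_n^*\Gamma_2(x,W,\widehat{\theta}) - \mathbb{G}_n^*\Gamma_2(x,W,\theta_0) = o_p(1)$. On the event $\{\widehat{\theta} \in \Theta_\delta\}$, whose probability tends to one by lemma \ref{lemma:prelim estimators}, both indices lie in $\mathcal{F}$, and the Lipschitz bound gives $\|\Gamma_2(x,\cdot,\widehat{\theta}) - \Gamma_2(x,\cdot,\theta_0)\|_{L_2(\Prob)} \leq \Exp[K(W)^2]^{1/2}\,\|\widehat{\theta} - \theta_0\|_\Theta = o_p(1)$. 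The Donsker property of $\mathcal{F}$ implies that $\mathbb{G}_n^*$ is, conditional on the data, asymptotically uniformly equicontinuous in probability with respect to the $L_2(\Prob)$ semimetric on $\mathcal{F}$, which is part of the conclusion of theorem 3.6.2 in \cite{VaartWellner1996}. Combining the shrinking $L_2(\Prob)$ distance between the two indices with this equicontinuity yields the claimed negligibility; this is a bootstrap analogue of lemma 19.24 in \cite{Vaart2000}. Adding the two pieces then gives $\mathbb{G}_n^*\Gamma_2(x,W,\widehat{\theta}) \overset{P}{\rightsquigarrow} \widetilde{\mathbf{Z}}_4(x)$, which is the claim.

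The main obstacle is the stochastic equicontinuity replacement in the last step: one must keep careful track of the modes of convergence, since $\widehat{\theta}$ converges to $\theta_0$ in outer probability with respect to the data-generating randomness while the bootstrap weak convergence is conditional on that same data, and verify the measurability hypotheses under which asymptotic equicontinuity of $\mathbb{G}_n^*$ can be combined with the vanishing $L_2(\Prob)$ distance between $\Gamma_2(x,\cdot,\widehat{\theta})$ and $\Gamma_2(x,\cdot,\theta_0)$. Everything else is a direct application of the Donsker machinery already assembled for theorem \ref{thm:ATE convergence}.
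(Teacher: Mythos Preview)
Your proposal is correct and follows essentially the same route as the paper: decompose into the bootstrap process at $\theta_0$ plus a remainder, use the Donsker property of $\mathcal{F}$ (established in the proof of theorem \ref{thm:ATE convergence}) together with the Lipschitz bound to get $L_2(\Prob)$-convergence of the random index, and kill the remainder via asymptotic equicontinuity. The paper formalizes the last step slightly differently---it packages the equicontinuity argument as a continuous mapping theorem applied to the evaluation map $\phi(g,f) = g(f) - g(\Gamma_2(x,\cdot,\theta_0))$ after Slutsky---but this is just a concrete way of handling the mode-of-convergence bookkeeping you flag as the main obstacle.
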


\begin{proof}[Proof of lemma \ref{applemma:ATEcomp2 analytical boot}]
Write
\begin{align*}
	\mathbb{G}_n^* \Gamma_2(x,W,\widehat{\theta})
	&= \mathbb{G}_n^* \Gamma_2(x,W,\theta_0) + 
	\mathbb{G}_n^* \Big( \Gamma_2(x,W,\widehat{\theta}) - \Gamma_2(x,W,\theta_0) \Big).
\end{align*}
The first term converges to $\mathbb{G} \Gamma_2(x,W,\theta_0)$ by consistency of the standard nonparametric bootstrap. So it suffices to show that the second term converges to zero. We do this using an argument similar to that in the proof of lemma 19.24 in \cite{Vaart2000}. We'll give the proof for the upper bound, the first component of $\Gamma_2$. The proof for the lower bound is analogous.
\begin{enumerate}
\item By the proof of theorem \ref{thm:ATE convergence},
\[
	\overline{\mathcal{F}} = \{\overline{\Gamma}_{2}(x,W, \theta): \theta \in \Theta_\delta\}
\]
is Donsker with finite envelope function. So theorem 23.7 in \cite{Vaart2000} gives
\[
	\mathbb{G}_n^* \overline{\Gamma}_{2}(x,W, \cdot)
	\overset{P}{\rightsquigarrow}
	\mathbb{G} \overline{\Gamma}_{2}(x,W, \cdot),
\]
where $\mathbb{G}$ is a Gaussian process indexed by $\overline{\mathcal{F}}$.

\item Endow $\overline{\mathcal{F}}$ with the $L_2(\Prob)$ semi-metric. Note that 
\[
	\overline{\Gamma}_{2}(x,\cdot, \widehat{\theta}) \xrightarrow{p} \overline{\Gamma}_{2}(x,\cdot, \theta_0)
\]
in this semi-metric. This follows since, from the proof of theorem 1,
\[
	\left| \overline{\Gamma}_2(x,w,\widehat{\theta}) - \overline{\Gamma}_2(x,w,\theta_0) \right|
	\leq K(w) \| \widehat{\theta} - \theta_0 \|_{\Theta}
\]
where $\Exp(K(W)^2) < \infty$. Thus
\begin{align*}
	\int_\mathcal{W} \left| \overline{\Gamma}_2(x,w,\widehat{\theta}) - \overline{\Gamma}_2(x,w,\theta_0) \right|^2 \; dF_W(w)
	&\leq \Exp( K(W)^2) \| \widehat{\theta} - \theta_0 \|^2_\Theta,
\end{align*}
which converges to zero in probability by consistency of $\widehat{\theta}$ for $\theta_0$.
\end{enumerate}
These two points imply that $(\mathbb{G}_n^*, \overline{\Gamma}_{2}(x,\cdot, \widehat{\theta})) \overset{P}{\rightsquigarrow} (\mathbb{G}, \overline{\Gamma}_{2}(x,\cdot, \theta_0))$ in the space $\ell^\infty(\overline{\mathcal{F}}) \times \overline{\mathcal{F}}$ by Slutsky's theorem. Define the function $\phi :\ell^\infty(\overline {\mathcal{F}}) \times \overline{\mathcal{F}} \to \R$ by
\[
	\phi(g, \overline{\Gamma}_{2}(x,\cdot, \theta)) = g(\overline{\Gamma}_{2}(x,\cdot, \theta)) - g(\overline{\Gamma}_{2}(x,\cdot, \theta_0)).
\]
Since $\mathbb{G}$ has continuous paths (lemma 18.15 in \citealt{Vaart2000}) almost surely, the function $\phi$ is continuous at almost every $(\mathbb{G}, \overline{\Gamma}_{2}(x,\cdot, \theta_0))$. So the continuous mapping theorem (e.g., theorem 10.8 in \citealt{Kosorok2007}) implies that
\begin{align*}
 	\mathbb{G}_n^*(\overline{\Gamma}_{2}(x,W, \widehat{\theta}) - \overline{\Gamma}_{2}(x,W, \theta_0))
	&= \phi(\mathbb{G}_n^*, \overline{\Gamma}_{2}(x,\cdot, \widehat{\theta})) \\
	&\overset{P}{\rightsquigarrow} \phi(\mathbb{G}, \overline{\Gamma}_{2}(x,\cdot, \theta_0)) \\
	&= 0.
\end{align*}
\end{proof}

\begin{proof}[Proof of proposition \ref{prop:ATE analytical boot}]
First, since the influence functions for the first step estimators are Donsker (lemmas \ref{applemma:prop score estimation} and \ref{applemma:quantile estimation}), and since the standard nonparametric bootstrap for those estimators is valid (theorem 3.6.1 in \citealt{VaartWellner1996}), and along with our analysis in the proof of lemma \ref{applemma:ATEcomp2 analytical boot}, we have
\begin{align*}
	\begin{pmatrix}
			\sqrt{n}(\widehat{\theta}^* - \widehat{\theta})\\
			\mathbb{G}_n^* \Gamma_2(x,W,\widehat{\theta}) 
	\end{pmatrix} \overset{P}{\rightsquigarrow} \begin{pmatrix}
		\Z_1\\
		\widetilde{\Z}_{4}(x) 
	\end{pmatrix},
\end{align*}
a mean-zero process in $\R^{d_W} \times \ell^\infty([\varepsilon,1-\varepsilon],\R^{d_q}) \times \R^2$.

Next, define $\Lambda: \R^{d_W} \times \ell^\infty([\varepsilon,1-\varepsilon],\R^{d_q}) \times \R^2 \to \R^2$ by
\begin{align*}
	\Lambda(\theta, u) = \Gamma_3(x,\theta) + u
\end{align*}
where $\theta \in \R^{d_W} \times \ell^\infty([\varepsilon,1-\varepsilon],\R^{d_q})$ and $u \in \R^2$. 
By the proof of theorem \ref{thm:ATE convergence}, the mapping $\Gamma_3(x,\theta)$ is HDD at $\theta_0$ tangentially to $\R^{d_W} \times \mathscr{C}([\varepsilon,1-\varepsilon],\R^{d_q})$. So, for any $u_0 \in \R^2$, $\Lambda(\theta, u)$ is HDD at $(\theta_0,u_0)$ tangentially to $\R^{d_W} \times \mathscr{C}([\varepsilon,1-\varepsilon],\R^{d_q}) \times \R^2$. Its HDD is
\begin{align*}
	\Lambda'_{\theta_0} (h_1,h_2,h_3) &= \Gamma_{3,\theta_0}'(x,(h_1,h_2)) + h_3,
\end{align*}
where $h_3 \in \R^2$. Estimate it by
\[
	\widehat{\Lambda}'_{\theta_0}(h) = \widehat\Gamma_{3,\theta_0}'(x,(h_1,h_2)) + h_3.
\]
By the proof of lemma \ref{applemma:ATEcomp1 analytical boot}, this HDD estimator satisfies assumption 4 of \cite{FangSantos2014}. Thus we can apply their theorem 3.2 to get
\begin{align*}
	\widehat{\Gamma}_{3,\theta_0}'(x, \sqrt{n}(\widehat{\theta}^* - \widehat{\theta})) + \mathbb{G}_n^* \Gamma_2(x,W,\widehat{\theta}) &= \widehat{\Lambda}'_{\theta_0}(\sqrt{n}(\widehat{\theta}^* - \widehat{\theta}),\mathbb{G}_n^* \Gamma_2(x,W,\widehat{\theta}))\\
	 &\overset{P}{\rightsquigarrow} \Lambda'_{\theta_0}(\Z_1, \widetilde{\Z}_4(x))\\	
	&= \Gamma_{3,\theta_0}'(x, \Z_1) + \widetilde{\Z}_4(x)\\
	&\equiv \Z_4(x).
\end{align*}
\end{proof}

\section{Analytical Bootstrap Results for the CQTE and CATE}\label{sec:CQTEandCATEbootstrapResults}

In this section we formally derive bootstrap consistency for the CQTE and CATE. 

\begin{proposition}[CQTE Boostrap]\label{prop:CQTE analytical boot}
Suppose the assumptions of proposition \ref{prop:CQTE convergence} hold. Let $\kappa_n \to 0$, $n \kappa_n^2 \to \infty$, $\eta_n \to 0$, and $n\eta_n^2 \to \infty$ as $n\to\infty$. Then
\[
	\widehat{\Gamma}_{1,\theta_0}'(x,w,\tau,\sqrt{n}(\widehat\theta^* - \widehat\theta))
	\overset{P}{\rightsquigarrow}
	\Gamma_{1,\theta_0}'(x,w,\tau,\mathbf{Z}_1).
\]
\end{proposition}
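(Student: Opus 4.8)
The plan is to mirror the structure of the proof of Lemma \ref{applemma:ATEcomp1 analytical boot} (the non-standard component of Proposition \ref{prop:ATE analytical boot}), but applied to the functional $\Gamma_1(x,w,\tau,\cdot)$ rather than $\Gamma_3(x,\cdot)$, with $(x,w,\tau)$ held fixed. Concretely, I would invoke Theorem 3.2 of \cite{FangSantos2014}, which gives weak convergence in probability of $\widehat{\Gamma}'_{1,\theta_0}(x,w,\tau,\sqrt{n}(\widehat\theta^*-\widehat\theta))$ to $\Gamma'_{1,\theta_0}(x,w,\tau,\mathbf{Z}_1)$ once their Assumptions 1--4 are verified. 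Assumption 1 (Hadamard directional differentiability of $\Gamma_1(x,w,\tau,\cdot)$ at $\theta_0$ tangentially to $\R^{d_W}\times\mathscr{C}([\varepsilon,1-\varepsilon],\R^{d_q})$) was already established in Parts 1 and 2 of the proof of Proposition \ref{prop:CQTE convergence}. Assumption 2 (asymptotic distribution $\sqrt{n}(\widehat\theta-\theta_0)\rightsquigarrow\mathbf{Z}_1$) is Lemma \ref{lemma:prelim estimators}. Assumption 3 (validity of the nonparametric bootstrap for $\widehat\theta$, i.e. $\sqrt{n}(\widehat\theta^*-\widehat\theta)\overset{P}{\rightsquigarrow}\mathbf{Z}_1$) holds by Theorem 3.6.1 of \cite{VaartWellner1996}, as noted in the Setup of Section \ref{sec:bootstrap}.

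The remaining work is Assumption 4, for which I would use the sufficient conditions stated in Remark 3.4 of \cite{FangSantos2014}: (i) the HDD estimator $\widehat{\Gamma}'_{1,\theta_0}(x,w,\tau,h)$ is Lipschitz in $h$, and (ii) it is consistent, $\|\widehat{\Gamma}'_{1,\theta_0}(x,w,\tau,h) - \Gamma'_{1,\theta_0}(x,w,\tau,h)\| = o_p(1)$ for each fixed $h$. Property (i) is exactly Lemma \ref{applemma:Gamma1 Lipschitz}, which gives the Lipschitz bound with constant $K(x,w,\widehat\theta) = O_p(1)$; I would simply cite it. For property (ii), I would show $\widehat{\overline{\Gamma}}'_{1,\theta_0}(x,w,\tau,h)\xrightarrow{p}\overline{\Gamma}'_{1,\theta_0}(x,w,\tau,h)$ (and analogously for the lower bound) by decomposing the difference into pieces that parallel the $R_1,\dots,R_4$ decomposition in the proof of Lemma \ref{applemma:ATEcomp1 analytical boot}, but without the integration over $\tau$ and the averaging over $W$ — so the analogue of $R_1,R_3$ (empirical-process terms) disappears and only the "plug-in" consistency terms remain. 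Specifically I would use: (a) $q(x,w)'h_2(S_2(x,w,\tau,\widehat\beta)) \xrightarrow{p} q(x,w)'h_2(S_2(x,w,\tau,\beta_0))$ by continuity of $h_2$ and $S_2$ in $\beta$ plus $\widehat\beta\xrightarrow{p}\beta_0$; (b) $\widehat\gamma'(S_2(x,w,\tau,\widehat\beta)) \xrightarrow{p} \gamma_0'(S_2(x,w,\tau,\beta_0))$ using uniform consistency of $\widehat\gamma'$ from Lemma \ref{applemma:QRderivatives} together with continuity of $\gamma_0'$ and $S_2$; and (c) $T_2(x,w,\tau,\widehat\beta,h_1,\kappa_n)\xrightarrow{p} T_2(x,w,\tau,\beta_0,h_1,0)$, which in turn follows from $T_1(x,w,\tau,\widehat\beta,h_1,\kappa_n)\xrightarrow{p} T_1(x,w,\tau,\beta_0,h_1,0)$.

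The step I expect to be the main obstacle — and the only one requiring genuine care — is establishing $T_1(x,w,\tau,\widehat\beta,h_1,\kappa_n)\xrightarrow{p} T_1(x,w,\tau,\beta_0,h_1,0)$ for the fixed $(x,w,\tau)$ in the statement. The continuous pieces $T_{1,j}(x,w,\tau,\widehat\beta,h_1)\xrightarrow{p} T_{1,j}(x,w,\tau,\beta_0,h_1)$ are immediate from continuity in $\beta$ and consistency of $\widehat\beta$. The delicate part is the convergence of the indicator functions $\indicator_{1,j}(x,w,\tau,\widehat\beta,\kappa_n)\xrightarrow{p}\indicator_{1,j}(x,w,\tau,\beta_0,0)$. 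This is handled by exactly the case analysis carried out in the "Convergence of $R_4$" portion of the proof of Lemma \ref{applemma:ATEcomp1 analytical boot}: when $\beta_0$ puts us in a strict-inequality case (the analogues of $j=1,2,3$), the indicators converge trivially by $\widehat\beta\xrightarrow{p}\beta_0$ and $\kappa_n\to0$; when $\beta_0$ is on a knife-edge (the analogues of $j=4,\dots,7$, where two of the three expressions inside the $\min$ coincide, or one equals $1-\varepsilon$), the slackness parameter $\kappa_n$ with $n\kappa_n^2\to\infty$, combined with the $\sqrt{n}$-rate (delta method) convergence of the relevant difference of expressions to $0$, forces the $\kappa_n$-slack indicator to $1$. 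I would note that here no integration-over-$\tau$ null-set argument is needed, since $\tau$ is fixed; we are just tracking which of the seven mutually exclusive cases $\beta_0$ falls into and verifying the indicators stabilize. Once $T_1\xrightarrow{p} T_1$ is in hand, the same argument upgrades to $T_2\xrightarrow{p} T_2$ using the definition of $T_2$ and continuity of $S_1$ in $\beta$, and combining (a)--(c) gives consistency of $\widehat{\Gamma}'_{1,\theta_0}(x,w,\tau,h)$. With Assumptions 1--4 verified, Theorem 3.2 of \cite{FangSantos2014} yields the claimed conditional weak convergence, and taking the appropriate difference of components gives the stated result for $\Gamma_1 = (\overline\Gamma_1,\underline\Gamma_1)$.
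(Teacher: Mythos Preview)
Your proposal is correct and follows essentially the same route as the paper's proof: invoke Theorem 3.2 of \cite{FangSantos2014}, verify Assumptions 1--3 exactly as in the proof of Lemma \ref{applemma:ATEcomp1 analytical boot}, and handle Assumption 4 via the Lipschitz condition (Lemma \ref{applemma:Gamma1 Lipschitz}) plus pointwise consistency of $\widehat{\Gamma}'_{1,\theta_0}(x,w,\tau,h)$ obtained from the pieces already established in the $R_1$, $R_2$, and $R_4$ analyses of that lemma's proof. Your decomposition into (a), (b), (c) is slightly more explicit than the paper's, which simply cites the relevant convergences from the proof of Lemma \ref{applemma:ATEcomp1 analytical boot}, but the content is identical.
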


This proposition implies that the asymptotic distribution of the CQTE bounds can be approximated by the bootstrap distribution of
\begin{align*}
	\begin{pmatrix}
		\widehat{\overline{\Gamma}}_{1,\theta_0}'(1,w,\tau,\sqrt{n}(\widehat\theta^* - \widehat\theta)) - \widehat{\underline{\Gamma}}_{1,\theta_0}'(0,w,\tau,\sqrt{n}(\widehat\theta^* - \widehat\theta)) \\[1em]
		\widehat{\underline{\Gamma}}_{1,\theta_0}'(1,w,\tau,\sqrt{n}(\widehat\theta^* - \widehat\theta)) - \widehat{\overline{\Gamma}}_{1,\theta_0}'(0,w,\tau,\sqrt{n}(\widehat\theta^* - \widehat\theta))
	\end{pmatrix}.
\end{align*}

We also show the bootstrap consistency for the CATE.

\begin{proposition}[CATE Bootstrap]\label{prop:CATE analytical boot}
Suppose the assumptions of proposition \ref{prop:CATE convergence} hold. Let $\kappa_n \to 0$, $n \kappa_n^2 \to \infty$, $\eta_n \to 0$, and $n\eta_n^2 \to \infty$ as $n\to\infty$. Then
\[
	\widehat{\Gamma}_{2,\theta_0}'(x,w,\sqrt{n}(\widehat\theta^* - \widehat\theta))
	\overset{P}{\rightsquigarrow}
	\Gamma_{2,\theta_0}'(x,w,\mathbf{Z}_1).
\]
\end{proposition}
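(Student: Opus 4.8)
The plan is to deduce the result from Theorem 3.2 of \cite{FangSantos2014}, exactly as in the proof of Proposition \ref{prop:ATE analytical boot}, but now keeping $w$ fixed instead of averaging over the covariate distribution. Concretely, I would take the functional to be $\Gamma_2(x,w,\cdot)$ acting on $\R^{d_W} \times \ell^\infty([\varepsilon,1-\varepsilon],\R^{d_q})$ and verify their Assumptions 1--4. Assumption 1 (Hadamard directional differentiability of $\Gamma_2(x,w,\cdot)$ at $\theta_0$ tangentially to $\R^{d_W} \times \mathscr{C}([\varepsilon,1-\varepsilon],\R^{d_q})$) was established in Parts 1 and 2 of the proof of Proposition \ref{prop:CATE convergence}. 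Assumption 2 (the tight Gaussian limit $\sqrt{n}(\widehat{\theta} - \theta_0) \rightsquigarrow \mathbf{Z}_1$) is Lemma \ref{lemma:prelim estimators}, and Assumption 3 (bootstrap consistency $\sqrt{n}(\widehat{\theta}^* - \widehat{\theta}) \overset{P}{\rightsquigarrow} \mathbf{Z}_1$) is Theorem 3.6.1 in \cite{VaartWellner1996}, since the first-step influence functions are Donsker. As in Remark 3.4 of \cite{FangSantos2014}, it then remains to check two properties of the estimator $\widehat{\Gamma}_{2,\theta_0}'(x,w,\cdot)$: (i) it is Lipschitz in $h$, and (ii) $\|\widehat{\Gamma}_{2,\theta_0}'(x,w,h) - \Gamma_{2,\theta_0}'(x,w,h)\| = o_p(1)$ for each fixed $h$.

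For the Lipschitz property, recall that $\widehat{\Gamma}_{2,\theta_0}'(x,w,h) = \int_0^1 \widehat{\Gamma}_{1,\theta_0}'(x,w,\tau,h)\,d\tau$. By Lemma \ref{applemma:Gamma1 Lipschitz}, for each $\tau$ the map $h \mapsto \widehat{\Gamma}_{1,\theta_0}'(x,w,\tau,h)$ is Lipschitz with a constant $K(x,w,\widehat{\theta}) = O_p(1)$ that does not depend on $\tau$; integrating over $\tau \in (0,1)$ shows $\widehat{\Gamma}_{2,\theta_0}'(x,w,\cdot)$ is Lipschitz with the same $O_p(1)$ constant. For the consistency property, I would decompose $|\widehat{\overline{\Gamma}}_{2,\theta_0}'(x,w,h) - \overline{\Gamma}_{2,\theta_0}'(x,w,h)|$ (and analogously for the lower bound) into integrated versions of the $R_2$- and $R_4$-type terms that appear in the proof of Lemma \ref{applemma:ATEcomp1 analytical boot}. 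Because $w$ is now fixed there is no averaging over the covariates $W_i$, so the Glivenko--Cantelli / uniform-law-of-large-numbers steps there ($R_1$ and $R_3$) collapse; what remains is: the replacement of $\widehat{\gamma}'$ by $\gamma_0'$, controlled by uniform consistency of the quantile-regression derivative estimator (Lemma \ref{applemma:QRderivatives}) together with A\ref{assn:quant reg}.4 and the envelope bound on $L_\beta / L^2$; the continuity of $h_2(S_2(x,w,\tau,\cdot))$ and of $\gamma_0'(S_2(x,w,\tau,\cdot))$ in $\beta$ with $\widehat{\beta} \xrightarrow{p} \beta_0$; and the pointwise-in-$\tau$ (a.e.) convergence $T_2(x,w,\tau,\widehat{\beta},h_1,\kappa_n) \xrightarrow{p} T_2(x,w,\tau,\beta_0,h_1,0)$, which was shown in the ``Convergence of $R_4$'' step of the proof of Lemma \ref{applemma:ATEcomp1 analytical boot}. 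Combining these with a $\tau$-integrable dominating function (the same bound used to obtain HDD of $\overline{\Gamma}_2$ in Proposition \ref{prop:CATE convergence}), the dominated convergence theorem gives $|\widehat{\Gamma}_{2,\theta_0}'(x,w,h) - \Gamma_{2,\theta_0}'(x,w,h)| = o_p(1)$. Applying Theorem 3.2 of \cite{FangSantos2014} then yields $\widehat{\Gamma}_{2,\theta_0}'(x,w,\sqrt{n}(\widehat{\theta}^* - \widehat{\theta})) \overset{P}{\rightsquigarrow} \Gamma_{2,\theta_0}'(x,w,\mathbf{Z}_1)$.

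The main obstacle, as in the ATE case, is controlling the indicator functions $\ind_{1,j}$ inside $T_1$ (hence $T_2$) when the population configuration sits on a ``knife-edge'' --- when two of the terms inside the relevant $\min$ or $\max$ coincide with each other or with the trimming bound. On such a set the naive indicator is unstable under perturbations of $\beta$, and the slackness parameter $\kappa_n$ is what repairs this: one rewrites the slack indicator as $\ind\{\frac{1}{\sqrt{n\kappa_n^2}}\sqrt{n}(\cdot) \in [-1,1]\}$ and uses $n\kappa_n^2 \to \infty$ together with a delta-method argument showing the rescaled gap is $O_p(1)$. Since this argument is carried out in full in the proof of Lemma \ref{applemma:ATEcomp1 analytical boot}, here it can simply be cited; the remaining work is the routine bookkeeping of porting that pointwise-in-$w$ argument through the integral over $\tau$ and stacking the upper- and lower-bound components.
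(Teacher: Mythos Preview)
Your proposal is correct and follows essentially the same approach as the paper: both apply Theorem 3.2 of \cite{FangSantos2014}, verifying Assumptions 1--3 as you do and then checking Assumption 4 via Remark 3.4's Lipschitz-plus-consistency criteria, with the Lipschitz bound obtained by integrating Lemma \ref{applemma:Gamma1 Lipschitz} over $\tau$ and consistency obtained by dominated convergence. The paper's own proof is actually terser than yours---for consistency it simply points to ``arguments similar to those in the proof of proposition \ref{prop:CATE convergence} and the dominated convergence theorem''---whereas you spell out the $R_2$/$R_4$-type decomposition and the $\kappa_n$ slackness argument borrowed from the proof of Lemma \ref{applemma:ATEcomp1 analytical boot}; but these are the same underlying steps.
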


Like with the CQTE, the asymptotic distribution of the CATE bounds can be approximated by the bootstrap distribution of
\begin{align*}
	\begin{pmatrix}
		\widehat{\overline{\Gamma}}_{2,\theta_0}'(1,w,\sqrt{n}(\widehat\theta^* - \widehat\theta)) - \widehat{\underline{\Gamma}}_{2,\theta_0}'(0,w,\sqrt{n}(\widehat\theta^* - \widehat\theta)) \\[1em]
		\widehat{\underline{\Gamma}}_{2,\theta_0}'(1,w,\sqrt{n}(\widehat\theta^* - \widehat\theta)) - \widehat{\overline{\Gamma}}_{2,\theta_0}'(0,w,\sqrt{n}(\widehat\theta^* - \widehat\theta))
	\end{pmatrix}.
\end{align*}

\subsection{Proofs} \label{sec:CQTEandCATEanalyticalboot}

\begin{proof}[Proof of proposition \ref{prop:CQTE analytical boot}]
As in the proof of lemma \ref{applemma:ATEcomp1 analytical boot} we'll use theorem 3.2 in \cite{FangSantos2014}. To do this we must verify their assumptions 1--4. Their assumptions 1--3 hold as in the proof of lemma \ref{applemma:ATEcomp1 analytical boot}. By their remark 3.4, sufficient conditions for their assumption 4 are:
\begin{enumerate}
\item \emph{A smoothness condition}: $\widehat{\overline{\Gamma}}_{1,\theta_0}'(x,w,\tau, h)$ is Lipschitz in $h$. This holds by lemma \ref{applemma:Gamma1 Lipschitz}.

\item \emph{A consistency condition}: $\widehat{\overline{\Gamma}}_{1,\theta_0}'(x,w,\tau, h)$ converges in probability to $\overline{\Gamma}_{1,\theta_0}'(x,w,\tau, h)$ for any $h \in \R^{d_W}\times \mathscr{C}([\varepsilon,1-\varepsilon],\R^{d_q})$. To see that this holds, recall that in the proof of lemma \ref{applemma:ATEcomp1 analytical boot} we showed that
\[q(x,w)' h_2(S_2(x,w,\tau, \widehat{\beta})) \xrightarrow{p} q(x,w)'  h_{2}(S_2(x,\tau, \beta_0)) \]
and
\[
	q(x,w)'\gamma_0'(S_2(x,w,\tau,\widehat{\beta}))T_2(x,w,\tau,\widehat{\beta},h_1,\kappa_n) \xrightarrow{p} q(x,w)'\gamma_0'(S_2(x,w,\tau,\beta_0))T_2(x,w,\tau,\beta_0,h_1,0).
\]
By the definition of $\widehat{\overline{\Gamma}}_{1,\theta_0}'$, these two results imply $\widehat{\overline{\Gamma}}_{1,\theta_0}'(x,w,\tau, h) \xrightarrow{p} \overline{\Gamma}_{1,\theta_0}'(x,w,\tau,h)$.
\end{enumerate}
Similar arguments can be used for the lower bound, to show that $\widehat{\underline{\Gamma}}_{1,\theta_0}'(x,w,\tau, h)$ is Lipschitz in $h$, and that $\widehat{\underline{\Gamma}}_{1,\theta_0}'(x,w,\tau, h) \xrightarrow{p} \underline{\Gamma}_{1,\theta_0}'(x,w,\tau, h)$.
\end{proof}

\begin{proof}[Proof of proposition \ref{prop:CATE analytical boot}]
The proof of this result is similar to the proof of proposition \ref{prop:CQTE analytical boot}. Like there, we show that the two sufficient conditions for assumption 4 in theorem 3.2 of \cite{FangSantos2014} hold.
\begin{enumerate}
\item \emph{A smoothness condition}: $\widehat{\overline{\Gamma}}_{2,\theta_0}'(x,w,h)$ is Lipschitz in $h$. To see this, write
\begin{align*}
		\left|\widehat{\overline\Gamma}_{2,\theta_0}'(x,w,\widetilde{h}) - \widehat{\overline\Gamma}_{2,\theta_0}'(x,w,h)\right|
		&\leq \int_0^1 \left|\widehat{\overline\Gamma}_{1,\theta_0}'(x,w,\tau, \widetilde{h}) - \widehat{\overline\Gamma}_{1,\theta_0}'(x,w,\tau, h)\right|d\tau\\
		&\leq \int_0^1 \overline{K}_1(x,w,\widehat{\theta}) \; d\tau \|\widetilde{h} - h\|_\Theta \\
		&= \overline{K}_1(x,w,\widehat{\theta}) \cdot \| \widetilde{h} - h \|_\Theta
	\end{align*}
where the second line follows by the proof of lemma \ref{applemma:Gamma1 Lipschitz}, and where $\overline{K}_1(x,w,\widehat{\theta})$ is defined in lemma \ref{applemma:Gamma1 Lipschitz} and is shown to be $O_p(1)$. So $\widehat{\overline{\Gamma}}_{2,\theta_0}'(x,w,h)$ is Lipschitz in $h$.

\item \emph{A consistency condition}: $\widehat{\overline{\Gamma}}_{2,\theta_0}'(x,w,h) \xrightarrow{p} \overline{\Gamma}_{2,\theta_0}'(x,w,h)$. This result follows from arguments similar to those in the proof of proposition \ref{prop:CATE convergence} and the dominated convergence theorem.
\end{enumerate}
Similar arguments can be used for the lower bound, to show that $\widehat{\underline\Gamma}_{2,\theta_0}'(x,w,h)$ is Lipschitz in $h$, and that $\widehat{\underline\Gamma}_{2,\theta_0}'(x,w,h) \xrightarrow{p} \underline\Gamma_{2,\theta_0}'(x,w,h)$.
\end{proof}

\section{Proofs for Section \ref{sec:standardboot}}\label{sec:standardbootProof}

\begin{proof}[Proof of theorem \ref{thm:standardbootATE}]
Recall that our ATE bounds depend on the functionals $\Gamma_3(x,\theta)$. We will show that $\Prob(p_{1|W} \in \{c,1-c\}) = 0$ implies that $\Gamma_{3,\theta_0}'(x,h)$ is linear in $h$. This, in turn, implies that $\Gamma_3(x,\theta)$ is Hadamard differentiable at $\theta_0$. Consistency of the standard bootstrap then follows from the delta method for the bootstrap: see theorem 3.9.11 in \cite{VaartWellner1996}.

\bigskip

Thus it suffices to show that $\Gamma_{3,\theta_0}'(x,h)$ is linear in $h$. We'll show this in three steps.

\medskip

\textbf{Step 1}. First we show that
\begin{align*}
	\overline\Gamma_{1,\theta_0}'(x,w,\tau,h) &= q(x,w)'\gamma_0'(S_2(x,w,\tau, \beta_0))T_2(x,w,\tau,\beta_0, h_1, 0) + q(x,w)'h_2(S_2(x,w,\tau, \beta_0)).
\end{align*}
is linear in $h$. First note that it is trivially linear in $h_2$. It is linear in $h_1$ if and only if $T_2(x,w,\tau,\beta_0,h_1,0)$ is linear in $h_1$. Recall the definition of $T_2$:
\begin{align*}
	T_2(x,w,\tau_0,\beta_0,h_1,0) &=T_1(x,w,\tau,\beta,h_1,0) \cdot \ind \Big( S_1(x,w,\tau,\beta_0) > \varepsilon   \Big) \\
	&\qquad + \max\left\{T_1(x,w,\tau,\beta_0, h_1, 0),0\right\} \cdot \ind \Big( S_1(x,w,\tau,\beta_0) = \varepsilon \Big)
\end{align*}
where
\[
	S_1(x,w,\tau, \beta) = \min\left\{\tau + \frac{c}{L(x,w'\beta)}\min\{\tau,1-\tau\},\frac{\tau}{L(x,w'\beta)},1-\varepsilon\right\}.
\]
There are two parts of $T_2$. We'll consider each of them separately.

\bigskip

\emph{The second part}. For given $(x,w,\tau,\beta_0,c,\varepsilon)$, the set
\[
	\{\tau \in (0,1): S_1(x,w,\tau,\beta_0) = \varepsilon\}
\]
has Lebesgue measure zero. This is the case since $S_1(x,w,\tau,\beta_0)$ is strictly increasing in $\tau$ whenever $S_1(x,w,\tau,\beta_0) < 1-\varepsilon$, and $\varepsilon < 1-\varepsilon$ by assuming $\varepsilon < 1/2$. Therefore, although
\[
	\max\left\{T_1(x,w,\tau,\beta_0, h_1, 0),0\right\}
\]
may be nonlinear in $h_1$,
\[
	\max\left\{T_1(x,w,\tau,\beta_0, h_1, 0),0\right\} \cdot \ind \Big( S_1(x,w,\tau,\beta_0) = \varepsilon \Big)
\]
is nonlinear for a measure zero set of $\tau$.

\bigskip

\emph{The first part}. Next we study linearity of $T_1(x,w,\tau,\beta_0,h_1,0)$ in $h_1$. Recall from appendix \ref{sec:HDDformulas} that it can be written as
\[
	T_1(x,w,\tau,\beta_0, h_1, 0) = \sum_{j=1}^7 T_{1,j}(x,w,\tau,\beta_0,h_1) \cdot \indicator_{1,j}(x,w,\tau,\beta_0, 0).
\]
By examining the specific functional forms given in appendix \ref{sec:HDDformulas}, we immediately see that the functions  $T_{1,j}(x,w,\tau,\beta_0,h_1)$ are linear for $j \in\{1,2,3\}$ and nonlinear for $j \in \{4,5,6,7\}$. The main question is for how many values of $(\tau,w)$ are the indicators $\indicator_{1,j}$ equal to 1 for $j\in\{4,5,6,7\}$. We'll show that these indicators are 1 only on a set of $\tau$'s and $w$'s that have measure zero under the product measure with the Lebesgue measure and $F_W$ as the marginals.

Define
\[
	\mathcal{S}_j(p_{x|w},c) \equiv \{\tau \in (0,1): \indicator_{1,j}(x,w,\tau,\beta_0, 0) = 1\}
\]
for $j \in \{ 4,5,6,7 \}$. For $j=4$, by the definition of $\indicator_{1,4}$ in appendix \ref{sec:HDDformulas} we have
\begin{align*}
	\mathcal{S}_4(p_{x|w},c) &= \left\{\tau \in (0,1): \tau + \frac{c}{p_{x|w}}\min\{\tau,1-\tau\} - \frac{\tau}{p_{x|w}}= 0\right\}\\
	&\qquad\qquad \cap \left\{\tau \in (0,1): \max\left\{\tau + \frac{c}{p_{x|w}}\min\{\tau,1-\tau\},\frac{\tau}{p_{x|w}}\right\} < 1-\varepsilon\right\}\\
	&\equiv \mathcal{S}_{4,a}(p_{x|w},c) \cap \mathcal{S}_{4,b}(p_{x|w},c).
\end{align*}
We write the first set as
\begin{align*}
	&\mathcal{S}_{4,a}(p_{x|w},c) \\
	&= \left\{\tau \in (0,1/2]: \tau\left( 1 + \frac{c}{p_{x|w}} - \frac{1}{p_{x|w}}\right) = 0\right\} \cup  \left\{\tau \in (1/2,1): \tau\left( 1 - \frac{c}{p_{x|w}} - \frac{1}{p_{x|w}}\right) = -\frac{c}{p_{x|w}}\right\}\\
	&= \begin{cases}
	\emptyset &\text{ if } c < 1-p_{x|w}\\
	(0,1/2] &\text{ if } c = 1-p_{x|w}\\
	\left\{\frac{c}{1+c-p_{x|w}}\right\} &\text{ if } c > 1-p_{x|w}.
	\end{cases}
\end{align*}
Thus if $c \neq 1 - p_{x \mid w}$ then $\mathcal{S}_{4,a}(p_{x|w},c)$ has Lebesgue measure zero. Consequently $\mathcal{S}_4(p_{x|w},c)$ also has Lebesgue measure zero in this case.

Next consider $j=5$. As before, by the definition of $\indicator_{1,5}$ we have
\begin{align*}
	\mathcal{S}_{5}(p_{x|w},c)
		&= \left\{\tau \in (0,1): \tau + \frac{c}{p_{x|w}}\min\{\tau,1-\tau\} -(1-\varepsilon) = 0\right\}\\
		&\qquad\qquad \cap \left\{\tau \in (0,1): \max\left\{\tau + \frac{c}{p_{x|w}}\min\{\tau,1-\tau\},1-\varepsilon\right\} < \frac{\tau}{p_{x|w}}\right\}\\
		&\equiv \mathcal{S}_{5,a}(p_{x|w},c) \cap \mathcal{S}_{5,b}(p_{x|w},c).
\end{align*}
Write the first set as
\begin{multline*}
	\mathcal{S}_{5,a}(p_{x|w},c) = \\
	\left\{\tau \in (0,1/2]: \tau\left( 1 + \frac{c}{p_{x|w}}\right) = 1-\varepsilon \right\} \cup  \left\{\tau \in (1/2,1): \tau\left( 1 - \frac{c}{p_{x|w}} \right) = 1-\varepsilon -\frac{c}{p_{x|w}}\right\}.
\end{multline*}
Since
\[
	1 + \frac{c}{p_{x|w}} \neq 0,
\]
the set
\[
	\left\{\tau \in (0,1/2]: \tau\left( 1 + \frac{c}{p_{x|w}}\right) = 1-\varepsilon \right\}
\]
contains at most one point. Likewise,
\[
	\left\{\tau \in (1/2,1):\tau\left( 1 - \frac{c}{p_{x|w}} \right) = 1-\varepsilon -\frac{c}{p_{x|w}}\right\}
\]
contains at most one point whenever $c \neq p_{x|w}$. When $c = p_{x|w}$ this set equals $\{ \tau \in (1/2,1) : 0 = -\varepsilon \}$, which is empty since $\varepsilon > 0$. Thus $\mathcal{S}_{5,a}(p_{x|w},c)$ has Lebesgue measure zero. Consequently, $\mathcal{S}_5(p_{x|w},c)$ also has Lebesgue measure zero.

Next consider $j=6$. We have
\begin{align*}
	\mathcal{S}_{6}(p_{x|w},c)
		&= \left\{\tau \in (0,1): \frac{\tau}{p_{x|w}} -(1-\varepsilon) = 0\right\}\\
		&\qquad \qquad \cap \left\{\tau \in (0,1): \max\left\{\frac{\tau}{p_{x|w}} , 1-\varepsilon \right\} < \tau + \frac{c}{p_{x|w}}\min\{\tau,1-\tau\}\right\}\\
		&\subseteq  \left\{\tau \in (0,1):\tau = (1-\varepsilon)p_{x|w} \right\}.
\end{align*}
The first line follows from the definition of $\indicator_{1,6}$. The last line follows from looking at the first set in the intersection in the first line. This last set is a singleton and hence has Lebesgue measure zero. Thus $\mathcal{S}_6(p_{x|w}, c)$ has Lebesgue measure zero.

Finally, consider $j=7$. This case is a combination of the above cases. Hence we can show that $\mathcal{S}_7(p_{x|w},c)$ has Lebesgue measure zero by repeating some of the above steps.

\bigskip

\textbf{Step 2}: From step 1 we see that for any $w \in \mathcal{W}$ such that $p_{x|w} \neq 1-c$ the mapping $\overline\Gamma_{1,\theta_0}'(x,w,\tau,h)$ is linear for all $\tau \in (0,1)$ except for a Lebesgue measure zero set. Denote this set by $\mathcal{T}$. Then
\begin{align*}
	\overline\Gamma_{2,\theta_0}'(x,w,h)
	&= \int_0^1 \overline\Gamma_{1,\theta_0}'(x,w,\tau,h) \; d\tau \\
	&= \int_{\tau \notin \mathcal{T}} \overline\Gamma_{1,\theta_0}'(x,w,\tau,h) \; d\tau + \int_{\tau \in \mathcal{T}} \overline\Gamma_{1,\theta_0}'(x,w,\tau,h) \; d\tau \\
	&= \int_{\tau \notin \mathcal{T}} \overline\Gamma_{1,\theta_0}'(x,w,\tau,h) \; d\tau.
\end{align*}
The first line follows by definition of $\overline{\Gamma}_2$. The last line follows since $\mathcal{T}$ has Lebesgue measure zero. Since integrals are linear operators, we see that $\overline\Gamma_{2,\theta_0}'(x,w,h)$ is linear in $h$  for any $w \in \mathcal{W}$ such that $p_{x|w} \neq 1-c$.

\bigskip

\textbf{Step 3}: We have
\begin{align*}
	\overline\Gamma_{3,\theta_0}'(x,h) &= \int_{\mathcal{W}} \overline\Gamma_{2,\theta_0}'(x,w,h) \; dF_W(w)\\
	&= \int_{\{ w \in \mathcal{W}: p_{x|w} \notin \{c,1-c\} \}} \overline\Gamma_{2,\theta_0}'(x,w,h) \; dF_W(w) + \int_{\{ w \in \mathcal{W}: p_{x|w} \in \{c,1-c\} \}} \overline\Gamma_{2,\theta_0}'(x,w,h) \; dF_W(w)\\
	&= \int_{\{w \in \mathcal{W}: p_{x|w} \notin \{c,1-c\} \}} \overline\Gamma_{2,\theta_0}'(x,w,h) \; dF_W(w).
\end{align*}
The first line follows by definition of $\overline{\Gamma}_3$. The third line follows since we assumed $p_{x|W} \in \{c,1-c\} $ occurs with probability zero. By step 2, $\overline{\Gamma}_{2,\theta_0}'(x,w,h)$ is linear on the set over which we are integrating in the last line. Since integrals are linear operators, this implies $\overline\Gamma_{3,\theta_0}'(x,h)$ is linear in $h$. 

\bigskip

Similar calculations for the lower bound show that $\underline\Gamma_{3,\theta_0}'(x,h)$ is linear when $\Prob(p_{x|W} \in \{c,1-c\}) = 0$. Thus we have shown that $\Gamma_{3,\theta_0}'(x,h)$ is linear in $h$, as desired.
\end{proof}

\begin{proof}[Proof of proposition \ref{prop:standardbootATT}]
By the proof of theorem \ref{thm:standardbootATE}, the mapping $\Gamma_3(0,\theta_0)$ is Hadamard differentiable when $\Prob(p_{0|W} = 1-c) = \Prob(p_{1|W} = c) = 0$. The conclusion then follows by the delta method for the bootstrap, theorem 3.9.11 in \cite{VaartWellner1996}.
\end{proof}

\end{document}